\tikzstyle{state}=[minimum size=1.5cm,inner sep=0cm,draw,regular polygon, regular polygon sides=4]
\tikzstyle{dummy_state}=[minimum size=1.8cm,inner sep=0cm,regular polygon, regular polygon sides=4]
\tikzstyle{clock_state}=[minimum size=1.8cm,inner sep=0cm,draw,regular polygon, regular polygon sides=4]
\tikzstyle{avg_state}=[minimum size=1.3cm,inner sep=0cm,draw,circle]
\tikzstyle{arg}=[->,style=dashed,semithick]
\tikzset{every loop/.style={min distance=5mm,in=0,out=60,looseness=1}}
\newcommand{\PSPACE}{\ensuremath{\textsc{PSPACE}}\xspace}
\newcommand{\org}{\ensuremath{\textsc{Or}}\xspace}
\newcommand{\notg}{\ensuremath{\textsc{Not}}\xspace}
\newcommand{\gadget}{\ensuremath{\textsc{Gadget}}\xspace}
\newcommand{\eat}[1]{}
\newcommand{\sink}{\ensuremath{si}}
\newcommand{\half}{\ensuremath{\frac{1}{2}}}
\newcommand{\bit}{\ensuremath{\textsc{Bit}}}
\newcommand{\shift}{\ensuremath{\textsc{Shift}}\xspace}
\newcommand{\lsz}{\ensuremath{\textsc{Lsz}}}
\newcommand{\Index}{\textsc{Iter}\xspace}
\newcommand{\sigmagj}{\ensuremath{{\sigma^{g(j)}}}\xspace}
\DeclareMathOperator{\valgjnobar}{Val^{\sigma^{g(j)}}\xspace}
\DeclareMathOperator{\valgj}{\overline{Val}^{\sigma^{g(j)}}\xspace}
\newcommand{\clockappeal}{\ensuremath{\bigl(\frac{1}{2} - \frac{1}{4i}\bigr)}\xspace}
\newcommand{\lonej}{1.6}
\newcommand{\lj}{3.4}
\newcommand{\rj}{3.2}
\newcommand{\rjprime}{3.3} % MUST BE EQUAL TO \rj + 0.1
\newcommand{\ro}{0.8} 
\newcommand{\rol}{0.76} 
\newcommand{\bl}{3.1} 
\newcommand{\dnc}{1.01}  % This depends on the value of bl!!
\newcommand{\dncu}{1.1} % This depends on the value of bl!!
\newcommand{\magic}{0.06} % This depends on the value of bl!!
\newcommand{\xj}{0.9}
\newcommand{\aj}{0.95}
\newcommand{\ajprime}{3.5 + \frac{1}{2 d(i)}}
\DeclareMathOperator{\val}{Val}
\DeclareMathOperator{\appeal}{Appeal}
\DeclareMathOperator{\inp}{I}
\DeclareMathOperator{\poly}{poly}
\DeclareMathOperator{\const}{Constr}
\newcommand{\bitswitch}{\textsc{BitSwitch}\xspace}
\newcommand{\circuitvalue}{\textsc{CircuitValue}\xspace}
\newcommand{\actionswitch}{\textsc{ActionSwitch}\xspace}
\newcommand{\dms}{\textsc{DantzigMdpSol}\xspace}
\newcommand{\basisentry}{\textsc{BasisEntry}\xspace}
\newcommand{\dls}{\textsc{DantzigLpSol}\xspace}
\title{\LARGE The Complexity of the Simplex Method}
\author{\large John Fearnley \and Rahul Savani}
\institute{\normalsize University of Liverpool}
\begin{document}

\maketitle

\begin{abstract}
The simplex method is a well-studied and widely-used pivoting method for solving linear programs. When Dantzig originally formulated the simplex method, he gave a natural pivot rule that pivots into the basis a variable with the most violated reduced cost. In their seminal work, Klee and Minty showed that this pivot rule takes exponential time in the worst case. We prove two main results on the simplex method. Firstly, we show that it is PSPACE-complete to find the solution that is computed by the simplex method using Dantzig's pivot rule. Secondly, we prove that deciding whether Dantzig's rule ever chooses a specific variable to enter the basis is PSPACE-complete. We use the known connection between Markov decision processes (MDPs) and linear programming, and an equivalence between Dantzig's pivot rule and a natural variant of policy iteration for average-reward MDPs. We construct MDPs and show PSPACE-completeness results for single-switch policy iteration, which in turn imply our main results for the simplex method.
\end{abstract}

\section{Introduction}

Linear programming is a fundamental technique in computer science, and the
simplex method is a widely used technique for solving linear programs. The
simplex method requires a \emph{pivot rule} that determines which variable is
pivoted into the basis in each step. Dantzig's original formulation of the
simplex method used a particularly natural pivot rule: in each step, the
non-basic variable with the most negative reduced cost is chosen to enter the
basis~\cite{D65}. We will call this \emph{Dantzig's pivot rule}. Klee and Minty
have shown that Dantzig's pivot rule takes exponential time in the worst
case~\cite{KM72}. 

The simplex method is a member of a of much wider class of \emph{local search}
algorithms. The complexity class PLS, which was introduced by Johnson,
Papadimitriou, and Yannakakis, captures problems where a locally optimal
solution can be found by a local search algorithm~\cite{JPY88}. PLS has matured
into a robust class, and there is now a wide range of problems that are known to
be PLS-complete~\cite{Joh07,MDT10}. It is widely believed that PLS-complete
problems do not admit polynomial-time algorithms.

To show that a problem lies in PLS, we must provide three polynomial-time 
functions: a function
$A$ that produces a candidate solution, a function $B$ that assigns an value to
each candidate solution, and a function~$C$ that for each candidate solution,
either produces a neighbouring candidate solution with higher value, or reports that no
such candidate solution exists. Thus, each PLS problem comes equipped with a
\emph{natural algorithm} that executes $A$ to find an initial candidate
solution, and then iterates $C$ until a local optimum is found. 
For some problems in PLS, it is known that it is PSPACE-complete to find the
solution that is computed by the natural algorithm~\cite{PSY90,SY91}.
So far, this is
only known to hold
for problems that
are \emph{tight PLS-complete}~(see, e.g., \cite{Yan97}), which is a stronger form of
PLS-completeness, or for problems that are suspected to be tight PLS-complete, such as
the local max-cut problem on graphs of degree four~\cite{MT10}.

Obviously, since linear programming is in P, it cannot be PLS-complete unless PLS=P.
Despite this fact, in the first main theorem of this paper, we show that it is
PSPACE-complete to compute the solution found by the simplex method
equipped with Dantzig's pivot rule. Given a linear
program $\mathcal{L}$, an initial basic feasible solution $b$, and a variable
$v$, the problem $\dls(\mathcal{L}, b, v)$ asks the following question: if
Dantzig's pivot rule is started at basis $b$, and finds an optimal solution $s$
of $\mathcal{L}$, is variable $v$ in the basis of $s$? We show the following
theorem, which holds regardless of the degeneracy resolution
rule\footnote{Degeneracy resolution is used to break ties when there is more
than one possible entering and/or leaving variable, e.g., when there are two
variables with the most negative reduced cost.} used by Dantzig's pivot rule.
\begin{theorem}
\label{thm:dls}
\dls is PSPACE-complete.
\end{theorem}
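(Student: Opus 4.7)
The plan is to establish \PSPACE-hardness via a reduction chain that passes through a single-switch variant of policy iteration on average-reward Markov decision processes, as anticipated in the introduction. Membership in \PSPACE is routine: each pivot under Dantzig's rule takes polynomial time given the current basis, and the algorithm is deterministic once a degeneracy resolution rule is fixed, so the entire simplex trajectory can be simulated step by step in polynomial space while maintaining a single bit that records whether $v$ ends up in the final basis.

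For hardness, we would reduce from a natural \PSPACE-complete problem about iterated Boolean computation (such as \bitswitch, which iterates a polynomial-size circuit on polynomially many bits and asks whether a designated bit ever flips). In the average-reward LP formulation of an MDP, stationary deterministic policies correspond bijectively to basic feasible solutions, and the entering-variable step of Dantzig's pivot rule coincides with a single policy-iteration switch -- namely, switching the state--action pair whose $\appeal$ (reduced cost) is most negative. It therefore suffices to build an MDP, assembled from bit, clock, and logic gadgets (\org, \notg, and similar), in which the single-switch dynamics trace an exponentially long trajectory that faithfully simulates the Boolean iteration, and in which the state--action pair corresponding to $v$ lies in the final basis exactly when the designated bit has the prescribed value. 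The MDP-to-LP translation then lifts this to an instance of $\dls$ with the same answer.

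The main obstacle is a delicate numerical balancing act. The clock gadget must provide a baseline appeal of the order of $\clockappeal$ -- large enough to dominate stray appeals and prevent out-of-order bit switches, yet small enough that intra-gadget corrections can interleave at the correct moments to drive the simulation forward. Moreover, since Theorem~\ref{thm:dls} is asserted for \emph{any} degeneracy resolution rule, at every step of the simulated computation there must be a \emph{unique} most-appealing switch, strictly larger than all others, so that no adversarial tie-breaking can derail the trajectory. Most of the effort therefore goes into choosing rewards and transition probabilities whose induced appeals obey an exact lexicographic ordering across every reachable policy, and into verifying that the MDP-to-LP translation preserves these strict inequalities at the level of reduced costs.
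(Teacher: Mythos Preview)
Your high-level architecture is right: the paper does prove \dls by passing through the MDP problem \dms and the simplex/policy-iteration correspondence. But two concrete points are off.

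First, you propose reducing from \bitswitch, which asks whether a designated bit \emph{ever} takes a given value during the iteration. That is the right source for \basisentry (``is $v$ ever pivoted in?''), not for \dls (``is $v$ in the \emph{final} basis?''). The paper reduces \dls from \circuitvalue, which asks about the bit \emph{after} $2^n$ applications of $F$. Your own sentence ``lies in the final basis exactly when the designated bit has the prescribed value'' already betrays the mismatch: \bitswitch has no single ``prescribed value'' at termination to read off.

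Second, and more substantively, you are missing the one new gadget that distinguishes the \dls proof from the \basisentry proof. After the clock exhausts its $2^n$ phases and the answer bit sits at $o^0_z$, policy iteration is \emph{not} finished: further switches elsewhere in the construction can still change $\val^\sigma(l^0_z)$ and $\val^\sigma(r^0_z)$ and hence flip $o^0_z$ before an optimal policy is reached. The paper adds a small ``freezing'' gadget (states $b_1,b_2$ with a very large reward $2W$) whose appeal is tuned to fire only after the clock stops; once it fires, both $l^0_z$ and $r^0_z$ switch to $b_2$, making $o^0_z$ indifferent between its two actions and therefore permanently unswitchable. Without this idea your reduction does not control what survives to the optimal basis.

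A minor point: you assert that robustness to degeneracy resolution requires a \emph{unique} maximiser of appeal at every step. The paper does not achieve this and does not need to; ties occur (e.g.\ among gates of equal depth), but the construction is arranged so that any tie-breaking order leads to the same eventual configuration. Insisting on strict uniqueness would make the numerics harder than necessary.
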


%Actually, it is known that linear programming cannot be tight PLS-complete
%under any circumstances. This is because if a problem is tight PLS-complete,
%then there
%must exist an 
%initial candidate solution $s$, such that every sequence of local improvements
%from $s$ to a locally optimal solution has exponential length.
%\todo[inline]{is this true?}
%This cannot be the case for linear programming, because
%quasi-polynomial upper bounds are known for the diameter of $d$-dimensional
%polyhedra with $n$-facets~\cite{KK92}. Indeed, if the as yet unresolved
%\emph{polynomial Hirsch conjecture} is true, then every candidate solution for
%the simplex method is always polynomially many steps away from an optimal
%solution.

We can also understand Theorem~\ref{thm:dls} in the context of the complexity
class PPAD.
Linear programming is in the intersection of PLS and PPAD.
The Lemke-Howson algorithm is a complementary pivoting algorithm that finds an 
equilibrium of a bimatrix game, and it was the inspiration for the complexity class
PPAD~\cite{P94}.
In fact, a variant of the simplex method can be seen as a special case of Lemke's
algorithm for linear complementarity problems, which is closely related to the
Lemke-Howson algorithm.
The canonical
PPAD-complete problem is \emph{end of the line} which asks: given a succinct
encoding of an exponentially-large graph, where every vertex has in-degree and
out-degree at most $1$, and an initial vertex $v$ with in-degree $0$, find
another vertex $u \ne v$ that either has in-degree or out-degree $0$.

A natural algorithm for a PPAD problem simply follows the line that starts at $v$ until it
finds a vertex with out-degree $0$. The \emph{other end of the line} problem
asks us to find this vertex, and it is known to be PSPACE-complete~\cite{P94}.
This fact has been used to show that is PSPACE-complete to find any of the
equilibria of a bimatrix game that can be computed by the Lemke-Howson
algorithm~\cite{GPS13}, 
even though the problem of finding a Nash equilibrium
of a bimatrix game is only PPAD-complete~\cite{CDT09}.
In Theorem~\ref{thm:dls}, we show an even bigger gap between the complexity of
finding any solution and a specific one, namely, we show that it is
PSPACE-complete to compute the solution found by Dantzig's pivot rule, even
though the problem of finding a solution of a linear program is in P.

A potential criticism of Theorem~\ref{thm:dls} is that it requires that the
linear program has more than one optimal solution.
It is possible that one could turn a linear program with multiple solutions
into one with a unique solution using perturbations.
%\todo[inline]{cite here? Chvatal?} 
If $\mathcal{L}$ is a linear
program with a unique solution, then problem $\dls(\mathcal{L}, b, v)$ is
trivially in P.
However, the fact that Dantzig's pivot rule can solve PSPACE-complete problems
does not depend on a linear program with multiple solutions, as we show in our
second main theorem. Given a linear program $\mathcal{L}$, a variable $v$, and a
initial basic feasible solution $b$ in which $v$ is not basic, the problem
$\basisentry(\mathcal{L}, b, v)$ asks the following question: if Dantzig's pivot
rule is started at $b$, will it ever choose variable $v$ to enter the basis? 
%\todo[inline]{note that $v$ may leave the basis again}
The following theorem states our second
main result, which holds regardless of the degeneracy resolution rule used by
Dantzig's pivot rule.
\begin{theorem}
\label{thm:basisentry}
\basisentry is PSPACE-complete.
\end{theorem}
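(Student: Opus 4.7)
The plan is to follow the MDP-to-LP route advertised in the abstract. Membership in PSPACE is routine: starting from $b$, one simulates Dantzig's simplex method step by step, storing only the current basis and a step counter in polynomial space, and checking at each pivot whether $v$ is the entering variable; if $v$ never enters then termination of the method is detected in polynomial space as well, since the reachable bases form a finite set that the simulation will eventually exhaust.

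For PSPACE-hardness, I would first establish the analogous problem \actionswitch on average-reward MDPs --- asking whether a designated action $a$ is ever switched into the policy during the single-switch variant of policy iteration --- to be PSPACE-complete. The reduction from \actionswitch to \basisentry then goes through the standard LP formulation of an average-reward MDP (the dual LP encoding expected state--action frequencies), combined with the equivalence between a single step of single-switch policy iteration and a single Dantzig pivot. Under this equivalence the designated action $a$ is carried to a designated variable $v$, so a switch of $a$ during policy iteration becomes exactly an entry of $v$ into the basis, regardless of the degeneracy-resolution rule.

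The bulk of the work, and the main obstacle, lies in the PSPACE-hardness proof for \actionswitch. Unlike in Theorem \ref{thm:dls}, we cannot rely on the linear program having multiple optimal solutions, so the PSPACE computation must be encoded along the \emph{trajectory} of policy iteration rather than in the set of optimal policies. Concretely, I expect to build an MDP from boolean gadgets (\org, \notg, and bit gadgets, orchestrated by \gadget) together with clock/counter components that drive policy iteration through a sequence of policies simulating a polynomial-space Turing machine step by step. The delicate point is to couple the switching of the distinguished action $a$ to a specific computational event (for instance, ``the machine has just entered its accept state''), and to verify that Dantzig's tie-breaking cannot short-circuit the simulation under any admissible degeneracy-resolution rule; the clock gadgets should be designed so that the appeal values of competing switchable actions are separated by margins such as \clockappeal, forcing the pivot choice to follow the intended schedule. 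Once the MDP side is established, translating into the dual LP and tracking $a$ as $v$ yields Theorem \ref{thm:basisentry}.
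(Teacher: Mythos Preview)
Your proposal is correct and follows essentially the same route as the paper: prove PSPACE-hardness of \actionswitch on average-reward MDPs via a gadget-and-clock construction, then transfer to \basisentry through the LP formulation and the equivalence between single-switch policy iteration and Dantzig's pivot rule. The only cosmetic difference is that the paper inserts an intermediate \emph{circuit iteration} problem (\bitswitch) between the space-bounded Turing machine and the MDP, so the gadgets simulate iterated application of a boolean circuit rather than the Turing machine directly; this buys a cleaner separation between the (easy) PSPACE-completeness of \bitswitch and the (technical) correctness of the MDP gadgets, but conceptually it is exactly the step-by-step simulation you describe.
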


Theorem~\ref{thm:basisentry} continues a line of work that was recently
initiated by Disser and Skutella~\cite{DS13}. They defined an algorithm to be
\emph{NP-mighty} if it implicitly solves every problem in NP, and they showed
that the (network) simplex algorithm with Dantzig's pivot rule is NP-mighty.
Using this terminology, Theorem~\ref{thm:basisentry} shows that Dantzig's pivot
rule is in fact PSPACE-mighty.
This is also related to the interesting recent work of Adler, Papadimitriou, and
Rubinstein~\cite{APR14}, which explicitly conjectures that
Theorem~\ref{thm:basisentry} may be true, but states that it is a challenging
open problem.

Given its good performance in practice, we find it interesting that the simplex
method can actually solve PSPACE-hard problems. As alluded to by Disser and
Skutella, it seems somewhat counter-intuitive that, while an exponential-time
worst-case example for an algorithm is often considered to show that the
algorithm is ``stupid'', a PSPACE-hardness results suggests it is in some sense
``clever''.

\paragraph{\bf Techniques.} 

In order to prove Theorems~\ref{thm:dls} and~\ref{thm:basisentry}, we make use
of a known connection between the simplex method for linear programming and
\emph{policy iteration} algorithms for Markov decision processes (MDPs),
which are discrete-time stochastic control processes~\cite{Put94}.
The problem of finding an optimal policy in an MDP can be solved in
polynomial time by a reduction to linear programming. However, policy iteration is
a local search technique that is often used as an alternative. Policy iteration
starts at an arbitrary policy. In each policy it assigns each action an
\emph{appeal}, and if an action has positive appeal, then switching this action
creates a strictly better policy. Thus, policy iteration proceeds by repeatedly
switching a subset of switchable actions, until it finds a policy with no
switchable actions. The resulting policy is guaranteed to be optimal.

We use the following connection: If a policy iteration algorithm for an MDP
makes only a single switch in each iteration, then it corresponds directly to
the simplex method for the corresponding linear program. In particular,
Dantzig's pivot rule corresponds to the natural switching rule for
policy iteration that always switches the action with highest appeal. We 
call this \emph{Dantzig's rule}.
This connection is well known, and has been applied in other contexts.
Friedmann, Hansen, and Zwick used this connection in the expected total-reward
setting, to show sub-exponential lower bounds for some randomized pivot
rules~\cite{FHZ11}.
Post and Ye have shown that Dantzig's pivot rule is strongly
polynomial for \emph{deterministic discounted} MDPs~\cite{PY13}, while
Hansen, Kaplan, and Zwick went on to prove various further bounds
for this setting~\cite{HKZ14}.

We define two problems for Dantzig's rule. Let $\mathcal{M}$ be an MDP, $\sigma$
be a starting policy, and~$a$ be an action. The problem $\dms(\mathcal{M},
\sigma, a)$ asks: if $\sigma^*$ is the optimal policy that is found when
Dantzig's rule is started at $\sigma$, does $\sigma^*$ use action $a$? The
problem $\actionswitch(\mathcal{M}, \sigma, a)$ asks: if Dantzig's rule is
started at some policy $\sigma$ that does not use $a$, will it ever switch
action $a$? We prove the following two theorems:

\begin{theorem}
\label{thm:mdpend}
$\dms$ is PSPACE-complete.
\end{theorem}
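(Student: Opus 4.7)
For membership in PSPACE, I would show that Dantzig's rule can be simulated in polynomial space. A policy $\sigma$ can be stored in polynomial space, and given $\sigma$ the appeals of all actions can be computed in polynomial time (since evaluating a policy reduces to solving a linear system of size $\poly(|\mathcal{M}|)$). Since the value of the policy strictly increases at every iteration, no policy is visited twice, so iterating ``compute appeals, switch the highest-appeal action'' terminates after at most an exponential number of steps. Crucially, we never need to retain any history: we just overwrite the current policy. The final policy can be checked to see whether it uses $a$. Thus $\dms \in \PSPACE$.

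For hardness, the plan is to reduce from a canonical PSPACE-complete iterated-circuit problem of the flavour of $\Index$ (e.g., given a Boolean circuit $\mathcal{C}$ computing a successor function on $n$-bit strings, an initial string $x_0$, and an index $k$, decide whether bit $k$ of $\mathcal{C}^t(x_0)$ equals $1$ for some $t \le 2^n$). I would build an MDP $\mathcal{M}$ together with a starting policy $\sigma$ and a distinguished action $a$ so that the trajectory of Dantzig's rule from $\sigma$ simulates the iteration $x_0, \mathcal{C}(x_0), \mathcal{C}^2(x_0), \dots$ one step at a time, and so that $a$ ends up in the final optimal policy if and only if the target bit is eventually set. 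This is done by encoding each bit with a \textbf{Bit} gadget whose current state represents the bit's current value, wiring copies of \org{} and \notg{} gadgets together to form a copy of $\mathcal{C}$, and using a global \emph{clock} construction (the $\clockappeal$ appeal bonus suggested by the macros) that phases the computation: bit updates, circuit evaluation, and commit steps each occur in their own epoch, and within each epoch only the ``correct'' action has the strictly largest appeal so that Dantzig's rule is forced to select it.

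The key design principle is that at every moment of the simulation exactly one action must have the globally maximum appeal, and that action must be the one prescribed by the simulation. To achieve this I would give each gadget a base appeal that depends on its depth in the clock hierarchy, plus a small perturbation derived from the current circuit evaluation. The clock epoch bonus $(1/2 - 1/(4i))$ ensures that deeper (more frequently updated) layers dominate shallower ones while still leaving enough slack to distinguish individual actions, and the analysis shows by induction on clock steps that the ordering of appeals forces Dantzig's rule along a unique path through policy space. The reduction is polynomial because the MDP size is $\poly(|\mathcal{C}|, n)$, even though the simulated computation runs for $2^n$ steps.

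The main obstacle, and the bulk of the technical work, is proving the \emph{appeal invariant}: at every iteration of Dantzig's rule on $\mathcal{M}$, the action the simulation prescribes has strictly the highest appeal, regardless of how the degeneracy resolution rule breaks ties elsewhere. This requires carefully quantifying how values propagate through the gadgets between clock ticks, bounding interference between in-progress gadgets, and ensuring that the chosen numerical parameters (the various constants $\lj, \rj, \ro, \bl, \dnc, \aj$ fixed in the preamble) separate the appeal of the intended switch from those of all other switchable actions. Once this invariant is established, the correctness of the reduction and thus PSPACE-hardness of $\dms$ follows, completing the proof.
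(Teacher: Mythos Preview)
Your PSPACE membership argument is fine, and your high-level plan for hardness (clock, bit gadgets, \org/\notg gadgets, appeal ordering) is essentially the machinery the paper builds. But your plan as written proves $\actionswitch$, not $\dms$, and the gap between the two is exactly the new idea you are missing.

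You say that $a$ should end up in the \emph{final optimal policy} if and only if the target bit is eventually set. That does not follow from your simulation invariant. Even if Dantzig's rule faithfully computes $\mathcal{C}^t(x_0)$ for every $t\le 2^n$, once the clock finishes the circuit gadgets are still sitting in the MDP with residual switchable actions, and policy iteration keeps running. Nothing in your plan stops it from overwriting the bit you care about on the way to optimality; indeed, if the MDP has a unique optimal policy, the answer is determined by the instance alone and carries no information about the trajectory. So you need a mechanism that (i) waits until the simulation has produced the answer bit, and (ii) then \emph{freezes} that choice so that no subsequent iteration can change it, while still letting policy iteration terminate.

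The paper handles this by reducing from $\circuitvalue$ (the value of the bit after exactly $2^n$ iterations, not ``eventually'') and adding one extra gadget on top of the $\actionswitch$ construction: two new states $b_1,b_2$ with a very low-appeal switch $b_2\to b_1$ (appeal $0.2$, below every clock action) and value $2W$ once switched. This switch fires only after the clock is exhausted; then $l^0_z$ and $r^0_z$ are pulled to $b_2$ with overwhelming appeal, making the two actions at $o^0_z$ have identical value. From that point $o^0_z$ is indifferent, so Dantzig's rule can never switch it again, and whatever choice it held at the end of phase $2^n$ survives to the terminal policy. Your proposal needs a device of this kind; without it the reduction does not control what the final policy looks like.
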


\begin{theorem}
\label{thm:actionswitch}
$\actionswitch$ is PSPACE-complete.
\end{theorem}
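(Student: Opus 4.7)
The plan is to show membership and hardness separately. For membership in PSPACE, I would simulate Dantzig's rule step by step. A policy has polynomial size, and a single iteration only requires (i)~solving the linear system that defines the average reward and gain of the current policy, and (ii)~computing the appeal of each action and selecting one of maximum appeal according to the fixed tie-breaking rule. Both operations run in polynomial time and, in particular, polynomial space. Since Dantzig's rule produces a strictly improving sequence of policies, the simulation halts within at most exponentially many iterations while the working tape remains polynomial; we answer ``yes'' as soon as $a$ is selected as the switched action.

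For hardness, I would reduce from a canonical PSPACE-complete problem of the form ``does a polynomial-space machine $M$ accept input $x$?'', constructing an MDP $\mathcal{M}_{M,x}$, a starting policy $\sigma_0$, and an action $a$ such that running Dantzig's rule from $\sigma_0$ ever switches $a$ iff $M$ accepts $x$. I would assemble the MDP from the gadgets previewed in the preamble (\bit, \shift, \org, \notg, \Index, \lsz): a clock gadget that schedules the intended sequence of switches over exponentially many iterations; \bit{} gadgets that hold the current configuration of $M$; combinational \org{}/\notg{} gadgets that compute the one-step transition function and write the result back to the bits via a \shift-style propagation; and a distinguished witness gadget containing~$a$ whose appeal is engineered to become large enough to win the Dantzig competition only when the simulated machine enters its accept state. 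The \Index{} and \lsz{} gadgets would orchestrate the macro-step structure so that one configuration of $M$ is simulated per clock tick.

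The core difficulty is appeal calibration. Dantzig's rule is global: the action scheduled at each iteration must have strictly the highest appeal over the \emph{entire} MDP, not just locally within its gadget. The standard technique is to partition gadget appeals into hierarchical scales (distinct orders of magnitude) that cannot accidentally cross, and to track invariants of the form ``at iteration $t$, only one scheduled action has appeal in the dominant band''. The point specific to \actionswitch, as opposed to \dms, is that it is not sufficient for $a$ to be absent from the final policy on rejecting inputs; $a$ must additionally never be the Dantzig-selected switch at any intermediate iteration. I therefore expect the main obstacle to be designing the witness gadget so that the appeal of $a$ remains strictly dominated by the scheduled sequence of switches throughout the exponentially long simulation, with a single controlled surge that appears precisely when the accept configuration is written into the bit gadgets, and then proving this no-spurious-surge invariant by induction over the macro-steps of the simulation.
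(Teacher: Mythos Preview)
Your high-level plan matches the paper's: membership by step-by-step simulation of Dantzig's rule, and hardness by building an MDP whose Dantzig run simulates an exponentially long space-bounded computation via a clock gadget plus circuit gadgets with carefully tiered appeals. Two differences are worth flagging. First, the paper does not reduce directly from TM acceptance but factors through an intermediate circuit-iteration problem $\bitswitch(F,B,z)$ (decide whether the $z$-th bit of $F^i(B)$ is ever $0$ for some even $i\le 2^n$), which cleanly separates the Turing-machine encoding from the MDP construction. Second, and more substantively, the paper has no separate ``witness gadget'' and no engineered appeal surge for $a$. The action $a$ is simply one of the ordinary actions already present in the input-bit gadget for bit $z$ of circuit $0$, namely the edge $o^0_z\to r^0_z$ that is switched during the routine copy step with the same appeal ($4.5$) as every other copy action; nothing distinguishes it dynamically from its siblings. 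The no-spurious-switch invariant you correctly identify as the crux then falls out of the general circuit-correctness argument (Lemma~\ref{lem:circuit}): the copy action at $o^{1-j}_i$ is switchable iff the corresponding output bit is $1$, which is arranged by taking the circuit in negated form. This buys you the invariant for free, without having to design and verify a one-time surge that must stay dormant through exponentially many iterations---your proposed route would work in principle but carries exactly the risk you anticipate, and the paper's trick sidesteps it entirely. (Minor point: $\bit$, $\shift$, $\Index$, $\lsz$ are not gadget names but bit-manipulation notation used in analysing the clock; the clock itself is a modified Melekopoglou--Condon counter.)
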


Policy iteration is a well-studied and frequently-used method. Thus, these
theorems are of interest in their own right. Additionally, given the
connections we show between policy iteration for our construction and Dantzig's
pivot rule on a certain linear program, these two theorems immediately imply
that $\dls$ and $\basisentry$ are PSPACE-complete (Theorems~\ref{thm:dls}
and~\ref{thm:basisentry}, respectively). The majority of the paper is dedicated
to proving Theorem~\ref{thm:actionswitch}; we then add one extra gadget to
prove Theorem~\ref{thm:mdpend}.

Our PSPACE-completeness results are shown by reductions from two slightly
different \emph{circuit iteration} problems. For example, the
PSPACE-completeness result for $\dls$ is a reduction from $\circuitvalue$, which
asks: given a function $F : \{0, 1\}^n \rightarrow \{0, 1\}^n$ implemented by a
boolean circuit $C$, an input bit-string $B$, and an integer $z$,  is the $z$-th
bit of $F^{2^n}(B)$ a $0$? 

We build an MDP that forces Dantzig's rule to compute $F^{i}(B)$ for all $i \le
2^n$.
Melekopoglou and Condon have shown an exponential-time lower bound for Dantzig's
rule~\cite{MC94}.
As a base for our construction, we use a \emph{clock}, which is a modification
of Condon and Melekopoglou's construction.
We show how a
boolean circuit can be implemented by a system of gadgets. Our construction
contains two full copies of a circuit for $F$. Both circuits have input bits
and output bits. For both circuits, the output bits of that circuit are connected 
to the input bits of the other circuit.
The circuits take turns in
computing $F$: the first circuit computes $F$ using the values stored in its input
bits; once that computation is complete, the second circuit copies the
resulting output into its input bits.
Each time the clock ticks, the two circuits swap roles, so the second circuit 
computes and the first circuit copies. The clock ticks $2^n-1$ times, so in the
end we will have computed $F^{2^n}(B)$, and this 
can be used to show PSPACE-completeness of $\actionswitch$ and $\dls$.

\paragraph{\bf Related work.} There has been a recent explosion of interest in
the complexity of pivot rules for the simplex method, and of switching rules for
policy iteration. The original spark for this line of work was a result of
Friedmann, which showed an exponential lower bound for the \emph{all-switches}
variant of strategy improvement for two-player \emph{parity
games}~\cite{F09,F11}. Fearnley then showed that the second player in
Friedmann's construction can be simulated by a probabilistic action, and used
this to show an exponential lower bound for the all-switches variant of policy
iteration of average-reward MDPs~\cite{F10}. Friedmann, Hansen, and Zwick then
showed a sub-exponential lower bound for the \emph{random facet} strategy
improvement algorithm for parity games~\cite{FHZ11b}, and then utilised
Fearnley's construction to extend the bound to the random facet pivot rule for
the simplex method~\cite{FHZ11}. Friedmann also gave a sub-exponential lower bound for
Zadeh's pivot rule for the simplex method~\cite{F11b}.

It is generally accepted that the simplex algorithm performs well in practice.
Our strong worst-case negative result should be understood in the context of a
long line of work that has attempted to explain the good behaviour of the
simplex algorithm. This started with probabilistic analyses of the expected
running time of variants of the simplex method by Adler and Megiddo~\cite{AM85},
Borgwardt~\cite{Bor86}, and Smale~\cite{Sma83}. Later, in seminal work,
Spielman and Teng~\cite{ST04} defined the concept of smoothed analysis and
showed that the simplex algorithm has smoothed polynomial complexity.

\bigskip

\paragraph{\bf Roadmap.} 

In Section~\ref{sec:definitions} we formalize policy iteration for MDPs and
explain the connection with the simplex method for linear programming. In
Section~\ref{sec:circuit_iter} we define the PSPACE-complete circuit iteration
problems, which are the starting point for our reductions. In
Section~\ref{sec:overview} we define our construction and give a high-level
overview of how it works. Finally, in Section~\ref{sec:proof} we give an
overview of the proof of correctness of our construction for our main two
theorems about MDPs, which directly imply our two main theorems for the simplex
method. The full proof is unfortunately long and technical and is contained in
its entirety in Appendices~\ref{app:construction} to~\ref{app:mdpend}.

\section{Preliminaries}
\label{sec:definitions}

\paragraph{\bf Markov decision processes.}
A Markov decision process (MDP) is defined by a tuple $\mathcal{M} = (S,
(A_s)_{s \in S}, p, r)$, where $S$ gives the set of states in the MDP. For each
state $s \in S$, the set $A_s$ gives the actions available at $s$. We also
define $A = \bigcup_s A_s$ to be the set of all actions in $\mathcal{M}$. For
each action $a \in A_s$, the function $p(s', a)$ gives the probability of moving
from $s$ to $s'$ when using action $a$. Obviously, we must have $\sum_{s' \in S}
p(s', a) = 1$, for every action $a \in A$. Finally, for each action $a \in A$,
the function $r(a)$ gives a rational \emph{reward} that is obtained when using
action $a$.

A \emph{deterministic memoryless policy} is a function $\sigma : S \rightarrow
A$, which for each state $s$ selects some action from $A_s$. All of the policies
that we consider in this paper will be deterministic and memoryless, so we will
henceforth refer to deterministic memoryless policies as \emph{policies}. We
define $\Sigma$ to be the set of all deterministic memoryless policies. We say
that an action $a$ is \emph{deterministic} if there exists an $s'$ such that
$p(s', a) = 1$. If $a$ is a deterministic action at some state $s$, and $s'$ is
the state such that $p(s', a) = 1$, then we use the shorthand $\sigma(s) = s'$
to denote that policy $\sigma$ selects action $a$ at state $s$.

In this paper, we use the \emph{expected average reward}
optimality criterion. It has been shown that maximizing expected
average reward is equivalent to solving the following system of \emph{optimality
equations}~\cite{Put94}. For each state $s \in S$ we have a \emph{gain equation}:
\begin{equation}
\label{eqn:gain}
G(s) = \max_{a \in A_s} \left(\sum_{s' \in S} p(s, a)
\cdot G(s') \right).\\
\end{equation}
Secondly, for each state $s$ we have a \emph{bias equation}. If $M_s = \{a \in
A_s \; : \; G(s) = \sum_{s' \in S} p(s'|s, a) \cdot G(s') \}$ is the set of
actions that satisfy the gain equation at the state $s$, then the bias equation
for $s$ is:
\begin{equation}
\label{eqn:bias}
B(s) = \max_{a \in M_s} \left( r(a) - G(s) + \sum_{s' \in S} p(s', a)
\cdot B(s') \right)
\end{equation}
It has been shown that these equations have a unique solution, and that for each
state $s$, the value of $G(s)$ is the largest expected average reward that can
be obtained from $s$~\cite{Put94}.

\paragraph{\bf Policy iteration.}
Policy iteration is an algorithm for finding solutions to the optimality
equations. For each policy $\sigma \in \Sigma$, we define the following system
of linear equations:
\begin{align}
\nonumber
G^{\sigma}(s) &= \sum_{s' \in S} p(s', \sigma(s)) \cdot G^{\sigma}(s) \\
\label{eqn:biassigma}
B^{\sigma}(s) &= r(\sigma(s)) - G^{\sigma}(s) + \sum_{s' \in S} p(s,
\sigma(s)) \cdot B^{\sigma}(s') 
\end{align}
In a solution of this system, $G^{\sigma}(s)$ gives the expected average reward
for obtained by following $\sigma$. We say that an action $a \in A_s$ is
\emph{switchable} in $\sigma$ if either $\sum_{s' \in S} p(s', a) \cdot
G^\sigma(s') > G^{\sigma}(s)$ or if $\sum_{s' \in S} p(s', a) \cdot G^\sigma(s')
= G^{\sigma}(s)$ and $r(a) - G^{\sigma}(s) + \sum_{s' \in S} p(s', a) >
B^{\sigma}(s)$. Switching an action $a$ at a state $s$ in a policy $\sigma$
creates a new policy $\sigma'$ such that $\sigma'(s) = a$, and $\sigma'(s') =
\sigma(s')$ for all states $s' \ne s$. 

We define an ordering over policies using gain and bias. If $\sigma, \sigma' \in
\Sigma$, then we say that $\sigma \prec \sigma'$ if and only if one of the two
following conditions hold:
\begin{itemize}
\item $G^{\sigma'}(s) \ge G^{\sigma}(s)$ for every state $s$, and there exists a
state $s'$ for which $G^{\sigma'}(s') > G^{\sigma}(s')$.
\item $G^{\sigma'}(s) = G^{\sigma}(s)$ and $B^{\sigma'}(s) \ge B^{\sigma}(s)$
for every state $s$, and there exists a state $s'$ for which $B^{\sigma'}(s') >
B^{\sigma}(s')$.
\end{itemize}
The following theorem states that when we switch a switchable action, then we
obtain a better policy in this ordering.

\begin{theorem}[\cite{Put94}]
\label{thm:profincrease}
If~$\sigma$ is a policy and~$\sigma'$ is a policy that is obtained by switching
a switchable action in~$\sigma$ then we have $\sigma \prec \sigma'$.
\end{theorem}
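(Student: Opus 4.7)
The plan is to prove the theorem by analysing the two switchability conditions separately. Let $s^\star$ be the unique state at which $\sigma$ and $\sigma'$ differ, and write $a = \sigma'(s^\star)$. Setting $\Delta G := G^{\sigma'} - G^{\sigma}$ and $\Delta B := B^{\sigma'} - B^{\sigma}$, I would subtract the defining linear systems for $\sigma$ and $\sigma'$. Because the two policies agree off $s^\star$, the subtraction produces perturbed recurrences of the form $(I - P_{\sigma'})\Delta G = g \cdot \mathbf{1}_{s^\star}$ and (once $\Delta G$ is understood) $(I - P_{\sigma'})\Delta B = h \cdot \mathbf{1}_{s^\star}$, where $P_{\sigma'}$ is the transition matrix induced by $\sigma'$, $g = \sum_{s'} p(s',a)\, G^{\sigma}(s') - G^{\sigma}(s^\star)$ and $h = r(a) - G^{\sigma}(s^\star) + \sum_{s'} p(s', a)\, B^{\sigma}(s') - B^{\sigma}(s^\star)$. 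The switchability hypothesis gives either $g > 0$ (first case) or $g = 0$ and $h > 0$ (second case).

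In the first case, I would analyse $(I - P_{\sigma'})\Delta G = g \cdot \mathbf{1}_{s^\star}$ using the Ces\`aro representation $G^{\sigma'} = P^{*}_{\sigma'}\, r_{\sigma'}$, where $P^{*}_{\sigma'} = \lim_{n} \frac{1}{n}\sum_{k < n} P_{\sigma'}^{k}$. Equivalently, I would couple the Markov chains generated by $\sigma$ and $\sigma'$ from each starting state: they agree until the first visit to $s^\star$, and each such visit contributes a strict gain advantage of $g$ in the Ces\`aro average. Standard reasoning then yields $G^{\sigma'}(s) \ge G^{\sigma}(s)$ for every $s$, with strict inequality at every $s$ from which $s^\star$ is visited positively often under $\sigma'$ (in particular at $s^\star$ itself). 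This is exactly the first clause of $\sigma \prec \sigma'$.

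In the second case, I would first argue $\Delta G \equiv 0$: the condition $\sum_{s'} p(s', a)\, G^{\sigma}(s') = G^{\sigma}(s^\star)$ means that $G^{\sigma}$ still satisfies the gain system for $\sigma'$, so by the uniqueness of the gain, $G^{\sigma'} = G^{\sigma}$. The bias relation then reduces to $(I - P_{\sigma'})\Delta B = h \cdot \mathbf{1}_{s^\star}$ with $h > 0$, and the same Ces\`aro-limit or coupling argument as above yields $\Delta B \ge 0$ everywhere, with strict inequality at every state from which $s^\star$ is reached under $\sigma'$. This gives the second clause of $\sigma \prec \sigma'$.

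The main obstacle I expect is handling the multichain case, where $I - P_{\sigma'}$ is singular and $G^{\sigma'}$ need not be constant across the state space. There one must work with the deviation (fundamental) matrix of $P_{\sigma'}$, verify that the source vectors $g \cdot \mathbf{1}_{s^\star}$ and $h \cdot \mathbf{1}_{s^\star}$ lie in the appropriate range for the perturbed system to be solvable, and, in the $\Delta B$ argument, justify that the recurrent-class structure of the induced chain is preserved when switching a single action that satisfies the equality form of the gain condition. Once those linear-algebraic subtleties are in place, the conclusion $\sigma \prec \sigma'$ follows uniformly in both cases.
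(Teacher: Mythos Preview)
The paper does not prove this theorem at all: it is stated with a citation to Puterman~\cite{Put94} and used as a black box. So there is no ``paper's proof'' to compare against; you are attempting to supply one where the authors deliberately did not.

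Your overall strategy (subtract the two linear systems, isolate a rank-one perturbation at $s^\star$, and propagate it through the Ces\`aro matrix $P^*_{\sigma'}$) is the standard route and would work cleanly in the unichain case. However, two of the steps you present as settled are actually wrong, not merely ``multichain subtleties to be filled in later'':

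\emph{Case 1 (gain-improving switch).} You assert strict inequality at $s^\star$ because ``$s^\star$ is visited positively often under $\sigma'$.'' In fact the opposite is forced. Left-multiplying your identity $(I-P_{\sigma'})\Delta G = g\,\mathbf 1_{s^\star}$ by $P^*_{\sigma'}$ and using $P^*_{\sigma'}(I-P_{\sigma'})=0$ gives $g\,P^*_{\sigma'}\mathbf 1_{s^\star}=0$; hence if $g>0$ then the $s^\star$-column of $P^*_{\sigma'}$ vanishes, i.e.\ $s^\star$ is \emph{transient} under $\sigma'$. So your stated reason for strictness at $s^\star$ fails. The conclusion is still true, but it has to come from a different argument: once $\Delta G\ge 0$ is established, the row-$s^\star$ equation $\Delta G(s^\star)=g+\sum_{s'}p(s',a)\Delta G(s')\ge g>0$ gives strictness directly. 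Establishing $\Delta G\ge 0$ in the multichain case is itself the nontrivial part and needs the deviation-matrix machinery you allude to only at the end.

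\emph{Case 2 (bias-improving switch).} You argue that $g=0$ makes $G^\sigma$ satisfy the gain system for $\sigma'$, ``so by the uniqueness of the gain, $G^{\sigma'}=G^\sigma$.'' The equation $(I-P_{\sigma'})G=0$ does \emph{not} determine $G$ uniquely in the multichain setting: its solution space has dimension equal to the number of recurrent classes of $P_{\sigma'}$. The gain vector $G^{\sigma'}$ is uniquely determined, but by $G^{\sigma'}=P^*_{\sigma'}r_{\sigma'}$, not by the homogeneous system alone. To conclude $G^{\sigma'}=G^\sigma$ you must also argue that the recurrent classes (and their stationary distributions) are unchanged by the switch and that $r_{\sigma}$ and $r_{\sigma'}$ agree on those classes---precisely the structural fact you postpone to the final paragraph but have already used.

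In short, the plan is sound but two of the load-bearing sentences are false as written; the multichain analysis is not an optional refinement but is needed to repair them.
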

Policy iteration starts at an arbitrary policy $\sigma$. In each iteration, it
switches a switchable action in~$\sigma$ to create $\sigma'$, which is then
considered in the next iteration. Since there are finitely many policies in
$\Sigma$, Theorem~\ref{thm:profincrease}, implies that we must eventually arrive
at a policy $\sigma^*$ with no switchable actions. By definition, a policy with
no switchable actions is a solution to Equations~\eqref{eqn:gain}
and~\eqref{eqn:bias}, so $\sigma^*$ is an optimal policy, and the algorithm
terminates.

\paragraph{\bf Simplification.} The construction that we give in this paper has
a special structure, which will allow us to simplify policy iteration.
Specifically, our construction ensures that under an optimal policy $\sigma^*$,
we have $G^{\sigma^*} = 0$ for every state $s$. 
Moreover, we will start policy iteration from a
policy~$\sigma$ with $G^{\sigma}(s) = 0$ for every state $s$. So, by
Theorem~\ref{thm:profincrease}, we have that $G^{\sigma'}(s) = 0$ for every
policy $\sigma'$ considered during policy iteration. If we substitute $0$ into
the gain equation, then we obtain the following simplification of
Equation~\eqref{eqn:bias}, which we will refer to as the \emph{value} equation:
\todo[inline]{introduce val first}
\begin{equation}
\label{eqn:optval}
\val(s) = \max_{a \in A_s} \left( r(a) + \sum_{s' \in S} p(s', a)
\cdot \val(s') \right)
\end{equation}
Additionally, Equation~\eqref{eqn:biassigma} simplifies to:
\begin{equation}
\label{eq:val}
\val^{\sigma}(s) = r(\sigma(s)) + \sum_{s' \in S} p(s', \sigma(s)) \cdot \val^{\sigma}(s') 
\end{equation}
The definition of a switchable action is also simplified.
For each policy $\sigma \in \Sigma$, each state $s$, and each action $a \in A_s$
we define:
\begin{equation}
\label{eq:appeal}
\appeal^{\sigma}(a) = 
\left(r(a) + \sum_{s' \in S} p(s', a) \cdot \val^{\sigma}(s') \right) - 
\val^{\sigma}(s)
\end{equation}
Thus, $a$ is switchable in $\sigma$ if and only if $\appeal^{\sigma}(a) > 0$.
This is the formulation that we will use during our proofs. 

We remark that these equations essentially define the \emph{expected total
reward} optimality criterion~\cite{Put94}. However, the detour through expected
average reward was necessary, because our construction does not fall into one of
the classes of MDPs for which total-reward policy iteration is known to work. In
particular, in our construction, there exist policies that obtain negative
infinite expected total reward, which precludes direct application of
total-reward policy iteration. As we have seen, we can get around this
restriction by forcing average-reward policy iteration to start at a policy with
average reward $0$.

\paragraph{\bf Dantzig's rule.}

Recall that policy iteration specifies that some switchable action
should be switched in each iteration. However, if there is more than one
switchable action, it does not specify \emph{which} switchable action should be
chosen. This decision is delegated to a \emph{switching rule}. In this paper, we
concentrate on one particular switching rule, which we call \emph{Dantzig's
switching rule}.

Dantzig's switching rule always selects an action with maximal appeal. More
formally, if $\sigma$ is a policy that has at least one switchable action, then
Dantzig's rule selects some action $a$ that satisfies:
\begin{equation}
\label{eqn:dantzig}
\appeal^{\sigma}(a) = \max\{ \appeal^\sigma(a') \; : \; a' \in A \text{ and }
\appeal^{\sigma}(a') > 0\},
\end{equation}
If more than one action $a$ satisfies this equation, then Dantzig's switching
rule selects one arbitrarily, and our PSPACE-completeness results will hold no
matter how ties are broken. We will refer to the policy iteration algorithm that
always follows Dantzig's switching rule as \emph{Dantzig's rule}.

We are interested in two slightly different problems regarding Dantzig's rule.
Let $\mathcal{M}$ be an MDP, let $\sigma$ be an initial policy, and let $a$ be
an action. The first problem is $\actionswitch(\mathcal{M}, \sigma, a)$, which
requires us to answer the following question: When Dantzig's rule is started at
policy $\sigma$, will it ever switch to a policy $\sigma'$ with $\sigma'(s) =
a$, for some state $s$? The second problem is $\dms(\mathcal{M}, \sigma, a)$,
which requires us to answer the following question: Suppose that Dantzig's rule
is started at $\sigma$, and that it finds an optimal policy $\sigma^*$. Does
there exist a state $s$ such that $\sigma^*(s) = a$? Note that this second
problem is non-trivial, because although Equations~\eqref{eqn:gain}
and~\eqref{eqn:bias} have a unique solution, there can be multiple optimal
policies that satisfy these equations.

\paragraph{\bf The connection with linear programming.}

There is a strong connection between policy
iteration for Markov decision processes, and the simplex method for linear
programming. In particular, for a number of classes of MDPs there is a well-known reduction 
to linear programming, which essentially encodes the optimality equations
in Equations~\eqref{eqn:gain} and~\eqref{eqn:bias} as a linear
program~\cite{Put94} and implies a correspondence between single-switch policy 
iteration and the simplex method applied to the dual linear program.
Technically, our construction is a multi-chain average-reward MDP. This 
class does have a linear programming formulation, but it is the most complex
case, and that formulation is more complex than we need.
In particular, the correspondence between the simplex method and policy
iteration is less clear.
We use special properties of our construction to define simpler primal-dual
pair of linear programs using~\eqref{eqn:optval}.
We provide a full exposition of this correspondence and its correctness in
Appendix~\ref{app:dantzig}.
In particular, we show that for our construction, Dantzig's pivot rule
corresponds to Dantzig's switching rule.

Note that, when Dantzig's pivot rule is applied in linear programming, a
\emph{degeneracy resolution} rule is required to prevent the algorithm from
cycling. This rule picks the entering variable and leaving variable in the case
of ties.
In our formulation, the leaving variable is always unique.
The entering variable is determined according to Equation~\eqref{eqn:dantzig}. Since our
PSPACE-completeness results for MDPs hold no matter how ties are broken in
Equation~\eqref{eqn:dantzig}, our PSPACE-completeness results for Dantzig's
pivot rule will also hold, no matter which degeneracy resolution rule is used.

As we mentioned in the introduction, the connection between policy iteration and
the simplex method has been exploited in previous related work. However, we have
been unable to find an explicit formalisation of this connection for the case of
expected average reward. So in Appendix~\ref{app:dantzig} we provide our own
formalisation, and we show that Dantzig's switching rule for an average-reward
MDP corresponds to applying Dantzig's pivot rule to the standard resulting
linear program. Consequently, if we can show that $\actionswitch$ and $\dms$ are
PSPACE-complete problems, then we will have proved Theorems~\ref{thm:dls}
and~\ref{thm:basisentry}.

\paragraph{\bf Appeal reduction gadget}

We now describe a gadget that will be used frequently in our construction, which
we call the \emph{appeal reduction gadget}. Similar gadgets were used by
Melekopoglou and Condon to show an exponential-time lower bound for Dantzig's
rule~\cite{MC94}, and by Fearnley to show an exponential-time lower bound
against the all-switches rule~\cite{F10}. 

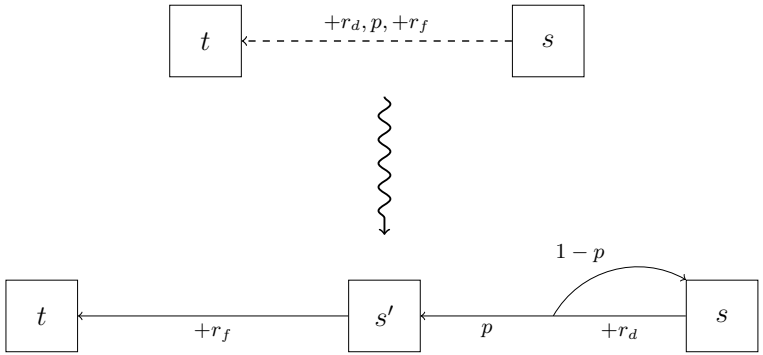
\begin{figure}
\begin{center}
\resizebox{!}{0.28\textwidth}{
\begin{tikzpicture}[node distance=2.5cm,auto]
% THE GADGET
\node[state,font=\large] (s) [] {$s$};
\node[left of=s] (half) {};
\node[state,font=\large,left of=half] (s2) [] {$s'$};
\node[state,font=\large,left=4cm of s2] (s3) [] {$t$};
\path
	(s) edge node [] {$+r_d$} (half.center)
	(half.center) edge [->] node [] {$p$} (s2)
	(half.center) edge [->,bend left=45] node [] {$1-p$} (s)
	(s2) edge [->] node [] {$+r_f$} (s3)
	;
% THE SHORTHAND
\node[state,font=\large,above left=3cm and 1.5cm of s] (ss) [] {$s$};
\node[state,font=\large,left=4cm of ss] (ss2) [] {$t$};
\path[arg]
	(ss) edge node [swap] {$+r_d,p,+r_f$} (ss2)
	;
% SNAKE? SNAKE!? SNAAAAAKE!?
\path[snake=snake,draw,thick,line after snake=0.2cm,->]
	($(s2) + (0, 3.25)$) -- ($(s2) + (0, 1.2)$) 
	;
\end{tikzpicture} 
}
\end{center}
\caption{The appeal reduction gadget with reward $r$ and probability $p$. The
top diagram shows our shorthand, while the bottom diagram shows the gadget.
}
\label{fig:arg}
\end{figure}

The gadget is shown in Figure \ref{fig:arg}. Throughout the paper, we will use
the following diagramming notation for MDPs. States are represented as boxes,
and the name of the state is displayed in the center of the box. Deterministic
actions are represented as arrows that are annotated by rewards. Probabilistic
actions are represented as arrows that split. For these actions, the reward is
displayed before the split, while the transition probabilities are displayed
after the split.

The lower half of Figure~\ref{fig:arg} diagram shows the gadget itself, and the
upper half shows our diagramming notation for the gadget: whenever we use this
shorthand in our of our diagrams, we intend it to be replaced with the gadget in
the bottom half of Figure~\ref{fig:arg}. The parameters for the gadget are two
states $s$ and $t$, two rewards $r_d$ and $r_f$, and a probability $p$. In order
to simplify notation, if $\sigma$ is a policy such that $\sigma(s)$ chooses the
action towards $s'$, then in future we will simply  $\sigma(s) = t$. This is
because, when we use the shorthand notation, the state $s'$ does not appear in
our diagrams. 
The following lemma, which is
proved in Appendix~\ref{app:argappeal}, shows the two key properties of the
gadget.

\begin{lemma}
\label{lem:argappeal}
Let $\sigma$ be a policy, and let $a$ be the action between $s$ and $s'$.
\begin{itemize}
\item 
If $\sigma(s) = a$, then we have $\val^{\sigma}(s) = \val^{\sigma}(t) + r_f + \frac{r_d}{p}$.
\item If $\sigma(s) \ne a$, and if $\val^{\sigma}(t) = \val^{\sigma}(s) + b$,
for some constant $b$, then we have $\appeal^{\sigma}(a) = p \cdot (b + r_f) +
r_d$.
\end{itemize}
\end{lemma}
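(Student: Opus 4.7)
The plan is to unfold the gadget using Equations~\eqref{eq:val} and~\eqref{eq:appeal}, treating the intermediate state $s'$ as the single fixed lever in the computation. The crucial observation is that $s'$ has only the deterministic action to $t$ with reward $r_f$, so in \emph{every} policy $\sigma$ we have
\[
\val^{\sigma}(s') \;=\; r_f + \val^{\sigma}(t).
\]
Everything else follows by direct substitution.

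For the first bullet, I would apply Equation~\eqref{eq:val} at state $s$ under the assumption $\sigma(s) = a$. The action $a$ collects reward $r_d$ and then splits, going to $s'$ with probability $p$ and looping back to $s$ with probability $1-p$, yielding
\[
\val^{\sigma}(s) \;=\; r_d + p \cdot \val^{\sigma}(s') + (1-p)\cdot \val^{\sigma}(s).
\]
Canceling the $(1-p)\val^{\sigma}(s)$ term on the right leaves $p \cdot \val^{\sigma}(s)$ on the left, and then dividing through by $p$ and substituting $\val^{\sigma}(s') = r_f + \val^{\sigma}(t)$ gives the claimed identity. The factor $1/p$ in the conclusion is precisely what the self-loop produces, and this is the only place one must be slightly careful; the algebra is routine.

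For the second bullet, I would invoke Equation~\eqref{eq:appeal} directly on the probabilistic action $a$ out of $s$:
\[
\appeal^{\sigma}(a) \;=\; r_d + p \cdot \val^{\sigma}(s') + (1-p)\cdot \val^{\sigma}(s) - \val^{\sigma}(s) \;=\; r_d + p \bigl(\val^{\sigma}(s') - \val^{\sigma}(s)\bigr).
\]
Here the $(1-p)$ self-loop term collapses with the $-\val^\sigma(s)$ term, which is what makes the appeal expression clean regardless of what $\sigma(s)$ actually is. Substituting $\val^{\sigma}(s') = r_f + \val^{\sigma}(t)$ and then using the hypothesis $\val^{\sigma}(t) = \val^{\sigma}(s) + b$ yields $\val^{\sigma}(s') - \val^{\sigma}(s) = r_f + b$, so that $\appeal^{\sigma}(a) = r_d + p(b + r_f)$, as required.

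There is no real obstacle here: both parts are one-line calculations once the $s'$ identity is noted. The only minor subtlety is making sure, in the second bullet, that one does not need to know what action $\sigma$ picks at $s$ — the self-loop of $a$ absorbs the $-\val^{\sigma}(s)$ term cleanly, which is precisely why this gadget is useful in the larger construction.
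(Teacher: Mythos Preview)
Your proposal is correct and follows essentially the same approach as the paper's proof: compute $\val^{\sigma}(s') = r_f + \val^{\sigma}(t)$, then for the first part write the value equation at $s$ with the self-loop and solve for $\val^{\sigma}(s)$, and for the second part write the appeal of $a$, collapse the $(1-p)\val^{\sigma}(s)$ against the $-\val^{\sigma}(s)$, and substitute the hypothesis on $b$. There is nothing to add.
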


The first claim of Lemma~\ref{lem:argappeal} describes the outcome when a policy
uses the appeal reduction gadget. In particular if $r_d = 0$, which will
frequently be the case in our construction, then the appeal reduction gadget
acts like an action from $s$ to $t$ with reward $r_f$. The second claim
describes what happens when a policy does not use the appeal reduction gadget.
In this case, the appeal of moving to $t$ is scaled down by the probability $p$.
This property allows us to control when an action is switched by Dantzig's rule,
which will be crucial for our construction.

\section{Circuit Iteration Problems}
\label{sec:circuit_iter}

In order to prove PSPACE-completeness of $\actionswitch$ and $\dms$, we will
provide a reduction from two \emph{circuit iteration} problems, which we define
in this section.

\subsection{Circuits}

Let $C$ be a boolean circuit with $n$ input bits and $n$ output bits.
We represent $C$ as an list of gates indexed $1$ through $n+k$. 
The indices $1$ through $n$ represent the $n$ \emph{input bits}.
Then, for each $i > n$, we have:
\begin{itemize}
\item If gate $i$ is an \org gate, then we define $\inp_1(i)$ and $\inp_2(i)$ to
give the indices of its two inputs.
\item If gate $i$ is a \notg gate, then we define $\inp(i)$ to give the index of
its input.
\end{itemize}
The gates $(n + k) - n + 1 = k+1$ through $k + n$ correspond to the $n$
\emph{output bits} of the circuit, respectively. For the sake of convenience, we
also define, for each input bit $i$, we define $\inp(i) = k+i$, which indicates
that input bit $i$ should copy from output bit $\inp(i)$. Moreover, we assume
that the gate ordering is topological. That is, for each \org gate $i$ we assume
that $i > \inp_1(i)$ and $i > \inp_2(i)$, and we assume that for each \notg gate
$i$ we have $i > \inp(i)$.

For each gate $i$, let $d(i)$ denote the \emph{depth} of gate $i$, which is the
length of the longest path from $i$ to an input bit. 
Observe that we can increase the depth of a gate by inserting dummy \org gates:
given a gate~$i$, we can add an \org gate $j$ with $\inp_1(j) = i$ and $\inp_2(j)
= i$, so that $d(j) = i+1$. We use this fact in order to make the following
assumptions about our circuits:
\begin{itemize}
\item For each \org gate $i$, we have $d(\inp_1(i)) = d(\inp_2(i))$.
%\todo[inline]{The maths notation is for just two output bits, but we want all $n$, right?}
\item For each \notg gate $i$, we have $d(i) \ge 2$.
\item There is a constant $j$ such that, for every output bit $i \in \{k+1,
k+n\}$, we have $d(i) = j$.
\end{itemize}
From now on, we assume that all circuits that we consider satisfy these
properties. Note that, since all outputs gates have the same depth, we can
define $d(C) = d(k+1)$, which is the depth of all the output bits of the
circuit.

Given an input $B \in \{0, 1\}^n$, the truth values of each of the gates in $C$
are fixed. We define $C(B, i) = 1$ if gate $i$ is true for input $B$, and $C(B,
i) = 0$ if gate $i$ is false for input $B$.

Given a circuit $C'$, we define the \emph{negated form} of $C'$ to be a
transformation of $C'$ in which each output bit is negated. More formally, we
transform $C'$ into a circuit $C$
using the following operation: for each output bit $n+i$ in $C'$, we add a
\notg gate $n+k+i$ with $\inp(n+k+i) = n+i$.  In other words, we have have that
the $i$-th bit of $F(B)$ is $1$ if and only if the $C(B, i) = 0$.

\subsection{Circuit iteration problems}

A \emph{circuit iteration} instance is a triple $(F, B, z)$, where:
\begin{itemize}
\item $F : \{0, 1\}^n \rightarrow \{0, 1\}^n$ is a function represented as a
boolean circuit $C$,
\item $B \in \{0, 1\}^n$ is an initial bit-string, and
\item $z$ is an integer such that $1 \le z \le n$. 
\end{itemize}
We use standard notation for function iteration: given a bit-string $B \in
\{0,1\}^n$, we recursively define $F^{1}(B) = F(B)$, and $F^{i}(B) =
F(F^{i-1}(B))$ for all $i > 1$. We define two different circuit iteration problems,
which correspond to the two different theorems that we prove for Dantzig's
rule. Both are decision problems that take as input a circuit iteration instance $(F, B, z)$.
\begin{itemize}
\itemsep2mm
\item $\bitswitch(F, B, z)$: if the $z$-th bit of $B$ is $1$, then decide
whether there exists an even $i \le 2^n$ such that the $z$-th bit of $F^{i}(B)$
is $0$.
\item $\circuitvalue(F, B, z)$: decide whether the $z$-th bit of $F^{2^n}(B)$ a $0$.
\end{itemize}
The requirement for $i$ to be even in $\bitswitch$ is a technical requirement
that is necessary in order to make our reduction work. The fact that both of
these problems are PSPACE-complete should not be too surprising, because we can
use the circuit~$F$ to simulate a single step of a space-bounded Turing machine,
so when $F$ is iterated, it simulates the space-bounded Turing machine. The
following lemma is shown in Appendix~\ref{app:pspace}. 

\begin{lemma}
\label{lem:pspace}
Both $\bitswitch$ and $\circuitvalue$ are \PSPACE-complete.
\end{lemma}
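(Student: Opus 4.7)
The plan is to verify \PSPACE membership by direct simulation, and \PSPACE-hardness by reduction from polynomial-space Turing machine acceptance.

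For containment, given an instance $(F, B, z)$ we iteratively compute $F(B), F^2(B), \ldots, F^{2^n}(B)$, storing only the current $n$-bit word, a binary counter of length $n$, and the scratch space needed to evaluate the circuit $C$ representing $F$. Since circuit evaluation is in logarithmic space, the total space used is polynomial in the input size, so both problems lie in \PSPACE. For $\circuitvalue$ we return the $z$-th bit after $2^n$ iterations; for $\bitswitch$ we additionally check, after each even step, whether that bit is $0$, halting as soon as this occurs.

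For hardness, reduce from the acceptance problem for a polynomial-space Turing machine $M$ on input $x$. Standard constructions produce a circuit $F_M$ of size polynomial in the space bound that maps a configuration of $M$ (encoded as an $n$-bit string, $n$ polynomial in $|x|$) to its successor, with accepting configurations rigged to be fixed points of $F_M$. Designate a single bit $z$ that is $1$ precisely on accepting configurations; with $n$ chosen so that $2^n$ exceeds the maximum possible running time of $M$, the machine $M$ accepts $x$ iff the $z$-th bit of $F_M^{2^n}(B_0)$ is $1$, where $B_0$ encodes the initial configuration. Taking $F$ to be the negated form of $F_M$ and $B = B_0$ then gives the reduction for $\circuitvalue$: the $z$-th bit of $F^{2^n}(B)$ is $0$ iff $M$ accepts.

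For $\bitswitch$ the parity constraint requires a modification: one application of $F$ is designed to simulate ``half'' of one step of $M$ by introducing a phase bit that alternates with each application and gates the configuration update so that a real update occurs only when the phase flips from $1$ to $0$. The designated bit $z$ encodes ``$M$ has not yet reached an accepting configuration'', which is $1$ in the initial bit-string (satisfying the hypothesis of $\bitswitch$), remains $1$ throughout the simulation, and drops to $0$ precisely at the even iteration corresponding to the first accepting configuration reached (if any). The main obstacle is this parity bookkeeping: one must ensure that the $z$-th bit truly remains $1$ at every odd step along the way, that the accepting configuration is indeed a fixed point of the two-phase $F$ so that the bit stays $0$ thereafter, and that the number of $F$-applications needed for the full simulation stays within the $2^n$ budget. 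All three can be handled by a careful two-phase encoding and a modest increase in $n$, but writing out the details is the place where care is required.
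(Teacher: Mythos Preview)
Your containment argument is fine, and the overall plan of reducing from polynomial-space Turing machine acceptance is correct. But there is a genuine error in your $\circuitvalue$ reduction, and your $\bitswitch$ reduction is more complicated than necessary.

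For $\circuitvalue$, you set up $F_M$ so that bit $z$ equals $1$ on accepting configurations, and then propose to ``take $F$ to be the negated form of $F_M$'' so that bit $z$ of $F^{2^n}(B)$ is $0$ iff $M$ accepts. This does not work: in the paper's terminology the negated form negates \emph{every} output bit, so $F = \lnot \circ F_M$ bitwise, and then $F^2 = \lnot \circ F_M \circ \lnot \circ F_M$, which bears no relation to $F_M^2$. Iterating $F$ no longer simulates $M$ at all. The fix is trivial---just choose $z$ to be a bit that is $0$ (not $1$) on acceptance---but as written the reduction is broken.

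For $\bitswitch$, your two-phase encoding with a toggle bit can be made to work, but it is doing far more than needed. The paper uses a single reduction for both problems: add one fresh tape cell (index $n+1$), initialise it to $1$, and modify the machine so that on acceptance it writes $0$ there and halts. Halting configurations are fixed points of $F$, so once cell $n+1$ flips to $0$ it stays $0$ forever. This immediately handles $\circuitvalue$. For $\bitswitch$ the parity issue evaporates: if the machine first accepts at step $t$ and $t$ happens to be odd, then the bit is still $0$ at step $t+1$, which is even and still at most $2^{n'}$ (since the encoding length $n'$ strictly exceeds the space bound $n+1$, so $2^{n'} > 2^{n+1} \ge t+1$). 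No phase bit, no half-steps, no separate bookkeeping---the fixed-point property of the halting configuration does all the work.
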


\section{Overview}
\label{sec:overview}

Let $(F, B, z)$ be a circuit iteration instance, where $F$ is an $n$-bit
function, and where $C'$ is the circuit that implements $F$. Let $C$ be the
negated form of $C'$. Our goal is to reduce the problem $\bitswitch$ to the
problem $\actionswitch$. To do so, we construct an MDP, which will be called
$\const(C)$.

The core part of the construction is the \emph{clock}. For this, we use a
modified version of the exponential-time examples of Melekopoglou and
Condon~\cite{MC94}. The clock has two output states $c_0$ and $c_1$, and the
difference in value between these two states is what drives our construction. In
particular, the clock alternates between two output \emph{phases}: if we fix
$T = 3^{d(C) + 6}$ then:
\begin{itemize}
\item In phase $0$ we have $\val^{\sigma}(c_1) = \val^{\sigma}(c_0) + T$.
\item In phase $1$ we have $\val^{\sigma}(c_0) = \val^{\sigma}(c_1) + T$.
\end{itemize}
The clock alternates between phase $0$ and phase $1$. 
The clock goes through exactly $2^n$ many phases, and therefore alternates
between phase $0$ and phase $1$ exactly $2^{n}-1$ many times.

Our goal is to compute one iteration of $F$ in each clock phase. We will
maintain two copies of the circuit $C$, which will be numbered $0$ and $1$. At
the start of the computation, circuit $0$ will hold the initial bit-string $B$
in its input bits. In phase $0$, circuit $0$ will compute $F(B)$,  and then
circuit~$1$ will copy $F(B)$ into its input bits. When we change to phase $1$,
circuit $1$ will compute $F(F(B))$, and then circuit $0$ will copy $F(F(B))$
into its input bits. This pattern is then repeated until $2^n$ phases have
been encountered, and therefore when the clock terminates we will have computed
$F^{2^n}$, as required.

Each copy of the circuit is built from gadgets. We will design gadgets to model
the input bits, \org gates, and \notg gates. In particular, for each gate
$i$ in the circuit, and for each $j \in \{0, 1\}$, we will have a state $o^j_i$,
which represents the output of the gate $i$ in copy $j$ of the circuit.
The value of this state will indicate whether the gate is true or
false, in a way that we now formally define.

Let $j \in \{0, 1\}$. Recall that in phase $j$, circuit $j$ will compute the
function $F$, and circuit $1-j$ will copy the output of circuit $j$. Firstly,
for each $k$ with $0 \le k \le d(C)$, we define the following constants:
\begin{align*}
b_k &= 3^{d(C) - j + 2},  &
L_k &= \sum_{m=0}^{k-1} b_m, &
H_k &= \sum_{m=0}^{k} b_m.
\end{align*}
The constant $H_k$ gives a \emph{high} value, and will be used by gates of depth
$k$ to indicate that they are true. The constant $L_k$ gives a \emph{low} value,
and will be used by gates of depth $k$ to indicate that they are false. Note
that $H_k = L_k + b_k$ and that $H_k = L_{k+1}$ for all $k$. Moreover, note that
$H_{d(C)} \le 2 \cdot 3^{d(C) + 2}$ and that $H_{d(C)} < \frac{T}{2}$.

We use these constants to define the truth value of our gates. Let $\sigma$ be a
policy in phase $j$. Recall that in phase $j$, we have $\val^{\sigma}(c_{1-j}) =
\val^{\sigma}(c_j) + T$. The truth values in circuit $j$ will be given relative
to the value of $c_j$. More precisely, we have:
\begin{itemize}
\item Gate $i$ is false in $\sigma$ in phase $j$ if 
$\val^{\sigma}(o^j_i) = \val^{\sigma}(c_j) + L_{d(i)}$.
\item Gate $i$ is true in $\sigma$ in phase $j$ if
$\val^{\sigma}(o^j_i) = \val^{\sigma}(c_j) + H_{d(i)}$.
\end{itemize}
Note, in particular, that these values depend on the depth of the gate. 

In the rest of this section, we define each component of the construction, and
we give high level descriptions of how each component operates. We will give a
full description of $\const(C)$ in terms of diagrams, and a full formal
description of the $\const(C)$ can be found in Appendix~\ref{app:construction}.
A formal proof that our construction works will be presented in
Section~\ref{sec:proof}.

\subsection{The clock}

\begin{figure}[h]
\begin{center}
\resizebox{\textwidth}{!}{
\begin{tikzpicture}[auto,scale=2.25]

\draw (0,0) node[font=\large,clock_state] (c0) []   {$c_0$};
\draw (1,0) node[clock_state] (snm1) [] {$n$};
\draw (2,0) node[clock_state] (snm2) [] {$n-1$};
\draw (3,0) node[clock_state] (snm3) [] {$n-2$};
\draw (4,0) node[dummy_state] (snm4) [] {};
\draw (5,0) node[clock_state] (s1) []   {$1$};
\draw (6,0) node[clock_state] (s0) []   {$0$};
\draw (7,0) node[clock_state] (s0s)[] {$\sink$};
\draw (7,-1)node[clock_state] (s1s) [] {$\sink'$};

\draw (0,-1) node[font=\large,avg_state] (snp)[] {$c_1$};
\draw (1,-1) node[avg_state] (snm1p)[] {$n'$};
\draw (2,-1) node[avg_state] (snm2p)[] {$(n-1)'$};
\draw (3,-1) node[avg_state] (snm3p) [] {$(n-2)'$};
\draw (4,-1) node[dummy_state] (snm4p) [] {};
\draw (5,-1) node[avg_state] (s1p)[] {$1'$};

% c0 c1 arrows
\path[->]  (c0) edge node [] {$0$} (snm1) ;

% top row pointing right
\path[arg] (snm1) edge node [] {$\alpha_n$} (snm2) ;
\path[arg] (snm2) edge node [] {$\alpha_{n-1}$} (snm3) ;
\path[arg] (snm4) edge node [] {$\alpha_2$} (s1) ;
\path[arg] (s1) edge node [] {$\alpha_1$} (s0) ;
\path[->] (s0) edge node [] {$0$} (s0s) ;

% DOTTED LINE
\draw[ultra thick,dotted] (3.5,-0.5) -- (4.25,-0.5); 

% top row pointing down
\path[arg] (snm1) edge node [left,yshift=6] {$\alpha_n$} (snm1p) ;
\path[arg] (snm2) edge node [left,yshift=6] {$\alpha_{n-1}$} (snm2p) ;
\path[arg] (snm3) edge node [left,yshift=6] {$\alpha_{n-2}$} (snm3p) ;
\path[arg] (s1) edge node   [left,yshift=6] {$\alpha_1$} (s1p) ;

% bottom row (avg nodes) pointing right
\path[->] (snp) edge node [] {} (snm1p) ;
\path[->] (snm1p) edge node [] {} (snm2p) ;
\path[->] (snm2p) edge node [] {} (snm3p) ;
\path[->] (snm4p) edge node [] {} (s1p) ;
\path[->] (s1p) edge node [] {} (s1s) ;

% bottom row (avg nodes) pointing up
\path[->] (snp) edge node [] {} (snm2) ;
\path[->] (snm1p) edge node [] {} (snm3) ;
\path[->] (snm2p) edge node [] {} (snm4) ;
\path[->] (snm4p) edge node [] {} (s0) ;
\path[->] (s1p) edge node [] {} (s0s) ;

% special nodes to sink
\path[->] (s0s) edge[loop] node [] {$0$} (s0s) ;
%\path[->] (s1s) edge node [anchor=north west,xshift=-0.2cm,yshift=-0.25cm] {$2^n+T$} (sink) ;
\path[->] (s1s) edge node [right] {$T \cdot 2^{n+1}$} (s0s) ;

\end{tikzpicture} 
}
\end{center}
\caption{The clock construction. 
%Circles, like the states $n'$ and $(n-1)'$, are
%used to indicate states with a single action and a corresponding uniform
%$(0.5,0.5)$ probability distribution on the support of size two that is
%described by the two outgoing edges.
}
\label{fig:clock}
\end{figure}

Figure~\ref{fig:clock} shows the clock. For each $i$ with $1 \le i \le n$, the
probability $\alpha_i$ is defined to be:
\begin{equation}
\label{eq:arg_probs}
\alpha_i = \clockappeal \cdot T^{-1} \cdot 2^{-(f(i)-1)},
\end{equation}
We have used special notation to simplify this diagram. The circle states have a
single outgoing action $a$. Each circle state has two outgoing arrows, which are
each both have a $0.5$ probability of being taken when  $a$ is used. That is, if
$s$ is a circle state, and $t$ and $u$ are the two states with incoming arrows
from $s$, then we have $p(t, a) = 0.5$ and $p(u, a) = 0.5$. Also, to save
space, we only display the probability parameter for the appeal reduction
gadgets. This is because the $r_d$ and $r_f$ parameters are both $0$ in all of
the appeal reduction gadgets in the clock.

The clock is an adaptation of the exponential-time lower bound of Melekopoglou
and Condon~\cite{MC94}, but with several modifications. Firstly, they consider
minimising the reachability probability of state $0$, whereas we consider
maximizing the expected total reward. In their original construction, the action
at state $0$ went to $n$ with probability $0.5$, and to $\sink$ with probability
$0.5$. We have replaced this with a deterministic action from $0$ to $\sink$. 

Note that the two states $c_0$ and $c_1$ are the two clock states that we
described earlier. In the initial policy for the clock, all states select the
action going right. In the optimal policy, all states select the action going
right, except state $1$, which selects its downward action. However, to
move from the initial to final policy, Dantzig's rule makes an
exponential number of switches. In each step, the probability of reaching
$\sink'$
increases, and this is what allows us to generate an exponential sequence of
clock phases.

One final thing to note about the clock is that the state $\sink$ has
a self-loop with reward $0$. We call this the \emph{sink}. In the
optimal policy for our construction, for every state $s$, the probability of
eventually moving from $s$ to $\sink$ is $1$. This property will also
hold for the initial policy. This is what allows us to guarantee that the
expected average-reward is always $0$.

\subsection{Input bits}

For every input bit $i$, and every $j \in \{0, 1\}$, our construction will
contain a copy of the gadget shown in Figure~\ref{fig:input}.
The probabilities shown in Figure~\ref{fig:input} are defined as follows.
\begin{align*}
p_{3} &= \frac{\bl}{\frac{3T}{2} + H_0}. &
p_{4} &= \frac{\lj}{\frac{3T}{2} + H_0 - \frac{H_{d(C)} + L_{d(C)}}{2}}. \\
p_{5} &= \frac{\lonej}{\frac{T}{2} + \frac{H_{d(C)} + L_{d(C)}}{2} - H_0} &
p_{6} &= \frac{\rj}{\frac{3T}{2} + L_0 - H_{d(C)}}. \\
p_{7} &= \frac{\ro}{\frac{T}{2} + H_{d(C)} - L_0}. 
\end{align*}
Note that since all of these probabilities contain at least $\frac{T}{2}$ in the
denominator, they must all be strictly less than $1$.

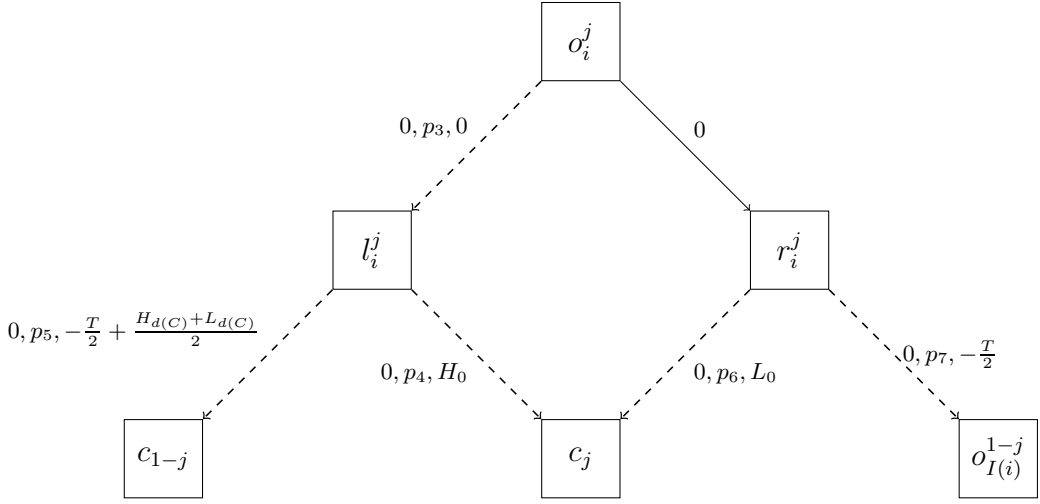
\begin{figure}[htb]
\begin{center}
\resizebox{!}{0.4\textwidth}{
\begin{tikzpicture}[node distance=4cm,auto]
\node[state,font=\large] (b) [] {$o^j_i$};
\node[state,font=\large] (l) [below left of=b] {$l^j_i$};
\node[state,font=\large] (r) [below right of=b] {$r^j_i$};
\node[state,font=\large] (o) [below right of=r] {$o^{1-j}_{I(i)}$};
\node[state,font=\large] (c_0) [below left of=l] {$c_{1-j}$};
\node[state,font=\large] (c_1) [below right of=l] {$c_j$};
\path[arg] 
	(b) edge node [swap] {$0,p_{3},0$} (l)
	(l) edge node [swap] {$0,p_{5},-\frac{T}{2} + \frac{H_{d(C)} + L_{d(C)}}{2}$} (c_0)
	(l) edge node [swap] {$0,p_{4},H_0$} (c_1)
	(r) edge node [] {$0,p_{6},L_0$} (c_1)
	(r) edge node [anchor=west] {$0, p_{7},-\frac{T}{2}$} (o)
	;
\path[->]
	(b) edge node [] {$0$} (r)
	;
\end{tikzpicture} 
}
\end{center}
\caption{The gadget for input bit $i$ in circuit $j$.}
\label{fig:input}
\end{figure}

The input bit gadget has two distinct modes: in phase $j$ the input bits of
circuit $j$ must output the values that they are holding, and once the output
bits of circuit $j$ have been computed, the input bits of circuit $1-j$ must
copy these outputs. Correspondingly, our input bits can either be in \emph{output
mode}, or in \emph{copy mode}.

In phase $j$, we say that input bit $i$ in circuit $j$ is in output mode in a
policy $\sigma$ if $\sigma(l^j_i) = c_j$ and $\sigma(r^j_i) = c_j$. In this case
we have:
\begin{itemize}
\item If $\sigma(o^j_i) = l^j_i$, then $\val^{\sigma}(o^j_i) =
\val^{\sigma}(c_j) + H_0$, so input bit $i$ is true in $\sigma$.
\item If $\sigma(o^j_i) = r^j_i$, then $\val^{\sigma}(o^j_i) =
\val^{\sigma}(c_j) + L_0$, so input bit $i$ is false in $\sigma$.
\end{itemize}
Thus, when gadget $i$ is in output mode, it always outputs either true or false,
and the choice made at $o^j_i$ determines which is the case.

In phase $j$, we say that input bit $i$ in circuit $1-j$ is in copy mode  in a
policy $\sigma$ if $\sigma(l^{1-j}_i) = c_{j}$, $\sigma(r^{1-j}_i) =
o^{1-j}_{\inp(i)}$, and $\sigma(o^{1-j}_i) = l^{1-j}_i$. Note that the two clock
states in Figure~\ref{fig:input} are indexed by $j$, so when we have
$\sigma(l^{1-j}_i) = c_{j}$, we have that $l^{1-j}_i$ is taking the \emph{left}
action shown in Figure~\ref{fig:input}.  If all gates in circuit $j$ have been
evaluated, then since we have assumed that all outputs gates of $C$ have the
same depth, there are two possible values that $o^{j}_{\inp(i)}$ can take:
\begin{itemize}
\item If $\val^{\sigma}(o^{j}_{\inp(i)}) = L_{d(C)}$, then the appeal of
switching $o^{1-j}_i$ to $r^{1-j}_i$ is 
$L_{d(C)} - \frac{H_{d(C)} + L_{d(C)}}{2} < 0$.
\item If $\val^{\sigma}(o^j_{\inp(i)}) = H_{d(C)}$, then the appeal of switching
$o^{1-j}_i$ to $r^{1-j}_i$ is 
$H_{d(C)} - \frac{H_{d(C)} + L_{d(C)}}{2} > 0$.
\end{itemize}
So, $o^{1-j}_i$ can switch to $r^{1-j}_i$ if and only if the~$i$-th output bit
of~$C$ is~$1$. Recall that when circuit $1-j$ is in output mode, we have that
$o^{1-j}_i$ outputs true if and only if $\sigma(o^{1-j}_i) = l^{1-j}_i$. This is
the reason why we needed $C$ to be in negated form, because this means that
$o^{1-j}_i$ will be switched to $r^{1-j}_i$ if and only if the~$i$-th output bit
of~$F$ is~$0$. Thus, the output
of circuit~$j$ will be correctly copied into the input bits of circuit $1-j$.

Finally, we describe how the gadget transitions between the phases. When we move
from phase $j$ to phase $1-j$, the roles of the two circuits are switched. Thus,
the input bits of circuit $j$ must be switched from output mode to copy mode,
and the input bits of circuit $1-j$ must be switched from copy mode to output
mode. To do this, we must ensure that, for every input bit $i$, the states
$l^j_i$, $r^j_i$, $l^{1-j}_i$, and $r^{1-j}_i$, are all switched before before
phase $1-j$ begins. Moreover, we must switch $o^{j}_i$ to $l^j_i$, so that it is
ready to copy in the next phase. 

The probabilities used in the appeal reduction gadgets are specifically chosen
for this task. As our proof will show, none of these states will be switched
before the output of circuit $j$ is copied into the input of circuit $1-j$, and
once the copy has taken place, the probabilities ensure that they switch in a
specific order. In particular, $o^{1-j}_i$ must \emph{not} be switched during
this process, because doing so would destroy the input value that we will use
during phase $1-j$.

\subsection{\org gates}

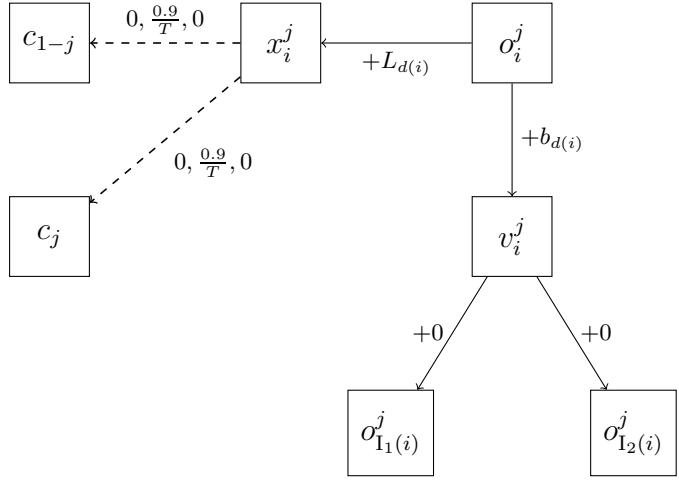
\begin{figure}
\begin{center}
\begin{tikzpicture}[node distance=3.5cm,auto]
\node[state,font=\large] (v) [] {$v^j_i$};
\node[state,font=\large] (i_1) [below left=1.5cm and 0.5cm of v] {$o^j_{\inp_1(i)}$};
\node[state,font=\large] (i_2) [below right=1.5cm and 0.5cm of v] {$o^j_{\inp_2(i)}$};
\node[state,font=\large] (o) [above=1.5cm of v] {$o^j_i$};
\node[state,font=\large] (c) [left=2cm of o] {$x^j_i$};
\node[state,font=\large] (cj) [below left=1.5cm and 2cm of c] {$c_j$};
\node[state,font=\large] (cp) [left=2cm of c] {$c_{1-j}$};
\path[->] 
	(v) edge node [anchor=east] {$+0$} (i_1)
	(v) edge node [anchor=west] {$+0$} (i_2)
	(o) edge node [] {$+b_{d(i)}$} (v)
	(o) edge node [] {$+L_{d(i)}$} (c)
	;
\path[arg]
	(c) edge node {$0,\frac{\xj}{T},0$} (cj)
	(c) edge node [swap] {$0,\frac{\xj}{T},0$} (cp)
	;
\end{tikzpicture} 
\end{center}
\caption{The gadget for an \org gate $i$ in circuit $j$.}
\label{fig:or}
\end{figure}

For every \org gate $i$ and every $j \in \{0, 1\}$, our construction will
include a copy of the gadget shown in Figure~\ref{fig:or}. As the circuit is
computing in phase $j$ we will have that $x^j_i$ takes the action towards $c_j$.
The probabilities on the appeal reduction gadgets at $x^j_i$ ensure that $x^j_i$
only switches to $c_{1-j}$ after the output of the circuit $j$ has been copied
to circuit $1-j$.

The purpose of the state $v^j_i$ is to select the maximum of the two inputs to
gate $i$. It should be fairly clear that if exactly one of the two input gates
is true, then $v^j_i$ will switch towards that input. If both input gates have
the same truth value, then it is irrelevant which action $v^j_i$ chooses.

The state $o^j_i$ ensures that the gate outputs the correct value. To see this,
suppose that $\sigma$ is a policy with $\sigma(o^j_i) = v^j_i$. 
\begin{itemize}
\item If at least one of the input gates is true, then we will have
$\val^{\sigma}(v^j_i) = \val^{\sigma}(c_j) + H_{d(i) - 1}$. In this case we
have:
\begin{equation*}
\val^{\sigma}(o^j_i) = \val^{\sigma}(c_j) + H_{d(i) - 1} + b_{d(i)} = \val^{\sigma}(c_j) + H_{d(i)}.
\end{equation*}
Thus, if one of the two input gates is true, the output of gate $i$ will be
true.
\item If both input gates are false, then we will have
$\val^{\sigma}(v^j_i) = \val^{\sigma}(c_j) + L_{d(i) - 1}$. In this case we
have:
\begin{align*}
\val^{\sigma}(o^j_i) &= \val^{\sigma}(c_j) + L_{d(i) - 1} + b_{d(i)} \\
&< \val^{\sigma}(c_j) + L_{d(i)}.
\end{align*}
So, since $x^j_i$ takes the action towards $c_j$, we have that switching $o^j_i$
to $x^j_i$ has appeal $L_{d(i)} - (L_{d(i) - 1} + b_{d(i)}) < 0$. Therefore
$o^j_i$ will be switched to $x^j_i$, and we will have $\val^{\sigma}(o^j_i) =
\val^{\sigma}(c_j) + L_{d(i)}$, as required.
\end{itemize}

\subsection{\notg gates}

\begin{figure}
\begin{center}
\begin{tikzpicture}[node distance=3.5cm,auto]
\node[state,font=\large] (v) [] {$o^j_i$};
\node[state,font=\large] (i) [below of=v] {$o^j_{\inp(i)}$};
\node[state,font=\large] (a) [left=3cm of v] {$a^j_{i}$};
\node[state,font=\large] (c_1) [left=3.5cm of a] {$c_{1-j}$};
\node[state,font=\large] (c_0) [below left=2cm and 3.5cm of a] {$c_j$};
\path[->] 
	(v) edge node [] {$0$} (i)
	;
\path[arg]
	(v) edge node [] {$1, \frac{1}{b_{d(i)}}, 0$} (a)
	(a) edge node [swap] {$0, p_{1}, -T + H_{d(i) - 1}$} (c_1)
	(a) edge node [] {$0, p_{2}, 0$} (c_0)
	;
\end{tikzpicture} 
\end{center}
\caption{A gadget for a \notg gate $i$ in circuit $j$.}
\label{fig:not}
\end{figure}
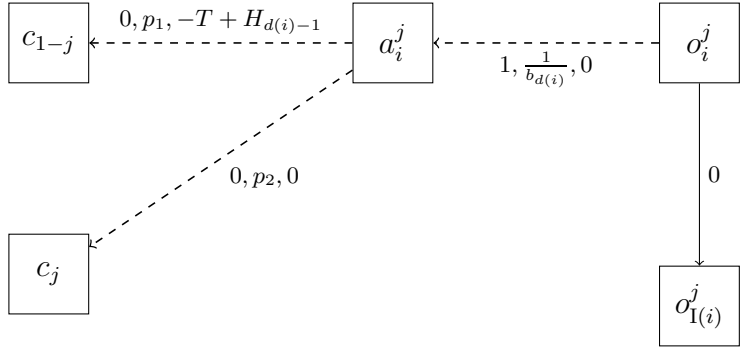

For each \notg gate $i$, and each $j \in \{0, 1\}$, our construction will include
one copy of the gadget shown in Figure~\ref{fig:not}. The probabilities used in
Figure~\ref{fig:not} are defined as follows.
\begin{align*}
p_{1} &= \frac{\ajprime}{H_{d(i)}} &
p_{2} &= \frac{\aj}{2T - H_{d(i)-1}} 
\end{align*}
Since $H_{d(i)} > 4$ we must have that $p_1 < 1$, and we must have $p_2 < 1$
because $2T - H_{d(i)-1} > \aj$.

The state $a^j_i$ has a very important role in this gadget. At the start of
phase $j$ we have that $a^j_i$ takes the action to $c_j$. In this configuration,
since by assumption all \notg gates have depth greater than $2$, we have that if
$\sigma$ is a policy with $\sigma(o^j_i) = o^j_{\inp(i)}$, then the appeal of
switching $o^j_i$ to $o^j_{\inp(i)}$ is at least: 
\begin{equation*}
(\val^{\sigma}(c_j) + L_{d(i)-1}) - (\val^{\sigma}(c_j) + b_{d(i)}) > 0,
\end{equation*}
whenever $d(i) \ge 2$. This has the effect of ensuring that $o^j_i$ is switched
to $o^{j}_{\inp(i)}$ before $a^j_i$ is switched.

We now describe what happens after $a^j_i$ is switched to $c_{1-j}$. Let
$\sigma$ be a policy in which $\sigma(a^j_i) = c_{1-j}$ and $\sigma(o^j_i) =
o^j_{\inp(i)}$. Let us consider the two possible outputs for gate $\inp(i)$.
\begin{itemize}
\item If gate $\inp(i)$ is false, then we have $\val^{\sigma}(o^j_i) =
\val^{\sigma}(c_j) + L_{d(i)-1}$. Therefore, we can apply
Lemma~\ref{lem:argappeal} to argue that the appeal of switching $o^j_i$ to
$a^j_i$ is:
\begin{align*}
\frac{1}{b_{d(i)}} \cdot (H_{d(i) - 1} - L_{d(i) - 1}) + 1 
&= \frac{1}{b_{d(i)}} \cdot b_{d(i) - 1} + 1 \\
&= \frac{1}{3^{d(C) - d(i) + 2}} \cdot 3^{d(C) - d(i) + 3} + 1 \\
&= 4.
\end{align*}
So, $o^j_i$ will be switched to $a^j_i$. If $\sigma'$ is this new policy, then
we will have:
\begin{equation*}
\val^{\sigma'}(o^j_i) = \val^{\sigma'}(c_j) + H_{d(i) - 1} + b_{d(i)} =
\val^{\sigma'}(c_j) + H_{d(i)}. 
\end{equation*}
This is the correct output value for gate $i$ in the case where gate $\inp(i)$
is false.
\item If gate $\inp(i)$ is true, then we have
$\val^{\sigma}(o^j_i) = \val^{\sigma}(c_j) + H_{d(i)-1}$. Again we can apply
Lemma~\ref{lem:argappeal} to argue that the appeal of switching $o^j_i$ to
$a^j_i$ is:
\begin{equation*}
\frac{1}{b_{d(i)}} \cdot (H_{d(i) - 1} - H_{d(i) -1}) + 1 = 1.
\end{equation*}
Note that the action is now less appealing than in the previous case. This is
the fundamental property that our construction exploits. We will ensure that
when Dantzig's rule computes the output of the circuit $j$, and copies it into
circuit $1-j$, it always switches an action with appeal strictly greater than
$1$. Thus, while the circuit is computing, an action with appeal $1$ will
not be switched. So $o^j_i$ will not be switched away from $o^j_{\inp(i)}$, and
we have:
\begin{equation*}
\val^{\sigma}(o^j_i) = \val^{\sigma}(c_j) + H_{d(i)-1} = \val^{\sigma}(c_j) + L_{d(i)}.
\end{equation*}
This is the correct output for \notg gate $i$ when gate $\inp(i)$ is true.
\end{itemize}

Note that in our description so far, $a^j_i$ has played the role of
``activating'' the gadget: the gadget will not compute the not of its input
until $a^j_i$ switches to $c_{1-j}$. It is for this reason that the appeal of
switching $a^j_i$ to $c_{1-j}$ has been very carefully chosen. Let $\sigma$ be a
policy with $\sigma(a^j_i) = c_{1-j}$. Recall that in phase $j$ we have
$\val^{\sigma}(c_{1-j}) = \val^{\sigma}(c_j) + T$. Therefore, we can apply
Lemma~\ref{lem:argappeal} to show that the appeal of switching $a^j_i$ to
$c_{1-j}$ is:
\begin{equation*}
p_1 \cdot (T - T + H_{d(i) - 1}) = \ajprime
\end{equation*}
This is a crucial property, because Dantzig's rule will always switch the action
with highest appeal. Therefore, \notg gates with depth $k$ will be considered by
Dantzig's rule before \notg gates with depth $k+1$. This will allow us to show
that Dantzig's rule computes the outputs of \notg gates in order of depth, which is
needed in our proof of correctness.

\section{The proof}
\label{sec:proof}

In this section, we prove that $\actionswitch$ is PSPACE-complete, by
reducing from $\bitswitch$. To this end, the circuit iteration instance $(F,
B^I, z)$, where $n = |B^I|$ is the bit-length of $F$ and $B^I$. This instance
will be fixed throughout this section. We also fix $C$ to be the negated form of
the circuit implementing $F$. We start by proving properties of the gadgets used
in $\const(C)$.

The following lemma is shown in Appendix~\ref{app:zero}. We will use this lemma
to show that, in  our initial policy for $\const(C)$, every state has expected
average-reward $0$, and in an optimal policy for $\const(C)$, every state has
expected average-reward $0$. As we explained in Section~\ref{sec:definitions},
this justifies the use of expected total-reward notation.
\begin{lemma}
\label{lem:zero}
We have:
\begin{itemize}
\item If $\sigma$ is a policy with $\sigma(l^0_i) = \sigma(r^0_i) = c_j$
for every input bit $i$, then $G^{\sigma}(s) = 0$ for every state $s$.
\item
Let $\sigma^*$ be an optimal policy in $\const(C)$. We have $G^{\sigma^*}(s) =
0$ for every state $s$.
\end{itemize}
\end{lemma}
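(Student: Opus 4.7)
}
The plan is to analyze the recurrent-class structure of the Markov chain induced by the policies in question and exploit the fact that $\sink$ is the only state whose self-loop has reward~$0$. For an average-reward MDP the gain satisfies $G^\sigma(s) = \sum_R \Pr[\text{reach } R \mid s] \cdot g(R)$, where $R$ ranges over the recurrent classes of the chain induced by $\sigma$ and $g(R)$ is the per-step average reward in $R$, so each part of the lemma reduces to controlling which recurrent classes are reached and what their average rewards are.

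For the first part, I would argue that under the stated $\sigma$ every state reaches $\sink$ almost surely, making $\{\sink\}$ the unique recurrent class reached and forcing $G^\sigma(s) = 0$. Each non-sink clock state has positive probability, under any policy, of making progress toward $\sink$ (possibly via $\sink'$, which then deterministically transitions to $\sink$). Within the circuit part, gate-to-gate transitions inside a single circuit copy strictly decrease gate depth, so any recurrent cycle in the circuit part must cross between the two copies, and the only such crossings are the copy-mode transitions $r^j_i \to o^{1-j}_{\inp(i)}$. A crossing into circuit~$0$'s input bits would require such a transition at some $r^0_i$, but the hypothesis $\sigma(l^0_i) = \sigma(r^0_i) = c_j$ rules this out. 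Hence no recurrent cycle lives in the circuit part and every state reaches $\sink$.

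For the second part, I would prove the stronger claim that $G^\sigma(s) \le 0$ for \emph{every} policy $\sigma$; combined with the first part this forces the optimum to be exactly~$0$. It suffices to show that every recurrent class $R \ne \{\sink\}$ has strictly negative average reward. Such an $R$ must lie in the circuit part and, by the depth argument above, must include copy-mode crossings at input bits of both circuit copies. By Lemma~\ref{lem:argappeal}, each such crossing contributes expected reward $r_f + r_d/p = -T/2$ per traversal, while any downward pass through the gate DAG between two consecutive crossings accumulates at most $\sum_{k=0}^{d(C)} b_k = H_{d(C)}$ in \org-gate rewards (and nothing from \notg edges). Since the construction guarantees $H_{d(C)} < T/2$, each cross-and-descend segment contributes strictly negative expected reward, so the long-run average reward of $R$ is negative.

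The step I expect to be hardest is the cycle decomposition in the second part: the circuit subchain is not purely deterministic because of the ARG loops, so one must argue carefully that the expected per-cycle reward and per-cycle time are both well-defined and combine to give the claimed sign. Once that is in place, the slack $T/2 - H_{d(C)} > 0$ baked into the construction makes the negative-cycle conclusion essentially mechanical.
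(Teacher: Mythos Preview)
Your approach is essentially the same as the paper's: for the first part you argue that the hypothesis on $r^0_i$ blocks one direction of inter-circuit crossing, so no recurrent class other than $\{\sink\}$ can form and $G^\sigma\equiv 0$; for the second part you bound the positive reward collected between consecutive $r^j_i\to o^{1-j}_{\inp(i)}$ crossings by (at most) $H_{d(C)}$ and use $H_{d(C)}<T/2$ to conclude that every non-$\{\sink\}$ recurrent class has negative gain. The paper's proof is terser (it cites the upper-bound lemma on gate values rather than tracing the descent explicitly, and uses the slightly looser $2H_{d(C)}-T/2<0$), but the structure and the key inequality are identical; your acknowledged ``hardest step'' about combining expected cycle reward and expected cycle length is exactly the point the paper leaves implicit, and your handling of it is fine.
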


\subsection{The clock}

We prove that the clock generates a sequence of $2^n$ different phases. Note
that since the clock has no actions to other gadgets, we can treat it separately
from the rest of the construction. We say that $\sigma$ is an initial policy for
the clock if $\sigma(i) = i-1$ for all states $i$ with $1 \le i \le n$. The
following lemma is shown in Appendix~\ref{app:clock}. 
\begin{lemma}
\label{lem:clock}
Let $\sigma_0$ be an initial policy for the clock. Dantzig's rule goes through an
exponential sequence of policies $\langle \sigma_0, \sigma_1, \dots
\sigma_{2^n-1}\rangle$ with the following properties:
\begin{itemize}
\itemsep1mm
\item If $j$ is even, then $\val^{\sigma_j}(c_1) = \val^{\sigma_j}(c_0) + T$.
\item If $j$ is odd, then $\val^{\sigma_j}(c_0) = \val^{\sigma_j}(c_1) + T$.
\item Dantzig's rule always switches an action with appeal in the range $[0.25,
0.5]$.
\end{itemize}
\end{lemma}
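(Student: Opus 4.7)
The plan is to prove Lemma~\ref{lem:clock} by induction on $j$, building the sequence $\sigma_0, \sigma_1, \ldots, \sigma_{2^n - 1}$ one Dantzig switch at a time while maintaining invariants on both $\val^{\sigma_j}(c_0) - \val^{\sigma_j}(c_1)$ and on the appeals of all currently switchable actions.

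First I would parametrize each reachable policy by a bit-string $b \in \{0,1\}^n$ with $b_i = 1$ iff state $i$ has taken its downward action into the primed row. Because each primed ``circle'' state splits its outgoing probability evenly, and because the only nonzero reward is $T \cdot 2^{n+1}$ on the $\sink' \to \sink$ transition, the values $\val^{\sigma}(c_0)$ and $\val^{\sigma}(c_1)$ can be written as multilinear polynomials in $\alpha_1, \ldots, \alpha_n$ whose coefficients are determined by $b$. Substituting $\alpha_i = \clockappeal \cdot T^{-1} \cdot 2^{-(f(i)-1)}$ and simplifying, the $T^{-1}$ factors cancel against the $T \cdot 2^{n+1}$ reward so that $|\val^{\sigma}(c_0) - \val^{\sigma}(c_1)|$ equals exactly $T$, with the sign determined by a combinatorial parity in $b$. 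This gives the first two claims of the lemma up to the question of which bit-strings $b$ actually appear along the Dantzig trajectory.

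Next, using Lemma~\ref{lem:argappeal}, the appeal of the currently-unused action at each numbered state $i$ reduces, after these cancellations, to a quantity of the form $\pm \clockappeal \cdot 2^{-(f(i)-1)}$, with the sign depending on the current phase parity and on $b_i$. The function $f$ is chosen so that these appeals are pairwise distinct and so that the uniquely largest \emph{positive} appeal at each step corresponds to $2^{-(f(i)-1)} = 1$; this simultaneously makes Dantzig's rule act like the increment operation of a binary counter on $b$ and guarantees that every action actually switched by Dantzig has appeal equal to $\clockappeal$, which lies in $[\tfrac14, \tfrac12)$ and so in the claimed range $[0.25, 0.5]$.

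The main obstacle is showing that Dantzig's rule actually enumerates $2^n$ distinct bit-strings rather than terminating prematurely or cycling. This is the combinatorial heart of the Melekopoglou--Condon argument~\cite{MC94}, here adapted to the reward setting with two clock outputs $c_0, c_1$. Concretely, after each switch I must verify that the sign flip in $\val^{\sigma}(c_0) - \val^{\sigma}(c_1)$ turns the appeals of all previously-switched actions negative (preventing backtracking) while promoting the next bit in a Gray-code-like enumeration to become the unique maximum-appeal switchable action. Maintaining this invariant through $2^n - 1$ switches yields the entire sequence, after which the alternation claim in the first two bullets of the lemma is immediate from the parity invariant established in the setup, and the appeal bound is immediate from the closed-form appeal expression.
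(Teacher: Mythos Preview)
Your high-level plan---parametrize clock policies by bit-strings and show Dantzig traces a Gray code---matches the paper, but several of the concrete mechanisms you propose are wrong in ways that would block the argument.

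First, the values $\val^\sigma(s)$ in the clock do \emph{not} depend on the $\alpha_i$: by part~1 of Lemma~\ref{lem:argappeal}, with $r_d=r_f=0$ a state using an appeal-reduction gadget has value equal to its target, so the probability $\alpha_i$ drops out entirely. The values are determined purely by the bit-string; the $\alpha_i$ enter only through appeals. Second, your appeal formula is off by exactly the factor you think survives. The (scaled) value gap between $i'$ and $i-1$ is of order $2^{f(i)}$, and since $\alpha_i$ carries the factor $2^{-(f(i)-1)}$, Lemma~\ref{lem:argappeal} makes the appeal of switching state $i$ equal to $\clockappeal$ itself: the $2^{-(f(i)-1)}$ \emph{cancels} rather than remaining. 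Tie-breaking is therefore done by the $-\tfrac{1}{4i}$ term, which makes the \emph{highest-index} switchable state the most appealing. Under your formula Dantzig would always pick $i=n$ and toggle a single bit forever, never producing an exponential sequence.

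Third, the causality in your last paragraph is backwards. The states $c_0,c_1$ sit at the top of the clock; their values are \emph{outputs} of the choices at $1,\ldots,n$ and do not feed back into the appeal at any state $i$. Switchability of state $i$ is governed by the local comparison of $\val^\sigma(i')$ with $\val^\sigma(i-1)$. The paper derives explicit closed forms for these (Lemma~\ref{lem:clock_values}) and shows that state $i$ is switchable precisely when bit $f(i)$ of the iteration index $j$ is $0$; combined with the $\tfrac{1}{4i}$ tie-break, this forces Dantzig to flip bit $f(\lsz(j))$, which is exactly the reflected binary Gray code increment. The alternation of $\val^\sigma(c_0)-\val^\sigma(c_1)$ is then read off \emph{a posteriori} from those closed forms, not used to drive the induction.
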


Lemma~\ref{lem:clock} shows the clock will go through an exponential sequence of
phases. Moreover, it shows that the appeal of advancing the clock always lies
in the range $[0.25, 0.5]$. Thus, if $\sigma$ is a policy in phase $j$, and if
Dantzig's rule switches from $\sigma$ to $\sigma'$ by switching an action with
appeal strictly greater than $0.5$, then $\sigma'$ will also be in phase $j$. We
will use this fact frequently throughout the rest of the proof. 

Note that the first phase of the clock is phase $0$, so at the start of our
construction we must load $B^I$ into the input bits of circuit $0$. Moreover,
the final phase of the clock is phase $1$, so at the end of our construction
circuit $0$ will copy $F^{2^n}(B^I)$ from the outputs of circuit $1$.

\subsection{Computing the circuit outputs}

\paragraph{\bf Coherent policies.}
In this section, we show that our circuit gadgets will correctly compute the
function $F$. That is, if we are in phase $j$, and if the input bits of circuit
$j$ are currently holding a bit-string $B$, then Dantzig's rule will eventually
switch to a policy in which the outputs of circuit~$j$ give $F(B)$.

Recall that during phase $j$ the input bits in circuit $j$ are in output
mode, and the input bits in circuit $1-j$ are in copy mode. Also recall that,
in order for \org gate $i$ in circuit $j$ to function correctly, we must have that
$x^j_i$ takes the action to $c_j$. We formalise these conditions, along with
some other technical conditions for circuit $1-j$, by defining \emph{coherent}
policies. We say that a policy~$\sigma$ is \emph{coherent} in phase $j$ if the
following conditions hold for every gate $i$:
\begin{itemize}
\itemsep1mm
\item If $i$ is an input bit then:
\vspace{1.5mm}
\begin{itemize}
\itemsep1mm
\item We have $\sigma(l^j_i) = c_j$ and $\sigma(r^j_i) = c_j$.
\item We have $\sigma(l^{1-j}_i) = c_{j}$ and $\sigma(r^{1-j}_i) = o^j_{\inp(i)}$.
\end{itemize}
\item If $i$ is an \org gate then we have $\sigma(x^j_i) = c_j$ and we have and
$\sigma(x^{1-j}_i) = c_j$.
\item If $i$ is a \notg gate then we have $\sigma(a^{1-j}_i) = c_j$.
\end{itemize}
%Note that coherent policies also specify some conditions for the \notg gates and
%the \org gates.
%These conditions ensure that the input bits of circuit $j$ give the correct
%outputs, and that the input bits of circuit $1-j$ a

The following Lemma, which is proved in Appendix~\ref{app:noswitch}, will be
crucial for our proof.
\begin{lemma}
\label{lem:noswitch}
Suppose that we are in phase $j$, and let $\sigma$ be a coherent policy. 
The following states have no actions with appeal strictly greater than $3.5$:
\begin{itemize}
\item $l^j_i$, $r^j_i$, $o^j_i$, $l^{1-j}_i$, $r^{1-j}_i$, and $o^{1-j}_i$ for every input bit $i$.
\item $x^j_i$ and $x^{1-j}_i$ for every \org gate $i$.
\item $a^{1-j}_i$ for every \notg gate $i$.
\end{itemize}
\end{lemma}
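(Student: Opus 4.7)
The plan is a case analysis: for each listed state $s$ and each action $a \ne \sigma(s)$ at $s$, I apply Lemma~\ref{lem:argappeal} to compute $\appeal^{\sigma}(a)$ and check it is at most $3.5$. I proceed in two stages.

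First, using the coherence conditions together with the phase-$j$ invariant $\val^{\sigma}(c_{1-j}) = \val^{\sigma}(c_j) + T$ and the first part of Lemma~\ref{lem:argappeal}, I would pin down the value of every ``anchor'' state as an explicit offset from $\val^{\sigma}(c_j)$: $\val^{\sigma}(l^j_i) = \val^{\sigma}(c_j) + H_0$ via the $p_4$ gadget (output mode), $\val^{\sigma}(r^j_i) = \val^{\sigma}(c_j) + L_0$ via $p_6$, $\val^{\sigma}(l^{1-j}_i) = \val^{\sigma}(c_j) - T/2 + (H_{d(C)} + L_{d(C)})/2$ via the $p_5$ gadget (copy mode), $\val^{\sigma}(r^{1-j}_i) = \val^{\sigma}(o^j_{\inp(i)}) - T/2$ via $p_7$, $\val^{\sigma}(x^j_i) = \val^{\sigma}(x^{1-j}_i) = \val^{\sigma}(c_j)$, and $\val^{\sigma}(a^{1-j}_i) = \val^{\sigma}(c_j) + H_{d(i)-1} - T$. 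Since coherence leaves intermediate gates unconstrained, I would also establish by induction on gate depth a uniform structural bound $0 \le \val^{\sigma}(o^j_k) - \val^{\sigma}(c_j) \le H_{d(k)}$, propagating through the \org and \notg gadget structures; the base case is the input bit values $\{L_0, H_0\}$ and the inductive step uses that both the $v$-branch and $a$-branch add $b_{d(k)}$ on top of an inductively bounded quantity, while the $x$-branch pins the value exactly to $\val^{\sigma}(c_j) + L_{d(k)}$.

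Second, I would apply the second part of Lemma~\ref{lem:argappeal} to each alternative action. The crucial observation is that the probabilities $p_1, \ldots, p_7$ are engineered so that, in their designed worst-case value-difference configuration, each anchor alternative's appeal evaluates exactly to the corresponding numerator constant --- $\bl$, $\lj$, $\lonej$, $\rj$, $\ro$, $\xj$, or $\aj$, each strictly less than $3.5$. In non-designed configurations, the mismatch between the actual $(b+r_f)$ and the gadget's denominator multiplies the numerator by a ratio strictly less than one, scaling the appeal downward. The deterministic alternatives at $o^j_i$ and $o^{1-j}_i$ (moving between $l$ and $r$ without a gadget) give appeals that are direct value differences; one direction is negative (going ``true $\to$ false'' switches to a lower-valued state), and the other is of magnitude $O(b_0/T)$ after passing through the $p_3$ gadget.

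The main obstacle is controlling the few alternatives whose appeal depends on the driving output value $\val^{\sigma}(o^j_{\inp(i)})$, most notably the $r^{1-j}_i \to c_{1-j}$ switch through $p_6$, whose appeal is $\rj \cdot (3T/2 + L_0 - \Delta)/(3T/2 + L_0 - H_{d(C)})$ with $\Delta \in [0, H_{d(C)}]$. The designed case $\Delta = H_{d(C)}$ yields appeal exactly $\rj$, but smaller $\Delta$ inflates the ratio; using $T = 3^{d(C)+6}$ and the bound $H_{d(C)} < 3^{d(C)+3}$, the inflation factor $(3T/2)/(3T/2 - H_{d(C)})$ is at most $81/79$, so the worst-case appeal is $\rj \cdot 81/79 < 3.3 < 3.5$. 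A similar ``inflation factor'' argument handles the copy-mode switches at $o^{1-j}_i$ by combining the structural bound on $\val^{\sigma}(o^j_{\inp(i)})$ with the pinned values of $\val^{\sigma}(l^{1-j}_i)$ and $\val^{\sigma}(r^{1-j}_i)$. Once these bounds are verified, the remaining states reduce to routine numerical substitution of the constants.
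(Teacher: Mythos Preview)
Your overall approach matches the paper's: a state-by-state case analysis using Lemma~\ref{lem:argappeal}, after first establishing the anchor values and the inductive gate bounds $0 \le \val^{\sigma}(o^j_k) - \val^{\sigma}(c_j) \le H_{d(k)}$ (the paper proves these separately as Lemmas~\ref{lem:gatelower} and~\ref{lem:gateupperj} in Appendix~\ref{app:upper}). Your inflation-factor computation for the $r^{1-j}_i \to c_{1-j}$ switch is exactly the paper's Lemma~\ref{lem:fourth}.

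There is, however, a genuine gap at $o^{1-j}_i$. When $\sigma(o^{1-j}_i) = l^{1-j}_i$ (copy mode), the alternative action to $r^{1-j}_i$ is the \emph{deterministic} edge, not the $p_3$-gadget edge, so no damping applies. The appeal is the raw difference
\[
\val^{\sigma}(r^{1-j}_i) - \val^{\sigma}(l^{1-j}_i)
= \bigl(\val^{\sigma}(o^j_{\inp(i)}) - \tfrac{T}{2}\bigr) - \bigl(\val^{\sigma}(c_j) - \tfrac{T}{2} + \tfrac{H_{d(C)}+L_{d(C)}}{2}\bigr)
= \Delta - \tfrac{H_{d(C)}+L_{d(C)}}{2},
\]
with $\Delta \in [0, H_{d(C)}]$. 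At $\Delta = H_{d(C)}$ this equals $b_{d(C)}/2 = 4.5 > 3.5$; the paper itself computes exactly this value in Lemma~\ref{lem:seven}. Your sentence ``a similar inflation-factor argument handles the copy-mode switches at $o^{1-j}_i$'' is therefore incorrect: there is no probability to absorb the difference, and the bound genuinely fails.

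In fact the paper's own proof of Lemma~\ref{lem:noswitch} (via Lemma~\ref{lem:buffappeal} in Appendix~\ref{app:noswitch}) silently omits $o^{1-j}_i$ altogether; the state is listed in the lemma statement but never bounded. The lemma as literally stated is thus over-strong. This does not harm the paper's argument, because the only use of Lemma~\ref{lem:noswitch} is to guarantee that a high-appeal switch preserves \emph{coherence}, and coherence places no constraint on $\sigma(o^{1-j}_i)$. So the correct fix, both for your proposal and the paper, is simply to drop $o^{1-j}_i$ from the list rather than to attempt to bound it.
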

Intuitively, Lemma~\ref{lem:noswitch} states that if $\sigma$ is a coherent
policy, and if Dantzig's rule moves to a policy~$\sigma'$ by switching an action
with appeal at least $3.5$, then $\sigma'$ is also coherent. As we will show,  
when the construction evaluates the gates in circuit $j$, and copies the output
into the input bits
of circuit $1-j$, it always switches an action with appeal at least $3.5$.
Therefore, as this is done, the policy is always coherent.

\paragraph{\bf Circuit computation.}

Our goal is to show that circuit $j$ correctly computes the function $F$. In
order to do this, we use the following definition of correctness. Suppose that
we are in phase $j$, let~$\sigma$ be a policy, and let $B \in \{0, 1\}^k$ be
an input bit-string for $C$. We say that a gate $i$ is \emph{$B$-correct} 
in~$\sigma$ if one of the following conditions holds:
\begin{itemize}
\item If $C(B, i) = 1$, then we have $\val^{\sigma}(o^j_i) = \val^{\sigma}(c_j)
+ H_i$.
\item If $C(B, i) = 0$, then we have $\val^{\sigma}(o^j_i) = \val^{\sigma}(c_j) + L_i$.
\end{itemize}

We also use the following definition of a \emph{final} gate.
Suppose that we are in phase $j$, and let $\sigma$ be a coherent policy. We
say that a state $s$ is final if, for every action $a \in A_s$, we have
$\appeal^{\sigma}(a) \le 3.5$. Using this notion, we give an inductive
definition of finality for gates. We say that a gate $i$ is \emph{final} in
$\sigma$ if all gates $i'$ with $d(i') < d(i)$ are final, and one of the
following conditions is satisfied:
\begin{itemize}
\item $i$ is an \org gate, and $o^j_i$, $v^j_i$, and $x^j_i$ are final.
\item $i$ is a \notg gate, and both $o^j_i$ and $a^j_i$ are final.
\item $i$ is an input bit, and $o^j_i$, $l^j_i$ and $r^j_i$ are final.
\end{itemize}

\begin{lemma}
\label{lem:final}
Suppose that we are in phase $j$, and let $\sigma$ be a coherent policy. Suppose
further that Dantzig's rule moves from $\sigma$ to a policy $\sigma'$ by
switching an action with appeal at least $3.5$. If gate $i$ is final in
$\sigma$, then it is also final in $\sigma'$.
\end{lemma}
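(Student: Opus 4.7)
I proceed by strong induction on $d(i)$, relying on two preparatory claims: (a) $\sigma'$ is coherent; and (b) for every gate $i'$ with $d(i') < d(i)$ that is final in $\sigma$, the relative value $\val^{\sigma'}(o^j_{i'}) - \val^{\sigma'}(c_j)$ equals $\val^{\sigma}(o^j_{i'}) - \val^{\sigma}(c_j)$.

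Claim (a) is immediate from Lemma~\ref{lem:noswitch}: the switched action has appeal at least $3.5$, while every coherence-defining state ($l^j_i, r^j_i, o^j_i$ for input bits and their circuit $1-j$ counterparts; $x^j_i, x^{1-j}_i$ for \org gates; $a^{1-j}_i$ for \notg gates) has all actions of appeal at most $3.5$ in $\sigma$. So the switched action is at none of these states, and $\sigma'$ inherits every coherence condition from $\sigma$. Claim (b) is proved by tracing the unchanged policy choices inside each final gadget from $o^j_{i'}$ down to $c_j$, and using Equation~\eqref{eq:val} together with Lemma~\ref{lem:argappeal} to rewrite $\val^{\sigma'}(o^j_{i'})$ as $\val^{\sigma'}(c_j)$ plus a fixed constant ($H_{d(i')}$ or $L_{d(i')}$) that depends only on the unchanged local policy of the final gadget.

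The base case $d(i)=0$ of the induction is handled directly by Lemma~\ref{lem:noswitch}, since the states $o^j_i, l^j_i, r^j_i$ that define finality of an input bit are all listed there and $\sigma'$ is coherent. For the inductive step, I case-split on the type of gate $i$. For an \org gate, $x^j_i$ is covered by Lemma~\ref{lem:noswitch}, and the appeals at $v^j_i$ and $o^j_i$ in $\sigma'$ equal the corresponding appeals in $\sigma$ by claim (b) applied to $o^j_{\inp_1(i)}, o^j_{\inp_2(i)}$, which remain final in $\sigma'$ by the inductive hypothesis; since these appeals were at most $3.5$ in $\sigma$, they remain so in $\sigma'$. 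For a \notg gate, the appeals at $o^j_i$ and $a^j_i$ are bounded analogously using the preserved relative value of $o^j_{\inp(i)}$ together with the phase invariance $\val^{\sigma'}(c_{1-j}) - \val^{\sigma'}(c_j) = T$, which holds because by Lemma~\ref{lem:clock} a clock advancement has appeal at most $0.5$ and hence cannot have been the switch.

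The main obstacle I anticipate is executing claim (b) rigorously. A single switch, especially at a probabilistic appeal reduction gadget, can propagate value changes nonlocally through the clock and the sink; one must verify that these propagated changes enter $\val^{\sigma'}(o^j_{i'})$ and $\val^{\sigma'}(c_j)$ identically so that they cancel in the difference. This bookkeeping is likely the source of the ``long and technical'' nature of the full proof mentioned in the roadmap.
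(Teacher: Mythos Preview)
Your approach is essentially the same as the paper's: induction on depth, coherence preservation via Lemma~\ref{lem:noswitch}, and then case-splitting on the gate type. The structure is right.

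However, your anticipated ``main obstacle'' is a non-issue, and the paper's proof avoids it entirely by working with \emph{absolute} rather than relative values. The key observation you partially make (for the \notg case only) but do not exploit fully is that the clock is self-contained: no action at a clock state leads outside the clock, so clock values depend only on the clock policy. Since every clock action has appeal at most $0.5$ (Lemma~\ref{lem:clock}) and the switched action has appeal at least $3.5$, the clock policy is untouched and hence $\val^{\sigma'}(c_j) = \val^{\sigma}(c_j)$ and $\val^{\sigma'}(c_{1-j}) = \val^{\sigma}(c_{1-j})$ absolutely. There is no propagation to worry about.

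With this in hand, claim~(b) becomes trivial: in a coherent policy, the value of $o^j_{i'}$ for a final gate $i'$ in circuit $j$ is determined by following $\sigma$ through states that lie only in lower-depth circuit-$j$ gadgets and the clock. All of those policy choices are unchanged (the lower-depth gadgets because they are final, the clock by the above), so $\val^{\sigma'}(o^j_{i'}) = \val^{\sigma}(o^j_{i'})$ absolutely. The paper then simply notes that since neither the inputs nor the local policy at $v^j_i, o^j_i, x^j_i$ (resp.\ $o^j_i, a^j_i$) changed, the appeals at these states are literally the same in $\sigma'$ as in $\sigma$. No cancellation argument is needed.
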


Our proof of correctness will be by induction over the depth of the gates. 
Lemma~\ref{lem:final} is critical for making this induction work, because once
we have shown that all gates with depth $k$ are correct and final, we can then
guarantee that these gates will not change their value while we consider the
gates with depth $k+1$.

We use the following definition as the base case of our induction. Let $B \in
\{0, 1\}^k$ be an input bit-string for $C$. We say that a policy $\sigma$ is an
\emph{initial policy for $B$} in phase $j$ if the following conditions are
satisfied.
\begin{itemize}
\itemsep1mm
\item $\sigma$ is coherent.
\item For every input bit $i$:
\vspace{1.5mm}
\begin{itemize}
\itemsep1mm
\item If the $i$-th bit of $B$ is $1$, then we have $\sigma(o^j_i) = l^j_i$.
\item If the $i$-th bit of $B$ is $0$, then we have $\sigma(o^j_i) = r^j_i$.
\item We have $\sigma(o^{1-j}_i) = l^j_i$.
\end{itemize}
\item For every \notg gate $i$, we have $\sigma(a^j_i) = c_j$.
\end{itemize}
Note that these conditions ensure that all input bits are $B$-correct in an
initial policy for $B$. Moreover, if $\sigma$ is an initial policy for $B$, then
since $\sigma$ is required to be coherent, we can apply
Lemma~\ref{lem:buffappeal} to argue that all input bits are final in $\sigma$.
This is used as the base case for the following inductive lemma, which is proved
in Appendix~\ref{app:circuitinductive}.

\begin{lemma}
\label{lem:circuitinductive}
Suppose that we are in phase $j$, let $B \in \{0, 1\}^n$ be an input bit-string
for $C$, and let $k > 0$. Suppose that the following three assumptions are met:
\begin{itemize}
\itemsep1mm
\item Dantzig's rule was started at $\sigma$, which is an initial policy for $B$.
\item Dantzig's rule has only ever switched actions with appeal greater than or
equal to $3.5 +
\frac{1}{2 \cdot k}$.
\item Dantzig's rule has arrived at $\sigma'$ in which all gates with depth at
most $k$ are final and $B$-correct. 
\end{itemize}
When Dantzig's rule is applied to $\sigma'$, it will move to a policy $\sigma''$
in which all gates with depth at most $k+1$ are final and $B$-correct. Moreover,
Dantzig's rule will only switch actions with appeal greater than or equal to
$3.5 + \frac{1}{2 \cdot (k+1)}$ while moving from $\sigma'$ to $\sigma''$.
\end{lemma}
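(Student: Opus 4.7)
The plan is to trace Dantzig's rule step by step starting from $\sigma'$, maintaining two invariants: that the current policy remains coherent, and that every gate of depth at most $k$ stays $B$-correct and final. Coherence is preserved along the way by Lemma~\ref{lem:noswitch}, since I will show every switch performed has appeal strictly greater than $3.5$; finality at depth $\leq k$ is preserved by Lemma~\ref{lem:final}. I would then show that finitely many switches, each of appeal at least $3.5 + \frac{1}{2(k+1)}$, suffice to make each depth-$(k+1)$ gate $B$-correct and final, at which point we declare this policy to be $\sigma''$.

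The core of the argument is a case analysis on the type of each depth-$(k+1)$ gate, using Lemma~\ref{lem:argappeal} together with coherence. For an \org gate $i$: because its inputs are $B$-correct (values $\val^{\sigma}(c_j) + H_k$ or $\val^{\sigma}(c_j) + L_k$) and $\val^{\sigma}(x^j_i) = \val^{\sigma}(c_j)$ by coherence, the switch that aligns $v^j_i$ with a true input (if needed) has appeal $b_k$, and the switch $o^j_i \to x^j_i$ needed when both inputs are false has appeal $b_k - b_{k+1} = \frac{2}{3}b_k$; both values vastly exceed $3.5 + \frac{1}{2(k+1)}$, and after they fire $\val^{\sigma}(o^j_i)$ equals $\val^{\sigma}(c_j) + H_{k+1}$ or $\val^{\sigma}(c_j) + L_{k+1}$ as appropriate. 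For a \notg gate $i$: the early switch $o^j_i \to o^j_{\inp(i)}$ while $a^j_i$ still points to $c_j$ has appeal at least $L_k - b_{k+1} \gg 3.5$ (using $d(i) \geq 2$); the key switch $a^j_i \to c_{1-j}$ has appeal exactly $\ajprime = 3.5 + \frac{1}{2(k+1)}$, which follows from Lemma~\ref{lem:argappeal} applied to the tuned probability $p_1$ together with $\val^{\sigma}(c_{1-j}) - \val^{\sigma}(c_j) = T$; and once $a^j_i$ has moved, $o^j_i$ switches to $a^j_i$ with appeal $4$ if $o^j_{\inp(i)}$ is false (producing $\val^{\sigma}(c_j) + H_{k+1}$) but has appeal only $1$ if it is true (so it does not switch, leaving $\val^{\sigma}(o^j_i) = \val^{\sigma}(c_j) + H_k = \val^{\sigma}(c_j) + L_{k+1}$).

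I would then combine these calculations with Lemma~\ref{lem:noswitch} to pin down Dantzig's processing order. Lemma~\ref{lem:noswitch} rules out any state listed in the coherence conditions, as well as the $x^j$-states and the $a^{1-j}$-states, from ever being Dantzig's next switch during this sub-phase, so Dantzig's rule stays confined to depth-$(k+1)$ gate internals. The high-appeal \org-gate switches and the early $o^j_i \to o^j_{\inp(i)}$ switches at NOT gates fire first; then the appeal-$\ajprime$ switches $a^j_i \to c_{1-j}$; and finally the appeal-$4$ follow-ups $o^j_i \to a^j_i$ at NOT gates whose input is false. After these have all fired, each depth-$(k+1)$ gate is $B$-correct by the calculations above, and all of its states are final: the residual positive-appeal options at $o^j_i$, $v^j_i$, and $a^j_i$ either do not exist or have appeal at most $3.5$ (in particular the appeal-$1$ residue at a true-input NOT gate), and the auxiliary states are final by Lemma~\ref{lem:noswitch}.

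The main obstacle is the tight bookkeeping at the boundary. To conclude that Dantzig's rule actually switches $a^j_i \to c_{1-j}$ at appeal exactly $\ajprime$, I must rule out any other action anywhere in the MDP having strictly higher appeal at that moment; this relies on the inductive hypothesis (appeals $\geq 3.5 + \frac{1}{2k}$ leave no depth-$\leq k$ action with appeal above $3.5$ after processing), on Lemma~\ref{lem:noswitch} across the rest of the construction, and on the precise calibration of $p_1$, $p_2$, and $\frac{1}{b_{d(i)}}$ in the NOT-gadget, which also ensures the appeal-$1$ residue stays strictly below $\ajprime$ so that true-input NOT gates settle at $L_{k+1}$ rather than $H_{k+1}$. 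Verifying that these competing appeals are correctly ordered across the two circuits and the clock is the delicate part of the argument.
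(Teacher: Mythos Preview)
Your approach is essentially the paper's: the same case split on \org/\notg gates at depth $k+1$, the same key appeal computations ($b_k$ and $\tfrac{2}{3}b_k$ for \org, $L_k - b_{k+1}$, $\ajprime$, $4$, and $1$ for \notg), and the same reliance on Lemmas~\ref{lem:noswitch} and~\ref{lem:final} to preserve coherence and finality along the way.

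One claim you make is false as stated and would trip you up when formalizing: you assert that Lemma~\ref{lem:noswitch} confines Dantzig's rule to ``depth-$(k+1)$ gate internals.'' It does not. States $o^j_{i'}$ and $v^j_{i'}$ in gates of depth strictly greater than $k+1$ are not covered by Lemma~\ref{lem:noswitch} and can have very large appeals (e.g.\ an \org gate at depth $k+2$ sees its inputs' values change as depth $k+1$ settles, and its $v^j$ or $o^j$ may then want to switch with appeal $\ge 9$). The paper handles this not by ruling such switches out but by working with an ``eventually switches with appeal $x$'' formalism that allows arbitrary higher-appeal switches to interleave, and observing that such switches cannot alter values at depth $\le k+1$ (the dependency is one-directional). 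You should adopt the same device: your final paragraph shows you sense the issue, but the fix is not to argue that no competing action has higher appeal---rather, argue that any interleaved switch at depth $>k+1$ leaves the depth-$(k+1)$ appeals unchanged, so the target switch still eventually fires.
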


Applying Lemma~\ref{lem:circuitinductive} inductively allows us to conclude
that, if $\sigma$ is an initial policy for some bit-string $B$, then when
Dantzig's rule is applied to $\sigma$, we will eventually move to a policy
$\sigma'$ in which all output gates in circuit $j$ are final and $B$-correct. In
the next lemma, which is proved in Appendix~\ref{app:circuit}, we show that the
input bits of circuit $1-j$ copy the output bits of circuit $j$.
\begin{lemma}
\label{lem:circuit}
Suppose that we are in phase $j$, let $B \in \{0, 1\}^n$ be an input bit-string
for $C$, and let $\sigma$ be an initial policy for $B$.
While making only
switches with appeal strictly greater than $3.5$, Dantzig's rule will eventually
switch to a policy $\sigma'$ in which, for every input bit $i$, we have:
\begin{itemize}
\itemsep1mm
\item If the $i$-th bit of $F(B)$ is $0$, then we have $\sigma'(o^{1-j}_i) = r^{1-j}_i$.
\item If the $i$-th bit of $F(B)$ is $1$, then we have that $\sigma'(o^{1-j}_i) =
l^{1-j}_i)$, and moreover, $o^{1-j}_i$ is never switched away from $l^{1-j}_i$
at any policy between $\sigma$ and $\sigma'$.
\end{itemize}
\end{lemma}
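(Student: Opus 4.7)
The plan is to apply Lemma~\ref{lem:circuitinductive} inductively for $k = 1, 2, \dots, d(C)$, with the initial policy $\sigma$ for $B$ serving as the base case. In $\sigma$, every input bit is $B$-correct by the definition of initial-policy-for-$B$ and is final by Lemma~\ref{lem:buffappeal} together with coherence. Each inductive step produces a policy in which all gates of depth at most $k$ are $B$-correct and final, while every action switched en route has appeal at least $3.5 + \frac{1}{2k} > 3.5$. After $d(C)$ applications we reach a policy $\tau$ in which every gate of circuit $j$, and in particular each output bit $o^j_{k+i}$, is $B$-correct and final. Since $C$ is the negated form of the circuit implementing $F$, we have $\val^{\tau}(o^j_{k+i}) = \val^{\tau}(c_j) + H_{d(C)}$ if and only if the $i$-th bit of $F(B)$ is $0$.

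Next I would compute, at $\tau$, the appeal of the direct action from $o^{1-j}_i$ to $r^{1-j}_i$ for each input bit $i$. Coherence of $\tau$ gives $\tau(l^{1-j}_i) = c_j$ and $\tau(r^{1-j}_i) = o^j_{k+i}$, so applying Lemma~\ref{lem:argappeal} to the appeal-reduction gadgets on those two actions yields $\val^{\tau}(l^{1-j}_i) = \val^{\tau}(c_j) + \frac{H_{d(C)}+L_{d(C)}}{2} - \frac{T}{2}$ and $\val^{\tau}(r^{1-j}_i) = \val^{\tau}(o^j_{k+i}) - \frac{T}{2}$. Because $\tau(o^{1-j}_i) = l^{1-j}_i$ via a reward-zero appeal-reduction gadget, we have $\val^{\tau}(o^{1-j}_i) = \val^{\tau}(l^{1-j}_i)$, and subtraction produces an appeal of $\frac{b_{d(C)}}{2}$ when $F(B)_i = 0$ and $-\frac{b_{d(C)}}{2}$ when $F(B)_i = 1$; by the choice of constants this positive value exceeds $3.5$.

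From this I would conclude as follows. For each $i$ with $F(B)_i = 1$, the switch has negative appeal in $\tau$ and in every coherent successor, so $o^{1-j}_i$ remains at $l^{1-j}_i$; combined with Lemma~\ref{lem:noswitch} applied to the coherent policies strictly before $\tau$, which forces the appeal of this switch to stay small enough to be incompatible with the invariant $\ge 3.5 + \frac{1}{2d(C)}$ of Lemma~\ref{lem:circuitinductive}, the state $o^{1-j}_i$ is never switched away from $l^{1-j}_i$ anywhere between $\sigma$ and the eventual $\sigma'$. For each $i$ with $F(B)_i = 0$, the switch has appeal $\frac{b_{d(C)}}{2} > 3.5$, so Dantzig's rule will eventually perform it. Since these switches occur at states outside circuit $j$ and every gate of circuit $j$ is final, Lemma~\ref{lem:final} guarantees that the output-bit values of circuit $j$ and the appeals of the remaining input-copy switches persist between iterations, so Dantzig's rule processes these $o^{1-j}_i$ one after another until we arrive at the desired $\sigma'$.

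The main obstacle is to rule out, at $\tau$ and the policies reached from $\tau$ by the successive input-copy switches, the existence of another switchable action with appeal strictly greater than $\frac{b_{d(C)}}{2}$, so that Dantzig's rule is really forced into selecting the input-copy switches. This is a case analysis across the gadgets: inside circuit $j$ the finality from Lemma~\ref{lem:circuitinductive} bounds appeals and is preserved by Lemma~\ref{lem:final}; inside circuit $1-j$ and at the other copying-mode states, Lemma~\ref{lem:noswitch} does the same; and inside the clock, Lemma~\ref{lem:clock} bounds appeals by $\frac{1}{2}$. A secondary difficulty is confirming that, strictly before $\tau$, the appeal of $o^{1-j}_i \to r^{1-j}_i$ stays below $3.5 + \frac{1}{2d(C)}$ so as to be consistent with the invariant of Lemma~\ref{lem:circuitinductive}; this rests on tracking how the partial, not-yet-$B$-correct values of $o^j_{k+i}$ deviate from $\val^{\tau}(c_j) + \frac{H_{d(C)}+L_{d(C)}}{2}$ across the intermediate policies produced by the inductive chain.
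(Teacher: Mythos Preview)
Your outline is essentially the paper's argument: run Lemma~\ref{lem:circuitinductive} up to depth $d(C)$, then use the input-bit copy mechanism with the negated-form observation to get the $4.5$ vs.\ negative appeal dichotomy at $o^{1-j}_i$. The numerical computation $\frac{b_{d(C)}}{2}=4.5$ matches the paper's Lemma~\ref{lem:seven}.

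Two places where you make your own life harder than the paper does:

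\emph{The ``main obstacle'' is a phantom.} You do not need to prove that no other action has appeal exceeding $4.5$ at $\tau$ or at the subsequent policies. The claim is only that Dantzig's rule \emph{eventually} performs each input-copy switch while every switch it makes has appeal $>3.5$. As long as some input-copy switch still has appeal $4.5$, whatever Dantzig's rule chooses has appeal $\ge 4.5>3.5$, so coherence and (via Lemma~\ref{lem:final}) finality of the circuit-$j$ output gates are preserved; hence the remaining input-copy switches keep appeal $4.5$. Termination of policy iteration then forces each of them to be taken. No case analysis over gadgets is required.

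\emph{The ``secondary difficulty'' has a one-line resolution.} For the $F(B)_i=1$ case you try to bound the appeal of $o^{1-j}_i\to r^{1-j}_i$ at the intermediate policies by invoking Lemma~\ref{lem:noswitch} and by tracking partial values of $o^j_{\inp(i)}$. The paper instead observes that $\val^{\sigma}(c_j)$ is unchanged throughout the phase and that policy iteration never decreases a state's value; since the final value of $o^j_{\inp(i)}$ is $\val(c_j)+L_{d(C)}$, we have $\val^{\sigma'''}(o^j_{\inp(i)})\le \val^{\sigma'''}(c_j)+L_{d(C)}$ at \emph{every} intermediate $\sigma'''$, and Lemma~\ref{lem:seven} then gives appeal $<0$ directly. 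This monotonicity argument replaces your tracking of partial values entirely.
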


To sum up the results in this section, we introduce the following definition. We
say that $\sigma$ is a \emph{final policy for $B$} if the following conditions
are satisfied:
\begin{itemize}
\itemsep1mm
\item $\sigma$ satisfies the conditions of initial policy for $B$.
\item For every gate $i$ we have that $i$ is final and $B$-correct in $\sigma$.
\item For every input bit $i$ we have:
\vspace{1.5mm}
\begin{itemize}
\itemsep1mm
\item If $C(B, \inp(i)) = 0$, then $\sigma(o^{1-j}_i) = l^j_i$.
\item If $C(B, \inp(i)) = 1$, then $\sigma(o^{1-j}_i) = r^j_i$.
\end{itemize}
\end{itemize}
Lemmas~\ref{lem:circuitinductive} and~\ref{lem:circuit} combine to prove the
following lemma.
\begin{lemma}
Suppose that we are in phase $j$, let $B$ be an input bit-string, and let
$\sigma$ be an initial policy for $B$. When Dantzig's rule is applied to
$\sigma$, it will make a sequence of switches that each have appeal at least
$3.5$, and it will arrive at a policy $\sigma'$ that is a final policy for $B$.
\end{lemma}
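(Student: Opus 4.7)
The plan is to combine the two preceding lemmas in a single induction on gate depth, followed by a final step that handles the copying into circuit $1-j$. Throughout, we observe that every switch Dantzig's rule makes in this argument has appeal strictly greater than $0.5$, so by Lemma~\ref{lem:clock} the clock cannot advance and we remain in phase $j$ for the entirety of the argument.

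First, I would set up the base case. Since $\sigma$ is an initial policy for $B$, by definition every input bit $i$ satisfies $\sigma(o^j_i)\in\{l^j_i, r^j_i\}$ in accordance with the bits of $B$, so every input bit is $B$-correct in $\sigma$. Because $\sigma$ is coherent, Lemma~\ref{lem:noswitch} (together with the finality arguments alluded to via Lemma~\ref{lem:buffappeal}) implies that every input bit is final in $\sigma$. Thus all gates of depth $0$ are final and $B$-correct in $\sigma$.

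Next, I would apply Lemma~\ref{lem:circuitinductive} iteratively for $k = 0, 1, \ldots, d(C)-1$. The hypotheses are satisfied at $k=0$ by the base case, and inductively each application preserves them: after handling depth $k+1$, Dantzig's rule has reached a policy in which all gates of depth at most $k+1$ are final and $B$-correct, and every switch performed so far has appeal at least $3.5 + \frac{1}{2(k+1)} > 3.5$. After $d(C)$ applications we reach a policy $\sigma^\dagger$ in which every gate of $C$ in circuit $j$ (including the output bits, which all share depth $d(C)$) is final and $B$-correct, and the switches en route all have appeal strictly above $3.5$. Since the policy remains coherent throughout (Lemma~\ref{lem:noswitch} guarantees this, as every switch has appeal at least $3.5$), the hypotheses of Lemma~\ref{lem:circuit} are also met at $\sigma^\dagger$, and the transition from $\sigma$ to $\sigma^\dagger$ itself is a legal prefix of the run promised by Lemma~\ref{lem:circuit}.

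Finally, I would invoke Lemma~\ref{lem:circuit} to continue running Dantzig's rule from $\sigma^\dagger$ until it reaches a policy $\sigma'$ in which each input bit $i$ of circuit $1-j$ satisfies $\sigma'(o^{1-j}_i) = r^{1-j}_i$ or $\sigma'(o^{1-j}_i) = l^{1-j}_i$ according to the $i$-th bit of $F(B)$. Recalling that $C$ is the negated form of the circuit for $F$, so that the $i$-th bit of $F(B)$ is $1$ iff $C(B,\inp(i)) = 0$, this agrees with the third condition in the definition of a final policy for $B$. The other two conditions of a final policy for $B$ are preserved: coherence and the $B$-correctness/finality of every gate in circuit $j$ follow from the fact that Lemma~\ref{lem:circuit} only switches actions with appeal strictly above $3.5$ (so Lemma~\ref{lem:final} and Lemma~\ref{lem:noswitch} preserve finality and coherence), and the input-bit choices in circuit $j$ are not disturbed because the states $o^j_i$ are already final in $\sigma^\dagger$. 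Hence $\sigma'$ is a final policy for $B$, completing the proof. The only subtle point, and the one I would treat most carefully, is verifying that none of the gate states in circuit $j$ are switched during the copying phase handled by Lemma~\ref{lem:circuit} — this is exactly what the combination of the finality conclusion of Lemma~\ref{lem:circuitinductive} and the appeal-threshold statement of Lemma~\ref{lem:final} is designed to ensure.
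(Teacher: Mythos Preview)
Your proposal is correct and takes essentially the same approach as the paper, whose proof of this lemma is literally the single sentence that Lemmas~\ref{lem:circuitinductive} and~\ref{lem:circuit} combine to give the result; you have simply spelled out the base case, the depth induction, and the verification that the resulting policy meets the definition of a final policy for $B$. One minor technicality: Lemma~\ref{lem:circuitinductive} is stated only for $k>0$, so your iteration ``for $k=0,1,\ldots$'' is formally off by one, but the proof of that lemma works verbatim for $k=0$ (the hypothesis $k>0$ is used only to rule out input bits at depth $k+1$, which is automatic anyway), so this is a slip in the paper's statement rather than a gap in your argument.
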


\subsection{Phase transition}

Let $B$ be a bit-string. The results from the previous section imply that,
starting from an initial policy for $B$ in phase $j$, then Dantzig's rule will
eventually move to a final policy for $B$ in phase $j$. Once this has occurred,
the construction transitions into phase $1-j$, so that the
computation can be continued. In this section, we prove that this will occur.

The goal is to show that, if $\sigma$ is a final policy for some bit-string $B$
in phase $j$, then Dantzig's rule will move to an initial policy for $F(B)$ in
phase $1-j$. To do this, several events must occur. The input bits in circuit
$j$ must be switched to copy mode, and the input bits in circuit $1-j$ must be
switched to output mode. Moreover, the states $x^j_i$ in the \org gates, and the
states $a^j_i$ in the \notg gates must be switched from $c_j$ to $c_{1-j}$.

The following lemma, which is proved in Appendix~\ref{app:transition}, describes
how Dantzig's rule achieves these tasks. In particular, the order of these tasks
is controlled by the probabilities that we have chosen for the appeal reduction
gadgets in our construction. One crucial point to note here is that at no point
are the states $o^{1-j}_i$ switched for the input bits $i$, so the information
that was copied into these states remains intact.

\begin{lemma}
\label{lem:transition}
Suppose that we are in phase $j$, let $B$ be an input bit-string for $C$, and
let $\sigma_f$ be a final policy for $B$. When Dantzig's rule is applied to
$\sigma_f$, the following sequence of events takes place.
\begin{enumerate}
\itemsep1mm
\item For every input bit $i$, the state $l^{1-j}_i$ is switched to
$c_{1-j}$.
\item For every input bit $i$, the state $r^{1-j}_i$ is switched to $c_{1-j}$.
\item For every input bit $i$, the state $l^j_i$ is switched to $c_{1-j}$. 
\item For every input bit $i$, if $o^j_i$ takes the action towards $r^j_i$, then
it is switched to $l^j_i$.
\item For every input bit $i$, the state $r^j_i$ is switched to
$o^{1-j}_{\inp(i)}$.
\item For every \notg gate $i$, the state $a^{1-j}_i$ is switched to $c_{1-j}$.
\item For every \org gate $i$, the states $x^j_i$ and $x^{1-j}_i$ are both
switched to $c_{1-j}$.
\item An action in the clock is switched, and we move to phase $1-j$.
\end{enumerate}
At the end of this sequence, we will have arrived a coherent phase $1-j$ policy
$\sigma_n$, that is an initial policy for $F(B)$ in phase $1-j$.
\end{lemma}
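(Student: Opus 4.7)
The plan is to walk through Dantzig's rule policy by policy, computing the appeal of each candidate action using Lemma~\ref{lem:argappeal} and verifying that the switch prescribed at each step is the unique maximizer. The probabilities $p_1,\ldots,p_7$ in the input, \org, and \notg gadgets were chosen precisely so that the eight groups of switches, in the listed order, each attain an appeal that (once the preceding groups have been processed) is strictly larger than every other remaining appeal and strictly above the clock's appeal of at most $0.5$ granted by Lemma~\ref{lem:clock}; consequently no clock tick intervenes until step~8. Lemma~\ref{lem:noswitch}, combined with the finality of every gate at $\sigma_f$, rules out any stray switch inside circuit $j$'s gate gadgets from ever exceeding the active appeal threshold.

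The steps are executed by chaining value computations. The opening appeals $\lj$ and $\rj$ for $l^{1-j}_i \to c_{1-j}$ and $r^{1-j}_i \to c_{1-j}$ are obtained by direct application of Lemma~\ref{lem:argappeal} to the copy-mode values in circuit $1-j$; since $\lj > \rj$ and both dominate everything else, Dantzig handles all $l^{1-j}_i$ (step 1) before all $r^{1-j}_i$ (step 2). Step 3 then switches $l^j_i \to c_{1-j}$, with appeal $\lonej$, relying on the identity $\val^{\sigma}(l^j_i) = \val^{\sigma}(c_j) + H_0$ that holds while $l^j_i$ is still in output mode. The pivotal point is step~4: once $l^j_i$ has moved, Lemma~\ref{lem:argappeal} recomputes $\val^{\sigma}(l^j_i) = \val^{\sigma}(c_j) + \tfrac{T}{2} + \tfrac{H_{d(C)}+L_{d(C)}}{2}$, which through the probability $p_3$ turns the appeal of $o^j_i \to l^j_i$ (at every input bit with $\sigma(o^j_i) = r^j_i$) into the new maximum. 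After each individual switch I would reverify coherence and recompute the values of the states that have just been perturbed.

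Steps 5, 6, and 7 follow from the same scheme, with the appeals $\ro$, $\aj$, and $\xj$ becoming successive maxima as the value updates propagate through the gadgets; the probabilities $p_6$, $p_7$, $p_1$, $p_2$, and the $\tfrac{\xj}{T}$ factor in the \org-gadget are each calibrated so that their action is the next to surface. Once step~7 is finished, the maximum remaining appeal drops below $0.5$ and the clock tick of Lemma~\ref{lem:clock} becomes the unique maximizer, executing step~8 and moving us to phase $1-j$. The main obstacle I anticipate is verifying that the states $o^{1-j}_i$ are never switched during steps 1--7, since doing so would destroy the bit copied in by Lemma~\ref{lem:circuit}; I would address this by a case split on whether the $i$-th bit of $F(B)$ is $0$ or $1$ (controlling whether $\sigma(o^{1-j}_i)$ is $l^{1-j}_i$ or $r^{1-j}_i$), and in each case use Lemma~\ref{lem:argappeal} to upper-bound the appeal of the alternative action at $o^{1-j}_i$ below the active threshold at every intermediate policy. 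Finally, I would read the resulting policy $\sigma_n$ off the sequence of switches and verify condition-by-condition that it satisfies the definition of an initial policy for $F(B)$ in phase $1-j$: circuit $1-j$'s input bits are in output mode with $o^{1-j}_i$ encoding the bits of $F(B)$, circuit $j$'s input bits are in copy mode, and all of $x^j_i$, $x^{1-j}_i$, and $a^{1-j}_i$ now point to $c_{1-j}$.
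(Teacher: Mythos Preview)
Your overall plan—track appeals stage by stage and show that the prescribed switch is always the maximizer—is exactly the paper's approach. But there is a structural gap in how you propose to suppress competing switches.

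You invoke Lemma~\ref{lem:noswitch} and gate finality to rule out stray switches inside the gate gadgets, and you say you will ``reverify coherence'' after each switch. Neither tool is available here. The very first switch in step~1 destroys coherence for phase~$j$: the definition requires $\sigma(l^{1-j}_i)=c_j$, and you have just moved $l^{1-j}_i$ to $c_{1-j}$. From that moment on the intermediate policies are \emph{not} coherent in phase~$j$ (nor yet in phase~$1-j$), so Lemma~\ref{lem:noswitch} simply does not apply. Likewise, Lemma~\ref{lem:final} only propagates finality under switches of appeal at least $3.5$ made from a coherent policy; every switch in steps~1--8 has appeal strictly below $3.5$, and the hypotheses fail on both counts. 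So you have no lemma left standing to control, for instance, the states $r^j_i$, $x^j_i$, $a^{1-j}_i$, or the gate-internal states of circuit $1-j$, once the transition has begun.

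The paper deals with this by abandoning coherence altogether during the transition and replacing it with four successively weaker invariants (``stage~1'' through ``stage~4 transition policy''), each admitting exactly the switches already performed. For every stage it proves a fresh omnibus appeal lemma (Lemmas~\ref{lem:stage1appeal}--\ref{lem:stage4appeal}) that re-derives upper and lower bounds on \emph{all} relevant actions under the weaker hypothesis, without any appeal to coherence or to Lemma~\ref{lem:noswitch}. It also needs an auxiliary argument (Lemma~\ref{lem:ffs}) to keep the gates of circuit~$j$ final and correct through stage~1, precisely because Lemma~\ref{lem:final} no longer applies. Your proposal is missing this entire layer.

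A smaller point: your claim that the appeals $\ro$, $\aj$, $\xj$ are ``successive maxima'' in the order 5--6--7 is numerically impossible, since $\aj=0.95>\xj=0.9>\ro=0.8$. The actual order in which these three groups rise to the top is governed by those inequalities, and the stage-4 analysis must respect them.
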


\subsection{\actionswitch is PSPACE-complete}

Finally, we can provide a proof for Theorem~\ref{thm:actionswitch}. Recall that
our circuit iteration instance is $(F, B^I, z)$. We reduce this to
$\actionswitch(\const(C), \sigma_{\text{init}}, a)$, where:
\begin{itemize}
\label{page:init}
\item $\sigma_{\text{init}}$ is an initial policy for the clock, and an
initial policy for $B^I$ in phase $0$.
\item $a$ is the action from $o^0_{z}$ to $r^0_z$.
\end{itemize}
Note that $\sigma_{\text{init}}$ satisfies the condition of
Lemma~\ref{lem:zero}. Furthermore, recall that  $\bitswitch(F, B^I, z)$ only
requires us to decide something in the case where the $z$-th bit of $B^I$ is
$1$. Thus, we have that $\sigma_{\text{init}}$ does not use action $a$, and
therefore we have a valid instance of $\actionswitch$.

%$\bitswitch(F, B^I, z)$ is trivially true, and in this case action $a$ is used
%by $\sigma_{\text{init}}$ by definition. 

By Lemmas~\ref{lem:clock},~\ref{lem:circuit} and~\ref{lem:transition}, we know
that when Dantzig's rule is applied to $\sigma_{\text{init}}$, it will simulate
$2^n$ iterations of $F(B^I)$. 
Note that circuit
$0$ outputs $F^i(B^I)$ for every odd $i$, and circuit $1$ outputs $F^i(B^I)$
for every even $i$. Thus, for each even $i$, circuit~$1$ computes $F^{i}(B^I)$,
and circuit $0$ copies this value into its input bits. So, by
Lemma~\ref{lem:circuit}, the state $o^0_z$ switches to $r^0_z$ only in the case
where the $z$-th output bit of circuit $1$ is a $1$, which corresponds to the
$z$-th output bit of $F^i(B^I)$ being a $0$, for some even $i \le 2^n$.
Therefore, we have that $\actionswitch(\const(C), \sigma_{\text{init}}, a)$
is a ``yes'' instance if and only if $\bitswitch(F, B^I, z)$ is a ``yes''
instance, and we have completed the proof of Theorem~\ref{thm:actionswitch},
which then directly implies Theorem~\ref{thm:basisentry}.

%\begin{theorem}{Theorem~\ref{thm:actionswitch}}
%%\label{thm:actionswitch}
%$\actionswitch(\mathcal{M}, \sigma, a)$ is PSPACE-complete.
%\end{theorem}

\subsection{\dms is PSPACE-complete}

In this section, we give an overview of the proof of for
Theorem~\ref{thm:mdpend}. A formal proof of this theorem can be found in
Appendix~\ref{app:mdpend}. In order to show that $\dms$ is PSPACE-complete, we
make a slight modification to $\const(C)$, which we now describe. Let $S$ be the
set of states of $\const(C)$, and let $\sigma$ be an optimal policy for
$\const(C)$. We define $W = \max \{\val^{\sigma}(s) \; : \; s \in S\}$ to be the
largest possible value of a state in $\const(C)$. We modify $\const(C)$ by
adding an extra gadget, and extra actions at the states $l^0_z$ and $r^0_z$, as
shown in Figure~\ref{fig:modification}, and described formally in
Appendix~\ref{sec:modend}. We call this modified construction $\const(C, z)$.

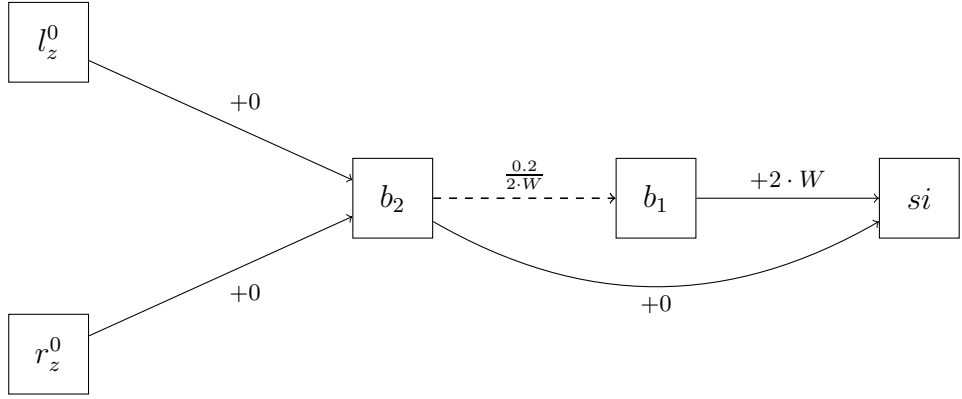
\begin{figure}
\begin{center}
\begin{tikzpicture}[node distance=3.5cm,auto]
% THE GADGET
\node[state,font=\large] (s) [] {$\sink$};
\node[state,font=\large,left of=s] (1) [] {$b_1$};
\node[state,font=\large,left of=1] (2) [] {$b_2$};
\node[state,font=\large,above left=1cm and 3.5cm of 2] (l) [] {$l^0_z$};
\node[state,font=\large,below left=1cm and 3.5cm of 2] (r) [] {$r^0_z$};
\path[arg]
	(2) edge node [] {$\frac{0.2}{2 \cdot W}$} (1)
	;
\path[->]
	(1) edge node [] {$+2 \cdot W$} (s)
	(2) edge [bend right] node [swap] {$+0$} (s)
	(l) edge node [] {$+0$} (2)
	(r) edge node [swap] {$+0$} (2)
	
	;
\end{tikzpicture} 
\end{center}
\caption{The extra gadget needed in $\const(C, z)$.}
\label{fig:modification}
\end{figure}

Recall that $\sink$ is the sink state in the clock. This gadget adds two new
states: $b_1$ and $b_2$. In the initial policy, $b_2$ will select the action
that goes directly to $\sink$, thus there will be no incentive for $l^0_z$ or
$r^0_z$ to switch to $b_2$. Note that the appeal of switching $b_2$ to $b_1$ is
$0.2$, and that this is smaller than the appeal of advancing the clock given in
Lemma~\ref{lem:clock}. Thus, the construction will proceed as normal until the
clock has gone through all $2^n$ phase switches, and it will be switched
immediately after the final phase of the construction. Note that at this point,
since $2^n$ is even, we will have that the choice made at $o^0_z$ will hold the
$z$-th bit of $F^{2^n}(B^I)$. 

When $b_2$ switches to $b_1$, the value of $b_2$ will rise to $2 \cdot W$. Since
the value of $l^0_z$ and $r^0_z$ can be at most $W$, we have that the appeal of
switching $l^0_z$ and $r^0_z$ to $b_2$ is at least $W$. Thus, Dantzig's rule
will immediately switch both of these states to $b_2$, and it can never switch
them away from~$b_2$. The key point here is that once $l^0_z$ and $r^0_z$ have
both been switched to $b_2$, the state $o^0_z$ is now indifferent between its
two actions. So no matter what choice is made at $o^0_z$, policy iteration can
not now switch~$o^0_z$. So, even though policy iteration may continue to switch
actions elsewhere in the construction, the information stored in $o^0_z$ cannot
be destroyed. Thus, when policy iteration terminates, the choice made at $o^0_z$
will determine the $z$-th bit of $F^{2^n}$. For this reason, we have that $\dms$
is PSPACE-complete, and therefore we have shown both Theorem~\ref{thm:mdpend}
and Theorem~\ref{thm:dls}. 
%Theorem~\ref{thm:mdpend}, which then directly implies Theorem~\ref{thm:dls}.

%\begin{theorem}
%\label{thm:mdpend}
%$\dms(\mathcal{M}, \sigma, a)$ is PSPACE-complete.
%\end{theorem}

\bibliographystyle{abbrv}
\bibliography{references}

\newpage
\appendix

\section{A formal definition of the construction}
\label{app:construction}

Let $C$ be a boolean circuit. In this section, we give a formal definition of
$\const(C) = (S, (A_s)_{s \in S}, p, r)$. 

Before we begin, let us describe some shorthand notation. When we say that state
$s$ has a deterministic action to state $t$ with reward $q$, we mean that $A_s$
contains an action $a$ with $r(a) = q$, and $p(t, a) = 1$. We also give some
notation for the appeal reduction gadget. Let $s$ and $t$ be two states. We will
use $\gadget(s, t, r_d, p, r_f)$ to specify the following:
\begin{itemize}
\item A new state $(s, t)$ is added to $S$.
\item An action $a_{(s, t)}$ is added to $A_s$ with $r(a_{(s, t)}) = r_d$ where:
\begin{itemize}
\item $p((s, t), a_{(s, t)}) = p$.
\item $p(s, a_{(s, t)}) = 1-p$.
\end{itemize}
\item An action $a_{t}$ is added to $A_{(s, t)}$ with $r(a_t) = r_f$ and $p(t,
a_t) = 1$.
\end{itemize}

\paragraph{\bf The clock.}
The clock consists of the following states:
\begin{itemize}
\item A state $\sink$. This state has a deterministic action to $\sink$ with
reward $0$. 
\item A state $\sink'$. This state has a deterministic action to $\sink$ with
reward $T \cdot 2^{n+1}$.
\item A state $0$. This state has a deterministic action to $\sink$ with reward
$0$.
\item A state $1'$. This state has a single action $a$ with $r(a) = 0$, and
$p(\sink, a) = 0.5$, and $p(\sink', a) = 0.5$.
\item A state $1$. The actions at $1$ are $\gadget(1, 0, 0, \alpha_1, 0)$ and
$\gadget(1, 1', 0, \alpha_1, 0)$.
\item For each integer $i$ with $2 \le i \le n$, we have:
\begin{itemize}
\item A state $i'$. This state has a single action $a$ with $r(a) = 0$, and
$p((i-1)', 0.5)$ and $p(i-2, 0.5)$.
\item A state $i$. The actions at $i$ are $\gadget(i, i-1, 0, \alpha_i, 0)$ and
$\gadget(i, i', 0, \alpha_i, 0)$.
\end{itemize}
\item A state $c_0$ with a deterministic action to $n$ with reward $0$. 
\item A state $c_1$. This state has a single action $a$ with $r(a) = 0$, and
$p(n-1, a) = 0.5$ and $p(n', a) = 0.5$.
\end{itemize}

\paragraph{\bf Input bits.}
For each input bit $i$ in circuit $C$, and each $j \in \{0, 1\}$, the
construction contains the following states:
\begin{itemize}
\item A state $l^j_i$. The actions at $l^j_i$ are $\gadget(l^j_i, c_{1-j}, 0,
p_5, -\frac{T}{2} + \frac{H_{d(C)} + L_{d(C)}}{2})$ and $\gadget(l^j_i, c_j, 0,
p_4, H_0)$.
\item A state $r^j_i$. The actions at $r^j_i$ are $\gadget(r^j_i, c_j, 0, p_6,
L_0)$ and $\gadget(r^j_i, o^{1-j}_{\inp(i)}, 0, p_7, -\frac{T}{2})$.
\item A state $o^j_i$. This state has a deterministic action to $r^j_i$ with
reward $0$, and it also has $\gadget(o^j_i, l^j_i, 0, p_3, 0)$.
\end{itemize}

\paragraph{\bf \org gates}
For each \org gate $i$ in circuit $C$, and each $j \in \{0, 1\}$, the
construction contains the following states:
\begin{itemize}
\item A state $x^j_i$. The actions at this state are $\gadget(x^j_i, c_j, 0,
\frac{\xj}{T}, 0)$ and $\gadget(x^j_i, c_{1-j}, 0, \frac{\xj}{T}, 0)$.
\item A state $v^j_i$. This state has a deterministic action to
$o^j_{\inp_1(i)}$ with reward $0$, and a deterministic action to
$o^j_{\inp_2(i)}$ with reward $0$.
\item A state $o^j_i$. This state has a deterministic action to $x^j_i$ with
reward $L_{d(i)}$, and a deterministic action to $v^j_i$ with reward $b_{d(i)}$.
\end{itemize}

\paragraph{\bf \notg gates}
For each \notg gate $i$ in circuit $C$, and each $j \in \{0, 1\}$, the
construction contains the following states:
\begin{itemize}
\item A state $a^j_i$. The actions at this state are $\gadget(a^j_i, c_j, 0,
p_2, 0)$ and $\gadget(a^j_i, c_{1-j}, 0, p_1, -T + H_{d(i) - 1})$.
\item A state $o^j_i$. This state has a deterministic action to $o^j_{\inp(i)}$
with reward $0$, and it also has an action $\gadget(o^j_i, a^j_i, 1, \frac{1}{b_{d(i)}},
0)$.
\end{itemize}

\subsection{Modifications for Theorem~\ref{thm:mdpend}}
\label{sec:modend}

In order to prove Theorem~\ref{thm:mdpend}, we have to add an extra gadget to
the construction. In this section, we formally define $\const(C,z) = (S,
(A_s)_{s \in S}, p, r)$. Firstly, $\const(C,z)$ contains every state and every
action from $\const(C)$. It also contains the following states: 
\begin{itemize}
\item A state $b_1$. This state has a deterministic action to $\sink$ with
reward $2 \cdot W$.
\item A state $b_2$. This state has a deterministic action to $\sink$ with
reward $0$, and it has an action $\gadget(b_2, b_1, 0, \frac{0.2}{2 \cdot W},
0)$.
\item An extra deterministic action is added from $l^0_z$ to $b_2$ with reward
$0$.
\item An extra deterministic action is added from $r^0_z$ to $b_2$ with reward
$0$.
\end{itemize}

\section{Proof of Lemma~\ref{lem:argappeal}}
\label{app:argappeal}

\begin{proof}
We begin by showing the first property. In this case we have:
\begin{equation*}
\val^{\sigma}(s') = \val^{\sigma}(t) + r_f.
\end{equation*}
Therefore, since $\sigma(s) = a$, we have:
\begin{equation*}
\val^{\sigma}(s) = p \cdot (\val^{\sigma}(t) + r_f) + (1 - p) \cdot \val^{\sigma}(s) + r_d.
\end{equation*}
Solving this equation for $\val^{\sigma}(s)$ yields:
\begin{equation*}
\val^{\sigma}(s) = \val^{\sigma}(t) + r_f + \frac{r_d}{p}.
\end{equation*}
This completes the proof of the first property.

We now turn our attention to the second property. In this case we have: 
\begin{align*}
\appeal^{\sigma}(a) &= p \cdot\left(\val^{\sigma}(t) + r_f\right) + (1 - p)
\cdot \val^{\sigma}(s) + r_d - \val^{\sigma}(s)\\
&= p \cdot\left(\val^{\sigma}(t) - \val^{\sigma}(s) + r_f\right) + r_d \\
&= p \cdot\left(\val^{\sigma}(s) + b - \val^{\sigma}(s) + r_f\right) + r_d \\
&= p \cdot (b + r_f) + r_d.
\end{align*}
This completes the proof of the second property.
\qed
\end{proof}

\section{Equivalence of Dantzig's pivot and switching rules}
\label{app:dantzig}

\newcommand{\sbar}{\bar{S}}
\newcommand{\reals}{\mathbb{R}}
\newcommand{\cc}{\mathbf{c}}
\newcommand{\bb}{\mathbf{b}}
\newcommand{\e}{\mathbf{e}}
\newcommand{\x}{\mathbf{x}}
\newcommand{\y}{\mathbf{y}}
\newcommand{\A}{\mathbf{A}}
\newcommand{\B}{\mathbf{B}}
\newcommand{\0}{\mathbf{0}}
\newcommand{\bfs}{\textsc{Bfs}\xspace}
\newcommand{\primal}{\ensuremath{(P)}\xspace}
\newcommand{\dual}{\ensuremath{(D)}\xspace}

Even though our construction is technically a multi-chain MDP with the average
reward criterion, we do not require the fully general linear programming
formulation for this setting (for that see Section 9.3 (pg. 462) of
\cite{Put94}). We utilize the special features of our construction, such the
fact that, for every policy considered during policy iteration, the only
recurrent state is $\sink$, and the fact that the initial policy and all optimal
policies have 0 expected average reward, in order to define a simpler linear
program and an initial basic feasible solution with the property that policy
iteration with Dantzig's switching rule will perform identically to the simplex
method with Dantzig's pivot rule applied to the linear program. Let $\bar{S} = S
\setminus \{\sink\}$, let $n = |\bar{S}|$, and let $m = |A|$. 
For convenience, we pick an ordering of all actions and associate them with
$[m]$.

We first construct our linear programs. We start by writing down
Equation~\eqref{eqn:optval} as an LP, which we will call the \emph{dual}. We
then take the dual of this LP, to create an LP which we call the \emph{primal}.
We will execute the simplex method on the primal LP.

% In the primal, $x_a$ for $a \in A$ can be thought of as the probability (frequency) that the action is used.
% We will start simplex from a bfs corresponding to a well-chosen (determinstic) policy.

%\begin{alignat*}{5}
%%\label{def:primal}
%\textsc{Primal ($P$):} \quad 
%&\text{maximize}   
%& \sum_{s \in S} \sum_{a \in A_s} r(a) \cdot x_a \\
%&\text{subject to} 
%& \sum_{a \in A_j} x_a - \sum_{s \in S} \sum_{a \in A_{s}} p(s,a) \cdot x_a & = 0 & \forall j \in S \\
%&& \sum_{a \in A_j} x_a + \sum_{a \in A_j} y_a  - \sum_{s \in S} \sum_{a \in A_{s}} p(s,a) \cdot y_a & = \alpha_j & \forall j \in S \\
%&& x_a & \ge 0 \quad & \forall a \in A\\
%&& y_a & \ge 0 \quad & \forall a \in A\\
%\end{alignat*}
%\begin{alignat}{5}
%\textsc{Dual ($D$):} \quad 
%& \text{minimize} \quad
%& \sum_{j \in S} \alpha_j g_j & 
%\nonumber
%\\
%& \text{subject to} \quad 
%& g_j & \ge \sum_{j \in S} p(j,a) \cdot g_j \quad & \forall s \in S,\ a \in A_s 
%\label{eq:duallp-gain}
%\\
%&& h_s & \ge r(a) - g_s + \sum_{j \in S} p(j,a) \cdot h_j \quad & \forall s \in S,\ a \in A_s 
%\label{eq:duallp-bias}
%\\
%&& \alpha_j & >0 & \forall s \in S 
%\nonumber
%\\
%&& \sum_{j \in S} \alpha_s & = 1 
%\nonumber
%\end{alignat}
%Note that the $\alpha_i$ are not actually variables in the LP. They can be
%chosen as $\alpha_i=1/|S|$ for all states $i \in S$.

\begin{alignat}{5}
%\label{def:primal}
\textsc{Primal ($P$):} \quad &\text{maximize}   
& \sum_{s \in \sbar} \sum_{a \in A_s} r(a) \cdot x_a \nonumber\\
&\text{subject to} 
&\quad \sum_{a \in A_j} x_a - \sum_{s \in \sbar} \sum_{a \in A_{s}} p(s,a) \cdot x_a & =
\frac{1}{n} & \quad \forall j \in \sbar \label{eq:primal_bfs}\\
			&& x_a & \ge 0 & \forall a \in A
\end{alignat}

\begin{alignat}{5}
\textsc{Dual ($D$):} \quad 
& \text{minimize} \quad & \sum_{j \in S} \frac{1}{n} \cdot v_j & \nonumber \\
& \text{subject to} \quad 
& v_s - \sum_{j \in \sbar} p(j,a) \cdot v_j & \ge r(a) \quad & \forall s \in
\sbar,\ a \in A_s \label{eq:dual_values}
\end{alignat}
%& v_s & \ge r(a) + \sum_{j \in \sbar} p(j,a) \cdot v_j \quad & \forall s \in \sbar,\ a \in A_s 

Parts of the following exposition on Linear Programming are closely based on~\cite{HKZ14},
which deals with the LP formulation of discounted deterministic MDPs.
We assume that \primal and \dual are written down in the following
standard forms, respectively:
\begin{alignat*}{7}
	\max && \quad \cc^\top & \x &\quad\quad \min && \quad \bb^\top & \y\\ 
	\text{s.t.} && \A&\x= \bb &\quad\quad \text{s.t.} && \A^\top & \y \ge \cc
	\label{eq:lp_standard_dconst}\\
						 && &\x \ge \0.
\end{alignat*}
The constraint matrix $\A$ is an $m \times n$ matrix.
%, where $m = |A|$ and $n = |S|$.
%
Let $B \subset [m]$ with $|B|=n$. 
We let $\B=\A_B \in \reals^{n \times n}$ be the $n \times n$ matrix obtained by
selecting the columns of $\A$ whose indices belong to $B$.
If the columns of $\B$ are linearly independent, then $B$ is a \emph{basis} and
$\B$ is a \emph{basis matrix}.
Then there is a unique vector $\x \in \reals^m$ for which $\A\x=\bb$ and
$\x_i=0$ for $i \notin B$.
If we let $N = [m] \setminus B$, then 
$$
x_B := \B^{-1}\bb\quad \text{and}\ \x_N:=\0.
$$
The vector $\x$ is the \emph{basic solution} corresponding to $B$
If $\x_B \ge \0$, then $\x$ is said to be a \emph{basic feasible solution}
(\bfs).
A \bfs is a vertex of the polyhedron corresponding to \primal.
The variables in $\{x_i \ \mid \ i \in B\}$ are referred to as \emph{basic} 
variables, while the variables in $\{x_i \ \mid \ i \in N\}$ are referred to as
\emph{non-basic} variables.
Each basis $B$ also defines a solution to the dual as follows:
\begin{equation}
\label{eq:dual_solution}
\y = ((\A_B)^{-1})^\top \cc_B.
\end{equation}
This dual solution is feasible if and only if the primal solution is optimal.
The objective function can also be expressed as $\cc^\top\x=\cc_B^\top\x_B +
\bar{\cc}\x$, where 
\begin{equation}
\label{eq:reduced_costs}
\bar{\cc}= \cc - (\cc_B^\top\B^{-1}\A)^\top = \cc - \A^\top y.
\end{equation}
The vector $\bar{\cc} \in \reals^m$ is referred to as the vector of
\emph{reduced costs}.
Note that $\bar{\cc}_B = 0$, i.e., the reduced costs of the basic variables
are all 0.
Since \primal is written as a maximization problem, our reduced costs are
positive and Dantzig's pivot rule will choose the variable with the largest
reduced cost to enter the basis, and if $\bar{\cc} \le 0$, then $\x$ is an
\emph{optimal} solution.

Next, we show how to derive an initial basic feasible solution for \primal from
an initial policy for our MDP construction.

\begin{lemma}
\label{lem:initpol_upper}
Let $\sigma_{\text{init}}$ be the initial policy for $\const(C)$, as defined on
page \pageref{page:init}. Let $\mathbf{M} \cdot \x = \mathbf{b}$ be the linear system
obtained from~\eqref{eq:primal_bfs} by setting $x_a = 0$ for all $s \in \sbar$
and $a \in A$ for which $\sigma(s) \ne a$. The matrix $\mathbf{M}$ is upper triangular
with non-zero entries on the diagonal, and is thus non-singular.
\end{lemma}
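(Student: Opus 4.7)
The plan is to define a total order $\pi$ of $\bar{S}$ that simultaneously witnesses the upper-triangularity of $\mathbf{M}$ and serves as a topological order of the directed graph $G$ on vertex set $\bar{S}$ whose edges are $s\to s'$ with $s'\ne s$ and $p(s',\sigma_{\mathrm{init}}(s))>0$. The key observation is that every action selected by $\sigma_{\mathrm{init}}$ (together with the unique forward action at each appeal-reduction intermediate state, which is in the basis no matter the policy) points strictly ``downstream'' toward $\sink$, provided the two circuit copies are placed in the right relative order.

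I would build $\pi$ from the sink outward. Within the clock, list states as $0,\,(1,0),\,1,\,\sink',\,1',\,(2,1),\,2,\,2',\,\ldots,\,(n,n-1),\,n,\,n',\,c_0,\,c_1$, interleaving each gadget intermediate $(i,i-1)$ immediately before the state $i$ that transitions into it. For the circuit part, place all states of circuit $0$ before all states of circuit $1$; this is forced by the only cross-circuit edge in a coherent phase-$0$ initial policy, namely $r^1_i\to o^0_{\inp(i)}$. Inside each circuit copy, process gates in topological order starting from input bits; for each gate $i$, list its substates ($o^j_i$, $v^j_i$, $a^j_i$, $x^j_i$, $l^j_i$, $r^j_i$ as applicable) together with their associated gadget intermediates, placing each substate after every state into which it transitions under $\sigma_{\mathrm{init}}$.

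Next I would verify by case analysis on state type that $\pi$ is indeed a topological order of $G$. The clock transitions $i\to(i,i-1)\to i-1$ and $i'\to (i-1)',\,i-2$ are monotone by construction. In the input-bit gadget the targets of $l^j_i,r^j_i,o^j_i$ lie either in the clock or at $o^0_{\inp(i)}$, all earlier in $\pi$. For an OR gate, $o^j_i$ transitions to $v^j_i$ or $x^j_i$ (same gate) and $v^j_i$ to $o^j$-states of strictly smaller gates; for a NOT gate, $o^j_i$ transitions to $o^j_{\inp(i)}$ or to $a^j_i$ (same gate), and $a^j_i$ transitions to a clock state. Coherence of $\sigma_{\mathrm{init}}$ together with the definition of an initial policy for $B^I$ in phase $0$ cover every remaining case.

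Finally I would read off the entries of $\mathbf{M}$. Column $s$ corresponds to $a=\sigma_{\mathrm{init}}(s)$ and has entry $[j=s]-p(j,a)$ in row $j\in\bar{S}$. The diagonal entry equals $1-p(s,a)$, which is $1$ for any purely deterministic action and is the forward probability $p\in(0,1)$ for the outer action of an appeal-reduction gadget (with self-loop probability $1-p$); in every case it is strictly positive. All other non-zero entries of column $s$ are $-p(s',a)$ for $s'\in\bar{S}$ with $s'\ne s$, and by the topological property $s'$ precedes $s$ in $\pi$, so these entries lie strictly above the diagonal. Hence $\mathbf{M}$ is upper triangular with a strictly positive diagonal, and therefore non-singular, with $\det\mathbf{M}$ equal to the product of the diagonal entries. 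The only real obstacle in executing the argument is the bookkeeping needed to guarantee acyclicity in the presence of the cross-circuit edge $r^1_i\to o^0_{\inp(i)}$; once circuit $0$ is placed before circuit $1$ as above, every subsequent check reduces to observing that selected actions move ``toward'' $\sink$ within a chain or toward smaller-indexed gates within a circuit.
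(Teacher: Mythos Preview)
Your proposal is correct and follows essentially the same approach as the paper. Both arguments reduce upper-triangularity to acyclicity of the directed graph on $\bar S$ induced by $\sigma_{\mathrm{init}}$: the paper argues abstractly that no cycle of length greater than one can exist (the only potential cycle would require cross-circuit edges in both directions, which is ruled out since $\sigma_{\mathrm{init}}(r^0_i)=c_0$), while you make this constructive by exhibiting an explicit topological order. Your treatment of the diagonal is likewise the same observation in different words; note only that your two-case split (deterministic actions versus gadget outer actions) omits the averaging states $i'$ in the clock, but these have no self-loop and hence diagonal entry $1$, so the conclusion stands.
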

\begin{proof}
Firstly, we show that the matrix is upper triangular. This follows from the fact
that, in $\sigma_{\text{init}}$, there cannot be a pair of states $s$ and $s'$,
with $s' \ne s$ such that:
\begin{itemize}
\item The probability of moving from $s$ to $s'$ under $\sigma_{\text{init}}$ is
strictly greater than $0$.
\item The probability of moving from $s'$ to $s$ under $\sigma_{\text{init}}$ is
strictly greater than $0$.
\end{itemize}
This property essentially says that there are no cycles of length greater than
one in
$\sigma_{\text{init}}$. Note that, there are cycles of length $1$, but that all
of these cycles arise due to an appeal reduction gadget.

There clearly cannot be a cycle with length greater than $1$ in the clock,
because apart from the cycles of length $1$ in the appeal reduction gadgets,
there are no other cycles in the clock. It can be verified that the only
possible way to create a cycle with length greater than $1$ in the circuits is
to choose the action from $r^0_i$ to $o^{1}_{\inp(i)}$ and the action from
$r^0_{i'}$ to $o^0_{\inp(i)}$ for some pair of input bits $i$ and $i'$. But this
is impossible for $\sigma_{\text{init}}$, because by defintion we have
$\sigma_{\text{init}}(l^0_i) = \sigma_{\text{init}}(r^0_i) = c_0$.
So,~$\mathbf{M}$ must be upper triangluar.

Next we must argue that the diagonal elements of $\mathbf{M}$ are non-zero. Note
that the diagonal elements of $\mathbf{M}$ are zero if and only if there is a
state $s$ such that $p(s, \sigma_{\text{init}}(s)) = 1$. But no action in
$\const(C)$ has this property, because in every appeal reduction gadget, we have
$p > 0$. Therefore, every diagonal element of $\mathbf{M}$ is non-zero, and thus
we have shown that $\mathbf{M}$ is invertible.  \qed
\end{proof}

Now we prove that there is a one-to-one correspondence between the basic
feasible solutions encoutered by Dantzig's pivot rule and the (deterministic)
policies encountered by Dantzig's switching rule.
%Note that appeal is always finite.
To do this we use the following correspondence between policies and bases.

\begin{definition}
For a (deterministic) policy $\sigma$, we define a corresponding 
basis $B(\sigma)$ as follows:
$$
B(\sigma) := \{\ a\ |\ \exists\ s \in S\ \text{s.t.}\ \sigma(s) = a\}.
$$
\end{definition}

\begin{lemma}
\label{lem:init_bfs}
The vector $\x_{B(\sigma_{\text{init}})}$ is basic feasible solution of the primal \primal.
\end{lemma}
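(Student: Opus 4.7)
There are two things to verify: that $B(\sigma_{\text{init}})$ really is a basis, and that the resulting basic solution $\x$ is non-negative. The first is essentially free from Lemma~\ref{lem:initpol_upper}: after ordering the rows and columns so that a state $s$ is indexed the same as its chosen action $\sigma_{\text{init}}(s)$, the submatrix $\A_{B(\sigma_{\text{init}})}$ coincides with $\mathbf{M}$, which is upper triangular with non-zero diagonal and hence invertible. This gives a unique basic solution $\x$ supported on $B(\sigma_{\text{init}})$.

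\textbf{Approach for feasibility.} Rather than back-substituting through $\mathbf{M}^{-1}$, the cleanest route is to exhibit an explicit non-negative vector $\y$ solving $\A\y=\bb$ with support in $B(\sigma_{\text{init}})$ and invoke uniqueness. For each $s \in \sbar$, I would define $y_{\sigma_{\text{init}}(s)}$ to be the expected number of visits to state $s$ in the Markov chain induced by $\sigma_{\text{init}}$ when the initial state is drawn uniformly from $\sbar$ (mass $1/n$ per state), and set $y_a = 0$ for every $a \notin B(\sigma_{\text{init}})$. By construction $\y \ge \0$, so once we show that $\y$ is finite and satisfies~\eqref{eq:primal_bfs} we will be done.

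\textbf{Well-definedness and matching.} The finiteness of $\y$ is the one substantive point. The required fact is that, under $\sigma_{\text{init}}$, every non-sink state reaches $\sink$ with probability $1$; this is implicit in the setup that guarantees $G^{\sigma_{\text{init}}} \equiv 0$ (cf.\ Lemma~\ref{lem:zero}) and can be checked gadget-by-gadget: the clock marches monotonically toward $\sink$ under its initial choices, the circuit-side states route (through appeal-reduction gadgets with $p>0$) into the clock, and the appeal-reduction gadget loops themselves are geometric with absorption probability $p > 0$. Once $\y$ is finite, the standard flow-conservation identity $\sum_{a \in A_j} y_a - \sum_{s \in \sbar}\sum_{a \in A_s} p(j,a) y_a = 1/n$ follows immediately from the definition of expected visit counts for an absorbing chain with initial distribution $\frac{1}{n}\cdot \mathbf{1}_{\sbar}$. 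Thus $\y$ satisfies $\A\y = \bb$ and is supported on $B(\sigma_{\text{init}})$; by non-singularity of $\A_{B(\sigma_{\text{init}})}$, we conclude $\x = \y \ge \0$, so $\x$ is a basic \emph{feasible} solution.

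\textbf{Main obstacle.} The only non-routine step is the absorption claim, which is a finite case analysis over the gadget types; modulo that, everything reduces to the well-known probabilistic interpretation of basic feasible solutions of the MDP primal LP.
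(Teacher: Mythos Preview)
Your proposal is correct, and in fact more complete than the paper's own proof. The paper's argument is a single line: it cites Lemma~\ref{lem:initpol_upper} to conclude that the columns of $\A_{B(\sigma_{\text{init}})}$ are linearly independent, and stops there. That establishes that $B(\sigma_{\text{init}})$ is a \emph{basis}, but the feasibility claim $\x_B \ge \0$ is left implicit.

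You cover the basicness step the same way, and then add an explicit feasibility argument via the occupation-measure interpretation: $y_{\sigma_{\text{init}}(s)}$ as the expected number of visits to $s$ in the absorbing chain induced by $\sigma_{\text{init}}$ with uniform start. This is valid and standard; the only non-trivial ingredient, as you note, is that every state in $\sbar$ reaches $\sink$ almost surely under $\sigma_{\text{init}}$, which is exactly what the proof of Lemma~\ref{lem:zero} checks. An equivalent, purely algebraic phrasing is that $\mathbf{M} = I - P'$ with $P'$ the substochastic transition matrix on $\sbar$; absorption gives $\rho(P') < 1$, hence $\mathbf{M}^{-1} = \sum_{k \ge 0}(P')^k$ is entrywise non-negative, so $\x_B = \mathbf{M}^{-1}\bb \ge \0$. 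Either route works and fills the gap the paper glosses over.
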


\begin{proof}
By Lemma~\ref{lem:initpol_upper}, the columns of $\A_B$ are linearly independent.
\qed
\end{proof}

We now show that, starting from $\x_{B(\sigma_{\text{init}})}$ every basic
feasible solution encountered by Dantzig's rule will correspond to a basis
$B(\sigma)$ for some deterministic policy $\sigma$.

\begin{lemma}
Starting from $\x_B(\sigma_{\text{init}})$,
Dantzig's pivot rule (with any type of degeneracy resolution) will go through a 
finite sequence of basic feasible solutions corresponding to the bases:
$$
B(\sigma_{\text{init}=0}), B(\sigma_1), B(\sigma_2), \ldots, B(\sigma_l).
$$
for some sequence $\sigma_1, \sigma_2, \ldots$ of distinct (deterministic) policies.
\end{lemma}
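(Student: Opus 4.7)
The plan is to prove the lemma by induction on the iteration count of Dantzig's pivot rule, maintaining the invariant that after iteration $k$ the current basis equals $B(\sigma_k)$ for some deterministic policy $\sigma_k$ which is obtained from $\sigma_{k-1}$ by one application of Dantzig's switching rule. The base case $k=0$ is exactly Lemma~\ref{lem:init_bfs}.

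For the inductive step, I would first give a value-theoretic interpretation of the dual solution induced by a basis of the form $B(\sigma)$. By~\eqref{eq:dual_solution}, the dual vector $\y$ satisfies $\A_{B(\sigma)}^\top \y = \cc_{B(\sigma)}$, which unpacks to $v_s - \sum_{j \in \sbar} p(j,\sigma(s)) v_j = r(\sigma(s))$ for every $s \in \sbar$. With the implicit convention $v_{\sink} = 0$ (since $\sink \notin \sbar$), this is exactly the value system~\eqref{eq:val} for $\sigma$. Because our construction forces $G^\sigma \equiv 0$ on every policy visited (Lemma~\ref{lem:zero} together with Theorem~\ref{thm:profincrease}), this linear system has a unique solution, so $y_s = \val^\sigma(s)$. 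Substituting into~\eqref{eq:reduced_costs} then yields, for every $a \in A_s$,
\[
\bar{c}_a \;=\; r(a) - v_s + \sum_{j \in \sbar} p(j,a)\, v_j \;=\; r(a) + \sum_{j} p(j,a)\,\val^\sigma(j) - \val^\sigma(s) \;=\; \appeal^\sigma(a),
\]
by~\eqref{eq:appeal}. Hence the entering variable selected by Dantzig's pivot rule is precisely an action realising $\max_a \appeal^\sigma(a)$, matching the choice of Dantzig's switching rule regardless of how ties are broken.

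Next I would identify the leaving variable. The column of $\A$ corresponding to an action $a \in A_s$ carries a $+1$ in row $s$ and $-p(j,a)$ in rows $j \neq s$. Under the ordering of states used in Lemma~\ref{lem:initpol_upper}, the matrix $\A_{B(\sigma)}$ is upper triangular with nonzero diagonal; I would extend that lemma from $\sigma_{\text{init}}$ to every policy $\sigma_k$ reachable in the induction by showing that a single-action switch cannot create a cycle of length $\ge 2$ in the policy graph. This is where the special structure of $\const(C)$ is crucial: the only recurrent state outside the appeal-reduction self-loops is $\sink$, so no policy reachable from $\sigma_{\text{init}}$ by single-action switches can have a non-trivial cycle among $\sbar$. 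Consequently, in the representation of the entering column $\A_a$ as a combination of the current basic columns, the coefficient on $x_{\sigma(s)}$ is the unique positive coefficient in row $s$, so the minimum-ratio test must eject $x_{\sigma(s)}$, producing the basis $B(\sigma_{k+1})$ with $\sigma_{k+1} = \sigma_k[s \mapsto a]$. Termination and distinctness then follow from Theorem~\ref{thm:profincrease}, since each switch of a switchable action strictly improves $\sigma$ in the $(G,B)$ ordering, so no policy is revisited, and $|\Sigma|$ is finite.

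The main obstacle I anticipate is the leaving-variable step: verifying that single-action switches preserve the acyclic (upper-triangular) structure of Lemma~\ref{lem:initpol_upper} for every policy encountered, and that this forces the ratio test — even under degeneracy — to eject $x_{\sigma(s)}$. The remaining ingredients (the $\bar{c}=\appeal$ identity and the correspondence between entering variables and Dantzig switches) follow from essentially routine LP manipulations once the dual values are identified with $\val^\sigma$.
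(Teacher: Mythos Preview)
Your inductive skeleton matches the paper's, but the leaving-variable step is handled quite differently. The paper does \emph{not} try to propagate the upper-triangular structure of Lemma~\ref{lem:initpol_upper} to every visited policy, and it never analyses the ratio test directly. Instead it argues by contradiction from feasibility: if the entering variable is $x_a$ with $a\in A_s$ and the leaving variable were some $x_{a'}$ with $a'\in A_{s'}$ for $s'\ne s$, then in the new basis no action at $s'$ is basic; but then the left-hand side of the constraint~\eqref{eq:primal_bfs} for row $s'$ is $-\sum_{s}\sum_{a\in A_s} p(s',a)\,x_a\le 0$, contradicting the positive right-hand side $1/n$. This one-line argument completely sidesteps the obstacle you flag as the main difficulty (extending acyclicity to all reachable policies), and it works uniformly under any degeneracy resolution.

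Your route is not wrong in spirit, but it is heavier and the acyclicity-preservation step is genuinely delicate: the construction \emph{does} admit policies with nontrivial cycles (through $r^j_i\to o^{1-j}_{\inp(i)}$), so you would have to argue that no policy \emph{visited} by Dantzig's rule has one, using $G^\sigma\equiv 0$ and the negative-reward property of such cycles. That can be made to work, but it is unnecessary given the feasibility argument above. Also note that the reduced-cost/appeal identification you develop is the content of the \emph{next} lemma in the paper; the present lemma only asserts that every visited basis has the form $B(\sigma)$, and the paper's proof does not need to know which variable enters, only that whatever enters, the leaving variable lies at the same state.
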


\begin{proof}
We prove the claim by induction.
The base case is Lemma~\ref{lem:init_bfs}.
Suppose that, as inductive hypothesis, that the $k$-th basis is 
$B(\sigma_k)$ for some deterministic policy $\sigma_k$. We need to show that
the $(k+1)$-the basis corresponds to
$B(\sigma_{k+1})$ for some policy $\sigma_{k+1}$.
In the pivot operation that produces basis $k+1$ from $k$, one non-basic
variable becomes basic, say $x_a$ for some $a \in A_s$ and $s \in S$.
To prove the claim we require that the basic variable that leaves the basis 
is the unique one in $B(\sigma_k)$ that corresponds to another action at $s \in
S$. 
Towards a contradiction, suppose this is not the case.
Then, a basic variable, say $x_{a'}$ for some action $a' \in A_{s'}$ for
some state $s' \ne s$ leaves the basis.
However, in that case there is no way to satisfy the constraint in
\eqref{eq:primal_bfs} that corresponds to $s'$, since now all $x_i$ for $i \in
A_{s'}$ will be zero, the left-hand side of this constraint will be non-positive, while
the right-hand side is positive. 
This contradicts the fact that the $(k+1)$-th basis corresponds to a basic
feasible solution.
\qed
\end{proof}

Finally, we show that the reduced costs of a basic feasible solution that
corresponds to a basis $B(\sigma)$ for some (deterministic) policy $\sigma$ are
the appeals used for Dantzig's switching rule.

\begin{lemma}
Let $\sigma$ by a (deterministic) policy and let $B(\sigma)$ be the resulting
basis for \primal.
For every action $a$ in $B(\sigma)$, the corresponding reduced cost, defined by
\eqref{eq:reduced_costs}, is $\appeal^\sigma(a)$, as defined in
\eqref{eq:appeal}.
\end{lemma}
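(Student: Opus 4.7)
The plan is to identify the dual solution $\y$ that corresponds to the basis $B(\sigma)$ with the value vector $(\val^{\sigma}(s))_{s \in \bar{S}}$, and then use that identification to compute the reduced costs directly.

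First I would read off the structure of the constraint matrix. For an action $a \in A_s$ with $s \in \bar{S}$, the column of $\A$ indexed by $a$ has entries $\A_{j,a} = \mathbb{1}_{j=s} - p(j,a)$ for $j \in \bar{S}$. The dual solution corresponding to the basis $B(\sigma)$ is, by \eqref{eq:dual_solution}, the unique $\y \in \reals^n$ with $\A_B^\top \y = \cc_B$. Writing out this system, for each $s \in \bar{S}$ and its basic action $a = \sigma(s)$ we obtain
\[
y_s \;-\; \sum_{j \in \bar{S}} p(j,a) \cdot y_j \;=\; r(a).
\]
This is exactly the value equation \eqref{eq:val} restricted to $\bar{S}$: under $\sigma$ every state reaches $\sink$ with probability $1$, and $\sink$ has a self-loop of reward $0$, so $\val^{\sigma}(\sink) = 0$ and the $\sink$-term drops out of the sum. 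Since $\A_B$ is invertible (by the previous lemma), both systems have unique solutions that must therefore coincide, so $y_s = \val^{\sigma}(s)$ for every $s \in \bar{S}$.

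Once $\y$ has been identified with $\val^{\sigma}$, the reduced cost \eqref{eq:reduced_costs} of an arbitrary action $a \in A_s$ is
\[
\bar{c}_a \;=\; c_a - (\A^\top \y)_a \;=\; r(a) - y_s + \sum_{j \in \bar{S}} p(j,a) \cdot y_j
\;=\; r(a) - \val^{\sigma}(s) + \sum_{s' \in S} p(s',a) \cdot \val^{\sigma}(s'),
\]
where the last equality again uses $\val^{\sigma}(\sink) = 0$ to extend the sum from $\bar{S}$ to $S$. Comparing this to \eqref{eq:appeal} shows $\bar{c}_a = \appeal^{\sigma}(a)$, as required. For a basic action $a = \sigma(s)$ this correctly gives $\bar{c}_a = 0$, matching both $\appeal^{\sigma}(\sigma(s)) = 0$ and the standard fact that basic reduced costs vanish.

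The only delicate point is the bookkeeping between $S$ and $\bar{S}$: the LP has one variable per state in $\bar{S}$ while the MDP's value equations range over all of $S$. The whole proof hinges on the special feature of our construction that $\val^{\sigma}(\sink) = 0$ for every policy $\sigma$ encountered, which is precisely the simplification justified in Section~\ref{sec:definitions} and by the first part of Lemma~\ref{lem:zero}. Provided this invariant is invoked cleanly, no further computation is needed.
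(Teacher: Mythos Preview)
Your proof is correct and follows essentially the same route as the paper: identify the dual solution $\y$ defined by the basis $B(\sigma)$ with $\val^{\sigma}$ via \eqref{eq:val}, then read off the reduced costs from \eqref{eq:reduced_costs} and compare with \eqref{eq:appeal}. The paper's version is much terser and leaves implicit exactly the $\bar S$ versus $S$ bookkeeping (and the role of $\val^{\sigma}(\sink)=0$) that you spell out; your extra care here is an improvement, not a deviation. One small quibble: the invertibility of $\A_{B(\sigma)}$ that you appeal to as ``the previous lemma'' is established in the paper only for the initial policy and then propagated along the simplex path, not for an arbitrary $\sigma$; strictly speaking the lemma only needs to apply to those $\sigma$, and both you and the paper implicitly assume $B(\sigma)$ is a genuine basis.
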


\begin{proof}
According to \eqref{eq:dual_solution}, the basis $B(\sigma)$
defines a dual solution that satisfies \eqref{eq:val}, thus
$\y = \val^\sigma$.
By definition, the reduced costs that correspond to $B(\sigma)$ equal $\cc -
\A^\top y$. Thus, for each action $a \in B(\sigma)$, the reduced cost is:
$$
r(a) - v_s + \sum_{j \in \sbar} p(j,a) \cdot v_j.
$$
In other words, the reduced costs of every action $a \in B(\sigma)$ equals
$\appeal^\sigma(a)$, as required.
\qed
\end{proof}

Thus, when we apply Dantzig's pivot rule to \primal starting from
$x_{B(\sigma_{\text{init}})}$ for some initial policy~$\sigma_{\text{init}}$,
we exactly simulate Dantzig's switcing rule for policy iteration starting from 
$\sigma_{\text{init}}$.

\section{Proof of Lemma~\ref{lem:clock}}
\label{app:clock}

\newcommand*\xor{\mathbin{\oplus}}

%We extend the exponential-time examples of Melekopoglou and Condon~\cite{MC94}
%from a reachability criterion (the probability of reaching the sink) to total
%reward.

We start by defining a reflected binary Gray code, which is a well-known
construction that arises, for example, in the Klee-Minty examples that first 
showed that the simplex algorithm could be exponential in the worst case.

\paragraph{Notation.} Throughout this proof, we use the following notation.
\begin{itemize}
\itemsep2mm
\item
$n$ represents a fixed positive integer.
\item
\Index, short for iterations, denotes the set $\{0,1,\ldots,2^n-1\}$. 
\item
$\bit(x,i)$ denotes the $i$-th bit of $x \in \Index$, which corresponds to
$2^{i-1}$ for $i=1,\ldots,n$, i.e., when we talk about the $i$-th bit we count
from right to left.
\item
$\shift(x,i)$ denotes $\lfloor x/2^i \rfloor$, that is, a shift to the right (and truncation) by $i$ bits.
\item
$x \xor y$ denotes the bit-wise exclusive or (XOR) of integers $x$ and $y$.
\item
For $i \in [n]$, we define $f(i) := n - i + 1$. Note that $f$ is an involution.
\item 
We often consider $j + 2^i$. Note that for $k \le i$ we have
$\bit(j,k)=\bit(j+2^i,k)$, i.e., only bits $\bit(j+2^i,k)$ for $k>i$ differ
from $\bit(j,k)$. In the following, during intermediate calculations, $j+2^i$ may be larger that $2^n-1$, but
whenever this occures a \shift operation ensures that no non-zero bits in
positions greater than $n$ are ever
``ignored'' when we consider only $n$ bits in the end.
\end{itemize}

\begin{definition}[Reflected Binary Gray Code]
\label{def:gcxor}
We define $g: \Index \mapsto \{0,1\}^n$ according to the following 
bit-wise definition. For $i=1,\ldots,n$ and
$j \in \Index$:
\begin{equation} 
	\bit(g(j),i) = \bit(j \xor \shift(j,1),f(i)).
\end{equation}
Equivalently,
\begin{equation}
\label{eq:gcxor}
\bit(g(j),i) = 
\begin{cases}
	1 & \mbox{iff}\quad \bit(j,f(i)) \ne \bit(j,f(i)+1), \\
	0 & \mbox{iff}\quad \bit(j,f(i)) = \bit(j,f(i)+1).
\end{cases}
\end{equation}
We define the sequence $G := \bigl(g(0),g(1),\ldots,g(2^n-1)\bigr)$.
\end{definition}

$G$ contains all possible bit-strings of length $n$ exactly once, with 
\begin{alignat*}{2}
	g(0) & =0^n \\
	g(1) & =10^{n-1} \\
	g(2^n-1) & =0^{n-1}1.
\end{alignat*}

$g(i+1)$ is obtained from $g(i)$ by flipping a single bit, as described in
following lemma.

\begin{lemma}
\label{lem:gcflip}
For $j \in \Index$, bit-string $g(j+1)$ is obtained from $g(j)$ by flipping bit 
$f(\lsz(j))$, where 
$$
\lsz(j) := \min \{k \in [n] \ \mid \ \bit(j,k) = 0\}$$ is the position of the
least significant zero bit of $j$.  
\end{lemma}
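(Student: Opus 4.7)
The plan is to work bitwise using the equivalent characterization~\eqref{eq:gcxor}: $\bit(g(j), i) = 1$ iff $\bit(j, f(i)) \ne \bit(j, f(i)+1)$. Thus $g(j)$ and $g(j+1)$ differ in position $i$ precisely when the XOR of bits $f(i)$ and $f(i)+1$ of $j$ differs from the corresponding XOR for $j+1$. The lemma therefore reduces to showing that this XOR is preserved at every position $k = f(i) \in [n]$ except $k = \lsz(j)$, so that the unique flipped position is $i = f(\lsz(j))$ (using that $f$ is an involution).

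First I would describe how incrementing $j$ affects its binary digits. Set $\ell = \lsz(j)$; by definition of $\lsz$, we have $\bit(j, k) = 1$ for $1 \le k < \ell$ and $\bit(j, \ell) = 0$. Standard schoolbook carrying then yields $\bit(j+1, k) = 0$ for $1 \le k < \ell$, $\bit(j+1, \ell) = 1$, and $\bit(j+1, k) = \bit(j, k)$ for all $k > \ell$. In short, incrementing flips exactly bits $1, \ldots, \ell$ and leaves the higher bits alone.

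The remaining work is a case analysis on $k = f(i)$ to determine for which $i$ the XOR $\bit(j, k) \xor \bit(j, k+1)$ changes under the increment: (i) if $k+1 < \ell$, both bits of $j$ at positions $k, k+1$ equal $1$ and both bits of $j+1$ equal $0$, so the XOR is $0$ in both cases; (ii) if $k+1 = \ell$, the pair flips from $(1, 0)$ in $j$ to $(0, 1)$ in $j+1$, still XOR $= 1$; (iii) if $k = \ell$, the pair changes from $(0, \bit(j, \ell+1))$ to $(1, \bit(j, \ell+1))$, flipping the XOR; (iv) if $k > \ell$, both bits are unchanged. Only case (iii) flips the XOR, so $g(j+1)$ and $g(j)$ differ in exactly one position, namely the $i$ with $f(i) = \ell$, i.e., $i = f(\lsz(j))$. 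The main obstacle is simply keeping cases (ii) and (iii) straight: in (ii) both involved bits flip but the XOR is preserved, while in (iii) only one of them flips so the XOR does change. It is also worth observing that the statement implicitly assumes $j + 1 \in \Index$, since otherwise $j = 2^n - 1$ has no zero bit and $\lsz(j)$ would be undefined.
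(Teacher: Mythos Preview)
Your proposal is correct and follows essentially the same approach as the paper: both arguments use the characterization~\eqref{eq:gcxor} to reduce the question to tracking, for each $k \in [n]$, whether the XOR $\bit(j,k) \xor \bit(j,k+1)$ changes when $j$ is incremented, and both then case-split on the position $k$ relative to $\ell = \lsz(j)$. Your four-case breakdown is in fact slightly more carefully organized than the paper's version (which handles the borderline case $k = \ell - 1$ somewhat implicitly), and your closing remark about the implicit assumption $j + 1 \in \Index$ is a useful observation.
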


\begin{proof}
	For $i > \lsz(j)$, we have $\bit(j,i)=\bit(j+1,i)$, and thus
	$\bit(g(j),f(i))=\bit(g(j+1),f(i))$. For $i=1,\ldots,\lsz(j)-1$, we have 
\begin{alignat*}{4}
	\bit(j,i) & =\bit(j,i+1) & = 1 & \ne \bit(j,\lsz(j)), & \mbox{and}\\
	\bit(j+1,i) & =\bit(j+1,i+1) & = 0 & \ne \bit(j+1,\lsz(j)).
\end{alignat*}
Thus, we have for $i=1,\ldots,\lsz(j)-1$:
\begin{alignat*}{2}
	\bit(g(j),f(i)) & =\bit(g(j),f(i+1)) & = 0.
\end{alignat*}
And, since we have:
\begin{alignat*}{3}
\bit(j,\lsz(j)) & =\bit(j+1,\lsz(j)) & \mbox{and}
\bit(j,\lsz(j)+1) & \ne \bit(j+1,\lsz(j)+1),
\end{alignat*}
we have shown that 
$$
\bit(g(j),f(\lsz(j)) \ne \bit(g(j+1),f(\lsz(j))),$$
which completes the proof.\qed
\end{proof}

The sequence $G$ corresponds to the sequence of policies that 
Dantzig's rule will generate, where the bit-strings $g(j)$ are interpreted
as policies according to the following definition.

%%%%%%%%%%%%%%%%%%%%%%%%%%%%%%%%%%%%%%%%%%%%%%%%%%%%%%%%%%%%%%%%%%%%%%%%%%%%%%
% DEFINITION sigmagj
%%%%%%%%%%%%%%%%%%%%%%%%%%%%%%%%%%%%%%%%%%%%%%%%%%%%%%%%%%%%%%%%%%%%%%%%%%%%%%

\begin{definition}
\label{def:bit-string_interp}
For $j \in \Index$ and $i \in [n]$, the bit-string $g(j)$ corresponds to the
policy $\sigmagj$ as follows.
$$
\sigmagj(i) =
\begin{cases}
	i-1 & \mbox{if} \quad \bit(g(j),i) = 1\\
	i'  & \mbox{if} \quad \bit(g(j),i) = 0.
\end{cases}
$$
\end{definition}

For brevity, we often use $g(j)$ to mean the policy \sigmagj.
In the initial policy $g(1) = 0^n$, all states $n,n-1,\ldots,1$ point
right. We will show that:
\begin{itemize}
\item
$g(j)$ contains a switchable state if and only if there exists an $i$
with $\bit(j,i)=0$ (Corollary~\ref{cor:switchability}).
This implies that the final policy is $g(2^n-1)=0^{n-1}1$, where all states
point to the right, except $1$, which points down. 
\item
Lemma~\ref{lem:gcflip} shows that $g(j+1)$ is obtained from $g(j)$ by 
flipping the bit in position $f(\lsz(j))$ of $g(j)$. 
\item
Lemma~\ref{cor:switchk} shows that at $g(j)$, Dantzig's rule
flips $f(\lsz(j))$ and thus moves to $g(j+1)$. 
\end{itemize}
This implies that there are $2^n-1$ iterations in total. The main result of 
this section is 
Lemma~\ref{lem:clock_works}, which characterizes the values of $c_0$ and $c_1$
at different iterations and bounds the appeal of actions within the clock.

%%%%%%%%%%%%%%%%%%%%%%%%%%%%%%%%%%%%%%%%%%%%%%%%%%%%%%%%%%%%%%%%%%%%%%%%%%%%%
% val bar
%%%%%%%%%%%%%%%%%%%%%%%%%%%%%%%%%%%%%%%%%%%%%%%%%%%%%%%%%%%%%%%%%%%%%%%%%%%%%

In the clock, since all rewards are 0 except for the reward for the unique action
at $\sink'$, which goes from $\sink'$ to $\sink$, the value of state is the probability of
reaching $\sink'$ multiplied by that reward, $T \cdot 2^{n+1}$. Consequently, for
brevity we work with the following scaled values to avoid the repetition of $T$
throughout our calculations.

\begin{definition}
For a policy $\sigmagj$, and for all $i \in [n]$, we define:
\begin{equation}
\label{def:valgjnobar}
\valgj(i) := T^{-1} \cdot \valgjnobar(i).
\end{equation}
\end{definition}

%%%%%%%%%%%%%%%%%%%%%%%%%%%%%%%%%%%%%%%%%%%%%%%%%%%%%%%%%%%%%%%%%%%%%%%%%%%%%
% CLOCK RECURSIVE VALUES
%%%%%%%%%%%%%%%%%%%%%%%%%%%%%%%%%%%%%%%%%%%%%%%%%%%%%%%%%%%%%%%%%%%%%%%%%%%%%
\begin{lemma}
\label{lem:clock_rec_values}
For every $j \in \Index$, and the resulting policy $\sigmagj$, we have:
\begin{alignat}{3}
	\valgj(0)  & =0
	\label{eq:val_0}
	\\
	\valgj(1') & =2^n
	\label{eq:val_1_prime}
	\\
	\valgj(2') & = \half \valgj(0) + \half \valgj(1') = 2^{n-1}
	\label{eq:val_2_prime}
	\\
	\valgj(i') & =\half \valgj((i-1)') + \half \valgj(i-2) &\quad \mbox{for } i\ge1
	\label{eq:val_i_prime_rec}
	\\
	\valgj(i) & = \begin{cases} \valgj(i-1) &\mbox{if } \bit(g(j),i) = 0, \\ 
										 \valgj(i')  & \mbox{if } \bit(g(j),i) = 1.
									 \end{cases} & \quad \mbox{for } i\ge1.
	\label{eq:val_i_rec}
\end{alignat}

\end{lemma}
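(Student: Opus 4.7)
The plan is to prove each equation in the lemma by directly using the formal definition of the clock in Appendix~\ref{app:construction}, together with the value equation \eqref{eq:val} and, for states of the form $i$, the first property of Lemma~\ref{lem:argappeal}. Since all rewards in the clock are zero except the reward $T \cdot 2^{n+1}$ on the action from $\sink'$ to $\sink$, and since $\sink$ has only a zero-reward self-loop, every value under any policy $\sigmagj$ reduces to $T \cdot 2^{n+1}$ times the probability of reaching $\sink'$. Dividing through by $T$ gives the scaled values $\valgj$, and that explains why every constant in the statement is a power of $2$.

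First I will handle the three base identities. For \eqref{eq:val_0}, the unique action at state $0$ goes deterministically to $\sink$ with reward $0$, and $\val^{\sigmagj}(\sink) = 0$ since $\sink$ self-loops with reward $0$; plugging into \eqref{eq:val} and scaling gives $\valgj(0) = 0$. For \eqref{eq:val_1_prime}, the unique action at $1'$ moves with probability $\half$ to $\sink$ and probability $\half$ to $\sink'$; the latter then moves deterministically to $\sink$ with reward $T \cdot 2^{n+1}$, so $\valgjnobar(\sink') = T \cdot 2^{n+1}$ and therefore $\valgjnobar(1') = \half \cdot T \cdot 2^{n+1} = T \cdot 2^n$, yielding $\valgj(1') = 2^n$. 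Equation~\eqref{eq:val_2_prime} then follows immediately from \eqref{eq:val_i_prime_rec} (with $i=2$) and the two previous base cases.

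Next I will establish the two recursive equations. Equation~\eqref{eq:val_i_prime_rec} is immediate from the construction of $i'$ in Appendix~\ref{app:construction}: $i'$ has a single action with reward $0$ that transitions to $(i-1)'$ and $i-2$ with probability $\half$ each, so plugging into \eqref{eq:val} and dividing by $T$ yields the stated recursion. Equation~\eqref{eq:val_i_rec} requires a small application of Lemma~\ref{lem:argappeal}: the two choices at state $i$ are both appeal reduction gadgets $\gadget(i, i-1, 0, \alpha_i, 0)$ and $\gadget(i, i', 0, \alpha_i, 0)$ with $r_d = r_f = 0$. By the first bullet of Lemma~\ref{lem:argappeal}, the action taken by $\sigmagj(i)$ contributes $\valgjnobar(i) = \valgjnobar(\sigmagj(i)) + 0 + 0/\alpha_i$, i.e., the value of $i$ just equals the value of its successor in $\sigmagj$. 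Finally, Definition~\ref{def:bit-string_interp} tells us exactly which successor is chosen: $\sigmagj(i) = i-1$ when $\bit(g(j),i) = 1$ and $\sigmagj(i) = i'$ when $\bit(g(j),i) = 0$; scaling by $T^{-1}$ then gives the two cases of \eqref{eq:val_i_rec}.

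Since every equation is a direct calculation from \eqref{eq:val} and Lemma~\ref{lem:argappeal}, I do not anticipate any real obstacle. The only subtlety is keeping the scaled versus unscaled notation in \eqref{def:valgjnobar} straight, and being careful that the ranges of $i$ for which the various recursions apply line up (in particular, \eqref{eq:val_i_prime_rec} is only used for $i \ge 3$, while the cases $i = 1$ and $i = 2$ are already handled by \eqref{eq:val_1_prime} and \eqref{eq:val_2_prime}).
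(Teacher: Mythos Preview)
Your approach is the same as the paper's --- the paper's entire proof is the single sentence ``These follow by the definition of $\valgj$ and $g(j)$,'' and your proposal is just a more explicit unwinding of that.

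There is, however, a genuine slip in your last step for \eqref{eq:val_i_rec}. You correctly quote Definition~\ref{def:bit-string_interp} as $\sigmagj(i)=i-1$ when $\bit(g(j),i)=1$ and $\sigmagj(i)=i'$ when $\bit(g(j),i)=0$. Combining that with Lemma~\ref{lem:argappeal} (first bullet, with $r_d=r_f=0$) yields $\valgj(i)=\valgj(i-1)$ when the bit is $1$ and $\valgj(i)=\valgj(i')$ when the bit is $0$ --- which is the \emph{opposite} case assignment from what \eqref{eq:val_i_rec} actually states. So you cannot truthfully write that your derivation ``gives the two cases of \eqref{eq:val_i_rec}.'' This is in fact an internal inconsistency in the paper: the surrounding text (the initial policy has all states pointing right to $i-1$ and corresponds to $g(0)=0^n$) and the downstream proof of Lemma~\ref{lem:clock_values} both use the convention in \eqref{eq:val_i_rec}, so it is Definition~\ref{def:bit-string_interp} that has the two cases swapped. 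You should flag this discrepancy rather than silently asserting agreement.
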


\begin{proof}
	These follow by the definition of $\valgj$ and $g(j)$. \qed
\end{proof}

The following three lemmas are easy technical lemmas that are are used only in the proof of
Lemma~\ref{lem:clock_values}.

\begin{lemma}
\label{lem:bitshift1}
The following holds for all $j \in \Index$ and $a \in [n]$:
$$
\shift(j,a+1) + \shift(j + 2^{a},a+1)= \shift(j,a).
$$
\end{lemma}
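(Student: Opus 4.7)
The plan is to reduce the lemma to a single identity about how adding $2^a$ shifts the high bits of $j$, and then verify that identity by a short case split on one bit of $j$.

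First I would rewrite $\shift(j,a)$ in terms of $\shift(j,a+1)$ and the bit that is dropped by shifting one extra position. Using the paper's convention that $\bit(x,i)$ has value $2^{i-1}$, shifting right by $a+1$ instead of by $a$ drops precisely the bit of value $2^a$, which is $\bit(j,a+1)$. Thus
\[
\shift(j,a) \;=\; 2\cdot \shift(j,a+1) + \bit(j,a+1).
\]
Substituting this into the right-hand side of the claim, the lemma becomes equivalent to the reduced identity
\[
\shift(j+2^a,\, a+1) \;=\; \shift(j,a+1) + \bit(j,a+1).
\]

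Next I would prove this reduced identity by writing $j = q\cdot 2^{a+1} + r$ with $0 \le r < 2^{a+1}$, so $q = \shift(j,a+1)$ and $\bit(j,a+1) = 1$ exactly when $r \ge 2^a$. In the case $\bit(j,a+1)=0$, we have $r<2^a$, so $r+2^a < 2^{a+1}$, and $j+2^a = q\cdot 2^{a+1} + (r+2^a)$, giving $\shift(j+2^a,a+1)=q$. In the case $\bit(j,a+1)=1$, we have $2^a \le r < 2^{a+1}$, so $2^{a+1} \le r+2^a < 2^{a+1}+2^a < 2^{a+2}$, hence $j+2^a = (q+1)\cdot 2^{a+1} + (r+2^a-2^{a+1})$ with the new remainder in $[0,2^{a+1})$, giving $\shift(j+2^a,a+1)=q+1$. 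Both values agree with $q+\bit(j,a+1)$, which closes the reduction.

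I do not expect a real obstacle here: the whole content is tracking one possible carry out of bit position $a+1$. The only thing to be careful about is the off-by-one in the paper's indexing convention, where the value $2^a$ corresponds to bit $a+1$ rather than bit $a$, so the case split must compare the remainder against $2^a$ (not $2^{a+1}$), and the reduction step drops $\bit(j,a+1)$ (not $\bit(j,a)$).
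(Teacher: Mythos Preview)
Your proof is correct and follows essentially the same approach as the paper: a case split on $\bit(j,a+1)$, using the identities $\shift(j,a)=2\,\shift(j,a+1)+\bit(j,a+1)$ and $\shift(j+2^a,a+1)=\shift(j,a+1)+\bit(j,a+1)$. The paper argues these two facts more tersely (one line per case), whereas you make the quotient--remainder decomposition $j=q\cdot 2^{a+1}+r$ explicit, but the logical content is the same.
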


\begin{proof}
We consider two cases. 
First, suppose that $\bit(j,a+1) = 0$. Then, 
$j + 2^a$ differs from $j$ only in bit $a+1$, and 
so 
$$\shift(j + 2^a,a+1) = \shift(j,a+1).$$
Then we have $2 \cdot \shift(j,a+1) = \shift(j,a)$ as required, which
holds because $\bit(j,a+1) = 0$.
Second, suppose that $\bit(j,a+1) = 1$. Then, 
$$\shift(j + 2^a,a+1) = \shift(j,a+1) + 1.$$
Then we have $2 \cdot \shift(j,a+1) + 1 = \shift(j,a)$ as required, which
holds because $\bit(j,a+1) = 1$.
\qed
\end{proof}

\begin{lemma}
\label{lem:bitshift2}
If $\bit(j,a) = \bit(j,a+1)$ the following holds:
\begin{equation}
\shift\bigl(j + 2^{a-1},a\bigr) = 2 \cdot \shift\bigl(j + 2^a,a+1)\bigr).
\end{equation}
\end{lemma}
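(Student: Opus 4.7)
\textbf{Proof plan for Lemma~\ref{lem:bitshift2}.}
The plan is to prove two simple identities about the effect of bit-additions on right-shifts, and then observe that the hypothesis $\bit(j,a) = \bit(j,a+1)$ equates the right-hand sides of these identities.

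First I would establish the identity
\begin{equation*}
\shift(j + 2^{a-1},\,a) \;=\; \shift(j,a) + \bit(j,a).
\end{equation*}
Since $2^{a-1}$ corresponds to bit $a$, adding $2^{a-1}$ to $j$ only interacts with bit $a$ and with any carry it produces. If $\bit(j,a) = 0$, the addition simply flips bit $a$ to $1$, producing no carry, so the bits above position $a$ are unchanged and $\shift(j + 2^{a-1},a) = \shift(j,a)$. If $\bit(j,a) = 1$, bit $a$ becomes $0$ and a carry of value $2^a$ enters bit $a+1$; however one wishes to describe the propagation, the net effect on the portion $\shift(\cdot,a)$ is exactly the addition of $1$, giving $\shift(j + 2^{a-1},a) = \shift(j,a) + 1$. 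Combining the two cases yields the claimed identity.

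Next I would establish, by a similar argument at position $a+1$, the identity
\begin{equation*}
2\cdot\shift(j + 2^a,\,a+1) \;=\; \shift(j,a) + \bit(j,a+1).
\end{equation*}
Starting from the elementary rewriting $\shift(y,a) = 2\shift(y,a+1) + \bit(y,a+1)$ applied to $y = j + 2^a$, this reduces to showing that $\shift(j+2^a,a) = \shift(j,a) + 1$ and that $\bit(j+2^a, a+1) = 1 - \bit(j,a+1)$. Both hold by the same two-case carry analysis as before, now at bit $a+1$: $2^a$ corresponds to bit $a+1$, and whether or not $\bit(j,a+1)$ is $0$ or $1$, the value of $\shift(\cdot,a)$ is incremented by exactly~$1$ and bit $a+1$ of the sum is the flip of $\bit(j,a+1)$.

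Putting the two identities together and invoking the hypothesis $\bit(j,a) = \bit(j,a+1)$ gives
\begin{equation*}
\shift(j + 2^{a-1},a) = \shift(j,a) + \bit(j,a) = \shift(j,a) + \bit(j,a+1) = 2\cdot\shift(j+2^a,a+1),
\end{equation*}
which is the statement of the lemma. The only mildly delicate step is tracking the carry in each of the two identities, but since each addition perturbs only one bit of $j$ these reduce to routine two-case arguments; no global carry analysis is required, which is precisely what makes the factorisation through the two identities convenient.
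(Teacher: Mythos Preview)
Your proposal is correct. Both identities you establish are valid: the first is the floor-function fact $\lfloor (j+2^{a-1})/2^a\rfloor = \lfloor j/2^a\rfloor + \bit(j,a)$, and the second follows cleanly from the decomposition $\shift(y,a)=2\shift(y,a+1)+\bit(y,a+1)$ together with $\shift(j+2^a,a)=\shift(j,a)+1$ and $\bit(j+2^a,a+1)=1-\bit(j,a+1)$. Equating the two right-hand sides under the hypothesis $\bit(j,a)=\bit(j,a+1)$ gives the lemma.

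The paper's own argument is different in style: rather than reducing each side to an expression in $\shift(j,a)$, it compares the bit representations of $j+2^{a-1}$ and $j+2^a$ directly, observing that under the hypothesis the two numbers agree on all bits above position $a+1$ and that $\bit(j+2^{a-1},a+1)=0$, from which the equality of the two shifts is immediate. Your route is a bit more algebraic and has the advantage that each of your two identities involves a carry at a single bit position only, so the case analysis never has to track a carry through two consecutive positions; the paper's direct comparison implicitly does so in the case $\bit(j,a)=\bit(j,a+1)=1$. Either argument is perfectly adequate for this elementary lemma.
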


\begin{proof}
Since $\bit(j,a) = \bit(j,a+1)$, then we have for all $k > a+1$ that
$$
\bit(j + 2^{a-1},k)=\bit(j + 2^{a},k),
$$
and $\bit(j + 2^{a-1},a+1)=0$. The claim follows.
\qed
\end{proof}

\begin{lemma}
\label{lem:bitshift3}
If $\bit(j,a) \ne \bit(j,a+1)$ the following holds:
\begin{equation}
1 + 2 \cdot \shift\bigl(j,a+1\bigr) = \shift\bigl(j + 2^{a-1},a\bigr).
\end{equation}
\end{lemma}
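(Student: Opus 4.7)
The plan is to reduce the identity to a direct case analysis on the two bits $\bit(j,a)$ and $\bit(j,a+1)$, using the basic fact that for any integer $j$ and any $a \ge 0$,
\[
\shift(j,a) \;=\; 2\,\shift(j,a+1) + \bit(j,a+1).
\]
Since $\bit(j,a) \ne \bit(j,a+1)$, there are exactly two subcases to consider, which I dispatch by tracking what happens when $2^{a-1}$ is added to $j$ (noting that, per the paper's indexing convention, $2^{a-1}$ is precisely the value of bit $a$, so adding it perturbs bit $a$ and possibly carries into higher bits).

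\textbf{Case A:} $\bit(j,a)=0$ and $\bit(j,a+1)=1$. Since bit $a$ of $j$ is $0$, adding $2^{a-1}$ simply flips bit $a$ to $1$ with no carry. Consequently, all bits in positions $\ge a+1$ of $j + 2^{a-1}$ coincide with those of $j$, so $\shift(j+2^{a-1},a) = \shift(j,a)$. Applying the decomposition displayed above and using $\bit(j,a+1)=1$ gives $\shift(j,a) = 2\,\shift(j,a+1)+1$, which is the claim.

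\textbf{Case B:} $\bit(j,a)=1$ and $\bit(j,a+1)=0$. Now adding $2^{a-1}$ turns bit $a$ into $0$ and carries a $1$ into bit $a+1$; since bit $a+1$ of $j$ was $0$, this carry halts there, leaving bits in positions $\ge a+2$ unchanged. Therefore $\bit(j+2^{a-1},a+1)=1$ and $\shift(j+2^{a-1},a+1) = \shift(j,a+1)$, so the same decomposition yields
\[
\shift(j+2^{a-1},a) \;=\; 2\,\shift(j+2^{a-1},a+1) + \bit(j+2^{a-1},a+1) \;=\; 2\,\shift(j,a+1) + 1.
\]
This is again the desired identity, completing the proof.

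There is really no hard step here: the whole lemma is a bookkeeping exercise about how a single-bit addition interacts with an integer division by $2^a$. The only thing to be careful about is the paper's convention that bit $i$ has place value $2^{i-1}$, which is what makes $2^{a-1}$ (rather than $2^a$) the right quantity to add in order to toggle bit $a$; once that is kept straight, both cases are one-line calculations and no further identities are needed.
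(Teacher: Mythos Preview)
Your proof is correct and follows essentially the same approach as the paper: a two-case analysis on the value of $\bit(j,a)$, tracking how adding $2^{a-1}$ affects bits $a$ and $a+1$ and hence the shift. Your version is slightly more explicit in isolating the identity $\shift(j,a)=2\,\shift(j,a+1)+\bit(j,a+1)$ up front, but the structure and the case reasoning match the paper's argument exactly.
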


\begin{proof}
We consider two cases. First, suppose that $\bit(j,a) = 0$. Then, because
$\bit(j,a+1) = 1$ by assumption, we have that 
$$
1 + 2 \cdot \shift\bigl(j,a+1\bigr) = \shift\bigl(j,a\bigr).
$$
Moreover, then, since $\bit(j,a) = 0$, the bits $j + 2^{a-1}$ and $j$
differ only in position $a$, so  $\shift\bigl(j + 2^{a-1},a\bigr) =
\shift\bigl(j,a\bigr)$, as required.  Second, suppose that $\bit(j,a) = 1$. 
Then, because $\bit(j,a+1) = 0$, we have that 
$$1 + 2 \cdot \shift\bigl(j,a+1\bigr) = \shift\bigl(j,a\bigr) + 1.$$
Moreover, then, since $\bit(j,a) = 1$, the bits of $j + 2^{a-1}$ and $j$
differ only in positions $a$ and $a+1$, which are both flipped.
In particular, $\bit(j + 2^{a-1},a+1) = 1$, so  $\shift\bigl(j + 2^{a-1},a\bigr) =
\shift\bigl(j,a\bigr) + 1$, as required.
\qed
\end{proof}

%%%%%%%%%%%%%%%%%%%%%%%%%%%%%%%%%%%%%%%%%%%%%%%%%%%%%%%%%%%%%%%%%%%%%%%%%%%%%
% CLOCK VALUES
%%%%%%%%%%%%%%%%%%%%%%%%%%%%%%%%%%%%%%%%%%%%%%%%%%%%%%%%%%%%%%%%%%%%%%%%%%%%%

\begin{lemma}
\label{lem:clock_values}
Let $i \in [n]$ and $j \in \Index$. 
Let:
%\begin{alignat}{2}
%X(j,i) & = 2^{f(i)-1} + \lfloor j/2^{f(i)+1} \rfloor \cdot 2^{f(i)},\\
%Y(j,i) & = \lfloor (j + 2^{f(i)-1})/2^{f(i)} \rfloor \cdot 2^{f(i)-1}.
%\end{alignat}
\begin{alignat}{2}
X(j,i) & = 2^{f(i)} + \shift(j,f(i)+1) \cdot 2^{f(i)+1},
\label{eq:clockavgvalues}\\
Y(j,i) & = \shift(j + 2^{f(i)-1},f(i)) \cdot 2^{f(i)}.
\label{eq:clockvalues}
\end{alignat}
We have
\begin{alignat}{2}
\valgj(i')& = X(j,i),
\label{eq:val_i_prime}\\
\valgj(c_1)& = X(j,n+1) = 1 + 2 \cdot \shift(j,1),
\label{eq:val_c1}\\
\valgj(i) & = Y(j,i).
\label{eq:val_i}
\end{alignat}
\end{lemma}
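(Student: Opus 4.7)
The plan is to prove the three identities by a joint induction on $i$, using the recurrences from Lemma~\ref{lem:clock_rec_values} together with the bit-manipulation identities in Lemmas~\ref{lem:bitshift1}, \ref{lem:bitshift2}, and~\ref{lem:bitshift3}. I would simultaneously establish $\valgj(i') = X(j,i)$ and $\valgj(i) = Y(j,i)$ up through $i = n$, and then handle $\valgj(c_1)$ as the ``$i = n+1$'' instance of the same calculation, since the defining action at $c_1$ has exactly the same recurrence structure as~\eqref{eq:val_i_prime_rec}.

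For the base cases, the identities $X(j,1) = 2^n$ and $X(j,2) = 2^{n-1}$ follow by inspection, using $\shift(j,k) = 0$ for $k > n$ together with~\eqref{eq:val_1_prime} and~\eqref{eq:val_2_prime}. For $Y(j,1) = \shift(j + 2^{n-1},n) \cdot 2^n$, I would split on $\bit(j,n)$: if $\bit(j,n) = 0$ then $j + 2^{n-1} < 2^n$, so $\shift(j + 2^{n-1},n) = 0$, matching $\valgj(1) = \valgj(0) = 0$; if $\bit(j,n) = 1$ then $2^n \le j + 2^{n-1} < 2^{n+1}$, so $\shift(j + 2^{n-1},n) = 1$, matching $\valgj(1) = \valgj(1') = 2^n$.

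For the inductive step on $\valgj(i')$, set $a = f(i)$, so that $f(i-1) = a+1$ and $f(i-2) = a+2$. Substituting the inductive hypotheses $\valgj((i-1)') = X(j,i-1)$ and $\valgj(i-2) = Y(j,i-2)$ into~\eqref{eq:val_i_prime_rec} yields
\begin{equation*}
\valgj(i') \;=\; 2^a \;+\; 2^{a+1}\cdot\bigl[\shift(j,a+2) + \shift(j + 2^{a+1},a+2)\bigr],
\end{equation*}
and the bracketed expression equals $\shift(j,a+1)$ by Lemma~\ref{lem:bitshift1} applied at shift level $a+1$; this is exactly $X(j,i)$. For $\valgj(i)$, I case-split on $\bit(g(j),i)$ using Definition~\ref{def:gcxor}: if $\bit(g(j),i) = 0$ then $\bit(j,a) = \bit(j,a+1)$ and $\valgj(i) = \valgj(i-1) = Y(j,i-1)$, so the required identity $Y(j,i-1) = Y(j,i)$ is precisely Lemma~\ref{lem:bitshift2}; if $\bit(g(j),i) = 1$ then $\bit(j,a) \ne \bit(j,a+1)$ and $\valgj(i) = \valgj(i') = X(j,i)$, so the required identity $X(j,i) = Y(j,i)$ is precisely Lemma~\ref{lem:bitshift3}.

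Finally, $c_1$ has the single action giving $\valgj(c_1) = \tfrac{1}{2}\valgj(n') + \tfrac{1}{2}\valgj(n-1)$, which is structurally identical to~\eqref{eq:val_i_prime_rec} at the ghost index $i = n+1$, with $a = f(n+1) = 0$. The same calculation as in the inductive step, now invoking Lemma~\ref{lem:bitshift1} at shift level $1$, gives $\valgj(c_1) = X(j,n+1) = 1 + 2\shift(j,1)$. The main obstacle is purely bookkeeping: under the convention that $\bit(x,i)$ denotes the coefficient of $2^{i-1}$, the term $2^{a-1}$ toggles bit position $a$ (not $a-1$), and each of Lemmas~\ref{lem:bitshift1}--\ref{lem:bitshift3} must be applied at the shift level dictated by $f(i)$, $f(i-1)$, and $f(i-2)$; once this alignment is fixed, each step reduces to a single algebraic rearrangement.
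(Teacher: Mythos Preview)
Your proposal is correct and follows essentially the same approach as the paper: a joint induction on $i$ with the same base cases ($i=1,2$ for $X$ and $i=1$ for $Y$, splitting on $\bit(j,n)$), the same inductive step for $\valgj(i')$ via Lemma~\ref{lem:bitshift1} applied with parameter $f(i)+1$, the same case split on $\bit(g(j),i)$ for $\valgj(i)$ invoking Lemmas~\ref{lem:bitshift2} and~\ref{lem:bitshift3} with parameter $f(i)$, and the same treatment of $c_1$ as the $i=n+1$ instance of the $i'$-recurrence. Your bookkeeping with $a=f(i)$ matches the paper's computations line for line.
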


\begin{proof}
We prove the claim by induction on $i$.

\paragraph{Base case:}
For the base cases, we show that \eqref{eq:val_i_prime} holds for $i=1,2$,
and that \eqref{eq:val_i} holds for $i=1$.

According to~\eqref{eq:val_1_prime} and~\eqref{eq:val_2_prime} in
Lemma~\ref{lem:clock_rec_values}, for all $j\in \Index$ we have
$\valgj(1') =2^n$ and $\valgj(2') = 2^{n-1}$.
This agrees with \eqref{eq:val_i_prime}, since for $i=1,2$, we have 
$\shift(j,f(i)+1) = 0$, and thus $X(j,i) = 2^{f(i)}$, as required.
According to Definition~\ref{def:gcxor} we have:
$$
\bit(g(j),1) = 
\begin{cases} 
0 & \mbox{if } j < 2^{n-1},\\
1 & \mbox{if } j \ge 2^{n-1}.\\
\end{cases}
$$
Note that $f(1)=n$.
If $j < 2^{n-1}$, then $\bit(g(j),1) = 0$ and state $1$ points to state $0$, so
$\valgj(i)=0$, which agrees with \eqref{eq:val_i}, since $\shift(j + 2^{n-1},n) = 0$.
If $2^{n-1} \le j \le 2^n - 1$, then $\bit(g(j),1) = 1$ and state $1$ points to state
$1'$, so $\valgj(i)=2^n$, which agrees with \eqref{eq:val_i}, 
since $\shift(j + 2^{n-1},n) = 1$.

\paragraph{Induction:}

We now prove that \eqref{eq:val_i_prime} holds for for $i > 2$.
As induction hypothesis we assume that \eqref{eq:val_i_prime} and
\eqref{eq:val_i} hold for all $k<i$. 
By \eqref{eq:val_i_prime_rec}, we have:
\begin{alignat*}{3}
\valgj(i') & = \half \valgj((i-1)') + \half \valgj(i-2)\\
				& = \half X(j,i-1) + \half Y(j,i-2) & \mbox{By the inductive hypothesis.}\\
				& = 2^{f(i)} + \bigl( \shift(j,f(i)+2) + \shift(j + 2^{f(i)+1},f(i)+2) \bigr) \cdot 2^{f(i)+1} \\
				& = 2^{f(i)} + \shift(j,f(i)+1) \cdot 2^{f(i)+1} & \mbox{By
			  Lemma~\ref{lem:bitshift1} with $a=f(i)+1$.}\\
				& = X(j,i). & \mbox {As required.}
\end{alignat*}
This proves \eqref{eq:val_i_prime}, and using exactly the same reasoning,
this shows \eqref{eq:val_c1} too:
\begin{alignat*}{2}
	\valgj(c_1) & = \half \valgj(n') + \half \valgj(n-1) \\
				& = 1 + 2 \cdot \bigl(\shift(j,2) + \shift(j+2,2)\bigr) \\
				& = 1 + 2 \cdot \shift(j,1). & \quad \mbox{By Lemma~\ref{lem:bitshift1} with $a=1$.}
\end{alignat*}
We now prove that \eqref{eq:val_i} holds for for $i > 1$.
We consider two cases:

\begin{enumerate}

\item $\bit(g(j),i) = 0$. Then by \eqref{eq:val_i_rec} we have that 
$\valgj(i)  = \valgj(i-1)$, and we need to show that
$Y(j,i) = Y(j,i-1) = \valgj(i-1)$, where the second equality 
holds by the inductive hypothesis.
By \eqref{eq:clockvalues}, we have
\begin{alignat*}{3}
\label{eq:value1}
Y(j,i-1) & = \shift(j + 2^{f(i-1)-1},f(i-1)) \cdot 2^{f(i-1)}\\
         & = \shift(j + 2^{f(i)},f(i)+1) \cdot 2^{f(i)+1} \\
         & = \bigl(2 \cdot \shift(j + 2^{f(i)},f(i)+1)\bigr) \cdot 2^{f(i)} \\
         & = \shift(j + 2^{f(i)-1},f(i)) \cdot 2^{f(i)} & \quad \mbox{By Lemma~\ref{lem:bitshift2} with $a=f(i)$.} \\
         & = Y(j,i).
\end{alignat*} 

\medskip
\item $\bit(g(j),i) = 1$. Therefore, according to~\eqref{eq:val_i_rec}, we
have $\valgj(i)  = \valgj(i'))$, and we need to show that $Y(j,i)
= X(j,i)$, since we have just proved that $ \valgj(i')=X(j,i)$.  
Since $\bit(g(j),i) = 1$, we have $\bit(j,f(i)) \ne \bit(j,f(i)+1)$ by
Definition~\ref{def:gcxor} and we can apply Lemma~\ref{lem:bitshift3} in
the following.  We have:
\begin{alignat}{3}
	X(j,i) & = 2^{f(i)} + \shift(j,f(i)+1) \cdot 2^{f(i)+1} \\
		   & = \bigl(1 + 2 \cdot \shift(j,f(i)+1) \bigr) \cdot  2^{f(i)} & \label{eq:proof1}\\
	       & = \shift(j + 2^{f(i)-1},f(i)) \cdot 2^{f(i)} & \quad \mbox{By
Lemma~\ref{lem:bitshift3} with $a=f(i)$.}\\
	       & = Y(j,i).  
	\label{eq:proof2}
\end{alignat}
\end{enumerate}
This completes the induction and proof.
\qed
\end{proof}

Next we show, a number of lemmas that we will use to show that Dantzig's rule follows the sequence $G$, as defined in Definition~\ref{def:gcxor}.

\begin{lemma}
\label{lem:switchable}
For $j \in \Index \setminus \{2^n-1\}$, state $i$ is switchable with appeal
\clockappeal if $\bit(j,f(i))=0$.
\end{lemma}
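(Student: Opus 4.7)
The plan is to compute the appeal of the off-action at state $i$ directly by combining Lemma~\ref{lem:argappeal} with the closed-form values from Lemma~\ref{lem:clock_values}. Both actions at state $i$ are appeal-reduction gadgets with $r_d = r_f = 0$ and common probability $\alpha_i$. The action currently used by $\sigmagj$ therefore gives $\valgjnobar(i)$ equal to the value at its target by Lemma~\ref{lem:argappeal}(1), and the appeal of the unused action reduces, via Lemma~\ref{lem:argappeal}(2), to $\alpha_i$ times the unscaled gap between its target and the currently used target.

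First, I would identify which action $\sigmagj$ takes at state $i$. Since $\bit(j,f(i))=0$, Definition~\ref{def:gcxor} gives $\bit(g(j),i)=\bit(j,f(i)+1)$, with the convention $\bit(j,n+1):=0$ covering the $i=1$ case. Splitting on $\bit(j,f(i)+1)$, Definition~\ref{def:bit-string_interp} together with equation~\eqref{eq:val_i_rec} then tells me, in each sub-case, which of $i-1$ and $i'$ is the on-target and which is the off-target. The problem therefore reduces to evaluating the signed gap between $\valgj(i')=X(j,i)$ and $\valgj(i-1)=Y(j,i-1)=\shift(j+2^{f(i)},f(i)+1)\cdot 2^{f(i)+1}$ supplied by Lemma~\ref{lem:clock_values}.

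The only nontrivial step is computing $\shift(j+2^{f(i)},f(i)+1)$. When $\bit(j,f(i)+1)=0$, adding $2^{f(i)}$ sets that bit with no carry, so the shift equals $\shift(j,f(i)+1)$; when $\bit(j,f(i)+1)=1$, there is exactly one carry past position $f(i)+1$, so the shift equals $\shift(j,f(i)+1)+1$. In both sub-cases the magnitude of $\valgj(i')-\valgj(i-1)$ is a single fixed power of $2$, oriented so that the off-action has positive appeal (hence state $i$ is switchable); substituting $\alpha_i=\clockappeal\cdot T^{-1}\cdot 2^{-(f(i)-1)}$ into $\alpha_i\cdot T\cdot(\text{gap})$ then collapses to $\clockappeal$.

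The hard part will just be the careful bit-arithmetic bookkeeping---making sure the sign of the gap lines up with switchability in each sub-case, and that the $i=1$ edge case (with $\bit(j,n+1)$ treated as zero) is correctly absorbed. The exclusion $j\ne 2^n-1$ does not enter this lemma's proof directly; it is only needed in the companion existence statement, since a switchable $i$ exists precisely when $j$ still has some zero bit.
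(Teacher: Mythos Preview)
Your proposal is correct and essentially identical to the paper's proof: the paper also splits into two cases---phrased there as $\bit(g(j),i)=0$ versus $\bit(g(j),i)=1$, which under the hypothesis $\bit(j,f(i))=0$ is exactly your split on $\bit(j,f(i)+1)$---and in each case evaluates $Y(j,i-1)$ against $X(j,i)$ from Lemma~\ref{lem:clock_values} to get a power-of-two gap with the correct sign, then substitutes $\alpha_i$. Your remarks about the $i=1$ edge case and the role of $j\ne 2^n-1$ are also accurate (the paper treats the former implicitly and does not use the latter in this lemma).
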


\begin{proof} 
Suppose that $\bit(j,f(i))=0$.  We consider two cases.
First, suppose that $\bit(g(j),i) = 0$.
Then, \eqref{def:gcxor} implies that we have:
$$
\bit(j,f(i))=\bit(j,f(i)+1)=0.
$$
Then, according to \eqref{eq:val_i_prime} and \eqref{eq:val_i}, we have:
\begin{alignat*}{3}
\valgj(i-1) & = \shift(j + 2^{f(i-1)-1},f(i-1)) \cdot 2^{f(i-1)}\\
			     & = \shift(j + 2^{f(i)},f(i)+1) \cdot 2^{f(i)+1}\\
			     & = \shift(j,f(i)+1)) \cdot 2^{f(i)+1} & \mbox{Since $\bit(j,f(i)+1)=0$.}\\
				 & < 2^{f(i)} + \shift(j,f(i)+1) \cdot 2^{f(i)+1}\\
				 & = \valgj(i').
\end{alignat*}
Thus $i$ is switchable, and the appeal is
$$
\alpha_i \cdot T \cdot \bigl(\valgj(i')-\valgj(i-1)\bigr)= \alpha_i \cdot T \cdot 2^{f(i)} = \clockappeal,
$$

Second, suppose that $\bit(g(j),i) = 1$.
Then, \eqref{def:gcxor} implies that $\bit(j,f(i)) \ne \bit(j,f(i)+1)$. 
%Note that $f(k)=\lsz(j)$. 
Thus, we have:
$$
\bit(j,f(i))=0, \quad \bit(j,f(i)+1)=1.
$$
Then, according to \eqref{eq:val_i_prime} and \eqref{eq:val_i}, we have:
\begin{alignat*}{5}
\valgj(i') & = 2^{f(i)} + \shift(j,f(i)+1) \cdot 2^{f(i)+1}\\
				& < \shift(j + 2^{f(i)},f(i)+1) \cdot 2^{f(i)+1} & \quad\quad \mbox{Since $\bit(j,f(i)+1)=1$.}\\
			    & = \valgj(i-1) 
\end{alignat*}
Thus $i$ is switchable, and by Lemma~\ref{lem:argappeal}, the appeal is
$$
\alpha_i \cdot T \cdot \bigl(\valgj(i-1)-\valgj(i')\bigr)= \alpha_i \cdot T \cdot 2^{f(i)-1} = \clockappeal,
$$
as claimed.
This completes the proof. 
\qed
\end{proof}

\begin{lemma}
\label{lem:notswitchable}
For $j \in \Index \setminus \{2^n-1\}$, state $i$ is not switchable under $g(j)$ if $\bit(j,f(i)) = 1$. 
\end{lemma}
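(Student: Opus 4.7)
The plan is to imitate the proof of Lemma~\ref{lem:switchable}, now showing that the appeal of the alternate action at state $i$ is strictly negative rather than strictly positive. I would split on $\bit(g(j), i)$, and in each case evaluate $\valgj(i-1) - \valgj(i')$ using the closed forms of Lemma~\ref{lem:clock_values}: $\valgj(i') = X(j, i) = 2^{f(i)} + \shift(j, f(i)+1) \cdot 2^{f(i)+1}$ and $\valgj(i-1) = Y(j, i-1) = \shift(j + 2^{f(i)}, f(i)+1) \cdot 2^{f(i)+1}$ (using $f(i-1) = f(i)+1$). The key sub-computation is $\shift(j + 2^{f(i)}, f(i)+1)$: writing $j = q \cdot 2^{f(i)+1} + r$ with $0 \le r < 2^{f(i)+1}$, one checks that this shift equals $q$ when $\bit(j, f(i)+1) = 0$ and $q+1$ when $\bit(j, f(i)+1) = 1$. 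This is the same carry analysis that underlies Lemmas~\ref{lem:bitshift1}--\ref{lem:bitshift3}.

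For $\bit(g(j), i) = 0$, \eqref{eq:gcxor} combined with the hypothesis $\bit(j, f(i)) = 1$ forces $\bit(j, f(i)+1) = 1$; therefore $\shift(j + 2^{f(i)}, f(i)+1) = \shift(j, f(i)+1) + 1$, so $\valgj(i-1) - \valgj(i') = 2^{f(i)+1} - 2^{f(i)} = 2^{f(i)} > 0$. Under this bit value $\sigmagj$ routes state $i$ to $i-1$ (matching the identification of actions used in the proof of Lemma~\ref{lem:switchable}), so the alternate action at $i$ leads via an appeal reduction gadget to $i'$; Lemma~\ref{lem:argappeal} gives its appeal as $\alpha_i \cdot T \cdot (\valgj(i') - \valgj(i-1)) = -\alpha_i T \cdot 2^{f(i)} < 0$, so $i$ is not switchable. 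For $\bit(g(j), i) = 1$, \eqref{eq:gcxor} instead forces $\bit(j, f(i)+1) = 0$, giving $\shift(j + 2^{f(i)}, f(i)+1) = \shift(j, f(i)+1)$ and hence $\valgj(i') - \valgj(i-1) = 2^{f(i)} > 0$. Now $\sigmagj(i) = i'$ and the alternate action leads to $i-1$, with appeal $\alpha_i \cdot T \cdot (\valgj(i-1) - \valgj(i')) = -\alpha_i T \cdot 2^{f(i)} < 0$. In both cases the only alternate action at $i$ has strictly negative appeal, so $i$ is not switchable.

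There is no conceptual obstacle: the argument is perfectly symmetric to that of Lemma~\ref{lem:switchable}, and the two value-difference calculations appearing here are exactly the ones computed there, only with opposite signs because the roles of the carry cases are swapped. The only care required is in matching the sign of the value difference with whichever action $\sigmagj$ currently selects, so that Lemma~\ref{lem:argappeal} is applied to the correct alternate action.
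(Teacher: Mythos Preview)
Your proposal is correct and follows essentially the same approach as the paper: a case split on $\bit(g(j),i)$, using \eqref{eq:gcxor} to pin down $\bit(j,f(i)+1)$, then comparing $\valgj(i-1)$ and $\valgj(i')$ via the closed forms of Lemma~\ref{lem:clock_values}. The only cosmetic difference is that you compute the value differences exactly (as $\pm 2^{f(i)}$) and then invoke Lemma~\ref{lem:argappeal} to get a negative appeal, whereas the paper just establishes the needed strict inequalities and concludes non-switchability directly.
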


\begin{proof}
Suppose that $\bit(j,f(i))=1$.  We consider two cases.
First, suppose that $\bit(g(j),i) = 0$.
Then, \eqref{def:gcxor} implies that we have:
$$
\bit(j,f(i))=\bit(j,f(i)+1)=1.
$$
Then, according to \eqref{eq:val_i_prime} and \eqref{eq:val_i}, we have:
\begin{alignat*}{3}
\valgj(i-1) & = \shift(j + 2^{f(i-1)-1},f(i-1)) \cdot 2^{f(i-1)}\\
			     & = \shift(j + 2^{f(i)},f(i)+1) \cdot 2^{f(i)+1}\\
			     & > \bigl(1 + \shift(j,f(i)+1)) \bigr) \cdot 2^{f(i)+1} & \mbox{Since $\bit(j,f(i)+1)=1$.}\\
				 & > 2^{f(i)} + \shift(j,f(i)+1) \cdot 2^{f(i)+1}\\
				 & = \valgj(i').
\end{alignat*}
Thus $i$ is not switchable.

Second, suppose that $\bit(g(j),i) = 1$.
Then, \eqref{def:gcxor} implies that $\bit(j,f(i)) \ne \bit(j,f(i)+1)$. 
%Note that $f(k)=\lsz(j)$. 
Thus, we have:
$$
\bit(j,f(i))=1, \quad \bit(j,f(i)+1)=0.
$$
Then, according to \eqref{eq:val_i_prime} and \eqref{eq:val_i}, we have:
\begin{alignat*}{5}
\valgj(i') & = 2^{f(i)} + \shift(j,f(i)+1) \cdot 2^{f(i)+1}\\
				& > \shift(j,f(i)+1) \cdot 2^{f(i)+1} & \\
				& = \shift(j + 2^{f(i)},f(i)+1) \cdot 2^{f(i)+1} & \quad\quad \mbox{Since $\bit(j,f(i)+1)=0$.}\\
			    & = \valgj(i-1) 
\end{alignat*}
Thus $i$ is not switchable.
This completes the proof. 
\qed
\end{proof}

\begin{corollary}
\label{cor:switchability}
For $j \in \Index \setminus \{2^n-1\}$, state $i$ is switchable if and only if $\bit(j,f(i))=0$,
and if $i$ is switchable, then the appeal of the switch is \clockappeal.
\end{corollary}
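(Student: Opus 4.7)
}
The plan is to observe that this corollary is an immediate consequence of the two preceding lemmas, Lemma~\ref{lem:switchable} and Lemma~\ref{lem:notswitchable}, which together form a dichotomy on the value of $\bit(j,f(i))$. Since $\bit(j,f(i)) \in \{0,1\}$, exactly one of the two hypotheses applies to every $i$ and every $j \in \Index \setminus \{2^n-1\}$.

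First I would handle the forward direction of the biconditional: if $\bit(j,f(i)) = 0$, then Lemma~\ref{lem:switchable} applies directly and gives both that state $i$ is switchable and that the appeal of the switch is $\bigl(\frac{1}{2} - \frac{1}{4i}\bigr)$. This simultaneously establishes the ``if'' direction of the characterization of switchability and the quantitative statement about the appeal.

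For the converse, I would argue by contrapositive: if $\bit(j,f(i)) = 1$, then Lemma~\ref{lem:notswitchable} shows that state $i$ is not switchable under $g(j)$. Equivalently, if $i$ is switchable then we must have $\bit(j,f(i)) = 0$, which is the ``only if'' direction. Combining both directions yields the biconditional, and the appeal statement has already been established above, completing the proof. There is no substantive obstacle here; the work has been done in Lemmas~\ref{lem:switchable} and~\ref{lem:notswitchable}, and the corollary simply repackages them into a single clean statement that will be convenient to cite when analyzing Dantzig's rule on the clock.
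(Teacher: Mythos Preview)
Your proposal is correct and matches the paper's approach exactly: the paper's proof simply states that the corollary follows immediately from Lemmas~\ref{lem:switchable} and~\ref{lem:notswitchable}. Your write-up is just a more explicit unpacking of that one-line argument.
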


\begin{proof}
This follows immediately from Lemmas~\ref{lem:switchable} and~\ref{lem:notswitchable}.
\qed
\end{proof}

\begin{lemma}
\label{lem:switch_leftmost}
Dantzig's rule will always switch
the switchable state with highest index, i.e.,
$$
\max_{i \in [n]} \ \{ i \ \mid \ i \mbox{ is switchable}\}.
$$
\end{lemma}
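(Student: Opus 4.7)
The plan is to read off the appeals of the switchable clock states directly from Corollary~\ref{cor:switchability} and observe that the appeal depends on the state index in a strictly monotone way, so Dantzig's tie-breaking never gets invoked and the state with the largest index is uniquely chosen.

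More concretely, recall that $\clockappeal = \bigl(\tfrac{1}{2} - \tfrac{1}{4i}\bigr)$ is parameterised by the index $i$ of the state being switched, and that Corollary~\ref{cor:switchability} tells us exactly which states are switchable under $g(j)$ and gives the appeal of each one. First I would note that the function $i \mapsto \tfrac{1}{2} - \tfrac{1}{4i}$ is strictly increasing on $[n]$, so the appeals of distinct switchable states $i < i'$ satisfy
\begin{equation*}
\tfrac{1}{2} - \tfrac{1}{4i} < \tfrac{1}{2} - \tfrac{1}{4i'}.
\end{equation*}
Thus among the switchable states of the clock, the one with the largest index has strictly largest appeal, and hence is the unique maximiser appearing in Equation~\eqref{eqn:dantzig} for Dantzig's rule.

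Since the clock has no actions to any other gadget, the computation of appeals for clock states is independent of the rest of $\const(C)$, so this strict-maximum property is unaffected by what happens elsewhere. Consequently Dantzig's rule is forced (regardless of how ties elsewhere are broken) to switch the switchable state of maximum index, proving the lemma. The main obstacle, such as it is, is simply checking that the $i$-dependent factor $\clockappeal$ built into $\alpha_i$ really does scale through the computations in Lemmas~\ref{lem:argappeal} and~\ref{lem:switchable} to give the appeal $\tfrac{1}{2}-\tfrac{1}{4i}$ at state $i$; once that is confirmed, strict monotonicity in $i$ immediately yields the claim and, combined with Lemma~\ref{lem:gcflip}, verifies that the next policy is $g(j+1)$.
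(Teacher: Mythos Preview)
Your proposal is correct and follows essentially the same approach as the paper: both invoke Corollary~\ref{cor:switchability} to conclude that every switchable clock state $i$ has appeal $\tfrac{1}{2}-\tfrac{1}{4i}$, and then observe that this is strictly increasing in $i$, so Dantzig's rule picks the highest-index switchable state. Your additional remarks about independence from the rest of the construction and the connection to Lemma~\ref{lem:gcflip} are not needed for this particular lemma (the clock is already being analysed in isolation here), but they are not wrong.
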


\begin{proof}
According to Lemma~\ref{cor:switchability}, for $i \in [n]$, if $i$ is
switchable, the appeal of the other action at $i$ is \clockappeal, which is
an increasing function of $i$. This proves the claim.
\qed
\end{proof}

\begin{corollary}
\label{cor:switchk}
For $j \in \Index \setminus \{2^n-1\}$, and $i \in [n]$, let $k = f(\lsz(j))$, which is
well-defined since the binary representation of~$j$ contains at least one zero
bit. Dantzig's rule will switch action $k$ at $g(j)$. 

\end{corollary}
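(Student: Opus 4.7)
The plan is to combine Corollary~\ref{cor:switchability} and Lemma~\ref{lem:switch_leftmost} in a short calculation. By Lemma~\ref{lem:switch_leftmost}, Dantzig's rule switches the switchable state with the largest index, so it suffices to identify this state.

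First, I would invoke Corollary~\ref{cor:switchability}, which tells us that the set of switchable states under $g(j)$ is exactly $\{i \in [n] : \bit(j,f(i)) = 0\}$. Since $j \ne 2^n - 1$, the binary representation of $j$ contains at least one zero bit in positions $1, \dots, n$, so this set is nonempty and $\lsz(j)$ is well-defined.

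Next, I would maximize $i$ over this set. Because $f(i) = n - i + 1$ is strictly decreasing in $i$, maximizing $i$ is equivalent to minimizing $f(i)$ subject to $\bit(j, f(i)) = 0$. The minimum position $p \in [n]$ with $\bit(j,p) = 0$ is by definition $\lsz(j)$, so the maximizing index is $i^* = f^{-1}(\lsz(j)) = f(\lsz(j))$, where we use the fact (noted at the start of the appendix) that $f$ is an involution. Hence $i^* = k$.

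Finally, applying Lemma~\ref{lem:switch_leftmost}, Dantzig's rule switches state $k$ at $g(j)$, which is exactly the claim. There is no real obstacle here: all the work has already been done in the preceding lemmas, and the only thing to verify is the straightforward correspondence between ``least significant zero of $j$'' and ``highest-indexed switchable state'' via the involution $f$.
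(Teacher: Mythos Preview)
Your proof is correct and follows essentially the same approach as the paper: both combine Corollary~\ref{cor:switchability} with Lemma~\ref{lem:switch_leftmost}, using that $f$ is a decreasing involution to translate ``least significant zero bit of $j$'' into ``highest-indexed switchable state.'' Your version is slightly more explicit about the role of the involution, but the argument is the same.
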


\begin{proof}
Because $\bit(j,\lsz(j))=0$ and all less significant bits in in $j$
are 1, according to Corollary~\ref{cor:switchability}, state $k$ is switchable,
and all states $i > k$ are not switchable under $g(j)$.
Thus, the claim follows from Lemma~\ref{lem:switch_leftmost}.
\qed
\end{proof}

Now, we can prove that Dantzig's rule will progress the clock through the sequence $G$.

\begin{lemma}
\label{lem:seqG}
Dantzig's rule, started from $g(0)=0^n$ will progress the clock through the
sequence $G$, taking $2^n-1$ iterations, and finishing at
$g(2^n-1)=0^{n-1}1$.
\end{lemma}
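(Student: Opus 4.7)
The plan is a straightforward induction on $j$, assembling the preceding technical lemmas rather than doing new work. I will show that, starting from $\sigma^{g(0)}$, after exactly $j$ iterations of Dantzig's rule the current policy is $\sigma^{g(j)}$, for each $j \in \{0, 1, \ldots, 2^n - 1\}$, and that no further switch is possible at $\sigma^{g(2^n-1)}$.

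For the base case, $\sigma^{g(0)}$ is exactly the initial policy $0^n$ since $g(0) = 0^n$ by Definition~\ref{def:gcxor}, and by assumption this is where Dantzig's rule is started. For the inductive step, assume that after $j$ iterations we are at $\sigma^{g(j)}$ for some $j < 2^n - 1$. Since $j \ne 2^n - 1$, the binary representation of $j$ contains at least one zero bit, so $\lsz(j)$ is well-defined. Corollary~\ref{cor:switchk} then tells us that Dantzig's rule switches state $k = f(\lsz(j))$ in the next iteration, and Lemma~\ref{lem:gcflip} tells us that flipping bit $f(\lsz(j))$ in $g(j)$ produces exactly $g(j+1)$. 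By the encoding in Definition~\ref{def:bit-string_interp}, flipping bit $i$ in the bit-string corresponds precisely to switching the action at state $i$ between $i-1$ and $i'$, so the resulting policy is $\sigma^{g(j+1)}$, completing the inductive step.

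For termination, at $j = 2^n - 1$ every bit of $j$ is $1$, so for every state $i \in [n]$ we have $\bit(j, f(i)) = 1$. Corollary~\ref{cor:switchability} then implies that no state of the clock is switchable at $\sigma^{g(2^n-1)}$, so Dantzig's rule halts here. Counting iterations: we pass through the $2^n$ policies $\sigma^{g(0)}, \sigma^{g(1)}, \ldots, \sigma^{g(2^n-1)}$, which requires exactly $2^n - 1$ switches. Finally, $g(2^n - 1) = 0^{n-1}1$ can be read off from Definition~\ref{def:gcxor}: $\bit(g(2^n-1), i) = 1$ iff $\bit(2^n-1, f(i)) \ne \bit(2^n-1, f(i)+1)$, and the only index $f(i)$ where consecutive bits of $1^n$ (padded with a leading $0$) differ is $f(i) = n$, i.e.\ $i = 1$.

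There is no serious obstacle here, since all the combinatorial content is packaged into Lemma~\ref{lem:gcflip} and Corollaries~\ref{cor:switchability} and~\ref{cor:switchk}; the only thing that needs care is the bookkeeping between the bit-flip on $g(j)$ and the corresponding action switch in the MDP under Definition~\ref{def:bit-string_interp}, which I would spell out explicitly once in the inductive step rather than leaving implicit.
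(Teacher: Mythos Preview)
Your proposal is correct and follows essentially the same approach as the paper's proof: both combine Corollary~\ref{cor:switchk} (Dantzig's rule at $g(j)$ switches state $f(\lsz(j))$) with Lemma~\ref{lem:gcflip} (this flip produces $g(j+1)$) to advance along $G$, and use Corollary~\ref{cor:switchability} for termination at $g(2^n-1)$. Your version is somewhat more explicit in structuring this as an induction and in spelling out the correspondence between bit-flips and action switches via Definition~\ref{def:bit-string_interp}, which is a reasonable choice.
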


\begin{proof}
By Corollary~\ref{cor:switchability}, for all $j \in \Index \setminus \{2^n-1\}$, there
exists a switchable state under $g(j)$, and thus since the first $n$ bits of
$2^n-1$ are all 1, Dantzig's rule will terminate at $g(2^n-1)=0^{n-1}1$.
Lemma~\ref{lem:gcflip} shows that $g(j+1)$ is obtained from $g(j)$ by 
flipping the bit in position $f(\lsz(j))$ of $g(j)$. 
Lemma~\ref{cor:switchk} shows that at $g(j)$, Dantzig's rule
flips $f(\lsz(j))$ and thus moves to $g(j+1)$. 
This implies that Dantzig's rule takes $2^n-1$ iterations in total to find
the optimal policy of the clock.
\qed
\end{proof}

\begin{lemma}
\label{lem:clock_works}
For all $j \in \Index$ we have:
\begin{alignat}{3}
\label{eq:clock_tick1}
\valgjnobar(c_0) &= \valgjnobar(c_1) + T & \quad \text{for even}\ j \\
\label{eq:clock_tick2}
\valgjnobar(c_1) & = \valgjnobar(c_0) + T & \quad \text{for odd}\ j,
\end{alignat}
and the appeal of the switched action (and thus all actions) is in the range $[0.25,0.5]$.
\end{lemma}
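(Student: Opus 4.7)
The plan is to read off $\valgj(c_0)$ and $\valgj(c_1)$ in closed form from Lemma~\ref{lem:clock_values} and then perform a simple parity case split on $j$. Equation~\eqref{eq:val_c1} already supplies $\valgj(c_1) = 1 + 2\cdot\shift(j,1)$. For $c_0$, I would use the fact that $c_0$ has a single deterministic, zero-reward action to state $n$, so $\valgj(c_0) = \valgj(n) = Y(j,n)$. Since $f(n) = 1$, Equation~\eqref{eq:val_i} specialises to $\valgj(c_0) = 2\cdot\shift(j+1, 1)$.

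Now I would do the parity split. For even $j$ we have $\bit(j,1) = 0$, so $\shift(j,1) = j/2$ and $\shift(j+1,1) = j/2$ as well, yielding $\valgj(c_1) - \valgj(c_0) = 1$. For odd $j$ we have $\bit(j,1) = 1$, so $\shift(j,1) = (j-1)/2$ while $\shift(j+1,1) = (j+1)/2$, yielding $\valgj(c_0) - \valgj(c_1) = 1$. Multiplying by $T$ through the scaling~\eqref{def:valgjnobar} turns each of these into a difference of exactly $T$ in $\valgjnobar$, establishing the claimed clock-tick identities in each parity class.

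For the appeal bound, I would invoke Lemma~\ref{lem:seqG}, which guarantees that Dantzig's rule traces through the sequence $G$, at each step switching an action at some state $k \in [n]$ identified by Corollary~\ref{cor:switchk}. Corollary~\ref{cor:switchability} pins down the appeal of this switch as exactly $\clockappeal$. Since $k \in \{1,\dots,n\}$, this quantity is monotonically increasing in $k$, taking value $\tfrac{1}{4}$ at $k = 1$ and approaching $\tfrac{1}{2}$ from below as $k$ grows, so every switched action has appeal in $[\tfrac{1}{4}, \tfrac{1}{2}]$.

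I do not anticipate any substantive obstacle here: the heavy combinatorial work has been done in Lemma~\ref{lem:clock_values} and in the chain of switchability lemmas (Lemmas~\ref{lem:switchable}--\ref{lem:switch_leftmost} and Corollary~\ref{cor:switchk}). The only care needed is bookkeeping --- tracking the factor of $T$ when converting $\valgj$ to $\valgjnobar$, and correctly matching the sign of the difference to the parity of $j$ so that the two stated identities come out as written.
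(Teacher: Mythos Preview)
Your proposal is correct and essentially identical to the paper's proof: both extract $\valgj(c_0)=\valgj(n)=2\cdot\shift(j+1,1)$ and $\valgj(c_1)=1+2\cdot\shift(j,1)$ from Lemma~\ref{lem:clock_values}, split on the parity of $j$, and then cite Corollary~\ref{cor:switchability} to pin the appeal of every switched action at $\clockappeal\in[\tfrac14,\tfrac12]$. (As a side note, your computation---like the paper's own---actually yields $\valgj(c_1)-\valgj(c_0)=1$ for even $j$ and the reverse for odd $j$, which matches Lemma~\ref{lem:clock} in the body; the two displayed identities in the statement of Lemma~\ref{lem:clock_works} have their parity labels transposed.)
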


\begin{proof}
According to Corollary~\ref{lem:seqG}, Dantzig's rule will go through the sequence
of policies $G$.
According to Lemma~\ref{lem:clock_values}, for $j \in \Index$, we have:
\begin{alignat*}{2}
	\valgj(c_0) & = \valgj(n) \\
				& = \shift(j + 2^{f(n)-1},f(n)) \cdot 2^{f(n)-1}.\\
				& = 2 \cdot \shift(j+1,1)\\
				& = \begin{cases} 
					j & \mbox{for even } j \\
					j+1 & \mbox {for odd } j.
					\end{cases}.\\
	\valgj(c_1) & = 1 + 2 \cdot \shift(j,1)\\
				& = \begin{cases} 
					j+1 & \mbox{for even } j \\
					j & \mbox {for odd } j.
					\end{cases}.
\end{alignat*}
This shows \eqref{eq:clock_tick1} and \eqref{eq:clock_tick2}.
The appeal of switchable actions is always given by \clockappeal for some $i
\in [n]$, which is in $[0.25,0.5]$ as required.
\qed
\end{proof}

Lemmas~\ref{lem:seqG} and~\ref{lem:clock_works} combine to complete the proof 
of Lemma~\ref{lem:clock}.

\section{Proof of Lemma~\ref{lem:pspace}}
\label{app:pspace}

\begin{proof}
We will use the same reduction in order to show that both of the problems are
PSPACE-complete. We reduce from the halting problem of a deterministic Turing
machine with polynomial tape length. More formally, let $\mathcal{T}$ be a
deterministic Turing machine with set of states $Q$, and a binary tape alphabet,
let $I \in \{0, 1\}^k$ be an initial input tape of length $k$, and let $n \in
\poly(k)$ be a bound on the amount of space used by $\mathcal{T}$ while
processing $I$. Clearly, the halting problem for $\mathcal{T}$ is
PSPACE-complete.

We construct a circuit iteration instance as follows. Firstly,
we modify $\mathcal{T}$ so that, when it accepts, it writes a $0$ to tape cell
$n+1$ and then halts. Let $\mathcal{T'}$ be this new Turing machine, and let
$Q'$ be the set of states used by $\mathcal{T'}$.
We will construct a function $F : \{0, 1\}^{n'} \rightarrow \{0, 1\}^{n'}$ where
$n' = (n+1) + \log n + \log|Q'|$. We allocate bits $1$ through $n'$ in the
following way:
\begin{itemize}
\item The first $n + 1$ bits are used to hold the contents of the tape.
\item The next $\log n$ bits are used to hold the position of the tape head.
\item The final $\log |Q'|$ bits are used to hold the current state of the
Turing machine
\end{itemize}
That is, each bit-string $B \in \{0, 1\}^{n'}$ holds all of the information
about the configuration of $\mathcal{T'}$ during its computation. For each bit-string $B
\in \{0, 1\}^{n'}$, we define $F(B)$ to be next configuration of $\mathcal{T'}$
after one transition has been made. This can clearly be computed in polynomial
time, and therefore can be represented as a binary circuit that has size
polynomial in $|\mathcal{T'}|$.

Let $B^I \in \{0, 1\}^{n'}$ be defined such that:
\begin{itemize}
\item The first $k$ bits of $B^I$ are $I$.
\item The next $n-k$ bits of $B^I$ are $0$.
\item The $n+1$th bit of $B^I$ is a $1$.
\item The next $\log n$ bits correspond to the initial tape head position of $\mathcal{T'}$.
\item The final $\log |Q|$ bits correspond to the initial state of
$\mathcal{T'}$.
\end{itemize}
Under these definitions, $B^I$ corresponds to the initial configuration of
$\mathcal{T}$ on input~$I$. 

Our circuit iteration instance is $(F, B^I, n+1)$. Note that, since
$\mathcal{T'}$ uses at most $n+1$ cells on its tape, we must have that if
$\mathcal{T'}$ halts on $I$, then it does so in at most $2^{n+1}$ transitions.
Moreover, we have $2^{n'} > 2^{n+1}$. Moreover, since the machine writes a $0$
to tape cell $n+1$ if and only if it halts,  this proves that $\circuitvalue(F,
B^I, n+1)$ is PSPACE-complete.

For the problem \bitswitch, observe that the $n+1$th bit of $B^I$ is $1$, and
that $\mathcal{T'}$ only writes a $0$ to tape cell $n+1$ in the case where it
accepts $I$. Our goal is to show that there is an even $i \le 2^{n'}$ such that
the $(n+1)$-th bit of $F^i(B^I)$ is a $0$ if and only if $\mathcal{T'}$ accepts
$I$. So, let~$t$ be the number of steps that $\mathcal{T'}$ takes before it
terminates. If~$t$ is even, then we are done. Otherwise, note that $t \le
2^{n+1} < 2^{n'}$, so $t+1$ is an even number with $t+1 \le 2^{n'}$ such that
$F^{t+1}(B^I)$ is a $0$. Therefore, we have shown that the problem
$\bitswitch(F, B^i, n+1)$ is PSPACE-complete.
\qed
\end{proof}

\section{Proof of Lemma~\ref{lem:final}}
\label{app:final}

\begin{proof}
We will provide three different proofs for three distinct cases. Firstly,
suppose that $i$ is an input bit. 
%If $\sigma$ is coherent, then note that all
%actions mentioned by Lemma~\ref{lem:buffappeal} have appeal strictly less than
%$3.5$. Thus, $\sigma'$ is also a coherent policy. So we can apply
%Lemma~\ref{lem:buffappeal} again to argue that $o^j_i$, $l^j_i$ and $r^j_i$ are
%final in $\sigma'$. In the case where $\sigma$ is a stage 1 transition policy,
%we can use the same line of reasoning: Lemma~\ref{lem:stage1appeal} implies that
%$\sigma'$ is also a stage 1 transition policy, and therefore
%Lemma~\ref{lem:stage1appeal} implies that gate $i$ is final in $\sigma'$.
Note that the appeals given in
Lemma~\ref{lem:buffappeal} imply that one cannot move from a coherent
policy to a non coherent policy by switching an action with appeal
strictly greater than $3.5$. Therefore we must have that $\sigma'$ is also a
coherent policy. So, we can apply Lemma~\ref{lem:buffappeal} to argue
that $o^j_i$, $l^j_i$ and $r^j_i$ are final in $\sigma'$.

Now suppose that $i$ is an \org gate. Note that, by definition, all gates $i'$
with $d(i') < d(i)$ are also final. Therefore, the action switched by Dantzig's
rule cannot be contained in a gadget belonging to a gate $i'$ with $d(i') <
d(i)$. Moreover, by Lemma~\ref{lem:clock}, the action switched by Dantzig's
rule cannot be contained in the clock. 
Therefore, we have: 
\begin{align*}
\val^{\sigma}(o^j_{\inp_1(i)}) &= \val^{\sigma'}(o^j_{\inp_1(i)}) \\
\val^{\sigma}(o^j_{\inp_2(i)}) &= \val^{\sigma'}(o^j_{\inp_2(i)}) \\
\val^{\sigma}(c_j) &= \val^{\sigma'}(c_{j}).
\end{align*}
Moreover, since gate $i$ is final we must have $\sigma(v^j_i) = \sigma'(v^j_i)$,
$\sigma(x^j_i) = \sigma'(x^j_i)$, and $\sigma(o^j_i) = \sigma'(o^j_i)$.
Therefore, for every action $a \in A_{o^j_i} \cup A_{v^j_i} \cup A_{x^j_i}$, we
have $\appeal^{\sigma}(a) = \appeal^{\sigma'}(a)$. Thus, gate $i$ is final
in $\sigma'$.

Finally, suppose that $i$ is a \notg gate. This case is very similar to the \org
gate case. In particular, we can use the same argument to prove that:
\begin{align*}
\val^{\sigma}(o^j_{\inp(i)}) &= \val^{\sigma'}(o^j_{\inp_1(i)}) \\
\val^{\sigma}(c_j) &= \val^{\sigma'}(c_{j}).
\end{align*}
Since $i$ is final, we must have $\sigma(o^j_i) = \sigma'(o^j_i)$ and
$\sigma(a^j_i) = \sigma'(a^j_i)$. Therefore, for every action $a \in A_{o^j_i}
\cup A_{a^j_i}$, we have $\appeal^{\sigma}(a) = \appeal^{\sigma'}(a)$. Thus,
gate $i$ is final in $\sigma'$.
\qed
\end{proof}

\section{Proof of Lemma~\ref{lem:circuitinductive}}
\label{app:circuitinductive}

Before we proceed to prove Lemma~\ref{lem:circuitinductive}, we first prove some
basic properties about our gate gadgets. 

Let us explain some terminology that will be used throughout this section.
Suppose that we are in a policy $\sigma_0$, and let $a$ be an action with appeal
$x$. Throughout this proof we will use the phrase ``Dantzig's rule will
eventually switch action $a$ with appeal $x$,'' to imply that Dantzig's rule will move through a
sequence of policies $\sigma_0, \sigma_1, \dots, \sigma_m$ where: 
\begin{itemize} 
\item Action $a$ has appeal $x$ in $\sigma_{m-1}$, and is switched to produce
$\sigma_m$. 
\item For every $j < m-1$, we have that the action switched by Dantzig's rule in $\sigma_j$ has appeal at least $x$.
\end{itemize} 
Note that if $x > 3.5$, then due to Lemma~\ref{lem:final}, every
gate $i$ that is final in $\sigma_0$ is also final in $\sigma_m$. We need this
notation because there are very often ties for the highest appeal action: for
example, if two
\org gates have the same depth, and the same action needs to be switched in both
of them, then both actions will have the same appeal, and we do not care which
one is switched first. This notation allows us to say that both will be
switched, without caring about the order in which they are switched. Moreover,
$\org$ states in gates with depth strictly greater than $k+1$ may also switch
while we are processing the gates with depth $k$. Since these are irrelevant for
the depth $k$ proof, this notation allows us to ignore these switches.

\subsection{\org gates}

We begin with the \org gate gadget. Let $i$ be an \org gate. The following lemma
describes the conditions under which $v^j_i$ wants to switch. Specifically, it
shows that if both $\inp_1(i)$ and $\inp_2(i)$ are $B$-correct and final, and if
$v^j_i$ does not currently select the maximum of the two, then switching $v^j_i$
must have high appeal.

\begin{lemma}
\label{lem:orv}
Suppose that we are in phase $j$, let $\sigma$ be a coherent policy, and let
$i$ be an \org gate. Suppose that $\sigma(v^j_i) =
o^j_{\inp_m(i)}$ for $m \in \{1, 2\}$, and let $\bar{m}$ be the other input. If
$\val^{\sigma}(o^j_{\inp_m(i)}) = L_{d(i) -1}$, and 
$\val^{\sigma}(o^j_{\inp_{\bar{m}}(i)}) = H_{d(i) -1}$
then the appeal of switching $v^j_i$ to $o^j_{\inp_m(i)}$ is at least $27$.
\end{lemma}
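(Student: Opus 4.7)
The plan is a straightforward unfolding of the definitions. First I would parse the hypotheses in light of the conventions introduced in Section~\ref{sec:overview}: the statements $\val^{\sigma}(o^j_{\inp_m(i)}) = L_{d(i)-1}$ and $\val^{\sigma}(o^j_{\inp_{\bar m}(i)}) = H_{d(i)-1}$ are shorthand for $\val^{\sigma}(o^j_{\inp_m(i)}) = \val^{\sigma}(c_j) + L_{d(i)-1}$ and $\val^{\sigma}(o^j_{\inp_{\bar m}(i)}) = \val^{\sigma}(c_j) + H_{d(i)-1}$. Thus $\inp_m(i)$ is currently false and $\inp_{\bar m}(i)$ is currently true, and the only non-trivial switch available at $v^j_i$ is to the (deterministic, zero-reward) action towards $o^j_{\inp_{\bar m}(i)}$ rather than towards $o^j_{\inp_m(i)}$ which is already selected.

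Next, since $\sigma(v^j_i) = o^j_{\inp_m(i)}$ corresponds to a deterministic zero-reward action (see Figure~\ref{fig:or}), the value equation~\eqref{eq:val} immediately gives $\val^{\sigma}(v^j_i) = \val^{\sigma}(o^j_{\inp_m(i)}) = \val^{\sigma}(c_j) + L_{d(i)-1}$. Plugging this into the appeal definition~\eqref{eq:appeal} applied to the other deterministic, zero-reward action at $v^j_i$ yields
\begin{equation*}
\val^{\sigma}(o^j_{\inp_{\bar m}(i)}) - \val^{\sigma}(v^j_i) = H_{d(i)-1} - L_{d(i)-1} = b_{d(i)-1}.
\end{equation*}

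Finally, I would bound $b_{d(i)-1}$ using the explicit formula from Section~\ref{sec:overview}, namely $b_k = 3^{d(C)-k+2}$. Since $i$ is a gate of $C$, its depth satisfies $d(i) \le d(C)$, so $b_{d(i)-1} = 3^{d(C)-d(i)+3} \ge 3^3 = 27$, as claimed. I do not foresee any real obstacle: the lower bound of $27$ is exactly attained when $d(i) = d(C)$, and it arises from the geometric factor-of-$3$ separation between successive circuit levels that the constants $b_k$ are designed to enforce (so that each level dominates the accumulated contributions of all deeper levels).
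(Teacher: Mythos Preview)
Your proposal is correct and follows essentially the same approach as the paper's own proof: compute the appeal as $H_{d(i)-1}-L_{d(i)-1}=b_{d(i)-1}$ and then use $d(i)\le d(C)$ (equivalently, that $b_k$ is decreasing in $k$) to get $b_{d(i)-1}\ge 3^3=27$. You also correctly note the evident typo in the statement (the switch is to $o^j_{\inp_{\bar m}(i)}$, not to the already-selected $o^j_{\inp_m(i)}$) and handle the relative-to-$c_j$ convention on values; the paper's proof glosses over both points but is otherwise identical.
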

\begin{proof}
We have that the appeal of switching $v^j_i$ to $o^j_{\inp_m(i)}$ is: 
\begin{equation*}
H_{d(i) - 1} - L_{d(i) -1} = b_{d(i) - 1}.
\end{equation*}
Recall that $b_k$ decreases as $k$ increases, therefore we have $b_{d(i) - 1}
\ge b_{d(c) - 1} = 3^3 = 27$. \qed
\end{proof}

The following lemma describes the conditions under which $o^j_i$ wants to
switch. Specifically, if at least one of the two inputs to the gate is $1$, then
$o^j_i$ can switch to $v^j_i$ with high appeal. On the other hand, if both
inputs are $0$, then $o^j_i$ can switch to $x^j_i$ with high appeal.

\begin{lemma}
\label{lem:oro}
Suppose that we are in phase $j$, let $\sigma$ be a coherent policy, and let
$i$ be an \org gate.
\begin{enumerate}
\item If $\val^{\sigma}(v^j_i) = \val^{\sigma}(c_j) + H_{d(i)-1}$ then:
\begin{enumerate}
\item If $\sigma(o^j_i) = x^j_i$ then the appeal of switching $o^j_i$ to $v^j_i$
is at least $9$.
\item If $\sigma(o^j_i) = v^j_i$ then the appeal of switching $o^j_i$ to $x^j_i$
is strictly less than $0$. 
\end{enumerate}
\item If $\val^{\sigma}(v^j_i) = \val^{\sigma}(c_j) + L_{d(i)-1}$ then:
\begin{enumerate}
\item If $\sigma(o^j_i) = x^j_i$ then the appeal of switching $o^j_i$ to $v^j_i$
is strictly less than $0$. 
\item If $\sigma(o^j_i) = v^j_i$ then the appeal of switching $o^j_i$ to $x^j_i$
is at least $18$.
\end{enumerate}
\end{enumerate}
\end{lemma}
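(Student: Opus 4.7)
The proof is a direct calculation from Equation~\eqref{eq:appeal}, using the coherence hypothesis together with the constants defined in Section~\ref{sec:overview}. The plan is as follows.

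First, since $\sigma$ is coherent, we have $\sigma(x^j_i) = c_j$, so by the first claim of Lemma~\ref{lem:argappeal} applied with $r_d = r_f = 0$, we get $\val^{\sigma}(x^j_i) = \val^{\sigma}(c_j)$. This lets us replace $\val^{\sigma}(x^j_i)$ by $\val^{\sigma}(c_j)$ throughout. Next, since both actions leaving $o^j_i$ are deterministic, Equation~\eqref{eq:val} gives, in the two possible configurations:
\begin{align*}
\sigma(o^j_i) = x^j_i &\;\Longrightarrow\; \val^{\sigma}(o^j_i) = \val^{\sigma}(c_j) + L_{d(i)}, \\
\sigma(o^j_i) = v^j_i &\;\Longrightarrow\; \val^{\sigma}(o^j_i) = \val^{\sigma}(v^j_i) + b_{d(i)}.
\end{align*}

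Then I would just plug these into the appeal formula \eqref{eq:appeal}, noting that for a deterministic action from $s$ to $t$ with reward $r$ we have $\appeal^{\sigma}(a) = r + \val^{\sigma}(t) - \val^{\sigma}(s)$. The four cases reduce to short expressions in the constants $b_k$, $L_k$, $H_k$. Using the identities $H_{d(i)-1} = L_{d(i)}$ and $L_{d(i)} + b_{d(i)} = H_{d(i)}$, each case simplifies to:
\begin{align*}
\text{1(a):}\quad & b_{d(i)} + H_{d(i)-1} - L_{d(i)} = b_{d(i)}, \\
\text{1(b):}\quad & L_{d(i)} - H_{d(i)} = -b_{d(i)}, \\
\text{2(a):}\quad & b_{d(i)} + L_{d(i)-1} - L_{d(i)} = b_{d(i)} - b_{d(i)-1}, \\
\text{2(b):}\quad & L_{d(i)} - L_{d(i)-1} - b_{d(i)} = b_{d(i)-1} - b_{d(i)}.
\end{align*}

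Finally, I would use the relation $b_k = 3^{d(C) - k + 2}$, so that $b_{d(i)-1} = 3 \cdot b_{d(i)}$ and $b_{d(i)} \ge 3^2 = 9$ since $d(i) \le d(C)$. This yields the bounds $b_{d(i)} \ge 9$ for case 1(a), strict negativity for cases 1(b) and 2(a), and $b_{d(i)-1} - b_{d(i)} = 2 b_{d(i)} \ge 18$ for case 2(b). There is no real obstacle here; the only thing to be careful about is correctly identifying $L_{d(i)} = H_{d(i)-1}$ so that case 1(a) collapses to exactly $b_{d(i)}$ rather than a slightly different quantity, and remembering that $b_k$ is decreasing in $k$ by the factor of $3$ ratio.
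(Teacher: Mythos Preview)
Your proposal is correct and follows essentially the same approach as the paper: both compute $\val^{\sigma}(x^j_i) = \val^{\sigma}(c_j)$ from coherence, plug into the appeal formula for the two deterministic actions at $o^j_i$, simplify each case to an expression in $b_{d(i)}$ and $b_{d(i)-1}$ using $H_{d(i)-1} = L_{d(i)}$, and bound via $b_{d(i)} \ge b_{d(C)} = 9$. Your use of $b_{d(i)-1} = 3\,b_{d(i)}$ is just a compact rephrasing of the paper's explicit power-of-$3$ arithmetic.
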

\begin{proof}
For case 1, if $\sigma(o^j_i) = x^j_i$, then the appeal of switching $o^j_i$
to $v^j_i$ is $H_{d(i)-1} + b_{d(i)} - L_{d(i)} = b_{d(i)}$. Recall that $b_k$
decreases as $k$ increases, thus we have that $b_{d(i)} \geq b_{d(C)} = 9$. On
the other hand, if $\sigma(o^j_i) = v^j_i$, then by the converse of the previous
argument, we have that the appeal of switching $o^j_i$
to $v^j_i$ is $L_{d(i)} - H_{d(i)-1} - b_{d(i)} = b_{d(i)} < 0$, since
$b_{d(i)}$ is always positive.

For case 2, if $\sigma(o^j_i) = x^j_i$, then the appeal of switching $o^j_i$ to
$v^j_i$ is 
\begin{align*}
L_{d(i) - 1} + b_{d(i)} - L_{d(i)} &= b_{d(i)} - b_{d(i)-1} \\
&= 3^{d(C) - d(i) + 2} - 3^{d(C) - d(i) + 3} \\
&< 0.
\end{align*}
On the other hand, if $\sigma(o^j_i) = v^j_i$, then the appeal of switching
$o^j_i$ to $x^j_i$ is:
\begin{align*}
L_{d(i)}  - 
L_{d(i) - 1} - b_{d(i)} -  &=  b_{d(i)-1} - b_{d(i)} \\
&= 3^{d(C) - d(i) + 3} - 3^{d(C) - d(i) + 2} \\
&\geq 3^{d(C) - d(C) + 3} - 3^{d(C) - d(C) + 2} \\
&= 18.
\end{align*}
\qed
\end{proof}

In our final lemma concerning \org gates, we show that the gate always outputs the
correct value. Taking into account Lemmas~\ref{lem:orv} and~\ref{lem:oro}, we
can see that there are only two cases to consider: 
\begin{itemize}
\item If both inputs are $0$
then $o^j_i$ takes the action to $x^j_i$.
\item If at least one input is $1$, then $o^j_i$ takes the action towards
$v^j_i$, and $v^j_i$ takes the action towards the input bit that is $1$.
\end{itemize}
The following lemma shows that in either case, the value of $o^j_i$ is $B$-correct.

%That is, if $o^j_i$ takes the action to $x^j_i$, then the value
%of $o^j_i$ is 
%of the gate is $L_{d(i)}$. On the other hand, if at least one of the inputs is a
%$1$, and if $o^j_i$ takes the action to $v^j_i$, then the output of the

\begin{lemma}
\label{lem:orcorrect}
Suppose that we are in phase $j$, let $\sigma$ be a coherent policy, and let
$i$ be an \org gate.
\begin{enumerate}
\item If $\val^{\sigma}(v^j_i) = \val^{\sigma}(c_j) + H_{d(i)-1}$ and
$\sigma(o^j_i) = v^j_i$, then $\val^{\sigma}(o^j_i) = \val^{\sigma}(c_j) +
H_{d(i)}$.
\item If $\sigma(o^j_i) = x^j_i$, then $\val^{\sigma}(o^j_i) = \val^{\sigma}(c_j) + L_{d(i)}$
\end{enumerate}
\end{lemma}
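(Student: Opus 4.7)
The plan is to simply unfold the value equation \eqref{eq:val} at the state $o^j_i$ in each of the two cases, using the structure of the OR gadget in Figure~\ref{fig:or} together with the coherence of $\sigma$.

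For case~1, the assumption $\sigma(o^j_i) = v^j_i$ together with the deterministic action from $o^j_i$ to $v^j_i$ having reward $b_{d(i)}$ yields, by \eqref{eq:val},
\begin{equation*}
\val^{\sigma}(o^j_i) = b_{d(i)} + \val^{\sigma}(v^j_i) = b_{d(i)} + \val^{\sigma}(c_j) + H_{d(i)-1}.
\end{equation*}
Since $H_k = L_k + b_k = H_{k-1} + b_k$ (as recorded just after the definition of $H_k$, $L_k$, $b_k$), this equals $\val^{\sigma}(c_j) + H_{d(i)}$, as required.

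For case~2, coherence of $\sigma$ forces $\sigma(x^j_i) = c_j$, so Lemma~\ref{lem:argappeal} (applied to the appeal-reduction gadget at $x^j_i$, which has $r_d = r_f = 0$) gives $\val^{\sigma}(x^j_i) = \val^{\sigma}(c_j)$. The deterministic action from $o^j_i$ to $x^j_i$ has reward $L_{d(i)}$, so by \eqref{eq:val} again,
\begin{equation*}
\val^{\sigma}(o^j_i) = L_{d(i)} + \val^{\sigma}(x^j_i) = \val^{\sigma}(c_j) + L_{d(i)},
\end{equation*}
completing the proof.

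There is no real obstacle here; the lemma is a direct computation. The only thing worth being careful about is making sure we are allowed to conclude $\val^{\sigma}(x^j_i) = \val^{\sigma}(c_j)$ in case~2, which is why we explicitly invoke coherence and the first part of Lemma~\ref{lem:argappeal}.
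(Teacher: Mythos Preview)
Your proof is correct and follows essentially the same approach as the paper's own proof, which is even terser: it just notes that part~1 follows from $H_{d(i)-1} + b_{d(i)} = H_{d(i)}$ and part~2 from the reward $L_{d(i)}$ on the action to $x^j_i$. Your explicit use of coherence and Lemma~\ref{lem:argappeal} to justify $\val^{\sigma}(x^j_i) = \val^{\sigma}(c_j)$ in case~2 is a welcome clarification that the paper leaves implicit.
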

\begin{proof}
The first part follows from the fact that $H_{d(i)-1} + b_{d(i)} = H_{d(i)}$.
The second part follows from the reward of $L_{d(i)}$ on the action between
$o^j_i$ and $x^j_i$. \qed
\end{proof}

\subsection{Not gates}

We now consider the \notg gate gadgets. Let $i$ be a \notg gate. There are two modes
of operation of the \notg gate depending upon the action chosen at $a^j_i$. In the
following lemma, we describe the behaviour of the \notg gate in the case where
$a^j_i$ chooses the action towards $c_j$. Specifically, we show that if
$\inp(i)$ is final and $B$-correct, then the action from $o^j_i$ to $o^j_{\inp(i)}$
has high appeal. Moreover, we give a precise value for the appeal of switching
$a^j_i$ to $c_{1-j}$.

\begin{lemma}
\label{lem:notaj}
Suppose that we are in phase $j$, let $\sigma$ be a $j$ coherent policy, and let
$i$ be a \notg gate. If $\sigma(a^j_i) = c_j$ then:
\begin{enumerate} 
\item The appeal of switching $a^j_i$ to $c_{1-j}$ is $\ajprime$.
\item If $\sigma(o^j_i) = a^j_i$ and $\val^{\sigma}(o^j_{\inp(i)}) \ge \val^{\sigma}(c_j) + L_{d(i) - 1}$,
then the appeal of switching $o^j_i$ to $o^j_{\inp(i)}$ is at least $8$.
\item If $\sigma(o^j_i) = o^j_{\inp(i)}$ and $\val^{\sigma}(o^j_{\inp(i)}) \ge \val^{\sigma}(c_j) + L_{d(i) - 1}$,
then the appeal of switching $o^j_i$ to $a^j_{i}$ is strictly less than $0$.
\end{enumerate}
\end{lemma}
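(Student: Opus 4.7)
The plan is to derive all three appeals by direct application of Lemma~\ref{lem:argappeal} to the two appeal reduction gadgets in a \notg gate: the one at $a^j_i$ with outgoing branches to $c_j$ and $c_{1-j}$, and the one at $o^j_i$ with outgoing branch to $a^j_i$. The common preliminary step is to compute $\val^{\sigma}(a^j_i)$. Since $\sigma(a^j_i) = c_j$ and the gadget from $a^j_i$ to $c_j$ has $r_d = r_f = 0$, the first claim of Lemma~\ref{lem:argappeal} immediately gives $\val^{\sigma}(a^j_i) = \val^{\sigma}(c_j)$. Combining this with the phase-$j$ identity $\val^{\sigma}(c_{1-j}) = \val^{\sigma}(c_j) + T$ guaranteed by Lemma~\ref{lem:clock} yields $\val^{\sigma}(c_{1-j}) - \val^{\sigma}(a^j_i) = T$.

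For the first claim I would apply the second part of Lemma~\ref{lem:argappeal} to the gadget from $a^j_i$ to $c_{1-j}$, with $r_d = 0$, probability $p_1$, $r_f = -T + H_{d(i)-1}$, and $b = T$. The appeal evaluates to $p_1 \cdot (T + (-T + H_{d(i)-1})) = p_1 \cdot H_{d(i)-1}$, which equals $\ajprime$ by the definition of $p_1$.

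For the remaining two claims, the next step is to compute $\val^{\sigma}(o^j_i)$ in the two configurations. If $\sigma(o^j_i) = a^j_i$, the first part of Lemma~\ref{lem:argappeal} applied to the gadget from $o^j_i$ to $a^j_i$ (with $r_d = 1$, $p = \frac{1}{b_{d(i)}}$, $r_f = 0$) gives $\val^{\sigma}(o^j_i) = \val^{\sigma}(c_j) + b_{d(i)}$. The appeal of the deterministic action to $o^j_{\inp(i)}$ is therefore $\val^{\sigma}(o^j_{\inp(i)}) - \val^{\sigma}(o^j_i) \ge L_{d(i)-1} - b_{d(i)}$, and using the standing assumption $d(i) \ge 2$ together with $b_{d(i)-1} = 3\, b_{d(i)}$ and $b_{d(i)} \ge b_{d(C)} = 9$ makes this at least $2\, b_{d(i)} \ge 18$, comfortably yielding the stated lower bound of $8$. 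Symmetrically, if $\sigma(o^j_i) = o^j_{\inp(i)}$, then $\val^{\sigma}(o^j_i) = \val^{\sigma}(o^j_{\inp(i)})$ and $\val^{\sigma}(a^j_i) - \val^{\sigma}(o^j_i) \le -L_{d(i)-1}$, so the second part of Lemma~\ref{lem:argappeal} shows that switching $o^j_i$ to $a^j_i$ has appeal at most $-\frac{L_{d(i)-1}}{b_{d(i)}} + 1$, which is strictly negative since $L_{d(i)-1} \ge b_{d(i)-1} = 3\, b_{d(i)} > b_{d(i)}$.

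There is no real obstacle here: the proof reduces to mechanical plug-in to Lemma~\ref{lem:argappeal} followed by arithmetic with the definitions of $p_1$, $b_k$, $L_k$, and $H_k$ from Section~\ref{sec:overview}. The only mild subtlety is checking that the hypothesis $d(i) \ge 2$ (assumed for all \notg gates) is just strong enough to ensure that $L_{d(i)-1}$ contains the term $b_{d(i)-1} = 3\, b_{d(i)}$, which is what drives both bounds in the second and third claims.
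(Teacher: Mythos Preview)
Your argument is correct and follows essentially the same route as the paper: compute $\val^{\sigma}(a^j_i)=\val^{\sigma}(c_j)$ from the first part of Lemma~\ref{lem:argappeal}, then plug into the second part for each of the three appeals. One small indexing slip to fix: by definition $L_{d(i)-1}=\sum_{m=0}^{d(i)-2} b_m$, so it does \emph{not} contain $b_{d(i)-1}$ as a summand. What $d(i)\ge 2$ actually guarantees is that $L_{d(i)-1}$ contains $b_{d(i)-2}=3\,b_{d(i)-1}=9\,b_{d(i)}$, so your inequality $L_{d(i)-1}\ge 3\,b_{d(i)}$ (indeed $\ge 9\,b_{d(i)}$) still holds and both claims~2 and~3 go through unchanged. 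The paper instead uses the cruder monotonicity bound $L_{d(i)-1}\ge L_1=b_0$ together with $b_{d(i)}\le b_2$, arriving at the stated lower bound of~$8$ rather than your~$18$.
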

\begin{proof}
We begin with the first claim. Since $\val^{\sigma}(c_{1-j}) =
\val^{\sigma}(c_j) + T$, we can apply Lemma~\ref{lem:argappeal} to argue that
the appeal of switching $a^j_i$ to $c_{1-j}$ is:
\begin{equation*}
p_{1} \cdot H_{d(i)-1}  = \ajprime. 
\end{equation*}

For the second claim, observe that $\val^{\sigma}(o^j_i) = \val^{\sigma}(c_j) +
b_{d(i)}$. Recall that, by assumption, for every \notg gate $i$, we have $d(i) \ge
2$. Note also that $L_k$ increases as $k$ increases. Therefore, we have
$\val^{\sigma}(o^j_{\inp(i)}) \ge \val^{\sigma}(c_j) + L_{1}$.
Hence, the appeal of switching 
$o^j_i$ to $o^j_{\inp(i)}$ is at least:
\begin{align*}
L_1 - b_{d(i)} &\ge L_1 - b_2 \\
&= 3^{d(C) + 2} - 3^{d(C)} \\
&\ge 3^{2} - 3^{0} \\
&= 8.
\end{align*}

For the third claim, we can apply Lemma~\ref{lem:argappeal} to argue that the
appeal of switching $o^j_i$ to $a^j_{i}$ is:
\begin{equation*}
\frac{1}{b_{d(i)}} \cdot (1 - L_{d(i) - 1}) \le 
\frac{1}{b_{d(i)}} \cdot (1 - L_{1}) < 0.
\end{equation*}
\qed
\end{proof}

In the next lemma, we describe the behaviour of the \notg gate when $a^j_i$
chooses the action towards $c_{1-j}$. In particular, we show that if $\inp(i)$
is final and $B$-correct, then the appeal of switching $o^j_i$ to $a^j_i$ is $1$ in
the case where $C(B, \inp(i))$ is $1$, and $4$ in the case where $C(B, \inp(i))$
is~$0$. This is the critical property that we need: while it is processing the
circuit, Dantzig's rule will only switch actions with appeal greater than $3.5$,
so it will switch in the case where $C(B, \inp(i)) = 0$, and it will not switch
when $C(B, \inp(i)) = 1$.

\begin{lemma}
\label{lem:nota1j}
Suppose that we are in phase $j$, let $\sigma$ be a coherent policy, and
let~$i$ be a \notg gate. If $\sigma(a^j_i) = c_{1-j}$ then:
\begin{enumerate}
\item The appeal of switching $a^j_i$ to $c_j$ is strictly less than $0$.
\item If $\sigma(o^j_i) = o^j_{\inp(i)}$ then:
\begin{enumerate}
\item If $\val^{\sigma}(o^j_{\inp(i)}) = L_{d(i)-1}$, then the appeal of
switching $o^j_i$ to $a^j_i$ is $4$.
\item If $\val^{\sigma}(o^j_{\inp(i)}) = H_{d(i)-1}$, then the appeal of
switching $o^j_i$ to $a^j_i$ is $1$.
\end{enumerate}
\item If $\sigma(o^j_i) = a^j_i$, and if $\val^{\sigma}(o^j_{\inp(i)}) \le
H_{d(i)-1}$, then the appeal of switching $o^j_i$ to $o^j_{\inp(i)}$ is
strictly less than $0$.
\end{enumerate}
\end{lemma}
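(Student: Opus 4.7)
The plan is to compute $\val^{\sigma}(a^j_i)$ first, and then derive each of the three claims as short consequences of Lemma~\ref{lem:argappeal}. Since $\sigma(a^j_i) = c_{1-j}$, the first claim of Lemma~\ref{lem:argappeal} applied to the gadget $\gadget(a^j_i, c_{1-j}, 0, p_1, -T + H_{d(i)-1})$ yields
\[
\val^{\sigma}(a^j_i) = \val^{\sigma}(c_{1-j}) - T + H_{d(i)-1} = \val^{\sigma}(c_j) + H_{d(i)-1},
\]
where the second equality uses the fact that $\val^{\sigma}(c_{1-j}) = \val^{\sigma}(c_j) + T$ in phase $j$.

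For claim (1), I would apply the second claim of Lemma~\ref{lem:argappeal} to the gadget $\gadget(a^j_i, c_j, 0, p_2, 0)$ with $b = \val^{\sigma}(c_j) - \val^{\sigma}(a^j_i) = -H_{d(i)-1}$, giving appeal $p_2 \cdot (-H_{d(i)-1}) < 0$, since $p_2 > 0$ and $H_{d(i)-1} > 0$.

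For claim (2), note that the action from $o^j_i$ to $o^j_{\inp(i)}$ is deterministic with reward $0$, so $\val^{\sigma}(o^j_i) = \val^{\sigma}(o^j_{\inp(i)})$. Combined with the computation of $\val^{\sigma}(a^j_i)$, this gives
\[
\val^{\sigma}(a^j_i) - \val^{\sigma}(o^j_i) = \val^{\sigma}(c_j) + H_{d(i)-1} - \val^{\sigma}(o^j_{\inp(i)}).
\]
In case (a) this difference equals $H_{d(i)-1} - L_{d(i)-1} = b_{d(i)-1}$, and in case (b) it equals $0$. Applying the second claim of Lemma~\ref{lem:argappeal} to the gadget $\gadget(o^j_i, a^j_i, 1, 1/b_{d(i)}, 0)$ yields appeals $(1/b_{d(i)}) \cdot b_{d(i)-1} + 1$ and $(1/b_{d(i)}) \cdot 0 + 1$ respectively. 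Using $b_k = 3^{d(C)-k+2}$, we have $b_{d(i)-1}/b_{d(i)} = 3$, so these appeals evaluate to $4$ and $1$, as claimed.

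For claim (3), since $\sigma(o^j_i) = a^j_i$, the first claim of Lemma~\ref{lem:argappeal} gives $\val^{\sigma}(o^j_i) = \val^{\sigma}(a^j_i) + 0 + 1/(1/b_{d(i)}) = \val^{\sigma}(c_j) + H_{d(i)-1} + b_{d(i)} = \val^{\sigma}(c_j) + H_{d(i)}$. The action from $o^j_i$ to $o^j_{\inp(i)}$ is deterministic with reward~$0$, so its appeal is simply $\val^{\sigma}(o^j_{\inp(i)}) - \val^{\sigma}(o^j_i) \le H_{d(i)-1} - H_{d(i)} = -b_{d(i)} < 0$ by the hypothesis $\val^{\sigma}(o^j_{\inp(i)}) \le \val^{\sigma}(c_j) + H_{d(i)-1}$. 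There is no real obstacle here: the proof is a mechanical bookkeeping exercise with Lemma~\ref{lem:argappeal}; the only thing to be careful about is correctly tracking whether we are in the ``takes the action'' or ``does not take the action'' branch of that lemma for each of the three gadgets involved. \qed
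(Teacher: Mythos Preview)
Your proof is correct and follows essentially the same approach as the paper: compute $\val^{\sigma}(a^j_i) = \val^{\sigma}(c_j) + H_{d(i)-1}$ first, then apply Lemma~\ref{lem:argappeal} to each of the three gadgets in turn. Your version is in fact slightly more careful than the paper's (which writes $H_{d(i)}$ rather than $H_{d(i)-1}$ in the first claim, though either yields a negative appeal).
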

\begin{proof}
We begin with the first claim. Since we are in phase $j$, we have
$\val^{\sigma}(c_{1-j}) = \val^{\sigma}(c_j) + T$. Thus, we can apply
Lemma~\ref{lem:argappeal} to argue that the appeal of switching 
$a^j_i$ to $c_j$ is:
\begin{equation*}
p_2 \cdot (0 - H_{d(i)}) < 0.
\end{equation*}

We now consider the second claim. First suppose that 
$\val^{\sigma}(o^j_{\inp(i)}) = L_{d(i) -1 }$. Then, by
Lemma~\ref{lem:argappeal} we have that the appeal of switching $o^j_i$ to
$a^j_i$ is:
\begin{align*}
& \frac{1}{b_{d(i)}} \cdot (H_{d(i) - 1} - L_{d(i) - 1}) + 1 \\
&= \frac{1}{b_{d(i)}} \cdot b_{d(i) - 1} + 1 \\
&= \frac{1}{3^{d(C) - d(i) + 2}} \cdot 3^{d(C) - d(i) + 3} + 1 \\
&= 4.
\end{align*}
On the other hand, if $\val^{\sigma}(o^j_{\inp(i)}) = H_{d(i) -1 }$, then the
appeal of switching $o^j_i$ to $a^j_i$ is:
\begin{equation*}
\frac{1}{b_{d(i)}} \cdot (H_{d(i) - 1} - H_{d(i) -1}) + 1 = 1.
\end{equation*}
Thus, we have shown both parts of the second claim.

Finally, we consider the third claim of this lemma. By Lemma~\ref{lem:argappeal}
we have that the appeal of switching $o^j_i$ to $o^j_{\inp(i)}$ is at most:
\begin{equation*}
H_{d(i) - 1} - (H_{d(i) -1} + b_{d(i)}) < 0.
\end{equation*}
\qed
\end{proof}

Finally, we show that the value of $o^j_i$ is $B$-correct, once the necessary
actions have been switched.

\begin{lemma}
\label{lem:notcorrect}
Suppose that we are in phase $j$, and let $\sigma$ be a coherent policy. We
have: 
\begin{itemize}
\item If $\sigma(o^j_i) = o^j_{I(i)}$ and $\val^{\sigma}(o^j_{I(i)}) = H_{d(i)-1}$, then $\val^{\sigma}(o^j_i) = L_{d(i)}$.
\item If $\sigma(o^j_i) = a^j_{i}$,  then $\val^{\sigma}(o^j_i) = H_{d(i)}$.
\end{itemize}
\end{lemma}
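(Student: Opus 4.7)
The plan is to prove both claims by direct computation using the value equation~\eqref{eq:val} together with the first property of Lemma~\ref{lem:argappeal}, plus the key arithmetic identity $H_{d(i)-1} = L_{d(i)}$ from the definitions at the start of Section~\ref{sec:overview}. Both claims are one-step relations about $\val^{\sigma}(o^j_i)$ given what $o^j_i$ chooses, and the values are implicitly understood relative to $\val^{\sigma}(c_j)$, as is the convention used throughout the rest of this section.

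For the first claim, I would observe that the action from $o^j_i$ to $o^j_{\inp(i)}$ is a deterministic action with reward $0$, so~\eqref{eq:val} gives $\val^{\sigma}(o^j_i) = \val^{\sigma}(o^j_{\inp(i)})$. By hypothesis this equals $\val^{\sigma}(c_j) + H_{d(i)-1}$, and the identity $H_{k-1} = L_k$ (which holds because $H_k = L_k + b_k$ and $H_k = L_{k+1}$) immediately rewrites this as $\val^{\sigma}(c_j) + L_{d(i)}$, as claimed.

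For the second claim, the action $\sigma(o^j_i) = a^j_i$ is the one going through the appeal reduction gadget $\gadget(o^j_i, a^j_i, 1, 1/b_{d(i)}, 0)$. Applying the first part of Lemma~\ref{lem:argappeal} with $r_d = 1$, $p = 1/b_{d(i)}$, and $r_f = 0$ yields
\begin{equation*}
\val^{\sigma}(o^j_i) \;=\; \val^{\sigma}(a^j_i) + 0 + \frac{1}{1/b_{d(i)}} \;=\; \val^{\sigma}(a^j_i) + b_{d(i)}.
\end{equation*}
The remaining task is to evaluate $\val^{\sigma}(a^j_i)$. In the intended use of this lemma, $\sigma(a^j_i) = c_{1-j}$ (the \notg gate is in ``activated'' mode, which is precisely the regime in which $o^j_i$ prefers $a^j_i$ according to Lemma~\ref{lem:nota1j}), and a second application of Lemma~\ref{lem:argappeal} to the appeal reduction gadget $\gadget(a^j_i, c_{1-j}, 0, p_1, -T+H_{d(i)-1})$ gives $\val^{\sigma}(a^j_i) = \val^{\sigma}(c_{1-j}) - T + H_{d(i)-1}$. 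Since we are in phase $j$ we have $\val^{\sigma}(c_{1-j}) = \val^{\sigma}(c_j) + T$, so $\val^{\sigma}(a^j_i) = \val^{\sigma}(c_j) + H_{d(i)-1}$. Combining with the preceding display and using $H_{d(i)-1} + b_{d(i)} = H_{d(i)}$ finishes the proof.

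Neither step poses a real obstacle; the only subtlety is the second claim's dependence on the action chosen at $a^j_i$. The intended reading is that this lemma is invoked in contexts where $\sigma(a^j_i) = c_{1-j}$, matching the paired Lemma~\ref{lem:nota1j}. If one wanted the statement to be fully self-contained, one could reformulate claim 2 to include that hypothesis explicitly; otherwise the proof proceeds exactly as above.
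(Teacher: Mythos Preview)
Your proof is correct and follows the same approach as the paper: the first claim by direct substitution using $H_{d(i)-1}=L_{d(i)}$, and the second by applying Lemma~\ref{lem:argappeal} to the gadget at $o^j_i$ and then to the gadget at $a^j_i$. In fact you are more explicit than the paper, which silently relies on $\sigma(a^j_i)=c_{1-j}$ to get $\val^{\sigma}(a^j_i)=\val^{\sigma}(c_j)+H_{d(i)-1}$; your remark that this hypothesis is implicit (and matches the setting of Lemma~\ref{lem:nota1j} where the lemma is actually used) is a fair observation.
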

\begin{proof}
In the case where $\sigma(o^j_i) = o^j_{I(i)}$ the claim follows easily:
\begin{equation*}
\val^{\sigma}(o^j_i) = \val^{\sigma}(O(i)) = H_{d(i)-1} = L_{d(i)}.
\end{equation*}
For the case where $\sigma(o^j_i) = a^j_i$, we can apply Lemma~\ref{lem:argappeal} to obtain:
\begin{equation*}
\val^{\sigma}(o^j_i) = H_{d(i)-1} + b_{d(i)} = H_{d(i)}.
\end{equation*}
\qed
\end{proof}

\subsection{The proof}

We can now present the proof of Lemma~\ref{lem:circuitinductive}.

\begin{proof}
Let $i$ be a gate with depth $k+1$.
Since $k > 0$, we have that $i$ cannot be an input bit,
so we have two cases to consider. Firstly, suppose that $i$ is an \org gate.
The following sequence of events occurs:
\begin{itemize}
\item Firstly, if $\val^{\sigma'}(o^j_{\inp_1(i)}) \ne
\val^{\sigma'}(o^j_{\inp_2(i)})$, and if $\val^{\sigma'}(v^j_i) \ne
\val^{\sigma'}(c_j) + H_{d(i) - 1}$, then by Lemma~\ref{lem:orv} Dantzig's rule
will eventually switch $v^j_i$ with appeal at least $27$. Once this has occurred
Lemma~\ref{lem:orv} implies that $v^j_i$ is final.
\item At this point, if $o^j_i$ does not give the correct output, then by
Lemma~\ref{lem:oro} Dantzig's rule will eventually switch $o^j_i$ with appeal
$9$. Once this has occurred, Lemma~\ref{lem:oro} implies that $o^j_i$ is final.
\item Once the above two steps have occurred, Lemma~\ref{lem:orcorrect} implies
that gate $i$ is $B$-correct.
\end{itemize}
Thus, Dantzig's rule will eventually arrive at a policy $\sigma''$ in which
gate $i$ is both final and $B$-correct, while switching only actions of appeal
strictly greater than $9$.

We now prove the case for when $i$ is a \notg gate. Recall that, by assumption, as
Dantzig's rule moved from $\sigma$ to $\sigma'$ it only switched actions with
appeal at least $3.5 + \frac{1}{2\cdot d(i-1)} > 3.5 + \frac{1}{2 \cdot d(i)}$.
Therefore, by Lemma~\ref{lem:buffappeal} we must still have $\sigma'(a^j_i) =
c_j$. So, the following sequence of events occurs:
\begin{itemize}
\item Firstly, by Lemma~\ref{lem:notaj}, we have that if $\sigma(o^j_i) \ne
o^j_{\inp(i)}$, then Dantzig's rule will eventually switch $o^j_i$ to 
$o^j_{\inp(i)}$ with appeal at least $8$.
\item Then, by Lemma~\ref{lem:notaj}, we have that Dantzig's rule will eventually
switch $a^j_i$ to $c_{1-j}$ with appeal $3.5 + \frac{1}{2 \cdot d(i)}$.
\item There are now two cases:
\begin{itemize}
\item If $\val^{\sigma}(o^j_{\inp(i)}) = \val^{\sigma}(c_j) + L_{d(i)-1}$, then
by Lemma~\ref{lem:nota1j} we have that Dantzig's rule will eventually switch
$o^j_i$ to $a^j_i$ with appeal $4$. Once this has occurred,
Lemma~\ref{lem:nota1j} implies that gate $i$ is final, and
Lemma~\ref{lem:notcorrect} implies that gate $i$ is $B$-correct.
\item If $\val^{\sigma}(o^j_{\inp(i)}) = \val^{\sigma}(c_j) + H_{d(i)-1}$, then
nothing further is switched in gate $i$, because Lemma~\ref{lem:nota1j} implies
that gate $i$ is final, and Lemma~\ref{lem:notcorrect} implies that gate $i$ is
$B$-correct.
\end{itemize}
\end{itemize}
Thus, Dantzig's rule will eventually arrive at a policy $\sigma''$ in which
gate $i$ is both final and $B$-correct, while switching only actions of appeal
strictly greater than $3.5 + \frac{1}{2 \cdot d(i)}$.

Since the argument made above holds for every gate $i$ with depth $k+1$, we have
that Dantzig's rule will eventually move to a policy $\sigma'''$ in which all
gates with depth $k+1$ are both final and $B$-correct, while switching only
actions of appeal strictly greater than $3.5 + \frac{1}{2 \cdot (k+1)}$.
\qed
\end{proof}

\section{Proof of Lemma~\ref{lem:circuit}}
\label{app:circuit}

Before we begin, we give two lemmas about the behaviour of the input bit
gadgets.

\subsection{Input bits}

Recall that, in an initial policy $\sigma$ for $B$, we have that
$\sigma(o^{1-j}_i) = l^{1-j}_i$. Our goal is to show that $o^{1-j}_i$ switches
to $r^{1-j}_i$ only in the case where $C(B, \inp(i)) = 1$. 
The following lemma characterises the conditions under which $o^{1-j}_i$ wants
to switch to $r^{1-j}_i$.

\begin{lemma}
\label{lem:seven}
Suppose that we are in phase $J$, and let $\sigma$ be a policy with
$\sigma(o^{1-j}_i) = l^{1-j}_i$, $\sigma(l^{1-j}_i) = c_j$, $\sigma(r^{1-j}_i) =
o^j_{\inp(i)}$ for some input bit $i$. We have:
\begin{itemize}
\item If $\val^{\sigma}(o^j_{\inp(i)}) \le \val^{\sigma}(c_j) + L_{d(C)}$, then
the appeal of switching $o^{1-j}_i$ to $r^{1-j}_i$ is strictly less than~$0$.
\item If $\val^{\sigma}(o^j_{\inp(i)}) = \val^{\sigma}(c_j) + H_{d(C)}$, then
the appeal of switching $o^{1-j}_i$ to $r^{1-j}_i$ is $4.5$.
\end{itemize}
\end{lemma}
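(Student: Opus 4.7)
The plan is to apply Lemma~\ref{lem:argappeal} to each of the three appeal reduction gadgets whose choice is fixed by the hypotheses on $\sigma$, obtain a closed-form expression for $\appeal^{\sigma}(a)$ in terms of $\val^{\sigma}(o^j_{\inp(i)}) - \val^{\sigma}(c_j)$, and then plug in the two bounds assumed on $\val^{\sigma}(o^j_{\inp(i)})$.

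More concretely, the action $a$ from $o^{1-j}_i$ to $r^{1-j}_i$ is the deterministic reward-$0$ action of Figure~\ref{fig:input}, so by definition $\appeal^{\sigma}(a) = \val^{\sigma}(r^{1-j}_i) - \val^{\sigma}(o^{1-j}_i)$. When Figure~\ref{fig:input} is instantiated for circuit $1-j$ the two clock labels swap: $\sigma(l^{1-j}_i) = c_j$ selects the gadget with $r_d = 0$ and $r_f = -\tfrac{T}{2} + \tfrac{H_{d(C)} + L_{d(C)}}{2}$, $\sigma(r^{1-j}_i) = o^j_{\inp(i)}$ selects the gadget with $r_d = 0$ and $r_f = -\tfrac{T}{2}$, and $\sigma(o^{1-j}_i) = l^{1-j}_i$ selects the gadget with $r_d = r_f = 0$. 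Applying the first part of Lemma~\ref{lem:argappeal} to each gadget in turn gives
\begin{align*}
\val^{\sigma}(o^{1-j}_i) &= \val^{\sigma}(l^{1-j}_i) = \val^{\sigma}(c_j) - \tfrac{T}{2} + \tfrac{H_{d(C)}+L_{d(C)}}{2}, \\
\val^{\sigma}(r^{1-j}_i) &= \val^{\sigma}(o^j_{\inp(i)}) - \tfrac{T}{2},
\end{align*}
and subtracting collapses the $-T/2$ terms, leaving
\begin{equation*}
\appeal^{\sigma}(a) = \val^{\sigma}(o^j_{\inp(i)}) - \val^{\sigma}(c_j) - \tfrac{H_{d(C)} + L_{d(C)}}{2}.
\end{equation*}

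With this formula both cases are immediate. In the first case the hypothesis yields $\appeal^{\sigma}(a) \le L_{d(C)} - \tfrac{H_{d(C)} + L_{d(C)}}{2} = -\tfrac{b_{d(C)}}{2} < 0$, using $H_{d(C)} = L_{d(C)} + b_{d(C)}$ and $b_{d(C)} > 0$. In the second case we obtain $\appeal^{\sigma}(a) = H_{d(C)} - \tfrac{H_{d(C)} + L_{d(C)}}{2} = \tfrac{b_{d(C)}}{2} = \tfrac{9}{2} = 4.5$, where $b_{d(C)} = 3^{d(C) - d(C) + 2} = 9$. I do not foresee any real obstacle: the entire argument is a short application of Lemma~\ref{lem:argappeal} followed by one-line arithmetic; the only mild point of care is correctly swapping the two clock labels when instantiating Figure~\ref{fig:input} for circuit $1-j$, so that $\sigma(l^{1-j}_i) = c_j$ picks up the $p_5$-gadget with reward $-\tfrac{T}{2} + \tfrac{H_{d(C)} + L_{d(C)}}{2}$ rather than the $p_4$-gadget with reward $H_0$.
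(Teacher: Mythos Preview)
Your proposal is correct and follows essentially the same route as the paper: compute $\val^{\sigma}(o^{1-j}_i)$ and $\val^{\sigma}(r^{1-j}_i)$ via Lemma~\ref{lem:argappeal}, take their difference, and evaluate under the two hypotheses. Your treatment of the first case is in fact slightly sharper than the paper's (you get $-\tfrac{b_{d(C)}}{2} < 0$ directly, whereas the paper's inequality chain only literally yields $\le 0$).
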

\begin{proof}
We have:
\begin{equation*}
\val^{\sigma}(o^{1-j}_i) = -\frac{T}{2} + \frac{H_{d(C)} + L_{d(C)}}{2}.
\end{equation*}
On the other hand, we have:
\begin{equation*}
\val^{\sigma}(r^{1-j}_i) = -\frac{T}{2} + \val^{\sigma}(o^{j}_{\inp(i)}).
\end{equation*}
So, first suppose that
$\val^{\sigma}(o^j_{\inp(i)}) \le \val^{\sigma}(c_j) + L_{d(C)}$. In this case,
the appeal of switching $o^{1-j}_i$ to $r^{1-j}_i$ can be at most:
\begin{align*}
L_{d(C)} - \frac{H_{d(C)} + L_{d(C)}}{2} &\le 
L_{d(C)} - \frac{L_{d(C)} + L_{d(C)}}{2} \\
&= 0.
\end{align*}
Now suppose that we have
$\val^{\sigma}(o^j_{\inp(i)}) = \val^{\sigma}(c_j) + H_{d(C)}$.
In this case, the appeal of switching $o^{1-j}_i$ to $r^{1-j}_i$ is:
\begin{align*}
H_{d(C)} - \frac{H_{d(C)} + L_{d(C)}}{2} &= \frac{H_{d(C)} - L_{d(C)}}{2} \\
&= \frac{b_{d(C)}}{2} \\
&= \frac{3^{2}}{2} \\
&= 4.5
\end{align*}
\qed
\end{proof}

We also need the fact that, if $o^{1-j}_i$ switches to $r^{1-j}_i$, then it
cannot be switched back to $l^{1-j}_i$. The following lemma proves this.

\begin{lemma}
\label{lem:eighter}
Suppose that we are in phase $j$, and let $\sigma$ be a policy with
$\sigma(o^{1-j}_i) = r^{1-j}_i$, $\sigma(l^{1-j}_i) = c_j$, and
$\sigma(r^{1-j}_i) = o^j_{\inp(i)}$ for some input bit $i$.  If
$\val^{\sigma}(o^j_{\inp(i)}) = \val^{\sigma}(c_j) + H_{d(C)}$, then the appeal
of switching $o^{1-j}_i$ to $l^{1-j}_i$ is strictly less than $0$.
\end{lemma}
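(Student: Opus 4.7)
The plan is to directly compute the appeal of the action from $o^{1-j}_i$ to $l^{1-j}_i$ by first evaluating $\val^{\sigma}(o^{1-j}_i)$ and $\val^{\sigma}(l^{1-j}_i)$, and then applying the second part of Lemma~\ref{lem:argappeal}. The setup is essentially the mirror image of Lemma~\ref{lem:seven}: whereas there we had $\sigma(o^{1-j}_i) = l^{1-j}_i$ and measured the appeal toward $r^{1-j}_i$, here $\sigma(o^{1-j}_i) = r^{1-j}_i$ and we measure the appeal of switching the other way, through the appeal reduction gadget with parameter $p_3$.

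First I would compute $\val^{\sigma}(o^{1-j}_i)$. Since $o^{1-j}_i$ has a deterministic reward-$0$ transition to $r^{1-j}_i$, its value equals $\val^{\sigma}(r^{1-j}_i)$. By hypothesis $\sigma(r^{1-j}_i)$ selects the action to $o^j_{\inp(i)}$, which is an appeal reduction gadget with parameters $r_d = 0$, $p = p_7$, $r_f = -T/2$; applying the first part of Lemma~\ref{lem:argappeal} together with the assumed value $\val^{\sigma}(o^j_{\inp(i)}) = \val^{\sigma}(c_j) + H_{d(C)}$ gives $\val^{\sigma}(o^{1-j}_i) = \val^{\sigma}(c_j) + H_{d(C)} - T/2$. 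Similarly, since $\sigma(l^{1-j}_i) = c_j$ uses the appeal reduction gadget with $r_d = 0$, $p = p_5$, $r_f = -T/2 + (H_{d(C)} + L_{d(C)})/2$, the first part of Lemma~\ref{lem:argappeal} yields $\val^{\sigma}(l^{1-j}_i) = \val^{\sigma}(c_j) - T/2 + (H_{d(C)} + L_{d(C)})/2$.

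Next I would set $b = \val^{\sigma}(l^{1-j}_i) - \val^{\sigma}(o^{1-j}_i)$, which simplifies to $(L_{d(C)} - H_{d(C)})/2 = -b_{d(C)}/2 < 0$. Since the action from $o^{1-j}_i$ to $l^{1-j}_i$ is itself an appeal reduction gadget with $r_d = 0$, $p = p_3$, $r_f = 0$, the second part of Lemma~\ref{lem:argappeal} gives $\appeal^{\sigma}(a) = p_3 \cdot b$, which is strictly negative because $p_3 > 0$ and $b < 0$. This completes the proof.

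The proof is a straightforward computation; the only real obstacle is bookkeeping. One must be careful to use the correct roles of $j$ and $1-j$ when unrolling the input-bit gadget for circuit $1-j$ (so that, e.g., $l^{1-j}_i$ with $\sigma(l^{1-j}_i) = c_j$ corresponds to the $p_5$-gadget, not the $p_4$-gadget), and to recognise which transitions in Figure~\ref{fig:input} are deterministic rather than appeal reduction gadgets. Once these indices are pinned down, both applications of Lemma~\ref{lem:argappeal} are mechanical.
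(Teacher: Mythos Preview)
Your proof is correct and follows essentially the same approach as the paper: compute $\val^{\sigma}(o^{1-j}_i)$ and $\val^{\sigma}(l^{1-j}_i)$ explicitly, take their difference, and apply Lemma~\ref{lem:argappeal} to the $p_3$-gadget to see that the appeal equals $p_3 \cdot \tfrac{L_{d(C)} - H_{d(C)}}{2} < 0$. If anything, your write-up is slightly cleaner; the paper's version appears to have a labeling slip (it writes $\val^{\sigma}(o^{1-j}_i)$ where $\val^{\sigma}(l^{1-j}_i)$ is meant) but arrives at the same inequality.
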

\begin{proof}
We have:
\begin{equation*}
\val^{\sigma}(o^{1-j}_i) = \val^{\sigma}(c_j) - \frac{T}{2} + \frac{H_{d(C)} +
L_{d(C)}}{2}.
\end{equation*}
On the other hand, by assumption we have:
\begin{equation*}
\val^{\sigma}(r^{1-j}_i) = \val^{\sigma}(c_j) - \frac{T}{2} + H_{d(C)}.
\end{equation*}
Therefore, by Lemma~\ref{lem:argappeal}, the appeal of switching $o^{1-j}_i$ to
$l^{1-j}_i$ is at most:
\begin{align*}
p_3 \cdot (\frac{H_{d(C)} + L_{d(C)}}{2} - H_{d(C)})
&\le p_3 \cdot (\frac{H_{d(C)} + H_{d(C)}}{2} - H_{d(C)}) \\
&=0.
\end{align*}
\qed
\end{proof}

\subsection{The proof}

We can now give the proof of Lemma~\ref{lem:circuit}.

\begin{proof}
By Lemma~\ref{lem:circuitinductive}, we have that Dantzig's rule will eventually
switch to a policy $\sigma''$ in which, for every input bit $i$, we have that
gate $\inp(i)$ is final and $B$-correct. 

Recall that circuit $C$ is the negated form of the circuit implementing $F$.
Thus, for each input bit $i$, we have that $C(B, \inp(i)) = 1$ if and only if
the $i$-th bit of $F(B)$ is $0$.

Now we can apply Lemmas~\ref{lem:seven} and~\ref{lem:eighter}. Let $i$ be an
input bit. We consider two cases:
\begin{itemize}
\item Firstly, if $C(B, \inp(i)) = 1$, then we must have
$\val^{\sigma''}(o_{\inp(i)}) = \val^{\sigma''}(c_j) + H_{d(C)}$. So, if
$\sigma''(o^{1-j}_i) = l^{1-j}_i$, we can apply Lemma~\ref{lem:seven} implies
that Dantzig's rule will eventually switch $o^{1-j}_i$ to $r^{1-j}_i$ with appeal
$4.5$. Once we have $\sigma(o^{1-j}_i)$ we can apply Lemma~\ref{lem:eighter} to
argue that $o^{1-j}_i$ will not be switched away from $r^{1-j}_i$.
\item Secondly, if $C(B, \inp(i)) = 0$, then we must have 
$\val^{\sigma''}(o_{\inp(i)}) = \val^{\sigma''}(c_j) + L_{d(C)}$. Note that 
$\val^{\sigma''}(c_j) = \val^{\sigma}(c_j)$, and therefore, since policy
iteration can never decrease the value of a state, we must have
$\val^{\sigma'''}(o^j_{\inp(i)}) \le \val^{\sigma'''}(c_j) + L_{d(C)}$ for all
policies $\sigma'''$ that Dantzig's rule passed through as it moved from $\sigma$
to $\sigma''$. Thus, we can apply Lemma~\ref{lem:seven} to argue that Dantzig's
rule cannot have ever switched $o^{1-j}_i$ to $r^{1-j}_i$. \qed
\end{itemize}
\end{proof}

\section{Upper and lower bounds on circuit values}
\label{app:upper}

In this section, we show several preliminary results concerning upper and lower
bounds of the values in our circuit gadgets. These will be used frequently in
subsequent proofs.

The following lemma gives upper bounds on the value of the gates, based on the
value of the input bits.

\begin{lemma}
\label{lem:subupper}
Suppose that we are in phase $j$. Let $\sigma$ be a policy, and let $l \in \{0,
1\}$. Furthermore, suppose that $\sigma$ satisfies the following properties:
\begin{itemize}
\item For every \org gate $i'$, we have $\sigma(x^{j}_{i'}) = c_j$ and we have
$\sigma(x^{1-j}_{i'}) = c_j$.
\item For every \notg gate $i$, we have $\sigma(a^{1-j}_{i'}) = c_j$.
\end{itemize}
If, for every input bit $i'$, we have $\val^{\sigma}(o^l_{i'}) \le
\val^{\sigma}(c_j) + B + H_0$, for some non-negative constant $B$, then for all gates
$i$ we have $\val^{\sigma}(o^l_i) \le \val^{\sigma}(c_j) + B + H_{d(i)}$.
\end{lemma}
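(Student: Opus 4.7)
\textbf{Proof plan for Lemma~\ref{lem:subupper}.}
The plan is to proceed by induction on $d(i)$. The base case $d(i)=0$ corresponds to $i$ being an input bit, and the conclusion is then exactly the hypothesis of the lemma. For the inductive step I will split into the three kinds of non-input gadgets: \org gates, \notg gates in circuit $1-j$, and \notg gates in circuit $j$. In each case the goal is to bound $\val^\sigma(o^l_i)$ in terms of $\val^\sigma(c_j)+B+H_{d(i)}$ using only the given policy constraints, Lemma~\ref{lem:argappeal}, and the inductive hypothesis applied to the inputs of gate $i$, whose depth is $d(i)-1$ (by the assumption that inputs of an \org gate share a common depth, and the fact that $d(i)=d(\inp_m(i))+1$).

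For an \org gate $i$, both choices at $o^l_i$ admit clean bounds. If $\sigma(o^l_i)=x^l_i$, the hypothesis $\sigma(x^l_{i})=c_j$ together with Lemma~\ref{lem:argappeal} (with $r_d=r_f=0$) gives $\val^\sigma(x^l_i)=\val^\sigma(c_j)$, so $\val^\sigma(o^l_i)=\val^\sigma(c_j)+L_{d(i)}\le \val^\sigma(c_j)+B+H_{d(i)}$. If $\sigma(o^l_i)=v^l_i$, then $\val^\sigma(v^l_i)=\val^\sigma(o^l_{\inp_m(i)})$ for some $m$, and applying the inductive hypothesis together with the identity $H_{d(i)}=H_{d(i)-1}+b_{d(i)}$ finishes the case. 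For a \notg gate $i$ in circuit $1-j$, the hypothesis $\sigma(a^{1-j}_i)=c_j$ gives $\val^\sigma(a^{1-j}_i)=\val^\sigma(c_j)$; if $\sigma(o^{1-j}_i)=o^{1-j}_{\inp(i)}$ the bound follows by induction, and if $\sigma(o^{1-j}_i)$ routes into the appeal reduction gadget to $a^{1-j}_i$, Lemma~\ref{lem:argappeal} yields $\val^\sigma(o^{1-j}_i)=\val^\sigma(c_j)+b_{d(i)}\le \val^\sigma(c_j)+H_{d(i)}$.

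The case I expect to require the most care is a \notg gate $i$ in circuit $j$, since the lemma provides no direct constraint on $\sigma(a^j_i)$. Here I would use the phase-$j$ relation $\val^\sigma(c_{1-j})=\val^\sigma(c_j)+T$ to bound $\val^\sigma(a^j_i)$ directly: when $\sigma(a^j_i)$ routes to $c_j$, Lemma~\ref{lem:argappeal} gives $\val^\sigma(a^j_i)=\val^\sigma(c_j)$, while when it routes to $c_{1-j}$ the same lemma (with $r_f=-T+H_{d(i)-1}$) yields
\[
\val^\sigma(a^j_i)=\val^\sigma(c_{1-j})+(-T+H_{d(i)-1})=\val^\sigma(c_j)+H_{d(i)-1}.
\]
In both subcases $\val^\sigma(a^j_i)\le \val^\sigma(c_j)+H_{d(i)-1}$, and then the analysis of $o^j_i$ proceeds as in the other \notg case: the branch through the appeal reduction gadget contributes an additional $b_{d(i)}$, and the branch to $o^j_{\inp(i)}$ is handled by induction. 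Combining the two $\le$'s with $H_{d(i)}=H_{d(i)-1}+b_{d(i)}$ gives $\val^\sigma(o^j_i)\le \val^\sigma(c_j)+B+H_{d(i)}$, completing the induction.
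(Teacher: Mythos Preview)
Your proposal is correct and follows essentially the same approach as the paper: induction on depth, with the base case given by the hypothesis on input bits, and the inductive step split according to gate type and the choice made at $o^l_i$. The only cosmetic difference is that the paper handles the two \notg subcases ($l=j$ and $l=1-j$) in a single stroke by observing ``no matter which action is chosen at $a^j_i$, we have $\val^\sigma(a^j_i)\le\val^\sigma(c_j)+H_k$,'' whereas you unpack those two subcases separately; the content is identical.
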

\begin{proof}
We will prove this claim by induction over depth. For input bits, which are the
gates with depth $0$, the claim holds by assumption.

We prove two versions of the inductive step. First, suppose that the claim has
been shown for all gates with depth $k$, and let $i$ be an \org gate with depth $k
+ 1$. If $\sigma(o^l_i) = x^l_i$ then by our assumptions on $\sigma$ we have:
\begin{align*}
\val^{\sigma}(o^l_i) &= \val^{\sigma}(c_j) + L_{k+1} \\
&\le \val^{\sigma}(c_j) + H_{k+1} \\
&\le \val^{\sigma}(c_j) + B + H_{k+1}. 
\end{align*}
On the other hand, if $\sigma(o^l_i) = c_j$ then by the inductive hypothesis we
have:
\begin{align*}
\val^{\sigma}(o^l_i) &\le \val^{\sigma}(c_j) + B + H_{k} + b_{k+1}\\
&= \val^{\sigma}(c_j) + B + H_{k+1}.
\end{align*}
This completes the \org gate case of the inductive step.

For the second version of the inductive step, suppose that the claim has been
shown for all gates with depth $k$, and let $i$ be a \notg gate with depth $k+1$.
Note that since we are in phase $j$, no matter which action is chosen by $a^j_i$, we must have
$\val^{\sigma}(a^j_i) \le \val^{\sigma}(c_j) + H_k$. On the other hand, since by
assumption we have $\sigma(a^{1-j}_i) = c_j$, we have 
$\val^{\sigma}(a^{1-j}_i) \le \val^{\sigma}(c_j)$. 
Therefore, if
$\sigma(o^l_i) = a^l_i$, then by Lemma~\ref{lem:argappeal}, we have:
\begin{align*}
\val^{\sigma}(o^l_i) &\le \val^{\sigma}(c_j) + H_{k} + b_{k+1}\\
&= \val^{\sigma}(c_j) + H_{k+1}\\
&\le \val^{\sigma}(c_j) + B + H_{k+1}.
\end{align*}
On the other hand, if $\sigma(o^l_i) = o^l_{\inp(i)}$, then by the inductive
hypothesis we have:
\begin{align*}
\val^{\sigma}(o^l_i) &\le \val^{\sigma}(c_j) + B + H_k \\
&< \val^{\sigma}(c_j) + B + H_{k+1}.
\end{align*}
\qed
\end{proof}

We now prove an upper bound for the gates in circuit $j$, for a certain class of
policies.
\begin{lemma}
\label{lem:gateupperj}
Suppose that we are in phase $j$, and let $\sigma$ be a policy where:
\begin{itemize}
\item For every input bit $i'$ we have $\sigma(l^j_{i'}) = \sigma(r^j_{i'}) =
c_j$. 
\item For every \org gate $i'$, we have $\sigma(x^{j}_{i'}) = c_j$.
\item For every \notg gate $i'$, we have $\sigma(a^{1-j}_{i'}) = c_j$.
\end{itemize}
For every gate $i$ we have:
\begin{equation*}
\val^{\sigma}(o^j_i) \le \val^{\sigma}(c_j) + H_{d(i)}.
\end{equation*}
\end{lemma}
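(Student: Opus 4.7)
The plan is to prove this by induction on the depth $d(i)$, mirroring the inductive structure of Lemma~\ref{lem:subupper} but restricting attention to circuit $j$ only. Lemma~\ref{lem:subupper} itself is tantalizingly close but not directly applicable, because it assumes $\sigma(x^{1-j}_{i'}) = c_j$ and $\sigma(a^{1-j}_{i'}) = c_j$, conditions that are not part of our hypotheses. The key insight, however, is that these conditions are only needed there to bound values inside the \emph{other} circuit, whereas here the extra hypothesis $\sigma(l^j_{i'}) = \sigma(r^j_{i'}) = c_j$ guarantees that the recursion from $o^j_i$ never crosses into circuit $1-j$ (since $r^j_{i'}$ would be the only portal, via the gadget to $o^{1-j}_{\inp(i')}$, and it is closed off).

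For the base case, I would compute the value of each input bit directly. By Lemma~\ref{lem:argappeal} applied to the appeal reduction gadgets at $l^j_{i'}$ and $r^j_{i'}$, the hypotheses $\sigma(l^j_{i'}) = c_j$ and $\sigma(r^j_{i'}) = c_j$ give $\val^{\sigma}(l^j_{i'}) = \val^{\sigma}(c_j) + H_0$ and $\val^{\sigma}(r^j_{i'}) = \val^{\sigma}(c_j) + L_0$, respectively (the $r_d$ terms in both gadgets are $0$). Since $o^j_{i'}$ either selects its deterministic action to $r^j_{i'}$ or its appeal reduction gadget to $l^j_{i'}$ (with $r_d = r_f = 0$), in both cases $\val^{\sigma}(o^j_{i'}) \le \val^{\sigma}(c_j) + H_0 = \val^{\sigma}(c_j) + H_{d(i')}$.

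For the inductive step on an \org gate $i$, the hypothesis $\sigma(x^j_i) = c_j$ yields $\val^{\sigma}(x^j_i) = \val^{\sigma}(c_j)$ via Lemma~\ref{lem:argappeal}, so if $\sigma(o^j_i) = x^j_i$ then $\val^{\sigma}(o^j_i) = \val^{\sigma}(c_j) + L_{d(i)} \le \val^{\sigma}(c_j) + H_{d(i)}$; otherwise $\sigma(o^j_i) = v^j_i$ and the inductive hypothesis on $o^j_{\inp_1(i)}$, $o^j_{\inp_2(i)}$ gives $\val^{\sigma}(v^j_i) \le \val^{\sigma}(c_j) + H_{d(i)-1}$, so $\val^{\sigma}(o^j_i) \le \val^{\sigma}(c_j) + H_{d(i)-1} + b_{d(i)} = \val^{\sigma}(c_j) + H_{d(i)}$. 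For a \notg gate $i$, the crucial observation is that regardless of whether $\sigma(a^j_i)$ selects $c_j$ or $c_{1-j}$, we have $\val^{\sigma}(a^j_i) \le \val^{\sigma}(c_j) + H_{d(i)-1}$: the first case gives $\val^{\sigma}(c_j)$, and the second gives $\val^{\sigma}(c_{1-j}) + (-T + H_{d(i)-1}) = \val^{\sigma}(c_j) + H_{d(i)-1}$ using the phase-$j$ relation $\val^{\sigma}(c_{1-j}) = \val^{\sigma}(c_j) + T$. Then the $o^j_i \to a^j_i$ appeal reduction gadget (with $r_d = 1$, $p = 1/b_{d(i)}$, $r_f = 0$) contributes exactly $b_{d(i)}$ when chosen, giving $\val^{\sigma}(o^j_i) \le \val^{\sigma}(c_j) + H_{d(i)}$; if $\sigma(o^j_i) = o^j_{\inp(i)}$ then we are done by the inductive hypothesis.

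There is no real obstacle here; the argument is essentially the same as in Lemma~\ref{lem:subupper}, the only subtlety being the NOT-gate calculation that exploits the phase-$j$ relationship between $\val^{\sigma}(c_j)$ and $\val^{\sigma}(c_{1-j})$ to make the dependence on $\sigma(a^j_i)$ disappear from the bound — which is precisely why no hypothesis on $\sigma(a^j_i)$ appears in the statement.
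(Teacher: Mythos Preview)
Your induction is correct and is exactly the argument underlying Lemma~\ref{lem:subupper}; the paper's own proof is in fact a one-line invocation of Lemma~\ref{lem:subupper} with $l=j$ and $B=0$. One small correction to your commentary: the condition $\sigma(a^{1-j}_{i'}) = c_j$ \emph{is} among the hypotheses of Lemma~\ref{lem:gateupperj}, so only the hypothesis $\sigma(x^{1-j}_{i'}) = c_j$ is genuinely missing --- and as you effectively observe, that hypothesis is never used in the proof of Lemma~\ref{lem:subupper} when $l=j$, so the paper's direct citation is sound even though the statement of Lemma~\ref{lem:subupper} is slightly over-specified.
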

\begin{proof}
Due to our assumptions about $\sigma$, we have $\val^{\sigma}(o^j_i) \le H_0$ for every input bit $i$. Thus, we can
apply Lemma~\ref{lem:subupper} with $B=0$ in order to prove this claim.
\qed
\end{proof}

The following lemma gives an upper bound for the gates in circuit $1-j$ in
coherent policies.
\begin{lemma}
\label{lem:gateupper}
Suppose that we are in phase $j$, and let $\sigma$ be a coherent policy. For
every gate $i$, we have
$\val^{\sigma}(o^{1-j}_i) \le \val^{\sigma}(c_j) + H_{d(i)}$.
\end{lemma}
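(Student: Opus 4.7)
The plan is to reduce to Lemma~\ref{lem:subupper} applied with $l = 1-j$ and $B = 0$. To do this, the key step is to obtain, for every input bit $i$, the bound $\val^{\sigma}(o^{1-j}_i) \le \val^{\sigma}(c_j) + H_0$, and then propagate this inductively through the rest of the circuit.

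First, I would observe that a coherent phase-$j$ policy~$\sigma$ satisfies all of the hypotheses of Lemma~\ref{lem:gateupperj}: coherence directly gives $\sigma(l^j_i) = \sigma(r^j_i) = c_j$ for every input bit, $\sigma(x^j_i) = c_j$ for every \org gate, and $\sigma(a^{1-j}_i) = c_j$ for every \notg gate. Consequently, $\val^{\sigma}(o^j_i) \le \val^{\sigma}(c_j) + H_{d(i)}$ for every gate~$i$ in circuit~$j$, and in particular for every output bit $\val^{\sigma}(o^j_{\inp(i)}) \le \val^{\sigma}(c_j) + H_{d(C)}$.

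Next I would bound $\val^{\sigma}(o^{1-j}_i)$ for input bits $i$ by a two-case analysis using coherence ($\sigma(l^{1-j}_i) = c_j$ and $\sigma(r^{1-j}_i) = o^j_{\inp(i)}$) together with Lemma~\ref{lem:argappeal}. If $\sigma(o^{1-j}_i) = l^{1-j}_i$, then using the appeal reduction gadget at $l^{1-j}_i$ with reward $H_0$, we get $\val^{\sigma}(o^{1-j}_i) = \val^{\sigma}(c_j) + H_0$. If $\sigma(o^{1-j}_i) = r^{1-j}_i$, then using the appeal reduction gadget at $r^{1-j}_i$ with reward $-\frac{T}{2}$, we get
\begin{equation*}
\val^{\sigma}(o^{1-j}_i) = \val^{\sigma}(o^j_{\inp(i)}) - \tfrac{T}{2} \le \val^{\sigma}(c_j) + H_{d(C)} - \tfrac{T}{2} < \val^{\sigma}(c_j) + H_0,
\end{equation*}
where the final inequality uses $H_{d(C)} < \frac{T}{2}$ and $H_0 \ge 0$. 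So in both cases $\val^{\sigma}(o^{1-j}_i) \le \val^{\sigma}(c_j) + H_0$, as required.

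Finally, I would verify that the remaining hypotheses of Lemma~\ref{lem:subupper} hold with $l = 1-j$ and $B = 0$: coherence gives $\sigma(x^j_{i'}) = \sigma(x^{1-j}_{i'}) = c_j$ for every \org gate and $\sigma(a^{1-j}_{i'}) = c_j$ for every \notg gate. Applying that lemma yields $\val^{\sigma}(o^{1-j}_i) \le \val^{\sigma}(c_j) + H_{d(i)}$ for every gate~$i$, completing the proof. The only slightly delicate step is the input-bit case, where one has to be careful that the $-T/2$ penalty on $r^{1-j}_i$ is strong enough to dominate the worst-case contribution coming from circuit~$j$; the bound $H_{d(C)} < \frac{T}{2}$ built into the construction is precisely what makes this work.
\qed
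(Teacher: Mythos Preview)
Your plan matches the paper's proof: bound the input-bit values $\val^{\sigma}(o^{1-j}_i)$ and then invoke Lemma~\ref{lem:subupper} with $l=1-j$ and $B=0$. One computational slip: in the case $\sigma(o^{1-j}_i) = l^{1-j}_i$, coherence gives $\sigma(l^{1-j}_i) = c_j$, but in the gadget for circuit $1-j$ the edge from $l^{1-j}_i$ to $c_j$ carries reward $-\tfrac{T}{2} + \tfrac{H_{d(C)}+L_{d(C)}}{2}$, not $H_0$ (the $H_0$ reward sits on the edge to $c_{1-j}$; cf.\ the paper's remark that when $\sigma(l^{1-j}_i)=c_j$ the state is taking the \emph{left} action in Figure~\ref{fig:input}). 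So in fact
\[
\val^{\sigma}(o^{1-j}_i) = \val^{\sigma}(c_j) - \tfrac{T}{2} + \tfrac{H_{d(C)}+L_{d(C)}}{2} < \val^{\sigma}(c_j),
\]
which is stronger than you stated; the needed inequality $\le \val^{\sigma}(c_j)+H_0$ still holds and the rest of your argument goes through unchanged.
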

\begin{proof}
Let $i$ be an input bit.
If $\sigma(o^{1-j}_i) = l^{1-j}_i$ we have 
\begin{equation*}
\val^{\sigma}(o^{1-j}_i) \le \val^{\sigma}(c_j) - \frac{T}{2} + \frac{H_{d(C)} +
L_{d(C)}}{2} < \val^{\sigma}(c_j).
\end{equation*}
On the other hand, if $\sigma(o^{1-j}_i) = r^{1-j}_i$, then 
we can apply Lemma~\ref{lem:gateupperj} to argue that:
\begin{equation*}
\val^{\sigma}(o^{j}_{\inp(i)}) \le \val^{\sigma}(c_j) + H_{d(C)}.
\end{equation*}
Thus, we have:
\begin{equation*}
\val^{\sigma}(o^{1-j}_i) \le \val^{\sigma}(c_j) + H_{d(C)} - \frac{T}{2} <
\val^{\sigma}(c_j).
\end{equation*}
Therefore, we have $\val^{\sigma}(o^{1-j}_i) \le \val^{\sigma}(c_j)$, and we can
again apply Lemma~\ref{lem:subupper} with $B=0$ to prove this lemma. 
\qed
\end{proof}

We now prove lower bounds for the gates in circuit $j$ in policies where
$\sigma(l^j_i) = c_j$ and $\sigma(r^j_i) = c_j$. 
\begin{lemma}
\label{lem:gatelower}
Suppose that we are in phase $j$, and let $\sigma$ be a policy with
$\sigma(l^j_i) = c_j$ and $\sigma(r^j_i) = c_j$ for every input bit $i$. For
every gate $i$ we have $\val^{\sigma}(o^j_i) \ge \val^{\sigma}(c_j)$.
\end{lemma}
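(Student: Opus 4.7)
The plan is induction on the depth $d(i)$ of the gate, combined with direct applications of Lemma~\ref{lem:argappeal} to each of the three gadget types. The single ingredient we will extract from the phase-$j$ hypothesis is that $\val^{\sigma}(c_{1-j}) = \val^{\sigma}(c_j) + T \ge \val^{\sigma}(c_j)$.

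For the base case (input bits, $d(i) = 0$), the assumption $\sigma(l^j_i) = \sigma(r^j_i) = c_j$ together with the first claim of Lemma~\ref{lem:argappeal} applied to the appeal reduction gadgets at $l^j_i$ and $r^j_i$ yields $\val^{\sigma}(l^j_i) = \val^{\sigma}(c_j) + H_0$ and $\val^{\sigma}(r^j_i) = \val^{\sigma}(c_j) + L_0 = \val^{\sigma}(c_j)$. Whichever of $l^j_i$ or $r^j_i$ is selected at $o^j_i$, one further application of Lemma~\ref{lem:argappeal} (with $r_d = r_f = 0$ in the one non-trivial case) gives $\val^{\sigma}(o^j_i) \ge \val^{\sigma}(c_j)$.

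For the inductive step, assume the bound holds for all gates of depth strictly less than $d(i)$. If $i$ is an \org gate and $\sigma(o^j_i) = x^j_i$, then Lemma~\ref{lem:argappeal} puts $\val^{\sigma}(x^j_i) \in \{\val^{\sigma}(c_j),\, \val^{\sigma}(c_{1-j})\}$, both of which are at least $\val^{\sigma}(c_j)$, and the reward $L_{d(i)} \ge 0$ yields $\val^{\sigma}(o^j_i) \ge \val^{\sigma}(c_j)$; if instead $\sigma(o^j_i) = v^j_i$, then $\val^{\sigma}(v^j_i) = \val^{\sigma}(o^j_{\inp_m(i)})$ for some $m \in \{1,2\}$, which is at least $\val^{\sigma}(c_j)$ by induction, and the positive reward $b_{d(i)}$ closes the case. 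If $i$ is a \notg gate and $\sigma(o^j_i) = o^j_{\inp(i)}$, the bound follows immediately from the inductive hypothesis. Otherwise $\sigma(o^j_i) = a^j_i$; Lemma~\ref{lem:argappeal} then shows that $\val^{\sigma}(a^j_i)$ equals $\val^{\sigma}(c_j)$ or $\val^{\sigma}(c_j) + H_{d(i)-1}$, where in the latter case the $-T$ in the gadget's final reward cancels against the $+T$ gap between $\val^{\sigma}(c_{1-j})$ and $\val^{\sigma}(c_j)$. In both sub-cases $\val^{\sigma}(a^j_i) \ge \val^{\sigma}(c_j)$, and a final application of Lemma~\ref{lem:argappeal} to the gadget between $o^j_i$ and $a^j_i$ gives $\val^{\sigma}(o^j_i) = \val^{\sigma}(a^j_i) + b_{d(i)} \ge \val^{\sigma}(c_j)$.

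There is no real obstacle; the argument is mechanical once the gadget rewards are expanded correctly, and it mirrors the structure of the upper-bound Lemmas~\ref{lem:subupper} and~\ref{lem:gateupperj}. The only care needed is the sign bookkeeping in the appeal reduction gadget at $a^j_i$, whose ``$c_{1-j}$-side'' action carries the negative reward $-T + H_{d(i)-1}$, and which specifically exploits the phase-$j$ identity $\val^{\sigma}(c_{1-j}) - \val^{\sigma}(c_j) = T$ to turn this into the non-negative contribution $H_{d(i)-1}$.
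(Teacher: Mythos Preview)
Your proposal is correct and follows essentially the same approach as the paper's proof: induction on depth, with the base case handled via the assumption $\sigma(l^j_i) = \sigma(r^j_i) = c_j$, and the inductive step split into \org and \notg cases using Lemma~\ref{lem:argappeal}. Your treatment of the \notg case is in fact more explicit than the paper's, which simply asserts that ``since we are in phase $j$, no matter which action is chosen by $a^j_i$, we must have $\val^{\sigma}(a^j_i) \ge \val^{\sigma}(c_j)$'' without spelling out the $-T$ cancellation you describe.
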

\begin{proof}
We will proceed via induction over gate depth. For the base case, we consider
the input bits, which are the gates with depth $0$. For these gates, we can use
the assumption that $\sigma(l^j_i) = \sigma(r^j_i) = c_j$ to argue that either
$\val^{\sigma}(o^j_i) = \val^{\sigma}(c_j) + L_0$ or $\val^{\sigma}(o^j_i) =
\val^{\sigma}(c_j) + H_0$ must hold. Therefore in either case the lemma holds.

We prove two versions of the inductive step. First, suppose that the claim has
been shown for all gates with depth $k$, and let $i$ be an \org gate with depth $k
+ 1$. If $\sigma(o^j_i) = x^j_j$ then no matter which action is chosen at
$x^j_i$, we have:
\begin{align*}
\val^{\sigma}(o^j_i) \ge \val^{\sigma}(c_j) + L_{k+1} \ge \val^{\sigma}(c_j).
\end{align*}
On the other hand, if $\sigma(o^j_i) = c_j$ then by the inductive hypothesis we
have:
\begin{equation*}
\val^{\sigma}(o^j_i) \ge \val^{\sigma}(c_j) + b_{k+1} \ge \val^{\sigma}(c_j). 
\end{equation*}

For the second version of the inductive step, suppose that the claim has been
shown for all gates with depth $k$, and let $i$ be a \notg gate with depth $k+1$.
Note that since we are in phase $j$, no matter which action is chosen by $a^j_i$, we must have
$\val^{\sigma}(a^j_i) \ge \val^{\sigma}(c_j)$. Therefore, if
$\sigma(o^j_i) = a^j_i$, then by Lemma~\ref{lem:argappeal}, we have:
\begin{equation*}
\val^{\sigma}(o^j_i) \ge \val^{\sigma}(c_j) + b_{k+1} \ge
\val^{\sigma}(c_j). 
\end{equation*}
On the other hand, if $\sigma(o^j_i) = o^j_{\inp(i)}$, then by the inductive
hypothesis we have $\val^{\sigma}(o^j_i) \ge \val^{\sigma}(c_j)$. This
completes the proof of Lemma~\ref{lem:gatelower}.
\qed
\end{proof}

%The following lemma shows upper bounds for stage 1 transition policies.
%\begin{lemma}
%\label{lem:gateupper1}
%Suppose that we are in phase $j$, and let $\sigma$ be a stage 1 transition
%policy. For every gate $i$ we have:
%\begin{itemize}
%\item $\val^{\sigma}(o^j_i) \le \val^{\sigma}(c_j) + H_{d(i)}$, and
%\item $\val^{\sigma}(o^{1-j}_i) \le \val^{\sigma}(c_{1-j}) + H_{d(i)}$.
%\end{itemize}
%\end{lemma}

%\section{Proof of Lemma~\ref{lem:gateupper1}}
%\label{app:gateupper1}

The following lemma shows upper bound on the value of the gates in circuit $1-j$
in the case where $\sigma(l^{1-j}_i) \ne c_j$ or $\sigma(r^{1-j}_i) \ne c_j$.

\begin{lemma}
\label{lem:gateupperjprime}
Suppose that we are in phase $j$, and let $\sigma$ be a policy with
\begin{itemize}
\item For every input bit $i'$ we have $\sigma(l^j_{i'}) = \sigma(r^j_{i'}) =
c_{j}$. 
\item For every \org gate $i'$, we have $\sigma(x^{j}_{i'}) = c_j$ and we have
$\sigma(x^{1-j}_{i'} = c_j$.
\item For every \notg gate $i'$, we have $\sigma(a^{1-j}_{i'}) = c_j$.
\end{itemize}
For
every gate $i$ we have
\begin{equation*}
\val^{\sigma}(o^{1-j}_i) \le \val^{\sigma}(c_{1-j}) + H_{d(i)}.
\end{equation*}
\end{lemma}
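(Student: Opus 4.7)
The plan is to reduce this statement directly to Lemma~\ref{lem:subupper}. Since we are in phase $j$, we have $\val^{\sigma}(c_{1-j}) = \val^{\sigma}(c_j) + T$, so the desired conclusion $\val^{\sigma}(o^{1-j}_i) \le \val^{\sigma}(c_{1-j}) + H_{d(i)}$ is equivalent to $\val^{\sigma}(o^{1-j}_i) \le \val^{\sigma}(c_j) + T + H_{d(i)}$. This is exactly the conclusion of Lemma~\ref{lem:subupper} applied with $l = 1-j$ and $B = T$. The hypotheses of Lemma~\ref{lem:subupper} about $x^{j}_{i'}$, $x^{1-j}_{i'}$, and $a^{1-j}_{i'}$ are already part of our assumptions, so the only step remaining is to verify its input-bit base case: for every input bit $i'$, $\val^{\sigma}(o^{1-j}_{i'}) \le \val^{\sigma}(c_j) + T + H_0 = \val^{\sigma}(c_{1-j}) + H_0$.

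To handle this base case, I would enumerate the possibilities for the action chosen at $l^{1-j}_{i'}$ and $r^{1-j}_{i'}$, and then for $o^{1-j}_{i'}$, using Lemma~\ref{lem:argappeal} to compute the value of each appeal reduction gadget. For $l^{1-j}_{i'}$, the choice $\sigma(l^{1-j}_{i'}) = c_{1-j}$ yields exactly $\val^{\sigma}(c_{1-j}) + H_0$, while $\sigma(l^{1-j}_{i'}) = c_j$ yields $\val^{\sigma}(c_j) + \frac{H_{d(C)} + L_{d(C)}}{2} - \frac{T}{2}$, which is strictly smaller than $\val^{\sigma}(c_{1-j}) + H_0$ because $H_{d(C)}, L_{d(C)} < T/2$. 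For $r^{1-j}_{i'}$, the choice $\sigma(r^{1-j}_{i'}) = c_{1-j}$ gives $\val^{\sigma}(c_{1-j}) + L_0 \le \val^{\sigma}(c_{1-j}) + H_0$, while $\sigma(r^{1-j}_{i'}) = o^{j}_{\inp(i')}$ combined with Lemma~\ref{lem:gateupperj} (whose hypotheses are implied by ours) bounds the value by $\val^{\sigma}(c_j) + H_{d(C)} - \frac{T}{2} < \val^{\sigma}(c_j) < \val^{\sigma}(c_{1-j}) + H_0$. Finally, since $o^{1-j}_{i'}$ reaches $l^{1-j}_{i'}$ only through an appeal reduction gadget with $r_d = r_f = 0$ and reaches $r^{1-j}_{i'}$ via a deterministic action of reward $0$, its value coincides with that of whichever of $l^{1-j}_{i'}$ or $r^{1-j}_{i'}$ it selects, so in every case the required bound on $o^{1-j}_{i'}$ holds.

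The proof is essentially routine; the main thing to watch is the constant bookkeeping that shows $\frac{H_{d(C)} + L_{d(C)}}{2} - \frac{T}{2}$ and $H_{d(C)} - \frac{T}{2}$ are strictly negative, both of which follow from $T = 3^{d(C)+6}$ dominating $H_{d(C)} \le 2 \cdot 3^{d(C)+2}$ (and hence $L_{d(C)}$). Once the input-bit base case is established, Lemma~\ref{lem:subupper} delivers the inductive upper bounds for all remaining gates in circuit $1-j$ with no additional work, completing the proof.
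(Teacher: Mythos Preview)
Your proposal is correct and follows essentially the same approach as the paper: bound the input-bit values $\val^{\sigma}(o^{1-j}_{i'})$ by a case analysis on the choices at $l^{1-j}_{i'}$ and $r^{1-j}_{i'}$ (using Lemma~\ref{lem:gateupperj} for the $r^{1-j}_{i'}\to o^j_{\inp(i')}$ case), then invoke Lemma~\ref{lem:subupper}. In fact your write-up is slightly cleaner: you correctly apply Lemma~\ref{lem:subupper} with $B = T$, whereas the paper's text says ``$B=0$'' --- a slip, since the established base-case bound is $\val^{\sigma}(c_{1-j}) + H_0 = \val^{\sigma}(c_j) + T + H_0$.
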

\begin{proof}
Let $i$ be an input bit. First observe that if $\sigma(l^{1-j}_i) = c_{j}$, then
we have:
\begin{align*}
\val^{\sigma}(l^{1-j}_i) 
&= \val^{\sigma}(c_{1-j}) -\frac{T}{2} + \frac{H_{d(C)} + L_{d(C)}}{2} \\
&\le \val^{\sigma}(c_{1-j}).
\end{align*}
On the other hand, if we have 
$\sigma(l^{1-j}_i) = c_{1-j}$, then we have
\begin{equation*}
\val^{\sigma}(l^{1-j}_i) = \val^{\sigma}(c_j) + H_0. 
\end{equation*}
Thus, in either case, we have $\val^{\sigma}(l^{1-j}_i) \le 
\val^{\sigma}(c_{1-j}) + H_0$.
%observe that a stage 1 transition policy has $ \sigma(r^j_i) = o^j_{\inp(i)}$. 
So, if $\sigma(o^{1-j}_i) = l^j_i$
then we have:
\begin{align*}
\val^{\sigma}(o^{1-j}_i) 
%&\le \val^{\sigma}(c_{1-j}) \\
\le \val^{\sigma}(c_{1-j}) + H_0. 
\end{align*}
%\end{align*}

On the other hand, if 
$\sigma(o^{1-j}_i) = r^{1-j}_i$, then we again have two cases to consider. Firstly,
if $\sigma(r^{1-j}) = c_{1-j}$, then we have:
\begin{equation*}
\val^{\sigma}(o^{1-j}_i) = \val^{\sigma}(c_{1-j}) + L_0 <
\val^{\sigma}(c_{1-j}) + H_0.
\end{equation*}
If $\sigma(r^{1-j}) = o^j_{\inp(i)}$, then 
by Lemma~\ref{lem:gateupperj} we have:
\begin{align*}
\val^{\sigma}(o^{1-j}_i) 
&\le \val^{\sigma}(c_{j}) + H_{d(C)} - \frac{T}{2} \\
&\le \val^{\sigma}(c_{j}) \\
&\le \val^{\sigma}(c_{1-j}) \\
&\le \val^{\sigma}(c_{1-j}) + H_0.
\end{align*}
Thus, in either case, we can apply Lemma~\ref{lem:subupper} with $B=0$ to
complete the proof of this lemma.
\qed
\end{proof}

The following lemma is a version of Lemma~\ref{lem:subupper} that has 
no preconditions on the choice made at $x^j_i$ or $a^j_i$.
\begin{lemma}
\label{lem:subupper2}
Suppose that we are in phase $j$. Let $\sigma$ be a policy, and let $l \in \{0,
1\}$. Furthermore, suppose that 
for every input bit $i'$, we have $\val^{\sigma}(o^l_{i'}) \le
\val^{\sigma}(c_{1-j}) + B + H_0$, for some non-negative constant $B$. For all gates
$i$ we have $\val^{\sigma}(o^l_i) \le \val^{\sigma}(c_{1-j}) + B + H_{d(i)}$.
\end{lemma}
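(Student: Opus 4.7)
The plan is to mimic the inductive argument of Lemma~\ref{lem:subupper}, performing induction on the depth of gate $i$, but dropping the preconditions on the actions taken at the states $x^l_{i'}$ and $a^l_{i'}$. The key enabling observation is that, since we are in phase $j$, we have $\val^{\sigma}(c_{1-j}) = \val^{\sigma}(c_j) + T$, and moreover $H_k < T$ for all $k \le d(C)$ (because $H_{d(C)} < T/2$). Consequently, whichever clock state is selected by an $x^l_{i'}$ or $a^l_{i'}$ action, the resulting value is still at most $\val^{\sigma}(c_{1-j})$ plus a small additive term no larger than $H_k$.

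The base case, input bits, holds immediately by the hypothesis of the lemma.

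For the OR-gate step with $i$ of depth $k+1$, I split on the action at $o^l_i$. If $\sigma(o^l_i) = x^l_i$, then both targets $c_j$ and $c_{1-j}$ of the appeal reduction gadgets at $x^l_i$ satisfy $\val^{\sigma}(\cdot) \le \val^{\sigma}(c_{1-j})$ (rewards are all zero), so the deterministic action with reward $L_{k+1}$ gives $\val^{\sigma}(o^l_i) \le \val^{\sigma}(c_{1-j}) + L_{k+1} \le \val^{\sigma}(c_{1-j}) + B + H_{k+1}$. Otherwise $\sigma(o^l_i) = v^l_i$, which points to one of the input gates, and the inductive hypothesis on that gate plus the reward $b_{k+1}$ yields $\val^{\sigma}(o^l_i) \le \val^{\sigma}(c_{1-j}) + B + H_k + b_{k+1} = \val^{\sigma}(c_{1-j}) + B + H_{k+1}$.

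For the NOT-gate step with $i$ of depth $k+1$: if $\sigma(o^l_i) = o^l_{\inp(i)}$, the inductive hypothesis directly gives $\val^{\sigma}(o^l_i) \le \val^{\sigma}(c_{1-j}) + B + H_k < \val^{\sigma}(c_{1-j}) + B + H_{k+1}$. Otherwise $\sigma(o^l_i) = a^l_i$, and the first part of Lemma~\ref{lem:argappeal} applied to the gadget at $o^l_i$ (with $r_d = 1$, $r_f = 0$, $p = 1/b_{k+1}$) gives $\val^{\sigma}(o^l_i) = \val^{\sigma}(a^l_i) + b_{k+1}$. I then case on the action at $a^l_i$: if $\sigma(a^l_i) = c_j$, Lemma~\ref{lem:argappeal} gives $\val^{\sigma}(a^l_i) = \val^{\sigma}(c_j) \le \val^{\sigma}(c_{1-j})$; if $\sigma(a^l_i) = c_{1-j}$, Lemma~\ref{lem:argappeal} gives $\val^{\sigma}(a^l_i) = \val^{\sigma}(c_{1-j}) + (-T + H_k) < \val^{\sigma}(c_{1-j})$ since $H_k < T$. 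Either way, $\val^{\sigma}(o^l_i) \le \val^{\sigma}(c_{1-j}) + b_{k+1} \le \val^{\sigma}(c_{1-j}) + H_{k+1} \le \val^{\sigma}(c_{1-j}) + B + H_{k+1}$, using $B \ge 0$.

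The main thing to check is the cancellation in the activated NOT-gate case: the $-T$ reward on the action from $a^l_i$ to $c_{1-j}$ is precisely what makes up for the extra $T$ between the two clock states, so that an activated $a^l_i$ never pulls $o^l_i$ above the $\val^{\sigma}(c_{1-j}) + H_{k+1}$ ceiling. Once this observation is in place, the induction is routine and the lemma follows.
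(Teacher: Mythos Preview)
Your proof is correct and follows the same induction-on-depth argument as the paper. One caveat: your case split on the action at $a^l_i$ tacitly assumes $l=j$ (the gadget's rewards are indexed by $l$, so when $l=1-j$ the $r_f=0$ edge goes to $c_{1-j}$ and the $r_f=-T+H_{d(i)-1}$ edge goes to $c_j$), but the bound $\val^{\sigma}(a^l_i)\le\val^{\sigma}(c_{1-j})$ still holds in all four sub-cases; the paper simply asserts the coarser $\val^{\sigma}(a^l_i)\le\val^{\sigma}(c_{1-j})+H_k$ without splitting.
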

\begin{proof}
We will prove this claim by induction over depth. For input bits, which are the
gates with depth $0$, the claim holds by assumption.

We prove two versions of the inductive step. First, suppose that the claim has
been shown for all gates with depth $k$, and let $i$ be an \org gate with depth $k
+ 1$. If $\sigma(o^l_i) = x^l_i$ then, no matter which action is chosen at
$x^l_i$, we have:
\begin{align*}
\val^{\sigma}(o^l_i) &\le \val^{\sigma}(c_{1-j}) + L_{k+1} \\
&\le \val^{\sigma}(c_{1-j}) + H_{k+1} \\
&\le \val^{\sigma}(c_{1-j}) + B + H_{k+1}. 
\end{align*}
On the other hand, if $\sigma(o^l_i) = c_j$ then by the inductive hypothesis we
have:
\begin{align*}
\val^{\sigma}(o^l_i) &\le \val^{\sigma}(c_{1-j}) + B + H_{k} + b_{k+1}\\
&= \val^{\sigma}(c_{1-j}) + B + H_{k+1}.
\end{align*}
This completes the \org gate case of the inductive step.

For the second version of the inductive step, suppose that the claim has been
shown for all gates with depth $k$, and let $i$ be a \notg gate with depth $k+1$.
Note that since we are in phase $j$, no matter which action is chosen by $a^l_i$, we must have
$\val^{\sigma}(a^l_i) \le \val^{\sigma}(c_{1-j}) + H_k$. 
Therefore, if
$\sigma(o^l_i) = a^l_i$, then by Lemma~\ref{lem:argappeal}, we have:
\begin{align*}
\val^{\sigma}(o^l_i) &\le \val^{\sigma}(c_{1-j}) + H_{k} + b_{k+1}\\
&= \val^{\sigma}(c_{1-j}) + H_{k+1}\\
&\le \val^{\sigma}(c_{1-j}) + B + H_{k+1}.
\end{align*}
On the other hand, if $\sigma(o^l_i) = o^l_{\inp(i)}$, then by the inductive
hypothesis we have:
\begin{align*}
\val^{\sigma}(o^l_i) &\le \val^{\sigma}(c_{1-j}) + B + H_k \\
&< \val^{\sigma}(c_{1-j}) + B + H_{k+1}.
\end{align*}
\qed
\end{proof}

The following lemma gives an upper bound on the gate values for the case where,
for every input bit $i$, we have that both $l^{1-j}_i$ and $r^{1-j}_i$ have
switched to $c_{1-j}$.
\begin{lemma}
\label{lem:gateupperfinal}
Suppose that we are in phase $j$, and let $\sigma$ be a policy with
$\sigma(l^{1-j}_{i'}) = \sigma(r^{1-j}_{i'}) = c_{1-j}$ for every input bit
$i'$.  For every gate $i$, we have:
\begin{itemize}
\item $\val^{\sigma}(o^j_i) \le \val^{\sigma}(c_{1-j}) + H_{d(i)}$, and
\item $\val^{\sigma}(o^{1-j}_i) \le \val^{\sigma}(c_{1-j}) + H_{d(i)}$.
\end{itemize}
\end{lemma}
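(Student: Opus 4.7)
The plan is to prove the two inequalities by first establishing the bound for the input bits of each circuit, and then applying Lemma~\ref{lem:subupper2} (twice, once with $l=1-j$ and once with $l=j$) to lift the bound from input bits to all gates. The key observation is that under the hypothesis $\sigma(l^{1-j}_{i'}) = \sigma(r^{1-j}_{i'}) = c_{1-j}$, the value of $o^{1-j}_{i'}$ is immediately bounded, and this is exactly what is needed as the base case for the circuit $1-j$ upper bound; the circuit $j$ bound then follows using a separate calculation for the input bits that exploits the fact $H_{d(C)} < T/2$ together with $\val^{\sigma}(c_{1-j}) = \val^{\sigma}(c_j) + T$ from Lemma~\ref{lem:clock}.

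First I would handle circuit $1-j$. For any input bit $i'$, applying Lemma~\ref{lem:argappeal} to the two appeal reduction gadgets at $l^{1-j}_{i'}$ and $r^{1-j}_{i'}$ (both under the assumption that they point to $c_{1-j}$) yields $\val^{\sigma}(l^{1-j}_{i'}) = \val^{\sigma}(c_{1-j}) + H_0$ and $\val^{\sigma}(r^{1-j}_{i'}) = \val^{\sigma}(c_{1-j}) + L_0$. Since $o^{1-j}_{i'}$ must select one of these (via a deterministic action to $r^{1-j}_{i'}$ or via the zero-reward appeal gadget to $l^{1-j}_{i'}$), we obtain $\val^{\sigma}(o^{1-j}_{i'}) \le \val^{\sigma}(c_{1-j}) + H_0$. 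Lemma~\ref{lem:subupper2} with $l = 1-j$ and $B = 0$ then gives $\val^{\sigma}(o^{1-j}_i) \le \val^{\sigma}(c_{1-j}) + H_{d(i)}$ for every gate $i$, which is the second inequality.

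Next I would handle circuit $j$. For each input bit $i'$, I would case-split on the action used at $l^j_{i'}$ and $r^j_{i'}$. Each of the four possible choices leads to a value of at most $\val^{\sigma}(c_{1-j})$: the actions to $c_j$ lose $T$ (the phase difference) relative to $c_{1-j}$, the action from $l^j_{i'}$ to $c_{1-j}$ carries reward $-\tfrac{T}{2} + \tfrac{H_{d(C)} + L_{d(C)}}{2} < 0$, and the action from $r^j_{i'}$ to $o^{1-j}_{\inp(i')}$ picks up $-\tfrac{T}{2}$ plus at most $H_{d(C)}$ from the bound already established for $o^{1-j}$ in the first step (this is exactly where the previously proved inequality is essential, and where we use $H_{d(C)} < \tfrac{T}{2}$). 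Hence $\val^{\sigma}(o^j_{i'}) \le \val^{\sigma}(c_{1-j}) + H_0$, and a second application of Lemma~\ref{lem:subupper2} with $l = j$ and $B = 0$ delivers the first inequality.

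The only mild obstacle is the apparent circularity in the input bits of circuit $j$, whose value may depend on an output bit of circuit $1-j$ (through $r^j_{i'} \to o^{1-j}_{\inp(i')}$). The resolution is the ordering of the argument: we must establish the circuit $1-j$ bound first, so that when we bound $r^j_{i'}$ via $o^{1-j}_{\inp(i')}$ we are feeding in an already proven inequality, and the loss of $\tfrac{T}{2}$ on that transition absorbs the $H_{d(C)}$ coming from the induction on $1-j$. Everything else is a mechanical application of Lemma~\ref{lem:subupper2} and Lemma~\ref{lem:argappeal}.
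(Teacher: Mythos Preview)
Your proposal is correct and follows essentially the same route as the paper: first bound the input bits of circuit $1-j$ directly from the hypothesis, lift to all gates via the unconditional inductive lemma, then use that bound (together with $H_{d(C)}<T/2$) in the case analysis for the input bits of circuit $j$ and lift again. In fact your citation of Lemma~\ref{lem:subupper2} is the right one here, since the statement has no hypotheses on $x^j_i$ or $a^{1-j}_i$ and the target bound is relative to $c_{1-j}$; the paper's own proof invokes Lemma~\ref{lem:subupper} at this point, which appears to be a typo for Lemma~\ref{lem:subupper2}.
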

\begin{proof}
We start with the circuit $1-j$. Since by assumption we have
$\sigma(l^{1-j}_{i'}) = 
\sigma(r^{1-j}_{i'}) = c_{1-j}$, for every input bit $i'$, we must have
$\val^{\sigma}(o^{1-j}_{i'}) \le \val^{\sigma}(c_{1-j}) + H_0$ for every input
bit $i'$. Therefore, we can apply Lemma~\ref{lem:subupper} with $B = 0$ to
prove the second claim of this lemma.

Now consider the circuit $j$. For every input bit $i'$, we can apply the second
part of this lemma to prove that $\val^{\sigma}(o^{1-j}_{\inp(i')}) \le
\val^{\sigma}(c_{1-j}) + H_{d(C)}$. Therefore, no matter what choices $\sigma$
makes at $l^j_{i'}$ or $r^j_{i'}$, we have $\val^{\sigma}(o^{j}_{i'}) \le
\val^{\sigma}(c_{1-j}) + H_0$. Thus, we can apply Lemma~\ref{lem:subupper} with
$B = 0$ to prove the first claim of this lemma.
\qed
\end{proof}

Finally, we show a lower bound on the gate values with a different set of
assumptions to the ones used in Lemma~\ref{lem:gatelower}.
\begin{lemma}
\label{lem:gatelowerfinal}
Suppose that we are in phase $j$, and let $\sigma$ such that, for
every gate $i'$ we have:
\begin{itemize}
\item If $i'$ is an input bits, then we have $\sigma(l^{1-j}_{i'}) = \sigma(l^{1-j}_{i'}) = c_{1-j}$. 
\item If $i'$ is an \org gate, then we have $\sigma(x^{1-j}_{i'}) = c_{1-j}$. 
\item If $i'$ is a \notg gate, then we have $\sigma(a^{1-j}_{i'}) = c_{1-j}$.
\end{itemize}
For every gate $i$, we have:
\begin{equation*}
\val^{\sigma}(o^{1-j}_i) \ge \val^{\sigma}(c_{1-j}).
\end{equation*}
\end{lemma}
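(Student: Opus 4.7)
The plan is to mirror the inductive proof of Lemma~\ref{lem:gatelower} almost verbatim, but working in circuit $1-j$ and using $c_{1-j}$ as the reference clock state. The hypotheses are chosen precisely so that every ``exit to the clock'' action within circuit $1-j$ points to $c_{1-j}$: $l^{1-j}_{i'}$ and $r^{1-j}_{i'}$ at input bits, $x^{1-j}_{i'}$ at \org gates, and $a^{1-j}_{i'}$ at \notg gates. Consequently, by Lemma~\ref{lem:argappeal} applied to each appeal reduction gadget with $r_d=0$, each of these states has value exactly $\val^{\sigma}(c_{1-j})$ plus a non-negative reward. I will then prove $\val^{\sigma}(o^{1-j}_i) \ge \val^{\sigma}(c_{1-j})$ by induction on $d(i)$.

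For the base case, let $i$ be an input bit. If $\sigma(o^{1-j}_i) = l^{1-j}_i$, then since $\sigma(l^{1-j}_i) = c_{1-j}$ uses the appeal reduction gadget with final reward $H_0 \ge 0$, we get $\val^{\sigma}(o^{1-j}_i) = \val^{\sigma}(c_{1-j}) + H_0$. If $\sigma(o^{1-j}_i) = r^{1-j}_i$, then since $\sigma(r^{1-j}_i) = c_{1-j}$ uses the gadget with final reward $L_0 \ge 0$, we get $\val^{\sigma}(o^{1-j}_i) = \val^{\sigma}(c_{1-j}) + L_0$. In either case the bound holds.

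For the inductive step, fix depth $k$ and assume the claim for all gates of depth at most $k$. If $i$ is an \org gate of depth $k+1$, then either $\sigma(o^{1-j}_i)=x^{1-j}_i$ (yielding value $\val^{\sigma}(c_{1-j}) + L_{k+1}$, since $\sigma(x^{1-j}_i) = c_{1-j}$ via an appeal reduction gadget with zero reward), or $\sigma(o^{1-j}_i) = v^{1-j}_i$ (yielding value $\val^{\sigma}(o^{1-j}_{\inp_m(i)}) + b_{k+1}$ for some input $m$, which by the induction hypothesis is at least $\val^{\sigma}(c_{1-j}) + b_{k+1}$). Either way the bound holds. If $i$ is a \notg gate of depth $k+1$, then either $\sigma(o^{1-j}_i) = o^{1-j}_{\inp(i)}$, in which case the bound follows directly from the induction hypothesis, or $\sigma(o^{1-j}_i) = a^{1-j}_i$, in which case Lemma~\ref{lem:argappeal} with parameters $r_d=1$, $p=1/b_{d(i)}$, $r_f=0$ gives $\val^{\sigma}(o^{1-j}_i) = \val^{\sigma}(a^{1-j}_i) + b_{d(i)}$; since $\sigma(a^{1-j}_i) = c_{1-j}$ through the zero-reward gadget gives $\val^{\sigma}(a^{1-j}_i) = \val^{\sigma}(c_{1-j})$, we obtain $\val^{\sigma}(o^{1-j}_i) = \val^{\sigma}(c_{1-j}) + b_{k+1} \ge \val^{\sigma}(c_{1-j})$.

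There is no real obstacle here: every potentially awkward case is resolved by the hypotheses, which force every ``flight to the clock'' inside circuit $1-j$ to terminate at $c_{1-j}$ with nonnegative added reward. The only point requiring minor care is correctly reading off the parameters of the $a^{1-j}_i$ gadget after swapping $j$ and $1-j$, and verifying that the $+1$ discount reward in the \notg gadget combines with $p=1/b_{d(i)}$ to contribute exactly $+b_{d(i)}$ to the value, rather than subtracting from it.
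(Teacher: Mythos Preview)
Your argument is correct and follows essentially the same approach as the paper: induction on gate depth, with the base case handled by the assumption $\sigma(l^{1-j}_{i'}) = \sigma(r^{1-j}_{i'}) = c_{1-j}$ and the inductive step split into \org and \notg cases using the hypotheses on $x^{1-j}_{i'}$ and $a^{1-j}_{i'}$. The paper's proof is simply terser, writing ``clearly $\val^{\sigma}(o^{1-j}_i) \ge \val^{\sigma}(c_{1-j})$'' in each branch where you explicitly compute the value via Lemma~\ref{lem:argappeal}; your additional detail (in particular the verification that $r_d=1$ and $p=1/b_{d(i)}$ contribute $+b_{d(i)}$ at the \notg gadget) is correct and does no harm.
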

\begin{proof}
We will prove this claim by induction over depths. For the gates $i$ with depth
$0$, which are the input bits, the claim holds trivially, because we have
$\sigma(l^{1-j}_i) = \sigma(r^{1-j}_i) = c_{1-j}$.

We will prove two versions of the inductive step. For the first version, assume
that the claim holds for all gates with depth at most $k$, and let $i$ be an
\org
gate with depth $k+1$. If $\sigma(o^{1-j}_i) = x^{1-j}_i$, then since
$\sigma(x^{1-j}_i) = c_{1-j}$, we clearly have $\val^{\sigma}(o^{1-j}_i) \ge
\val^{\sigma}(c_{1-j})$. On the other hand, if $\sigma(o^{1-j}_i) = v^{1-j}_i$,
then we can apply the inductive hypothesis to argue that $\val^{\sigma}(o^{1-j}_i) \ge
\val^{\sigma}(c_{1-j})$.

For the second version of the inductive step, suppose that the claim holds for
all gates with depth at most $k$, and let $i$ be a \notg gate with depth $k + 1$.
If $\sigma(o^{1-j}_i) = a^{1-j}_i$, then since $\sigma(a^{1-j}_i) = c_{1-j}$,
we have  $\val^{\sigma}(o^{1-j}_i) \ge \val^{\sigma}(c_{1-j})$. On the other
hand, if $\sigma(o^{1-j}_i) = o^{1-j}_{\inp(i)}$, then we can invoke the
inductive hypothesis in order to argue that $\val^{\sigma}(o^{1-j}_i) \ge
\val^{\sigma}(c_{1-j})$. 
\qed
\end{proof}

\section{Proof of Lemma~\ref{lem:zero}}
\label{app:zero}

\begin{proof}
We begin by proving the first claim. It is not difficult to verify
that for every state $s$, if we follow $\sigma$ from $s$, then will eventually
arrive at the state $\sink$. The only action available at $\sink$ is a self-loop
with reward $0$, so this implies that the expected average-reward of all state
$s$ under $\sigma$ must be $0$. Therefore, we have $G^{\sigma}(s) = 0$ for every
state $s$.

We now prove the second claim. The fact that there exists a policy $\sigma$ with
$G^{\sigma}(s) = 0$ for every state $s$ follows from the first part of this
lemma. We now prove that we cannot have $G^{\sigma}(s) > 0$, for any policy
$\sigma$ or state $s$. This holds because the only way of avoiding reaching
$\sink$ is to take the transition between $r^j_i$ and $o^{1-j}_{\inp(i)}$
infinitely often, for some state $i$. But this action has reward $-\frac{T}{2}$,
and by Lemma~\ref{lem:subupper}, the maximum possible reward that can be
obtained on a path from $o^{1-j}_{\inp(i)}$ to $r^j_i$ is $2 \cdot H_{d(C)}$.
Since $2 \cdot H_{d(C)} - \frac{T}{2} < 0$, we have that any such loop has a
negative average reward. Therefore, the optimal policy $\sigma^*$ must satisfy
$G^{\sigma^*}(s) = 0$ for every state $s$.
\qed
\end{proof}

\section{Proof of Lemma~\ref{lem:noswitch}}
\label{app:noswitch}

In order to prove Lemma~\ref{lem:noswitch}, we will give explicit upper bounds
on every state. The following lemma states these upper bounds, and the rest of
this section will be dedicated to proving that they are correct. The proofs in
this section will make use of the upper and lower bounds shown in
Appendix~\ref{app:upper}.

\begin{lemma}
\label{lem:buffappeal}
Suppose that we are in phase $j$, and let $\sigma$ be a coherent policy. For
every input bit $i$ we have:
\begin{enumerate}
\item The appeal of switching $l^j_i$ to $c_{1-j}$ is $\lonej$.
\item The appeal of switching $r^j_i$ to $o_{\inp(i)}$ is strictly less than~$0$.
\item The appeal of switching $l^{1-j}_i$ to $c_{1-j}$ is $\lj$.
\item The appeal of switching $r^{1-j}_i$ to $c_{1-j}$ is at most $\rjprime$.
\item If $\sigma(o^j_i) = l^j_i$, then the appeal of switching $o^j_i$ to
$r^j_i$ is strictly less than~$0$.
\item If $\sigma(o^j_i) = r^j_i$, then the appeal of switching $o^j_i$ to
$l^j_i$ is strictly less than $\magic$.
\end{enumerate}
For every other gate $i$ we have:
\begin{enumerate}
\item If $i$ is an \org gate, then the appeal of switching $x^j_i$ to $c_{1-j}$ is $\xj$, and the appeal of switching $x^{1-j}_i$ to $c_{1-j}$ is $\xj$
\item If $i$ is a \notg gate, then the appeal of switching $a^{1-j}_i$ to
$c_{1-j}$ is $\aj$.
\end{enumerate}
\end{lemma}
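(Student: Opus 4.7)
\bigskip
\noindent\textbf{Proof proposal.} The plan is to verify each of the eight claims by a direct application of Lemma~\ref{lem:argappeal}, together with the phase~$j$ identity $\val^{\sigma}(c_{1-j}) = \val^{\sigma}(c_j) + T$ (which holds for any coherent policy). For every gadget listed, the first claim of Lemma~\ref{lem:argappeal} determines $\val^{\sigma}(s)$ at the source state, and the second claim then computes the appeal of the unused action in terms of the gap $b$ between the source and its candidate target. The probabilities $p_3,\ldots,p_7$, $p_1,p_2$, $\xj/T$, and $1/b_{d(i)}$ were chosen precisely so that these appeals collapse to the constants advertised in the lemma.

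The routine (exact) cases proceed as follows. For $l^j_i$ in output mode, coherence forces $\sigma(l^j_i)=c_j$ through the $\gadget(\cdot,c_j,0,p_4,H_0)$, so $\val^{\sigma}(l^j_i)=\val^{\sigma}(c_j)+H_0$; applying Lemma~\ref{lem:argappeal} to the unused $\gadget(\cdot,c_{1-j},0,p_5,-T/2+(H_{d(C)}+L_{d(C)})/2)$ with $b = T - H_0$ yields $p_5\cdot(T/2 - H_0 + (H_{d(C)}+L_{d(C)})/2) = \lonej$ by the definition of $p_5$. The same template handles $l^{1-j}_i\mapsto c_{1-j}$ (using $\sigma(l^{1-j}_i)=c_j$ with reward $-T/2+(H_{d(C)}+L_{d(C)})/2$) and yields $p_4\cdot(\text{gap}) = \lj$; and likewise for $x^j_i,x^{1-j}_i\mapsto c_{1-j}$ which give $(\xj/T)\cdot T = \xj$, and for $a^{1-j}_i\mapsto c_{1-j}$ which gives $p_1\cdot(T+H_{d(i)-1}) = \aj$ after substituting the definitions. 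The claim for $r^j_i\mapsto o^{1-j}_{\inp(i)}$ is the analogue on the other gadget; coherence gives $\val^{\sigma}(r^j_i)=\val^{\sigma}(c_j)+L_0$, so the appeal equals $p_7\cdot(\val^{\sigma}(o^{1-j}_{\inp(i)})-\val^{\sigma}(c_j)-L_0 - T/2)$, which is strictly negative because Lemma~\ref{lem:gateupper} gives $\val^{\sigma}(o^{1-j}_{\inp(i)})\le \val^{\sigma}(c_j)+H_{d(C)}$ and $H_{d(C)} < T/2$.

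The two genuinely bounded cases require the upper/lower bounds from Appendix~\ref{app:upper}. For $r^{1-j}_i\mapsto c_{1-j}$, coherence gives $\sigma(r^{1-j}_i)=o^j_{\inp(i)}$, so $\val^{\sigma}(r^{1-j}_i)=\val^{\sigma}(o^j_{\inp(i)}) - T/2$; invoking Lemma~\ref{lem:gateupperj} to bound $\val^{\sigma}(o^j_{\inp(i)})\le \val^{\sigma}(c_j)+H_{d(C)}$ and Lemma~\ref{lem:gatelower} to bound it below by $\val^{\sigma}(c_j)$, the appeal $p_6\cdot(\text{gap}+L_0)$ is at most $\rjprime$. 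For $o^j_i$ in output mode with $\sigma(o^j_i)=r^j_i$, one has $\val^{\sigma}(o^j_i)=\val^{\sigma}(c_j)+L_0$, and switching to $l^j_i$ has appeal $p_3\cdot(H_0 - L_0) = p_3\cdot b_0$; plugging in $p_3 = \bl/(3T/2 + H_0)$ gives a value strictly below $\magic$. The converse case $\sigma(o^j_i)=l^j_i$ switching to $r^j_i$ uses the deterministic action with reward $0$, and since $\val^{\sigma}(l^j_i)=\val^{\sigma}(c_j)+H_0 > \val^{\sigma}(c_j)+L_0 = \val^{\sigma}(r^j_i)$, the appeal $L_0 - H_0 < 0$.

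The main obstacle is not any single calculation but rather the bookkeeping: one must keep track of exactly which gadget each state is currently using under coherence, and in the two bounded claims one must be careful that the bounds from Appendix~\ref{app:upper} apply (i.e., that the preconditions on the choice made at $x^j_i$, $x^{1-j}_i$, $a^{1-j}_i$, $l^j_i$, $r^j_i$ are all satisfied by coherence). Once this bookkeeping is set up cleanly, each case is a one-line substitution into Lemma~\ref{lem:argappeal}, and the tuned probabilities in the construction ensure the numerical targets are hit exactly.
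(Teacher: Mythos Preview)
Your approach is exactly the paper's: each claim is proved as a separate mini-lemma by reading off the current value from coherence, applying Lemma~\ref{lem:argappeal} to the unused gadget, and invoking the bounds of Appendix~\ref{app:upper} (specifically Lemma~\ref{lem:gateupper} for claim~2 and Lemma~\ref{lem:gatelower} for claim~4) in the two non-exact cases. One slip to fix: in the \notg-gate case, coherence puts $a^{1-j}_i$ on the $p_1$-gadget to $c_j$ with reward $-T+H_{d(i)-1}$, so $\val^{\sigma}(a^{1-j}_i)=\val^{\sigma}(c_j)-T+H_{d(i)-1}$ and the switch to $c_{1-j}$ uses the \emph{other} gadget with probability $p_2$ and reward $0$, giving appeal $p_2\cdot(2T-H_{d(i)-1})=\aj$, not $p_1\cdot(T+H_{d(i)-1})$.
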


The following lemma proves the first claim of Lemma~\ref{lem:buffappeal}.

\begin{lemma}
\label{lem:lto1j}
Suppose that we are in phase $j$, and let $\sigma$ be a policy with
$\sigma(l^j_i) = c_j$.
For every input bit $i$, the appeal of switching $l^j_i$ to $c_{1-j}$ is $1.6$.
\end{lemma}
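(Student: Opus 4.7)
The plan is a direct two-step computation using Lemma~\ref{lem:argappeal} twice: first to pin down $\val^{\sigma}(l^j_i)$ from the action currently chosen by $\sigma$, and then to evaluate the appeal of the alternative action at $l^j_i$.

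First, I would use the hypothesis $\sigma(l^j_i) = c_j$ together with the parameters of the appeal reduction gadget from $l^j_i$ to $c_j$, namely $r_d = 0$, probability $p_4$, and reward $r_f = H_0$. Applying the first claim of Lemma~\ref{lem:argappeal}, this yields
\begin{equation*}
\val^{\sigma}(l^j_i) \;=\; \val^{\sigma}(c_j) + H_0 + \tfrac{0}{p_4} \;=\; \val^{\sigma}(c_j) + H_0.
\end{equation*}

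Next, since we are in phase $j$, the defining property of the clock gives $\val^{\sigma}(c_{1-j}) = \val^{\sigma}(c_j) + T$, so
\begin{equation*}
\val^{\sigma}(c_{1-j}) - \val^{\sigma}(l^j_i) \;=\; T - H_0.
\end{equation*}
Now apply the second claim of Lemma~\ref{lem:argappeal} to the alternative action out of $l^j_i$, which is the appeal reduction gadget to $c_{1-j}$ with parameters $r_d = 0$, probability $p_5$, and reward $r_f = -\tfrac{T}{2} + \tfrac{H_{d(C)} + L_{d(C)}}{2}$. Setting $b = T - H_0$, the appeal of switching $l^j_i$ to $c_{1-j}$ is
\begin{equation*}
p_5 \cdot \left(T - H_0 - \tfrac{T}{2} + \tfrac{H_{d(C)} + L_{d(C)}}{2}\right) \;=\; p_5 \cdot \left(\tfrac{T}{2} + \tfrac{H_{d(C)} + L_{d(C)}}{2} - H_0\right).
\end{equation*}

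Finally, substituting the definition
\begin{equation*}
p_5 \;=\; \frac{\lonej}{\tfrac{T}{2} + \tfrac{H_{d(C)} + L_{d(C)}}{2} - H_0},
\end{equation*}
the appeal evaluates exactly to $\lonej = 1.6$, completing the proof. There is no real obstacle here: the probability $p_5$ was chosen precisely so that this calibration comes out to the stated constant, and the only thing to verify is that the denominator appearing in $p_5$ matches the bracketed expression produced by Lemma~\ref{lem:argappeal} when $b = T - H_0$, which it does by construction.
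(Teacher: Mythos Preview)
Your proposal is correct and follows essentially the same approach as the paper's own proof: compute $\val^{\sigma}(l^j_i) = \val^{\sigma}(c_j) + H_0$ from the chosen action, use the phase-$j$ relation $\val^{\sigma}(c_{1-j}) = \val^{\sigma}(c_j) + T$, and apply Lemma~\ref{lem:argappeal} to the alternative gadget so that the appeal simplifies to $p_5$ times its defining denominator, yielding $\lonej$. Your write-up is in fact slightly more explicit than the paper's in invoking both clauses of Lemma~\ref{lem:argappeal}.
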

\begin{proof}
Since we are in phase $j$, we must have that:
\begin{equation*}
\val^{\sigma}(c_{1-j}) = \val^{\sigma}(c_j) + T.
\end{equation*}
Moreover, we have that: 
\begin{equation*}
\val^{\sigma}(l^j_i) = \val^{\sigma}(c_j) + H_0.
\end{equation*} 
Thus, by Lemma~\ref{lem:argappeal}, we have that the appeal of switching $l^j_i$
to $c_{1-j}$ is:
\begin{equation*}
p_{5} \cdot (T -\frac{T}{2} + \frac{H_{d(C)} + L_{d(C)}}{2} - H_0) =
\lonej.
\end{equation*}
\qed
\end{proof}

The next lemma proves the second claim of Lemma~\ref{lem:buffappeal}.

\begin{lemma}
Suppose that we are in phase $j$, and let $\sigma$ be a coherent policy.
The action between $r^j_i$ and $o^{1-j}_{I(i)}$ is not switchable.
\end{lemma}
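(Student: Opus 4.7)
The plan is to apply the second part of Lemma~\ref{lem:argappeal} to the appeal reduction gadget at $r^j_i$ that leads to $o^{1-j}_{\inp(i)}$, using the coherence assumption to control where $r^j_i$ currently points, and then invoke the upper bound from Lemma~\ref{lem:gateupper} on $o^{1-j}_{\inp(i)}$ to make the relevant quantity negative.

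First I would fix notation: let $a$ denote the action between $r^j_i$ and $o^{1-j}_{\inp(i)}$. This action lives inside an appeal reduction gadget with parameters $r_d = 0$, $p = p_7$, and $r_f = -\tfrac{T}{2}$ (read off from the definition of the input bit gadget). Since $\sigma$ is coherent and we are in phase $j$, we have $\sigma(r^j_i) = c_j$, hence $\sigma(r^j_i) \ne a$. The value of $r^j_i$ then comes from the other gadget at $r^j_i$, giving
\begin{equation*}
\val^{\sigma}(r^j_i) = \val^{\sigma}(c_j) + L_0.
\end{equation*}

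Next I would compute the constant $b$ such that $\val^{\sigma}(o^{1-j}_{\inp(i)}) = \val^{\sigma}(r^j_i) + b$, as needed for Lemma~\ref{lem:argappeal}. By Lemma~\ref{lem:gateupper}, coherence in phase $j$ gives $\val^{\sigma}(o^{1-j}_{\inp(i)}) \le \val^{\sigma}(c_j) + H_{d(C)}$ (using that $\inp(i)$ is an output bit, so $d(\inp(i)) = d(C)$). Substituting, $b \le H_{d(C)} - L_0$.

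Then I would apply Lemma~\ref{lem:argappeal} to conclude
\begin{equation*}
\appeal^{\sigma}(a) \;=\; p_7 \cdot (b + r_f) + r_d \;\le\; p_7 \cdot \bigl(H_{d(C)} - L_0 - \tfrac{T}{2}\bigr).
\end{equation*}
The only remaining obstacle, which is really a numeric check, is to observe that the bracket is strictly negative: this follows from the stated inequality $H_{d(C)} < \tfrac{T}{2}$ together with $L_0 \ge 0$. Since $p_7 > 0$, this yields $\appeal^{\sigma}(a) < 0$, so $a$ is not switchable, completing the proof.
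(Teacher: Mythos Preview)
Your proposal is correct and follows essentially the same approach as the paper: both compute $\val^{\sigma}(r^j_i) = \val^{\sigma}(c_j) + L_0$ from coherence, invoke Lemma~\ref{lem:gateupper} to bound $\val^{\sigma}(o^{1-j}_{\inp(i)}) \le \val^{\sigma}(c_j) + H_{d(C)}$, and then apply Lemma~\ref{lem:argappeal} to get an appeal bounded by $p_7 \cdot (H_{d(C)} - L_0 - \tfrac{T}{2}) < 0$. The only cosmetic difference is that the paper verifies negativity via the explicit numerics $T = 3^{d(C)+6}$ and $H_{d(C)} \le 2 \cdot 3^{d(C)+2}$, whereas you cite the already-established inequality $H_{d(C)} < \tfrac{T}{2}$ directly.
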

\begin{proof}
Since $\sigma$ is coherent, we have $\val^{\sigma}(r^j_i) = \val^{\sigma}(c_j) + L_0$. On the
other hand, by Lemma~\ref{lem:gateupper} we have:
\begin{equation*}
\val^{\sigma}(o^{1-j}_{I(i)}) \le \val^{\sigma}(c_{j}) + H_{d(C)}.
\end{equation*}
Thus, by Lemma~\ref{lem:argappeal} the appeal of switching $r^j_i$ to $o^{1-j}_{I(i)}$ is:
\begin{align*}
p_{7} \cdot (-\frac{T}{2} + H_{d(C)} - L_0) &\le 
p_{7} \cdot (-\frac{3^{d(C)+6}}{2} + 2\cdot 3^{d(C)+2}) \\
&<0.
\end{align*}
\qed
\end{proof}

The next lemma proves the third claim of Lemma~\ref{lem:buffappeal}.

\begin{lemma}
\label{lem:third}
Suppose that we are in phase $j$, and let $\sigma$ be a policy with
$\sigma(l^{1-j}) = c_j$ for some input bit $i$. The appeal of switching
$l^{1-j}_i$ to $c_{1-j}$ is $\lj$.
\end{lemma}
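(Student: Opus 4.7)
The plan is a direct calculation using Lemma~\ref{lem:argappeal} together with the phase~$j$ identity $\val^{\sigma}(c_{1-j}) = \val^{\sigma}(c_j) + T$. The only subtlety is correctly unpacking the input bit gadget for circuit~$1-j$ (Figure~\ref{fig:input} with $j$ replaced by $1-j$), where the two appeal reduction gadgets at $l^{1-j}_i$ point towards $c_j$ (with probability $p_5$ and forward reward $-\frac{T}{2} + \frac{H_{d(C)} + L_{d(C)}}{2}$) and towards $c_{1-j}$ (with probability $p_4$ and forward reward $H_0$).

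First I would use the assumption $\sigma(l^{1-j}_i) = c_j$ and the first claim of Lemma~\ref{lem:argappeal} (with $r_d = 0$) to write
\begin{equation*}
\val^{\sigma}(l^{1-j}_i) \;=\; \val^{\sigma}(c_j) - \frac{T}{2} + \frac{H_{d(C)} + L_{d(C)}}{2}.
\end{equation*}
Combined with $\val^{\sigma}(c_{1-j}) = \val^{\sigma}(c_j) + T$, this yields
\begin{equation*}
\val^{\sigma}(c_{1-j}) - \val^{\sigma}(l^{1-j}_i) \;=\; \frac{3T}{2} - \frac{H_{d(C)} + L_{d(C)}}{2},
\end{equation*}
which is precisely the constant $b$ needed for the second claim of Lemma~\ref{lem:argappeal}.

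Then I would apply the second claim of Lemma~\ref{lem:argappeal} to the unused action from $l^{1-j}_i$ towards $c_{1-j}$, which has probability $p_4$, forward reward $r_f = H_0$, and $r_d = 0$. This gives appeal
\begin{equation*}
p_4 \cdot (b + H_0) \;=\; p_4 \cdot \Bigl(\tfrac{3T}{2} + H_0 - \tfrac{H_{d(C)} + L_{d(C)}}{2}\Bigr),
\end{equation*}
and substituting the definition $p_4 = \frac{\lj}{\frac{3T}{2} + H_0 - \frac{H_{d(C)} + L_{d(C)}}{2}}$ collapses this expression to exactly $\lj$, as required.

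No step presents a real obstacle: the entire proof is one application of each half of Lemma~\ref{lem:argappeal} glued together by the phase identity, and the probability $p_4$ was designed precisely to normalize the resulting appeal to $\lj$. The only thing that could go wrong is a sign slip or a confusion between the $c_j$ and $c_{1-j}$ arrows in the gadget when we relabel the circuit from $j$ to $1-j$; I would guard against this by writing out the two actions at $l^{1-j}_i$ explicitly at the start of the proof before invoking the lemma.
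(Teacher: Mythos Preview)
Your proposal is correct and matches the paper's proof essentially line-for-line: the paper likewise computes $\val^{\sigma}(l^{1-j}_i)$ via the first part of Lemma~\ref{lem:argappeal}, invokes the phase identity $\val^{\sigma}(c_{1-j}) = \val^{\sigma}(c_j) + T$, and then applies the second part of Lemma~\ref{lem:argappeal} with the $p_4$-gadget to obtain $p_4\cdot\bigl(\tfrac{3T}{2} + H_0 - \tfrac{H_{d(C)}+L_{d(C)}}{2}\bigr) = \lj$. Your explicit unpacking of which arrow carries $p_4$ versus $p_5$ after the $j \mapsto 1-j$ relabeling is exactly the bookkeeping needed here.
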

\begin{proof}
By Lemma~\ref{lem:argappeal}, we have:
\begin{equation*}
\val^{\sigma}(l^{1-j}_i) = \val^{\sigma}(c_{j}) - \frac{T}{2} + \frac{H_{d(C)} + L_{d(C)}}{2}. 
\end{equation*}
Moreover, since we are in phase $j$, we have that $\val^{\sigma}(c_{1-j}) =
\val^{\sigma}(c_{j}) + T$. Thus, by Lemma~\ref{lem:argappeal} we have that the
appeal of switching $l^{1-j}_i$ to $c_{1-j}$ is:
\begin{equation*}
p_{4} \cdot (T + H_0 + \frac{T}{2} - \frac{H_{d(C)} + L_{d(C)}}{2})
= \lj.
\end{equation*}
\qed
\end{proof}

The next lemma proves the fourth claim of Lemma~\ref{lem:buffappeal}

\begin{lemma}
\label{lem:fourth}
Suppose that we are in phase $j$ and let $\sigma$ be a policy with
$\sigma(l^j_{i'}) = c_j$ and $\sigma(r^j_{i'}) = c_j$ for every input bit $i'$.
For every input bit $i$, the appeal of switching $r^{1-j}_i$ to $c_{1-j}$ is at
most \rjprime.
\end{lemma}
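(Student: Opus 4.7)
The plan is to apply Lemma~\ref{lem:argappeal} to the action from $r^{1-j}_i$ into $c_{1-j}$, which lives inside the appeal reduction gadget $\gadget(r^{1-j}_i, c_{1-j}, 0, p_6, L_0)$. There are two subcases depending on $\sigma(r^{1-j}_i)$. If $\sigma(r^{1-j}_i) = c_{1-j}$, then the action under consideration is precisely the one currently used by $\sigma$, so its appeal is $0$, which trivially satisfies the bound $\rjprime = 3.3$. Hence I may assume $\sigma(r^{1-j}_i) = o^{j}_{\inp(i)}$, in which case the value equation gives $\val^{\sigma}(r^{1-j}_i) = \val^{\sigma}(o^{j}_{\inp(i)}) - \frac{T}{2}$.

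Next, I would set $b = \val^{\sigma}(c_{1-j}) - \val^{\sigma}(r^{1-j}_i)$ and apply the second part of Lemma~\ref{lem:argappeal} to conclude that the appeal of switching $r^{1-j}_i$ to $c_{1-j}$ equals $p_6 \cdot (b + L_0)$. To bound $b$ from above I use that we are in phase $j$, so $\val^{\sigma}(c_{1-j}) = \val^{\sigma}(c_j) + T$, together with Lemma~\ref{lem:gatelower}, whose hypothesis $\sigma(l^j_{i'}) = \sigma(r^j_{i'}) = c_j$ for every input bit $i'$ is exactly what is assumed here; this gives $\val^{\sigma}(o^{j}_{\inp(i)}) \ge \val^{\sigma}(c_j)$. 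Combining these, $b \le T + \tfrac{T}{2} = \tfrac{3T}{2}$, so the appeal is at most
\begin{equation*}
p_6 \cdot \Bigl(\tfrac{3T}{2} + L_0\Bigr) \;=\; \frac{\rj \cdot \bigl(\tfrac{3T}{2} + L_0\bigr)}{\tfrac{3T}{2} + L_0 - H_{d(C)}}.
\end{equation*}

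Finally, I would verify that this quantity is at most $\rjprime = 3.3$ by a short numerical check using $T = 3^{d(C)+6}$ and $H_{d(C)} \le 2 \cdot 3^{d(C)+2}$; rearranging, the desired inequality reduces to $\rjprime \cdot H_{d(C)} \le (\rjprime - \rj) \cdot \bigl(\tfrac{3T}{2} + L_0\bigr)$, i.e.\ roughly $33 \cdot H_{d(C)} \lesssim \tfrac{3T}{2}$, which holds comfortably since $H_{d(C)}/T \le 2/81$. The only mild obstacle is this last constant-chasing step — choosing the right lower bound for $\val^{\sigma}(o^{j}_{\inp(i)})$ (via Lemma~\ref{lem:gatelower}) is what ensures the ratio does not blow up, and the numerical slack comes entirely from the gap between the denominator $\tfrac{3T}{2} + L_0 - H_{d(C)}$ in $p_6$ and the numerator $\tfrac{3T}{2} + L_0$ we obtain after substitution.
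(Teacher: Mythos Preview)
Your proposal is correct and follows essentially the same route as the paper: both use Lemma~\ref{lem:gatelower} to bound $\val^{\sigma}(o^{j}_{\inp(i)}) \ge \val^{\sigma}(c_j)$, invoke Lemma~\ref{lem:argappeal} on the $p_6$-gadget to obtain the appeal bound $p_6 \cdot \bigl(\tfrac{3T}{2} + L_0\bigr)$, and then finish with the same numerical estimate using $H_{d(C)} \le 2 \cdot 3^{d(C)+2}$. Your explicit handling of the trivial case $\sigma(r^{1-j}_i) = c_{1-j}$ is a minor addition the paper omits (it implicitly assumes the other case), but otherwise the arguments coincide.
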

\begin{proof}
%We have:
%\begin{equation*}
%\val^{\sigma}(r^{1-j}_i) = \val^{\sigma}(c_j) + T + L_0.
%\end{equation*}
Since we are in phase $j$, we have $\val^{\sigma}(c_{1-j}) = \val^{\sigma}(c_j)
+ T$.
By Lemma~\ref{lem:gatelower}, we have
$\val^{\sigma}(o^j_{\inp(i)}) \ge \val^{\sigma}(c_j)$. Therefore, we can apply
Lemma~\ref{lem:argappeal} to argue that the appeal of switching $r^{1-j}_i$ to
$o^{j}_{\inp(i)}$ is at most:
\begin{align*}
p_{6} \cdot (T + L_0 + \frac{T}{2}) &= 
\rj \cdot \frac{\frac{3T}{2} + L_0}{\frac{3T}{2} + L_0 - H_{d(C)}} \\
&= \rj + \frac{ \rj \cdot H_{d(C)}}{\frac{3T}{2} + L_0 - H_{d(C)}} \\
&\le \rj + \frac{ \rj \cdot H_{d(C)}}{\frac{3T}{2} - H_{d(C)}} \\
&\le \rj + \frac{ \rj \cdot 2 \cdot 3^{d(C) + 2}}{\frac{3}{2} \cdot 3^{d(C) + 6}
- 2 \cdot 3^{d(C) + 2}} \\
&= \rj + \frac{ \rj \cdot 2 \cdot 3^{d(C) + 2}}{1.5 \cdot 3^4 \cdot 3^{d(C) + 2}
- 2 \cdot 3^{d(C) + 2}} \\
&= \rj + \frac{ \rj \cdot 2 \cdot 3^{d(C) + 2}}{119.5 \cdot 3^{d(C) + 2}} \\
&\le \rj + 0.1.
\end{align*}
\qed
\end{proof}

The next lemma proves the fifth claim of Lemma~\ref{lem:buffappeal}.

\begin{lemma}
\label{lem:five}
Suppose that we are in phase $j$, and let $\sigma$ be a policy with
$\sigma(l^j_i) = \sigma(r^j_i) = c_j$ and with $\sigma(o^j_i) = l^j_i$ for some
input bit $i$. The appeal of switching $o^j_i$ to $r^j_i$ is strictly less
than~$0$.
\end{lemma}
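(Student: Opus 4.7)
The proof will be a direct computation of the two values involved and a comparison. The plan is to apply Lemma~\ref{lem:argappeal} twice to pin down $\val^{\sigma}(o^j_i)$ and $\val^{\sigma}(r^j_i)$ exactly, then observe that the action from $o^j_i$ to $r^j_i$ is deterministic with reward $0$, so its appeal is simply the difference of those two values.

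First I would compute $\val^{\sigma}(o^j_i)$. Since $\sigma(o^j_i) = l^j_i$ goes through the appeal reduction gadget $\gadget(o^j_i, l^j_i, 0, p_3, 0)$, the first part of Lemma~\ref{lem:argappeal} (with $r_d = r_f = 0$) yields $\val^{\sigma}(o^j_i) = \val^{\sigma}(l^j_i)$. Next, because $\sigma(l^j_i) = c_j$ uses $\gadget(l^j_i, c_j, 0, p_4, H_0)$, another application of Lemma~\ref{lem:argappeal} gives $\val^{\sigma}(l^j_i) = \val^{\sigma}(c_j) + H_0$. Chaining these identities, $\val^{\sigma}(o^j_i) = \val^{\sigma}(c_j) + H_0$.

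Next I would compute $\val^{\sigma}(r^j_i)$. Since $\sigma(r^j_i) = c_j$ uses $\gadget(r^j_i, c_j, 0, p_6, L_0)$, Lemma~\ref{lem:argappeal} gives $\val^{\sigma}(r^j_i) = \val^{\sigma}(c_j) + L_0$.

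Finally, the action from $o^j_i$ to $r^j_i$ is the deterministic action with reward $0$, so by Equation~\eqref{eq:appeal} its appeal is
\begin{equation*}
\val^{\sigma}(r^j_i) - \val^{\sigma}(o^j_i) = L_0 - H_0 = -b_0 < 0,
\end{equation*}
which is strictly negative because $b_0 = 3^{d(C) - j + 2} > 0$. There is no real obstacle here; the only thing to be careful about is matching the shorthand convention for the appeal reduction gadget (the middle state does not appear in the diagram), so that ``$\sigma(o^j_i) = l^j_i$'' unambiguously refers to using the gadget's internal action, which is exactly the situation where the first bullet of Lemma~\ref{lem:argappeal} applies.
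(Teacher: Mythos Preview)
Your proposal is correct and matches the paper's own proof essentially verbatim: compute $\val^{\sigma}(o^j_i)=\val^{\sigma}(c_j)+H_0$ and $\val^{\sigma}(r^j_i)=\val^{\sigma}(c_j)+L_0$, then note the deterministic reward-$0$ action has appeal $L_0-H_0<0$. The only difference is that you spell out the applications of Lemma~\ref{lem:argappeal} more explicitly, which is harmless.
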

\begin{proof}
We have:
\begin{equation*}
\val^{\sigma}(o^j_i) = \val^{\sigma}(c_j) + H_0.
\end{equation*}
On the other hand, we have:
\begin{equation*}
\val^{\sigma}(r^j_i) = \val^{\sigma}(c_j) + L_0.
\end{equation*}
Thus, the appeal of switching $o^j_i$ to $r^j_i$ is $L_0 - H_0 < 0.$
\qed
\end{proof}

The next lemma proves the sixth claim of Lemma~\ref{lem:buffappeal}.

\begin{lemma}
\label{lem:six}
Suppose that we are in phase $j$, and let $\sigma$ be a policy with
$\sigma(l^j_i) = \sigma(r^j_i) = c_j$ and with $\sigma(o^j_i) = r^j_i$ for some
input bit $i$. The appeal of switching $o^j_i$ to $l^j_i$ is strictly less
than $\magic$.
\end{lemma}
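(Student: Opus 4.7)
The plan is to apply the second case of Lemma~\ref{lem:argappeal} to the gadget action from $o^j_i$ to $l^j_i$ and then to exploit the fact that $p_3$ was calibrated so that $p_3 \cdot H_0$ is small.

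First I would compute the two relevant values. Since $\sigma(l^j_i) = c_j$ uses the gadget with $r_d = 0$ and $r_f = H_0$, the first part of Lemma~\ref{lem:argappeal} gives $\val^{\sigma}(l^j_i) = \val^{\sigma}(c_j) + H_0$. Similarly, $\sigma(r^j_i) = c_j$ uses a gadget with $r_d = 0$ and $r_f = L_0 = 0$, so $\val^{\sigma}(r^j_i) = \val^{\sigma}(c_j)$; and since $\sigma(o^j_i) = r^j_i$ is a deterministic zero-reward action, $\val^{\sigma}(o^j_i) = \val^{\sigma}(c_j)$. Hence $\val^{\sigma}(l^j_i) - \val^{\sigma}(o^j_i) = H_0$.

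Next let $a$ be the gadget action from $o^j_i$ toward $l^j_i$, which has $r_d = r_f = 0$ and probability $p_3$. Since $\sigma(o^j_i) \neq a$, the second part of Lemma~\ref{lem:argappeal} with $b = H_0$ gives $\appeal^{\sigma}(a) = p_3 \cdot H_0$. Substituting $p_3 = \bl/\bigl(\tfrac{3T}{2} + H_0\bigr)$ and using the identities $T = 3^{d(C)+6}$ and $H_0 = 3^{d(C)+2}$ (so that $T/H_0 = 81$) yields
\[
\appeal^{\sigma}(a) = \frac{\bl \cdot H_0}{\tfrac{3T}{2} + H_0} = \frac{\bl}{\tfrac{3 \cdot 81}{2} + 1} = \frac{3.1}{122.5} < \magic,
\]
as required.

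There is no real obstacle: the only care needed is to pick the second branch of Lemma~\ref{lem:argappeal} (since $o^j_i$ is currently not using the gadget action) and to recall that the denominator of $p_3$ was chosen precisely so that dividing by $H_0$ turns $p_3 \cdot H_0$ into the quantity $\bl/(\tfrac{3T}{2H_0}+1) \approx 0.025$, comfortably below the threshold $\magic = 0.06$.
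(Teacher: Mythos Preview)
Your proposal is correct and follows essentially the same approach as the paper: compute $\val^{\sigma}(o^j_i)$ and $\val^{\sigma}(l^j_i)$, apply the second clause of Lemma~\ref{lem:argappeal} to the gadget action, and bound $p_3 \cdot H_0$. Your calculation is in fact slightly tighter than the paper's, since you keep the $H_0$ in the denominator and evaluate $\bl/122.5$ exactly, whereas the paper drops the $+H_0$ from the denominator and uses the looser bound $\bl \cdot 2/121.5$; both are comfortably below $\magic$.
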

\begin{proof}
We have:
\begin{equation*}
\val^{\sigma}(o^j_i) = \val^{\sigma}(c_j) + L_0.
\end{equation*}
On the other hand, we have:
\begin{equation*}
\val^{\sigma}(l^j_i) = \val^{\sigma}(c_j) + H_0.
\end{equation*}
Thus, we can apply Lemma~\ref{lem:argappeal} to argue that the appeal of
switching $o^j_i$ to $l^j_i$ is:
\begin{align*}
p_{3} \cdot (H_0 - L_0) 
&= \frac{\bl \cdot (H_0 - L_0)}{\frac{3T}{2} + H_0} \\
&\le \frac{\bl \cdot H_0}{\frac{3T}{2}} \\
&\le \frac{\bl \cdot 2 \cdot 3^{d(C)+2}}{\frac{3}{2} \cdot 3^{d(C)+6}} \\
&= \frac{\bl \cdot 2}{121.5} \\
&<\magic.
%&= \frac{\bl \cdot 2 \cdot 3^{d(C)+2}}{\frac{3}{2} \cdot 3^4 \cdot 3^{d(C)+2} - 2\cdot3^{d(C)+2}} \\
%&= \frac{\bl \cdot 2 \cdot 3^{d(C)+2}}{\frac{3}{2} \cdot 3^4 \cdot 3^{d(C)+2} - 2\cdot3^{d(C)+2}} \\
%&= \frac{\bl \cdot 2}{119.5} \\
%&< 1.
\end{align*}
\end{proof}

The next lemma proves the seventh claim of Lemma~\ref{lem:buffappeal}.

\begin{lemma}
\label{lem:nine}
Suppose that we are in phase $j$, let $m \in \{0,1\}$, and let $\sigma$ be a policy with
$\sigma(x^m_i) = c_j$ for some \org gate $i$. The appeal of switching $x^m_i$ to
$c_{1-j}$ is $\xj$.
\end{lemma}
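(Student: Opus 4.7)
The plan is a direct computation using the two claims of Lemma~\ref{lem:argappeal} applied to the two appeal reduction gadgets leaving $x^m_i$. Recall from the formal construction that $x^m_i$ has exactly two actions: an appeal reduction gadget to $c_j$ with parameters $(r_d, p, r_f) = (0, \tfrac{\xj}{T}, 0)$, and an appeal reduction gadget to $c_{1-j}$ with the same parameters $(0, \tfrac{\xj}{T}, 0)$.

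First I would compute $\val^{\sigma}(x^m_i)$. Since $\sigma(x^m_i) = c_j$ by hypothesis, the first claim of Lemma~\ref{lem:argappeal} (with $r_d = r_f = 0$) yields
\begin{equation*}
\val^{\sigma}(x^m_i) = \val^{\sigma}(c_j) + 0 + \tfrac{0}{p} = \val^{\sigma}(c_j).
\end{equation*}
Second, since we are in phase $j$, by the definition of a phase we have $\val^{\sigma}(c_{1-j}) = \val^{\sigma}(c_j) + T$, and combining with the previous identity,
\begin{equation*}
\val^{\sigma}(c_{1-j}) = \val^{\sigma}(x^m_i) + T.
\end{equation*}
Thus we may apply the second claim of Lemma~\ref{lem:argappeal} to the gadget from $x^m_i$ to $c_{1-j}$ with $b = T$, $r_f = 0$, $r_d = 0$, and $p = \tfrac{\xj}{T}$, which gives
\begin{equation*}
\appeal^{\sigma}(a) = \tfrac{\xj}{T} \cdot (T + 0) + 0 = \xj,
\end{equation*}
where $a$ denotes the action from $x^m_i$ towards $c_{1-j}$.

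There is no real obstacle here: the argument is a two-line substitution into Lemma~\ref{lem:argappeal}, and the probability $\tfrac{\xj}{T}$ in the appeal reduction gadget was chosen precisely to cancel the $T$ gap between $c_j$ and $c_{1-j}$ in a phase, yielding the constant appeal $\xj$ independently of the actual value $\val^{\sigma}(c_j)$. The only thing to verify carefully is that the hypotheses of Lemma~\ref{lem:argappeal} apply, which they do because $\sigma(x^m_i) = c_j$ means $\sigma$ does not select the action $a$ to $c_{1-j}$.
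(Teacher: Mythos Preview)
Your proposal is correct and follows essentially the same approach as the paper: compute $\val^{\sigma}(x^m_i) = \val^{\sigma}(c_j)$ via the first part of Lemma~\ref{lem:argappeal}, then use the phase-$j$ identity $\val^{\sigma}(c_{1-j}) = \val^{\sigma}(c_j) + T$ and the second part of Lemma~\ref{lem:argappeal} to get appeal $\tfrac{\xj}{T}\cdot T = \xj$. Your write-up is simply more explicit than the paper's two-line version.
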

\begin{proof}
We have:
\begin{equation*}
\val^{\sigma}(x^m_i) = \val^{\sigma}(c_j).
\end{equation*}
Thus, we can apply Lemma~\ref{lem:argappeal} to argue that the appeal of
switching $x^m_i$ to $c_{1-j}$ is:
\begin{equation*}
\frac{\xj}{T} \cdot T = \xj.
\end{equation*}
\qed
\end{proof}

The next lemma proves the eight claim of Lemma~\ref{lem:buffappeal}.

\begin{lemma}
\label{lem:ten}
Suppose that we are in phase $j$, and let $\sigma$ be a policy with
$\sigma(a^{1-j}_i) = c_j$ for some \notg gate $i$. The appeal of switching 
$a^{1-j}_i$ to $c_{1-j}$ is $\aj$.
\end{lemma}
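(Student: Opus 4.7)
My plan is to derive the claimed appeal by two successive applications of Lemma~\ref{lem:argappeal}: one to pin down $\val^{\sigma}(a^{1-j}_i)$ under the assumption $\sigma(a^{1-j}_i) = c_j$, and a second to evaluate the appeal of the unused action towards $c_{1-j}$. The calculation will mirror exactly the one already carried out for $a^j_i$ in the proof of Lemma~\ref{lem:notaj}; the only structural difference is that we are now looking at the \notg gate in the ``other'' circuit copy, so the roles of $c_j$ and $c_{1-j}$ are swapped in the two appeal reduction gadgets attached to $a^{1-j}_i$, and correspondingly the roles of $p_1$ and $p_2$ are swapped.

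First I would read off from the formal construction in Appendix~\ref{app:construction} the two gadget parameter triples for $a^{1-j}_i$: the action to $c_{1-j}$ uses $(r_d,p,r_f) = (0, p_2, 0)$, while the action to $c_j$ uses $(r_d,p,r_f) = (0, p_1, -T + H_{d(i)-1})$. Since $\sigma(a^{1-j}_i) = c_j$, the first claim of Lemma~\ref{lem:argappeal} applied to the currently used gadget gives
$$\val^{\sigma}(a^{1-j}_i) \;=\; \val^{\sigma}(c_j) - T + H_{d(i)-1}.$$
Combining this with the phase-$j$ clock identity $\val^{\sigma}(c_{1-j}) = \val^{\sigma}(c_j) + T$ from Lemma~\ref{lem:clock} yields $\val^{\sigma}(c_{1-j}) - \val^{\sigma}(a^{1-j}_i) = 2T - H_{d(i)-1}$.

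Finally I would apply the second claim of Lemma~\ref{lem:argappeal} to the unused action from $a^{1-j}_i$ to $c_{1-j}$, with $b = 2T - H_{d(i)-1}$, $r_f = 0$, and $r_d = 0$. This gives an appeal of
$$p_2 \cdot \bigl( 2T - H_{d(i)-1} \bigr) \;=\; \frac{\aj}{2T - H_{d(i)-1}} \cdot \bigl(2T - H_{d(i)-1}\bigr) \;=\; \aj,$$
where the first equality uses the definition of $p_2$ from the \notg gate description. There is no real obstacle here: the entire argument is a short two-line computation, and the only points requiring care are to invoke the correct gadget parameters (distinguishing $p_1$ from $p_2$) and to use the correct phase-$j$ clock relation $\val^{\sigma}(c_{1-j}) = \val^{\sigma}(c_j) + T$ rather than its reverse.
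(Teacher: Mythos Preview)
Your proposal is correct and follows essentially the same route as the paper's own proof: compute $\val^{\sigma}(a^{1-j}_i) = \val^{\sigma}(c_j) - T + H_{d(i)-1}$ via the first part of Lemma~\ref{lem:argappeal}, combine with $\val^{\sigma}(c_{1-j}) = \val^{\sigma}(c_j) + T$, and apply the second part of Lemma~\ref{lem:argappeal} to obtain appeal $p_2 \cdot (2T - H_{d(i)-1}) = \aj$. Your identification of which gadget carries $p_1$ versus $p_2$ at $a^{1-j}_i$ is correct and is the only point requiring care.
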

\begin{proof}
We have:
\begin{equation*}
\val^{\sigma}(a^{1-j}_i) = \val^{\sigma}(c_j) - T + H_{d(i) - 1}.
\end{equation*}
On the other hand, we have:
\begin{equation*}
\val^{\sigma}(c_{1-j}) = \val^{\sigma}(c_j) + T.
\end{equation*}
Thus, we can apply Lemma~\ref{lem:argappeal} to argue that the appeal of
switching $a^{1-j}_i$ to $c_{1-j}$ is:
\begin{equation*}
p_{2} \cdot (2T - H_{d(i) - 1}) = \aj.
\end{equation*}
\qed
\end{proof}

\section{Proof of Lemma~\ref{lem:transition}}
\label{app:transition}

Our goal is to show that, when Dantzig's rule is applied to $\sigma$, the
sequence of events specified in Lemma~\ref{lem:transition} will occur. To do
this, we split events specified in Lemma~\ref{lem:transition} into four
different \emph{stages} as follows:
\begin{itemize}
\item Stage 1 encompasses the following event:
\begin{enumerate}
\setcounter{enumi}{0}
\item For every input bit $i$, the state $l^{1-j}_i$ is switched to $c_{1-j}$.
\end{enumerate}
\item Stage 2 encompasses the following event:
\begin{enumerate}
\setcounter{enumi}{1}
\item For every input bit $i$, the state $r^{1-j}_i$ is switched to $c_{1-j}$.
\end{enumerate}
\item Stage 3 encompasses the following events:
\begin{enumerate}
\setcounter{enumi}{2}
\item For every input bit $i$, the state $l^j_i$ is switched to $c_{1-j}$. 
\item For every input bit $i$, if $o^j_i$ takes the action towards $r^j_i$, then
it is switched to $l^j_i$.
\end{enumerate}
\item Stage 4 encompasses the following events:
\begin{enumerate}
\setcounter{enumi}{4}
\item For every input bit $i$, the state $r^j_i$ is switched to
$o^{1-j}_{\inp(i)}$.
\item For every \notg gate $i$, the state $a^{1-j}_i$ is switched to $c_{1-j}$.
\item For every \org gate $i$, the states $x^j_i$ and $x^{1-j}_i$ are both
switched to $c_{1-j}$.
\end{enumerate}
\end{itemize}

For each stage, we will give a single lemma that specifies the upper and lower
bounds on appeals that are necessary in order to show that a given action is
switched. Since the proofs of these lemmas are lengthy, we defer each of them to
their own appendix.

\subsection{Stage 1}

We say that $\sigma$ is a \emph{stage 1 transition} policy for phase $j$ if, for
every gate $i$, the following conditions are satisfied:
\begin{itemize}
\item If $i$ is an input bit then:
\begin{itemize}
\item We have $\sigma(l^j_i) = c_j$ and $\sigma(r^j_i) = c_j$.
\item We have $\sigma(r^{1-j}_i) = o^j_{\inp(i)}$.
\end{itemize}
\item If $i$ is an \org gate then we have $\sigma(x^j_i) = c_j$ and $\sigma(x^{1-j}_i) = c_j$.
\item If $i$ is a \notg gate then we have $\sigma(a^{1-j}_i) = c_j$.
\end{itemize}
Note that we do not place restrictions on the
choice made at $l^{1-j}_i$ for any input bit $i$, because these are the states
that will be switched during stage 1. Moreover, note that every final policy for
$B$ satisfies the requirements of a stage 1 transition policy.

The following lemma is proved in Appendix~\ref{app:stage1appeal}.
\begin{lemma}
\label{lem:stage1appeal}
Suppose that we are in phase $j$, and let $\sigma$ be a stage 1 transition
policy. For every input bit $i$ we have:
\begin{enumerate}
\item The appeal of switching $l^j_i$ to $c_{1-j}$ is $\lonej$.
\item The appeal of switching $r^j_i$ to $o_{\inp(i)}$ is at most $\ro$. 
\item 
\begin{enumerate}
\item If $\sigma(l^{1-j}_i) = c_j$, then the appeal of switching 
$l^{1-j}_i$ to $c_{1-j}$ is at least $\lj$.
\item If $\sigma(l^{1-j}_i) = c_{1-j}$, then the appeal of switching $l^{1-j}_i$
to $c_j$ is strictly less than $0$.
\end{enumerate}
\item The appeal of switching $r^{1-j}_i$ to $c_{1-j}$ is at least $\rj$, and at
most $\rjprime$.
\item If $\sigma(o^j_i) = l^j_i$, then the appeal of switching $o^j_i$ to
$r^j_i$ is strictly less than~$0$
\item If $\sigma(o^j_i) = r^j_i$, then the appeal of switching $o^j_i$ to
$l^j_i$ is strictly less than $\magic$.
\item If $\sigma(o^{1-j}_i) = l^{1-j}_i$, and $\val^{\sigma}(o^j_{\inp(i)}) \le
\val^{\sigma}(c_j) + L_{d(C)}$ then the appeal of switching $o^{1-j}_i$ to
$r^{1-j}_i$ is strictly less than $0$.
\item If $\sigma(o^{1-j}_i) = r^{1-j}_i$, then the appeal of switching $o^{1-j}_i$ to $l^{1-j}_i$ is at most \bl.
\end{enumerate}
For every other gate $i$ we have:
\begin{enumerate}
\setcounter{enumi}{8}
\item If $i$ is an \org gate, then the appeal of switching $x^j_i$ to $c_{1-j}$ is $\xj$, and the appeal of switching $x^{1-j}_i$ to $c_{1-j}$ is $\xj$
\item If $i$ is a \notg gate, then the appeal of switching $a^{1-j}_i$ to
$c_{1-j}$ is $\aj$.
\end{enumerate}
\end{lemma}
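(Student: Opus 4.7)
The plan is to dispatch the ten claims one at a time, exploiting the fact that a stage~1 transition policy differs from a coherent policy only in that $\sigma(l^{1-j}_i)$ is unconstrained; every other coherence condition ($\sigma(l^j_i) = \sigma(r^j_i) = c_j$, $\sigma(r^{1-j}_i) = o^j_{\inp(i)}$, $\sigma(x^j_{i'}) = \sigma(x^{1-j}_{i'}) = c_j$, $\sigma(a^{1-j}_{i'}) = c_j$) is preserved. Consequently, claims 1, 5, 6, 9 and 10 will follow by verbatim reuse of Lemmas~\ref{lem:lto1j}, \ref{lem:five}, \ref{lem:six}, \ref{lem:nine} and~\ref{lem:ten}, because those computations only use values that do not depend on the choice at $l^{1-j}_i$. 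Claim~2 repeats the argument used for the corresponding claim of Lemma~\ref{lem:buffappeal}, but with Lemma~\ref{lem:gateupper} replaced by Lemma~\ref{lem:gateupperjprime}: the hypotheses of the latter are precisely the preserved stage~1 conditions, and it still yields $\val^{\sigma}(o^{1-j}_{\inp(i)}) \le \val^{\sigma}(c_{1-j}) + H_{d(C)}$, which plugged into Lemma~\ref{lem:argappeal} with the parameters of $p_7$ gives the bound $\ro$.

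For claim~3 I would split on $\sigma(l^{1-j}_i)$. When $\sigma(l^{1-j}_i) = c_j$, the first part of Lemma~\ref{lem:argappeal} gives $\val^{\sigma}(l^{1-j}_i) = \val^{\sigma}(c_j) - T/2 + (H_{d(C)} + L_{d(C)})/2$, and the second part recovers the computation of Lemma~\ref{lem:third} to produce an appeal of exactly $\lj$ for the switch to $c_{1-j}$. When $\sigma(l^{1-j}_i) = c_{1-j}$, the first part of Lemma~\ref{lem:argappeal} gives $\val^{\sigma}(l^{1-j}_i) = \val^{\sigma}(c_{1-j}) + H_0 = \val^{\sigma}(c_j) + T + H_0$; substituting into the appeal of the action back to $c_j$ produces $p_5 \cdot (-3T/2 - H_0 + (H_{d(C)} + L_{d(C)})/2)$, which is negative because $H_{d(C)} < T/2$.

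Claim~4 is the calculation of Lemma~\ref{lem:fourth} augmented with a lower bound. Because $\sigma(r^{1-j}_i) = o^j_{\inp(i)}$, the first part of Lemma~\ref{lem:argappeal} yields $\val^{\sigma}(r^{1-j}_i) = \val^{\sigma}(o^j_{\inp(i)}) - T/2$. Applying Lemma~\ref{lem:gatelower} and Lemma~\ref{lem:gateupperj} pins $\val^{\sigma}(o^j_{\inp(i)})$ in the interval $[\val^{\sigma}(c_j),\, \val^{\sigma}(c_j) + H_{d(C)}]$ under the stage~1 hypotheses, and feeding these endpoints into the $p_6$ gadget via Lemma~\ref{lem:argappeal} gives an appeal in $[p_6 \cdot (3T/2 - H_{d(C)} + L_0),\, p_6 \cdot (3T/2 + L_0)]$. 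The lower endpoint equals $\rj$ by definition of $p_6$, and the upper endpoint is bounded above by $\rjprime = \rj + 0.1$ by the same arithmetic already carried out in Lemma~\ref{lem:fourth}.

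For claims 7 and 8 I would first compute both possible values of $l^{1-j}_i$ (namely $\val^{\sigma}(c_j) - T/2 + (H_{d(C)} + L_{d(C)})/2$ when $\sigma(l^{1-j}_i) = c_j$ and $\val^{\sigma}(c_j) + T + H_0$ when $\sigma(l^{1-j}_i) = c_{1-j}$) together with $\val^{\sigma}(r^{1-j}_i) = \val^{\sigma}(o^j_{\inp(i)}) - T/2$. For claim~7 the extra hypothesis gives $\val^{\sigma}(r^{1-j}_i) \le \val^{\sigma}(c_j) - T/2 + L_{d(C)}$; since $\val^{\sigma}(o^{1-j}_i) = \val^{\sigma}(l^{1-j}_i)$ by the first part of Lemma~\ref{lem:argappeal}, the deterministic switch to $r^{1-j}_i$ has appeal $\val^{\sigma}(r^{1-j}_i) - \val^{\sigma}(l^{1-j}_i)$, which is $(L_{d(C)} - H_{d(C)})/2 < 0$ in the first sub-case and dominated by $L_{d(C)} - 3T/2 - H_0 < 0$ in the second. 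For claim~8, Lemma~\ref{lem:gatelower} gives $\val^{\sigma}(o^j_{\inp(i)}) \ge \val^{\sigma}(c_j)$, so $b := \val^{\sigma}(l^{1-j}_i) - \val^{\sigma}(r^{1-j}_i) \le 3T/2 + H_0$, and then Lemma~\ref{lem:argappeal} with parameter $p_3$ yields appeal $\le p_3 \cdot (3T/2 + H_0) = \bl$ by the definition of $p_3$. The main obstacle throughout is bookkeeping: the constants $p_3,\ldots,p_7$ and the target bounds $\lonej, \lj, \rj, \rjprime, \bl, \magic$ are tuned so that these inequalities are essentially tight, so every calculation must correctly track both which case of $\sigma(l^{1-j}_i)$ is in force and which extreme of $\val^{\sigma}(o^j_{\inp(i)})$ gives the worst case.
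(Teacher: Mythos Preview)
Your proposal is correct and follows essentially the same approach as the paper: claims 1, 5, 6, 9, 10 are delegated to the earlier Lemmas~\ref{lem:lto1j}, \ref{lem:five}, \ref{lem:six}, \ref{lem:nine}, \ref{lem:ten}; claim~2 is handled via Lemma~\ref{lem:gateupperjprime} exactly as you indicate; claims~3 and~4 are split into the same sub-cases using Lemmas~\ref{lem:third}, \ref{lem:fourth}, \ref{lem:gatelower}, \ref{lem:gateupperj}; and claims~7 and~8 are the direct computations you describe, with the paper packaging claim~8 as a separate auxiliary lemma with the two preconditions you verified.
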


Firstly we argue that, for every input bit $i$, the state $o^{1-j}_i$ cannot be
switched during stage $1$. If $\sigma_f(o^{1-j}_i) = r^{1-j}_i$, then this
follows immidiately from part 8 of Lemma~\ref{lem:stage1appeal}. On the other
hand, if $\sigma_f(o^{1-j}_i) = l^{1-j}_i$, then we must show that the
precondition on part 7 of Lemma~\ref{lem:stage1appeal} will always be satisfied.
We do so in the following lemma.

\begin{lemma}
\label{lem:ffs}
Suppose that we are in phase $j$, and let $\sigma$ be a phase $1$ transition
policy in which all gates $i$ are final and correct. Suppose that policy
iteration switches, for some input bit $i'$, the state $l^{1-j}_{i'}$ to $c_{1-j}$.
Let $\sigma'$ be the resulting policy. We have that all gates $i$ are final and
correct in $\sigma'$.
\end{lemma}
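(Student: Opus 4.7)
The plan is to verify two things. First, $\sigma'$ is again a stage~1 transition policy for phase~$j$, which is immediate because the defining conditions place no constraint on the action at any $l^{1-j}_k$, so flipping $l^{1-j}_{i'}$ from $c_j$ to $c_{1-j}$ leaves them intact. Second, the value of every state inside circuit~$j$ is unchanged by the switch (which gives $B$-correctness), and every appeal relevant to finality of a circuit~$j$ gate remains at most $3.5$ in $\sigma'$.

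For $B$-correctness, the key observation is that in a stage~1 transition policy no state inside a circuit~$j$ gadget selects an action whose successor set touches $l^{1-j}_{i'}$: for input bits $\sigma(l^j_i) = \sigma(r^j_i) = c_j$; for \org gates $\sigma(x^j_i) = c_j$ while $v^j_i$ and $o^j_i$ route inside circuit~$j$; for \notg gates $a^j_i$ routes to the clock and $o^j_i$ routes inside circuit~$j$. Since the clock values are also independent of any action in circuit~$1-j$, a straightforward induction on gate depth gives $\val^{\sigma'}(o^j_i) = \val^{\sigma}(o^j_i)$ for every gate $i$, and $B$-correctness carries from $\sigma$ to $\sigma'$.

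For finality, I would argue that the only appeal at a circuit~$j$ gate state whose value can genuinely change is that of the action $r^j_i \to o^{1-j}_{\inp(i)}$, since $\val^{\sigma'}(o^{1-j}_{i'})$ may increase (if $\sigma(o^{1-j}_{i'}) = l^{1-j}_{i'}$, the switch raises the value of $l^{1-j}_{i'}$ and hence of $o^{1-j}_{i'}$). Because $\sigma'$ is a stage~1 transition policy, Lemma~\ref{lem:stage1appeal}(2) bounds this appeal by $\ro < 3.5$ uniformly. Every other appeal at $l^j_i$, $r^j_i$, $o^j_i$, $v^j_i$, $x^j_i$, $a^j_i$ depends only on values inside circuit~$j$ or the clock, all of which are unchanged, so these appeals coincide with their values in $\sigma$ and therefore stay below $3.5$ by the assumed finality of each gate.

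The main (mild) obstacle is identifying that the only appeal that can genuinely change is $r^j_i \to o^{1-j}_{i'}$, and recognising that part~2 of Lemma~\ref{lem:stage1appeal} was designed precisely to bound this uniformly across stage~1 transition policies; everything else reduces to bookkeeping about which successors each circuit~$j$ action has.
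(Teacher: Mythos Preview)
Your proposal is correct and follows essentially the same approach as the paper: both argue that in a stage~1 transition policy no state in a circuit~$j$ gadget can reach $l^{1-j}_{i'}$ under the current policy, so all circuit~$j$ values (hence $B$-correctness and all ``internal'' appeals) are preserved, and the single appeal that can change---that of $r^j_i \to o^{1-j}_{\inp(i)}$---is handled by part~2 of Lemma~\ref{lem:stage1appeal}. Your write-up is in fact slightly more careful than the paper's (explicitly checking that $\sigma'$ is again a stage~1 transition policy, and distinguishing which appeals can change versus which cannot), but the argument is the same.
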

\begin{proof}
In every stage $1$ transition policy, we have $\sigma(l^j_i) = \sigma(r^j_i) =
c_j$, for every input bit $i$. Thus, every state $s$ in the gadgets representing
circuit $j$, it is impossible to move from $s$ to $l^{1-j}_{i'}$. Thus,
$\val^{\sigma}(s) = \val^{\sigma'}(s)$. So, for every state $s$ in the gadgets
representing circuit $j$, if $s$ has no outgoing actions to circuit $1-j$, then
$s$ must be correct and final in $\sigma'$.

To complete the proof, we observe that the only states in circuit $j$
that have an action to a state in circuit $1-j$ are the states $r^j_i$ where $i$
is an input bit, but
part $2$ of Lemma~\ref{lem:stage1appeal} proves that these are final and correct
in $\sigma'$.
\qed
\end{proof}

Lemma~\ref{lem:ffs} implies that, if $\val^{\sigma_f}(o^j_{\inp(i)}) \le
\val^{\sigma_f}(c_j) + L_{d(C)}$, for some input bit $i$, then we will continue
to have $\val^{\sigma}(o^j_{\inp(i)}) \le \val^{\sigma}(c_j) + L_{d(C)}$ for
every policy $\sigma$ that is encountered during stage 1. Note that
$\sigma_f(o^{1-j}_i) = l^{1-j}_i$ for an input bit $i$ if and only if 
$\val^{\sigma}(o^j_{\inp(i)}) \le \val^{\sigma}(c_j) + L_{d(C)}$, so we have
shown that $o^{1-j}_{\inp(i)}$ cannot be switched during stage $1$.

We now show that, for every input bit $i$, the Dantzig's rule will switch state
$l^{1-j}_i$ to $c_{1-j}$. Note that we cannot switch away from a stage 1
transition policy without switching some state that appears in
Lemma~\ref{lem:stage1appeal}. However, Lemma~\ref{lem:stage1appeal} shows that
switching $l^{1-j}_i$ to $c_{1-j}$ has appeal at least $\lj$, whereas all other
actions considered in Lemma~\ref{lem:stage1appeal} have appeal strictly less
than $\lj$. Thus, starting at the policy $\sigma_f$, Dantzig's rule will
eventually switch to a policy $\sigma_1$ in which, for every input bit $i$, we
have $\sigma_1(l^{1-j}_i)= c_{1-j}$. The policy $\sigma_1$ will be the first
policy considered in stage 2.

\section{Stage 2}

We say that $\sigma$ is a \emph{stage 2 transition} policy for phase $j$ if, for
every gate $i$, the following conditions are satisfied.
\begin{itemize}
\item If $i$ is an input bit then:
\begin{itemize}
\item We have $\sigma(l^j_i) = c_j$ and $\sigma(r^j_i) = c_j$.
\item We have $\sigma(l^{1-j}_i) = c_{1-j}$.
\end{itemize}
\item If $i$ is an \org gate then we have $\sigma(x^j_i) = c_j$ and $\sigma(x^{1-j}_i) = c_j$.
\item If $i$ is a \notg gate then we have $\sigma(a^{1-j}_i) = c_j$.
\end{itemize}
Note that we do not place any restrictions on the choice made at $r^{1-j}_i$ for
any input bit $i$, because these states will be switched during stage $2$. Also
note that the policy $\sigma_1$ satisfies the requirements of a stage 2
transition policy.

The following lemma is proved in Appendix~\ref{app:stage2appeal}

\begin{lemma}
\label{lem:stage2appeal}
Suppose that we are in phase $j$, and let $\sigma$ be a stage 2 transition
policy. For every input bit $i$ we have:
\begin{enumerate}
\item The appeal of switching $l^j_i$ to $c_{1-j}$ is $\lonej$.
\item The appeal of switching $r^j_i$ to $o_{\inp(i)}$ is at most $\ro$. 
\item The appeal of switching $l^{1-j}_i$ to $c_{j}$ is strictly less than $0$.
\item 
\begin{enumerate}
\item If $\sigma(r^{1-j}_i) = o^j_{\inp(i)}$, then the appeal of switching
$r^{1-j}_i$ to $c_{1-j}$ is at least $\rj$. 
\item If $\sigma(r^{1-j}_i) = c_{1-j}$, then the appeal of switching $r^{1-j}_i$
to $o^j_{\inp(i)}$ is strictly less than $0$. 
\end{enumerate}
\item If $\sigma(o^j_i) = l^j_i$, then the appeal of switching $o^j_i$ to
$r^j_i$ is strictly less than~$0$
\item If $\sigma(o^j_i) = r^j_i$, then the appeal of switching $o^j_i$ to
$l^j_i$ is strictly less than $\magic$.
\item If $\sigma(o^{1-j}_i) = l^{1-j}_i$, then the appeal of switching
$o^{1-j}_i$ to $r^{1-j}_i$ is strictly less than $0$.
\item If $\sigma(o^{1-j}_i) = r^{1-j}_i$, then the appeal of switching $o^{1-j}_i$ to $l^{1-j}_i$ is at most \bl.
\end{enumerate}
For every other gate $i$ we have:
\begin{enumerate}
\setcounter{enumi}{8}
\item If $i$ is an \org gate, then the appeal of switching $x^j_i$ to $c_{1-j}$ is $\xj$, and the appeal of switching $x^{1-j}_i$ to $c_{1-j}$ is $\xj$
\item If $i$ is a \notg gate, then the appeal of switching $a^{1-j}_i$ to
$c_{1-j}$ is $\aj$.
\end{enumerate}
\end{lemma}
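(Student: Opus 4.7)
The plan is to verify each of the ten appeal bounds by direct application of Lemma~\ref{lem:argappeal}, in the same style as the proof of Lemma~\ref{lem:buffappeal} in Appendix~\ref{app:noswitch}. The single computational change between a coherent policy and a stage 2 transition policy is that $\sigma(l^{1-j}_i)=c_{1-j}$ instead of $\sigma(l^{1-j}_i)=c_j$; using the gadget $\gadget(l^{1-j}_i, c_{1-j}, 0, p_4, H_0)$ and the first part of Lemma~\ref{lem:argappeal}, this gives the new identity
\begin{equation*}
\val^{\sigma}(l^{1-j}_i) = \val^{\sigma}(c_{1-j}) + H_0 = \val^{\sigma}(c_j) + T + H_0,
\end{equation*}
which propagates to $o^{1-j}_i$ whenever $\sigma(o^{1-j}_i)=l^{1-j}_i$ via the gadget between those two states.

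For claims 1, 2, 5, 6, 9, and 10, none of $l^{1-j}_i$, $r^{1-j}_i$, or $o^{1-j}_i$ appear as the state being switched, so the calculations are essentially identical to the corresponding arguments made for Lemma~\ref{lem:buffappeal}: a stage 2 transition policy satisfies all of the preconditions required for Lemmas~\ref{lem:gatelower} and~\ref{lem:gateupperj}, and the bounds $\val^{\sigma}(c_j)\le \val^{\sigma}(o^j_{\inp(i)})\le \val^{\sigma}(c_j)+H_{d(C)}$ that those lemmas supply are the only properties of the circuit side that enter those claims.

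The substantive content lies in claims 3, 4, 7, and 8. For claim 3, plugging the identity above into the second part of Lemma~\ref{lem:argappeal} for the action $\gadget(l^{1-j}_i, c_j, 0, p_5, -T/2+(H_{d(C)}+L_{d(C)})/2)$ yields an appeal of $p_5 \cdot (-3T/2 + (H_{d(C)}+L_{d(C)})/2 - H_0) < 0$. For claim 4, when $\sigma(r^{1-j}_i)=o^j_{\inp(i)}$ the upper bound on $\val^{\sigma}(o^j_{\inp(i)})$ forces the offset $b$ feeding into Lemma~\ref{lem:argappeal} to be at least $3T/2 + L_0 - H_{d(C)}$, so the appeal of switching to $c_{1-j}$ is at least $p_6 \cdot (3T/2 + L_0 - H_{d(C)}) = \rj$; the reverse direction for case 4b is symmetric and strictly negative. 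For claim 7, with $\sigma(o^{1-j}_i)=l^{1-j}_i$ we have $\val^{\sigma}(o^{1-j}_i) = \val^{\sigma}(c_j)+T+H_0$, while $\val^{\sigma}(r^{1-j}_i)$ is either $\val^{\sigma}(c_j)+T+L_0$ (switched case) or at most $\val^{\sigma}(c_j)+H_{d(C)}-T/2$ (unswitched case); in both subcases the appeal of the deterministic action $o^{1-j}_i \to r^{1-j}_i$, which equals $\val^{\sigma}(r^{1-j}_i)-\val^{\sigma}(o^{1-j}_i)$, is strictly negative. For claim 8, the largest offset $b = \val^{\sigma}(l^{1-j}_i) - \val^{\sigma}(o^{1-j}_i)$ is bounded above by $3T/2 + H_0$, which together with $p_3 = \bl/(3T/2+H_0)$ bounds the appeal of $o^{1-j}_i \to l^{1-j}_i$ by exactly $\bl$.

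The main obstacle is bookkeeping: one must be careful to use the correct action at $l^{1-j}_i$ (switching to $c_{1-j}$ uses the action with reward $H_0$, not the one with reward $-T/2 + (H_{d(C)}+L_{d(C)})/2$), and to verify that each of the rational bounds chosen for $p_3$ through $p_7$ produces exactly the stated appeal. No ideas beyond Lemma~\ref{lem:argappeal} and the value bounds collected in Appendix~\ref{app:upper} are required.
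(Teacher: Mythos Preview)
Your proposal follows the same approach as the paper, which also reduces each of the ten claims either to a previously-proved lemma from Appendices~\ref{app:noswitch} and~\ref{app:stage1appeal} or to a short direct computation via Lemma~\ref{lem:argappeal}.

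One small correction: claim~2 is not quite identical to its analogue in Lemma~\ref{lem:buffappeal}. The target of that switch is $o^{1-j}_{\inp(i)}$, and because $l^{1-j}$ has already moved to $c_{1-j}$, the value of $o^{1-j}_{\inp(i)}$ can now be as large as $\val^{\sigma}(c_{1-j})+H_{d(C)}$ rather than $\val^{\sigma}(c_j)+H_{d(C)}$. This is precisely why the bound weakens from ``strictly less than $0$'' in Lemma~\ref{lem:buffappeal} to ``at most $\ro$'' here, and it is why the paper invokes Lemma~\ref{lem:second-two} (which rests on Lemma~\ref{lem:gateupperjprime}) rather than the circuit-$j$ bounds you cite. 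With that adjustment your outline is sound and matches the paper's proof.
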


We can now argue that, starting at the policy $\sigma_1$, Dantzig's rule will
switch, for every input bit $i$, the state $r^{1-j}_i$ to $c_{1-j}$. This is
because we cannot switch away from a stage 2 transition policy without switching
a state that appears in Lemma~\ref{lem:stage2appeal}. However,
Lemma~\ref{lem:stage2appeal} shows that switching $r^{1-j}_i$ to $c_{1-j}$ has
appeal at least $\rj$, whereas all other actions considered by
Lemma~\ref{lem:stage2appeal} have appeal strictly less than $\rj$. Thus, when
applied to the policy $\sigma_1$, Dantzig's rule will eventually switch to a
policy $\sigma_2$ in which, for every input bit $i$, we have
$\sigma(r^{1-j}_i)=c_{1-j}$.  The policy $\sigma_2$ will be the first policy
considered in stage 3.

\subsection{Stage 3}

We say that $\sigma$ is a \emph{stage 3 transition} policy for phase $j$ if, for
every gate $i$, the following conditions are satisfied.
\begin{itemize}
\item If $i$ is an input bit then:
\begin{itemize}
\item We have $\sigma(r^j_i) = c_j$.
\item We have $\sigma(l^{1-j}_i) = c_{1-j}$.
\item We have $\sigma(r^{1-j}_i) = c_{1-j}$.
\end{itemize}
\item If $i$ is an \org gate then we have $\sigma(x^j_i) = c_j$ and $\sigma(x^{1-j}_i) = c_j$.
\item If $i$ is a \notg gate then we have $\sigma(a^{1-j}_i) = c_j$.
\end{itemize}
Note that we do not place any restrictions on the choice made at
$l^j_i$ for any input bit $i$, because these are the states that will be
switched during stage 3. Note also that $\sigma_2$ satisfies the requirements of
a stage 3 transition policy.

The following lemma is proved in Appendix~\ref{app:stage3appeal}.
\begin{lemma}
\label{lem:stage3appeal}
Suppose that we are in phase $j$, and let $\sigma$ be a stage 3 transition
policy. For every input bit $i$ we have:
\begin{enumerate}
\item
\begin{enumerate}
\item If $\sigma(l^j_i) = c_j$, then the appeal of switching $l^j_i$ to $c_{1-j}$ is $\lonej$.
\item If $\sigma(l^j_i) = c_{1-j}$, then the appeal of switching $l^j_i$ to
$c_j$ is strictly less than $0$. 
\end{enumerate}
\item The appeal of switching $r^j_i$ to $o_{\inp(i)}$ is  at most $\ro$. 
\item The appeal of switching $l^{1-j}_i$ to $c_{j}$ is strictly less than $0$.
\item The appeal of switching $r^{1-j}_i$ to $o^j_{\inp(i)}$ is strictly less than $0$. 
\item If $\sigma(o^j_i) = l^j_i$, then the appeal of switching $o^j_i$ to
$r^j_i$ is strictly less than~$0$
\item If $\sigma(o^j_i) = r^j_i$, then
\begin{enumerate}
\item If $\sigma(l^j_i) = c_{j}$, then the appeal of switching $o^j_i$ to
$l^j_i$ is strictly less than $\magic$.
\item If $\sigma(l^j_i) = c_{1-j}$, then the appeal of switching $o^j_i$ to
$l^j_i$ lies in the range $[\dnc, \dncu]$.
\end{enumerate}
\item If $\sigma(o^{1-j}_i) = l^{1-j}_i$, then the appeal of switching
$o^{1-j}_i$ to $r^{1-j}_i$ is strictly less than $0$.
\item If $\sigma(o^{1-j}_i) = r^{1-j}_i$, then the appeal of switching
$o^{1-j}_i$ to $l^{1-j}_i$ is strictly less than $\magic$.
\end{enumerate}
For every other gate $i$ we have:
\begin{enumerate}
\setcounter{enumi}{8}
\item If $i$ is an \org gate, then the appeal of switching $x^j_i$ to $c_{1-j}$ is $\xj$, and the appeal of switching $x^{1-j}_i$ to $c_{1-j}$ is $\xj$
\item If $i$ is a \notg gate, then the appeal of switching $a^{1-j}_i$ to
$c_{1-j}$ is $\aj$.
\end{enumerate}
\end{lemma}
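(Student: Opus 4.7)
The plan is to verify each of the ten items by a case-by-case appeal computation that closely mirrors the proofs of Lemmas~\ref{lem:stage1appeal} and~\ref{lem:stage2appeal}. For each claim I will first read off the values at the endpoints of the relevant appeal reduction gadget from the defining properties of a stage~3 transition policy, using the phase-$j$ identity $\val^\sigma(c_{1-j}) = \val^\sigma(c_j) + T$, then apply Lemma~\ref{lem:argappeal} to convert the resulting value difference into an appeal, and finally substitute the explicit probabilities $p_3,\dots,p_7$ chosen in Section~\ref{sec:overview} to check the stated numerical bound. Since in a stage~3 transition policy every $l^{1-j}_{i'}$ and $r^{1-j}_{i'}$ is already pinned to $c_{1-j}$, the gate-value bounds of Lemmas~\ref{lem:gateupperfinal} and~\ref{lem:gatelowerfinal} apply to circuit~$1-j$, while Lemmas~\ref{lem:gatelower} and~\ref{lem:gateupperj} continue to govern circuit~$j$.

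Most of the items should reduce almost immediately to calculations that have already been carried out. Item~1(a) is Lemma~\ref{lem:lto1j}; item~2 repeats Lemma~\ref{lem:fourth} after substituting the new upper bound on $\val^\sigma(o^{1-j}_{\inp(i)})$; item~5 is Lemma~\ref{lem:five}; item~6(a) is Lemma~\ref{lem:six}; items~9 and~10 are Lemmas~\ref{lem:nine} and~\ref{lem:ten}, whose proofs depend only on the values of $c_j$ and $c_{1-j}$ and are therefore insensitive to whether any of the $l^j_{i'}$ have flipped. The reverse-switch claims in items~1(b), 3, 4 and~7 will follow from $\val^\sigma(c_{1-j}) > \val^\sigma(c_j)$ together with the observation that $l^{1-j}_i$ and $r^{1-j}_i$ now sit at $c_{1-j}$: in each case the value of the candidate target is strictly smaller than the value of the current choice, so the appeal comes out negative after a short calculation.

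The genuinely new ingredient, and the main obstacle I anticipate, is item~6(b). Here $\sigma(l^j_i) = c_{1-j}$, so Lemma~\ref{lem:argappeal} gives $\val^\sigma(l^j_i) = \val^\sigma(c_j) + \tfrac{T}{2} + \tfrac{H_{d(C)} + L_{d(C)}}{2}$, and a second application of the same lemma to the $(0, p_3, 0)$-gadget at $o^j_i$ yields an appeal equal to
\[
\frac{\bl \cdot \bigl(T/2 + (H_{d(C)} + L_{d(C)})/2 - L_0\bigr)}{\tfrac{3T}{2} + H_0},
\]
a quantity that tends to $\bl/3 \approx 1.033$ as $d(C)$ grows. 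Pinning this quantity inside the narrow window $[\dnc,\dncu]$ uniformly over all admissible values of $d(C)$ will require a careful but mechanical estimate of the $H_{d(C)}$, $L_{d(C)}$ and $H_0$ correction terms against $T = 3^{d(C)+6}$; this is the step I expect to consume most of the work. Item~8 will then require a similar small recomputation, since $r^{1-j}_i$ now sits at $c_{1-j}$ rather than at $o^j_{\inp(i)}$, but the resulting appeal turns out to be comfortably negative and hence well below $\magic$.
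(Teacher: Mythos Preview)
Your overall strategy is exactly the paper's, but a few of your supporting lemma citations have preconditions that fail in stage~3, and this matters.

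The main issue is your claim that Lemmas~\ref{lem:gatelower} and~\ref{lem:gateupperj} ``continue to govern circuit~$j$''. Both of those lemmas require $\sigma(l^j_{i'}) = \sigma(r^j_{i'}) = c_j$ for \emph{every} input bit~$i'$, and in a stage~3 transition policy some of the $l^j_{i'}$ may already have flipped to $c_{1-j}$. Likewise, Lemma~\ref{lem:gatelowerfinal} does \emph{not} apply to circuit $1-j$ in stage~3, since it requires $\sigma(x^{1-j}_{i'}) = c_{1-j}$ and $\sigma(a^{1-j}_{i'}) = c_{1-j}$, whereas stage~3 still has these at $c_j$. The fix is that Lemma~\ref{lem:gateupperfinal}, whose only hypothesis is $\sigma(l^{1-j}_{i'}) = \sigma(r^{1-j}_{i'}) = c_{1-j}$, bounds the gate values in \emph{both} circuits; you should invoke it for circuit~$j$ as well (this is precisely what the paper does for items~2 and~4), and you do not need Lemma~\ref{lem:gatelowerfinal} at all here.

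Two smaller points. For item~5, Lemma~\ref{lem:five} has the precondition $\sigma(l^j_i) = c_j$, so you need a separate (easy) argument when $\sigma(l^j_i) = c_{1-j}$: then $\val^\sigma(o^j_i)$ is even larger, so the appeal of moving to $r^j_i$ is still negative. For item~8, your prediction that the appeal is ``comfortably negative'' is wrong: with both $l^{1-j}_i$ and $r^{1-j}_i$ at $c_{1-j}$ the computation is exactly that of Lemma~\ref{lem:six}, giving appeal $p_3(H_0 - L_0)$, which is positive but below $\magic$. Also, your reference to ``Lemma~\ref{lem:fourth}'' for item~2 is a slip; that lemma concerns $r^{1-j}_i \to c_{1-j}$, not $r^j_i \to o^{1-j}_{\inp(i)}$.
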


We can now argue that, when applied to the policy $\sigma_2$, Dantzig's rule
will execute the following two events from Lemma~\ref{lem:transition}:
\begin{enumerate}
\setcounter{enumi}{2}
\item For every input bit $i$, the state $l^j_i$ is switched to $c_{1-j}$. 
\item For every input bit $i$, if $o^j_i$ takes the action towards $r^j_i$, then
it is switched to $l^j_i$.
\end{enumerate}
We first show that event 3 occurs. This follows from the fact that, in order to
switch away from a stage 3 transition policy, we must switch some action
mentioned in Lemma~\ref{lem:stage3appeal}. However, Lemma~\ref{lem:stage3appeal}
shows that switching $l^j_i$ to $c_{1-j}$ has appeal $\lonej$, whereas all other
actions mentioned in Lemma~\ref{lem:stage3appeal} have appeal strictly less than
$\lonej$. Therefore, when it is applied to $\sigma_2$, Dantzig's rule will
eventually switch to a policy $\sigma'_2$ in which $\sigma'_2(l^j_i) = c_{1-j}$
for every input bit $i$.

We can now prove that event 4 occurs when Dantzig's rule is applied to
$\sigma'_2$. This follows from the same line of reasoning: once $l^j_i$ has been
switched to $c_{1-j}$ for all input bits $i$, Lemma~\ref{lem:stage3appeal}
states that the appeal of switching $o^j_i$ to $l^j_i$ is at least $\dnc$.
Moreover, all other actions mentioned in Lemma~\ref{lem:stage3appeal} have
appeal strictly less than $\dnc$. Thus, Dantzig's rule will eventually switch to
a policy $\sigma_3$ in which, for every input bit $i$, we have $\sigma_3(l^j_i)
= c_{1-j}$ and $\sigma_3(o^j_i) = l^j_i$. The policy $\sigma_3$ will be the
first policy considered during stage 4.

\subsection{Stage 4}

We say that $\sigma$ is a \emph{stage 4 transition} policy for phase $j$ if, for
every gate $i$, the following conditions are satisfied.
\begin{itemize}
\item If $i$ is an input bit then:
\begin{itemize}
\item We have $\sigma(l^j_i) = c_{1-j}$.
\item We have $\sigma(l^{1-j}_i) = c_{1-j}$.
\item We have $\sigma(r^{1-j}_i) = c_{1-j}$.
\item We have $\sigma(o^j_i) = l^j_i$.
\end{itemize}
%\item If $i$ is an \org gate then we have $\sigma(x^j_i) = c_j$ and $\sigma(x^{1-j}_i) = c_j$.
%\item If $i$ is a \notg gate then we have $\sigma(a^{1-j}_i) = c_j$.
\end{itemize}
Note that we do not place any restrictions on the choice made at
the states $r^j_i$ for any input bit $i$, $x^j_i$ for any \org gate $i$, or
$a^j_i$ for any \notg gate $i$, because these are the states that will be switched
during stage 4. Note also that the policy $\sigma_3$ satisfied the conditions of
a stage 4 transition policy·

The following lemma is proved in Appendix~\ref{app:stage4appeal}.
\begin{lemma}
\label{lem:stage4appeal}
Suppose that we are in phase $j$, and let $\sigma$ be a stage 4 transition
policy. For every input bit $i$ we have:
\begin{enumerate}
\item The appeal of switching $l^j_i$ to $c_j$ is strictly less than $0$. 
\item 
\begin{enumerate}
\item If $\sigma(r^j_i) = c_j$, then the appeal of switching $r^j_i$ to
$o_{\inp(i)}$ is at most $\ro$. 
\item If $\sigma(r^j_i) = c_j$ and if, for every \org gate $i'$ we have
$\sigma(x^{1-j}_{i'}) = c_{1-j}$ and for every \notg gate $i'$ we have
$\sigma(a^{1-j}_{i'}) = c_{1-j}$, then the appeal of switching $r^j_i$ to 
$o_{\inp(i)}$ is at least \rol.
\item If $\sigma(r^j_i) = o^{1-j}_{\inp(i)}$,
and if, for every \org gate $i'$ we have
$\sigma(x^{1-j}_{i'}) = c_{1-j}$ and for every \notg gate $i'$ we have
$\sigma(a^{1-j}_{i'}) = c_{1-j}$,
then the appeal of switching $r^j_i$
to $c_{j}$ is strictly less than $0$.
\end{enumerate}
\item The appeal of switching $l^{1-j}_i$ to $c_{j}$ is strictly less than $0$.
\item The appeal of switching $r^{1-j}_i$ to $o^j_{\inp(i)}$ is strictly less than $0$. 
\item The appeal of switching $o^j_i$ to
$r^j_i$ is strictly less than~$0$
\item If $\sigma(o^{1-j}_i) = l^{1-j}_i$, then the appeal of switching
$o^{1-j}_i$ to $r^{1-j}_i$ is strictly less than $0$.
\item If $\sigma(o^{1-j}_i) = r^{1-j}_i$, then the appeal of switching
$o^{1-j}_i$ to $l^{1-j}_i$ is strictly less than $\magic$.
\end{enumerate}
For every other gate $i$ we have:
\begin{enumerate}
\setcounter{enumi}{8}
\item If $i$ is an \org gate then for all $l \in \{0, 1\}$:
\begin{enumerate}
\item  If $\sigma(x^l_i) = c_j$, then the appeal of switching $x^l_i$ to
$c_{1-j}$ is $\xj$. 
\item If $\sigma(x^l_i) = c_{1-j}$, then the appeal of switching $x^l_i$ to
$c_j$ is strictly less than $0$.
\end{enumerate}
\item If $i$ is a \notg gate then:
\begin{enumerate}
\item If $\sigma(a^{1-j}_i) = c_j$, then the appeal of switching $a^{1-j}_i$ to $c_{1-j}$ is $\aj$.
\item If $\sigma(a^{1-j}_i) = c_{1-j}$, then the appeal of switching $a^{1-j}_i$
to $c_{j}$ is strictly less than 0.
\end{enumerate}
\end{enumerate}
\end{lemma}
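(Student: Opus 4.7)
\medskip

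\noindent\textbf{Proof proposal for Lemma~\ref{lem:stage4appeal}.}
The plan is to follow exactly the same template as the proofs of Lemmas~\ref{lem:buffappeal}, \ref{lem:stage1appeal}, \ref{lem:stage2appeal} and~\ref{lem:stage3appeal}: for each state listed in the statement, compute $\val^\sigma$ of the state and of its target using the structural assumptions built into the definition of a stage~4 transition policy, then apply Lemma~\ref{lem:argappeal} to obtain the appeal of the action in question. Throughout, I would use the phase-$j$ invariant $\val^\sigma(c_{1-j}) = \val^\sigma(c_j) + T$ from Lemma~\ref{lem:clock}, and rely on the upper/lower bound lemmas of Appendix~\ref{app:upper} (specifically Lemmas~\ref{lem:gateupperfinal} and~\ref{lem:gatelowerfinal}, which are tailored to the configuration where $l^{1-j}_{i'}$ and $r^{1-j}_{i'}$ already point to $c_{1-j}$) to bound the values at states in circuit $1-j$.

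First I would dispatch the ``easy'' claims 1, 3, 5, 6, 7, which are about states whose choice in $\sigma$ is already pinned down by the transition-policy conditions. For these, $\val^\sigma$ of the state is immediate: for example, $\val^\sigma(l^j_i) = \val^\sigma(c_{1-j}) - \tfrac{T}{2} + \tfrac{H_{d(C)}+L_{d(C)}}{2}$, and the target $\val^\sigma(c_j) = \val^\sigma(c_{1-j}) - T$ sits below it, so switching back has negative appeal by Lemma~\ref{lem:argappeal}. Claims 5, 6, 7 are essentially re-runs of Lemmas~\ref{lem:five}, \ref{lem:seven}, \ref{lem:eighter}/\ref{lem:six} in the new configuration, and since $o^j_i$ now sits at $l^j_i$ (pointing to $c_{1-j}$), its value is $\val^\sigma(c_{1-j}) + H_0$, which immediately drives the appeal of switching to $r^j_i$ below zero. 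Claims 9(a), 9(b), 10(a), 10(b) are direct analogues of Lemmas~\ref{lem:nine} and~\ref{lem:ten}, together with their ``reverse'' versions obtained by swapping the role of $c_j$ and $c_{1-j}$; the point is that once $\sigma(x^l_i) = c_{1-j}$ (resp.\ $\sigma(a^{1-j}_i) = c_{1-j}$), the target $c_j$ has value $T$ below the state's own value, so switching back is strictly unprofitable.

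The main obstacle, and where most of the work goes, is claim 2, especially the lower-bound part 2(b) and the negative-appeal part 2(c). Here I need that $\val^\sigma(o^{1-j}_{\inp(i)})$ has actually \emph{increased} to roughly $\val^\sigma(c_{1-j}) + \text{(correct truth value)}$ now that the states in circuit $1-j$ have had their anchor flipped from $c_j$ to $c_{1-j}$. Under the extra hypothesis in 2(b)/2(c) that every $x^{1-j}_{i'}$ and every $a^{1-j}_{i'}$ now points to $c_{1-j}$, I would argue by induction over gate depth (mirroring the inductive structure of Lemma~\ref{lem:gatelowerfinal}, and using $\sigma(l^{1-j}_{i'}) = \sigma(r^{1-j}_{i'}) = c_{1-j}$ at the input bits) that $\val^\sigma(o^{1-j}_{\inp(i)}) \ge \val^\sigma(c_{1-j})$, with Lemma~\ref{lem:gateupperfinal} giving the matching upper bound $\val^\sigma(c_{1-j}) + H_{d(C)}$. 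Combined with $\val^\sigma(r^j_i) = \val^\sigma(c_j) + L_0 = \val^\sigma(c_{1-j}) - T + L_0$, Lemma~\ref{lem:argappeal} then gives an appeal of
\begin{equation*}
p_7 \cdot \left( -\tfrac{T}{2} + \val^\sigma(o^{1-j}_{\inp(i)}) - \val^\sigma(r^j_i) \right)
 = p_7 \cdot \left( \tfrac{T}{2} - L_0 + \Delta \right),
\end{equation*}
where $\Delta \in [0, H_{d(C)}]$ is the excess above $\val^\sigma(c_{1-j})$. The constants are chosen exactly so that this lands in $[\rol, \ro]$, giving 2(a) and 2(b); for 2(c), when $\sigma(r^j_i) = o^{1-j}_{\inp(i)}$ the state inherits that value minus $T/2$, which lies well above $\val^\sigma(c_j) = \val^\sigma(c_{1-j}) - T$, so switching back to $c_j$ has appeal $p_6 \cdot (\text{large negative}) < 0$.

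Finally, claim 4 follows by computing $\val^\sigma(r^{1-j}_i) = \val^\sigma(c_{1-j}) + L_0$ directly from the transition-policy conditions, then using the upper bound $\val^\sigma(o^j_{\inp(i)}) \le \val^\sigma(c_{1-j}) + H_{d(C)}$ from Lemma~\ref{lem:gateupperfinal} to show that moving to $o^j_{\inp(i)}$ from $r^{1-j}_i$ (through the $-T/2$ reward) has appeal at most $p_7 \cdot (H_{d(C)} - L_0 - \tfrac{T}{2}) < 0$. The whole proof is therefore a mechanical case analysis; the only non-trivial ingredients are the phase invariant, Lemma~\ref{lem:argappeal}, and the fact that the circuit-$1-j$ anchors have now flipped, which is exactly what powers the new nontrivial lower bound in 2(b).
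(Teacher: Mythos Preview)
Your proposal is correct and follows essentially the same approach as the paper's proof in Appendix~\ref{app:stage4appeal}: the paper likewise dispatches parts 1, 3, 6, 7, 9(a), 10(a) by citing earlier lemmas (\ref{lem:first-two}, \ref{lem:third-two}, \ref{lem:seven-three}, \ref{lem:eight-two}, \ref{lem:nine}, \ref{lem:ten}), proves 2(b) and 2(c) via Lemma~\ref{lem:gatelowerfinal}, obtains part 4 via Lemma~\ref{lem:fourth-three} with the precondition supplied by Lemma~\ref{lem:gateupperfinal}, and handles 5, 9(b), 10(b) by short direct computations with Lemma~\ref{lem:argappeal}. One small slip: in your treatment of claim 5 you write $\val^\sigma(o^j_i)=\val^\sigma(c_{1-j})+H_0$, but with $\sigma(l^j_i)=c_{1-j}$ the correct value is $\val^\sigma(c_{1-j})-\tfrac{T}{2}+\tfrac{H_{d(C)}+L_{d(C)}}{2}$; the negative-appeal conclusion still goes through (and the paper's own write-up contains the same slip).
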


We can now show that, when Dantzig's rule is applied to $\sigma_3$, the
following events from Lemma~\ref{lem:transition} will occur.
\begin{enumerate}
\setcounter{enumi}{4}
\item For every input bit $i$, the state $r^j_i$ is switched to
$o^{1-j}_{\inp(i)}$.
\item For every \notg gate $i$, the state $a^{1-j}_i$ is switched to $c_{1-j}$.
\item For every \org gate $i$, the states $x^j_i$ and $x^{1-j}_i$ are both
switched to $c_{1-j}$.
\end{enumerate}
We begin by showing that event 5 will occur. This follows from the fact that, in
order to switch away from a stage 4 transition policy, one of the actions
mentioned in Lemma~\ref{lem:stage4appeal} must be switched. However,
Lemma~\ref{lem:stage4appeal} shows that the appeal of switching $r^j_i$ to
$o^{1-j}_{\inp(i)}$ is at least $\rol$, whereas all other actions mentioned in
Lemma~\ref{lem:stage4appeal} have appeal strictly less than $\rol$. Thus,
Dantzig's rule will switch from $\sigma_3$ to a policy $\sigma'_3$ with
$\sigma'_3(r^j_i) = o^{1-j}_{\inp(i)}$ for every input bit $i$.

Events 6 and 7 follow using the same line of reasoning. After event 5 has
occurred Lemma~\ref{lem:stage4appeal} shows that, for every \notg gate $i$,
switching $a^{1-j}_i$ to $c_{1-j}$ has appeal $\aj$, whereas all other actions
have appeal strictly less than $\aj$. Similarly, once event 6 has occurred,
Lemma~\ref{lem:stage4appeal} shows that, for every \org gate $i$, switching
$x^j_i$ and $x^{1-j}_i$ to $c_{1-j}$ has appeal $\xj$, whereas all other actions
have appeal strictly less than $\xj$.
Thus Dantzig's rule will eventually switch to a policy $\sigma_4$ in which
events 5, 6, and 7 have taken place.

\subsection{Completing the proof}

Note that $\sigma_4$ is still a stage 4 transition policy. Note that, for the
policy $\sigma_4$, all appeals mentioned in are strictly less than $\magic$. On
the other hand Lemma~\ref{lem:clock} implies that the appeal of advancing
the clock is at least $0.25$. Therefore, when Dantzig's rule is applied
to $\sigma_4$, it will proceed to advance the clock, and move into phase $1-j$. 

Let $\sigma_n$ be the first policy after the clock is advanced. To complete the
proof of Lemma~\ref{lem:transition}, we must argue that $\sigma_n$ is coherent
for phase $1-j$, and that it is an initial policy for $F(B)$. To do so, we must
check the following conditions.
\begin{itemize}
\item Firstly, we must show that $\sigma_n$ is coherent for phase $1-j$. This follows from
events in stages 1 through 4, which explicitly show that $\sigma_n$ satisfies
the conditions of a coherent policy in phase $1-j$. 
\item Secondly, in order to show that $\sigma_n$ is an initial policy for
$F(B)$,  we must show that for every input bit $i$, we have $\sigma_n(o^{1-j}) =
l^j_i$ if and only if $C(B, \inp(i)) = 1$. Note that, by
Lemma~\ref{lem:circuit}, this property holds for the policy $\sigma_f$.
Moreover, we have shown that $o^{1-j}_i$ cannot have been switched at any point
during stages 1 through 4. Therefore, this property must hold for $\sigma_n$.
\item Thirdly, we must show that $\sigma_n(o^{j}_i) = l^j_i$. To do so, we
recall that 
$o^{j}_i$ was switched to $l^j_i$ during stage 3, and was not switched during
stage 4.
\item Finally, we must show that for every \notg gate $i$ we have
$\sigma(a^{1-j}_i) = c_{1-j}$. Recall that this was ensured during stage 4.
\end{itemize}

So, $\sigma_n$ is coherent for phase $1-j$, and an initial policy for $F(B)$ in
phase $1-j$. This completes the proof.

\section{Proof of Lemma~\ref{lem:stage1appeal}}
\label{app:stage1appeal}

A number of the claims in this lemma follow from lemma that we have shown
previously. In particular:
\begin{itemize}
\item The first part of this lemma follows from Lemma~\ref{lem:lto1j}.
\item The fifth claim of the lemma follows from Lemma~\ref{lem:five}.
\item The sixth claim of this lemma follows from Lemma~\ref{lem:six}.
\item The ninth part of this lemma follows from Lemma~\ref{lem:nine}.
\item The tenth claim of this lemma follows from Lemma~\ref{lem:ten}.
\end{itemize}
We now proceed to prove the other claims. The following lemma proves second
claim of Lemma~\ref{lem:stage1appeal}.

\begin{lemma}
\label{lem:second-two}
Suppose that we are in phase $j$, and let $\sigma$ be a policy with
$\sigma(l^j_{i'}) = \sigma(r^j_{i'}) = c_j$ for every input bit $i'$.
For every input bit $i$, 
the appeal of switching $r^j_i$ to $o^{1-j}_{\inp(i)}$ is at most $\ro$.
\end{lemma}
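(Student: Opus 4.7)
The plan is to express both $\val^{\sigma}(r^j_i)$ and the target appeal in closed form via Lemma~\ref{lem:argappeal}, and then to bound $\val^{\sigma}(o^{1-j}_{\inp(i)})$ from above by applying the generic upper-bound propagation lemma (Lemma~\ref{lem:subupper2}) twice: first to circuit $j$, and then to circuit $1-j$. The only other ingredient is that we are in phase $j$, so $\val^{\sigma}(c_{1-j}) = \val^{\sigma}(c_j) + T$.

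First I would bound the output of every gate in circuit $j$. Since $\sigma(l^j_{i''}) = \sigma(r^j_{i''}) = c_j$ for every input bit $i''$, the first part of Lemma~\ref{lem:argappeal} gives $\val^{\sigma}(l^j_{i''}) = \val^{\sigma}(c_j) + H_0$ and $\val^{\sigma}(r^j_{i''}) = \val^{\sigma}(c_j) + L_0$, so $\val^{\sigma}(o^j_{i''}) \le \val^{\sigma}(c_j) + H_0 \le \val^{\sigma}(c_{1-j}) + H_0$. Applying Lemma~\ref{lem:subupper2} with $l = j$ and $B = 0$ then yields $\val^{\sigma}(o^j_{i''}) \le \val^{\sigma}(c_{1-j}) + H_{d(i'')}$ for every gate $i''$ of circuit $j$; in particular $\val^{\sigma}(o^j_{\inp(i')}) \le \val^{\sigma}(c_{1-j}) + H_{d(C)}$ for every input bit $i'$.

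Next I would bound the output of every input bit of circuit $1-j$. A case analysis on $\sigma(l^{1-j}_{i'})$ shows that $\val^{\sigma}(l^{1-j}_{i'}) \le \val^{\sigma}(c_{1-j}) + H_0$ for either choice, and for $r^{1-j}_{i'}$ the choice $c_{1-j}$ gives $\val^{\sigma}(c_{1-j}) + L_0$ while the choice $o^j_{\inp(i')}$ gives at most $\val^{\sigma}(c_{1-j}) + H_{d(C)} - T/2 < \val^{\sigma}(c_{1-j})$, using the bound from the previous step together with $H_{d(C)} < T/2$. Hence $\val^{\sigma}(o^{1-j}_{i'}) \le \val^{\sigma}(c_{1-j}) + H_0$ for every input bit $i'$ of circuit $1-j$, and a second application of Lemma~\ref{lem:subupper2} (now with $l = 1-j$, $B = 0$) gives $\val^{\sigma}(o^{1-j}_{\inp(i)}) \le \val^{\sigma}(c_{1-j}) + H_{d(C)}$.

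Finally, the first part of Lemma~\ref{lem:argappeal} gives $\val^{\sigma}(r^j_i) = \val^{\sigma}(c_j) + L_0$, and its second part expresses the appeal of switching $r^j_i$ to $o^{1-j}_{\inp(i)}$ as $p_7 \cdot \bigl(\val^{\sigma}(o^{1-j}_{\inp(i)}) - \val^{\sigma}(c_j) - L_0 - T/2\bigr)$. Substituting the upper bound derived above and using the definition $p_7 = \ro / (T/2 + H_{d(C)} - L_0)$, the appeal is at most $p_7 \cdot (T/2 + H_{d(C)} - L_0) = \ro$, as claimed. The calculation is essentially mechanical; the only subtlety is that, because we are given no hypothesis on the $x$- or $a$-states of either circuit, the two applications of Lemma~\ref{lem:subupper2} must be chained through the $r^{1-j}_{i'} \to o^j_{\inp(i')}$ cross-links, rather than relying on the stronger but more restrictive Lemmas~\ref{lem:gateupperj} or~\ref{lem:gateupperjprime}.
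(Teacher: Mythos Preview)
Your argument is correct and reaches the same final computation as the paper: establish $\val^{\sigma}(r^j_i)=\val^{\sigma}(c_j)+L_0$, bound $\val^{\sigma}(o^{1-j}_{\inp(i)})\le\val^{\sigma}(c_{1-j})+H_{d(C)}$, and plug into Lemma~\ref{lem:argappeal} together with the definition of $p_7$.

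The only difference is in how the upper bound on $\val^{\sigma}(o^{1-j}_{\inp(i)})$ is obtained. The paper invokes Lemma~\ref{lem:gateupperjprime} directly, whereas you chain two applications of the hypothesis-free Lemma~\ref{lem:subupper2}. Your choice is in fact the more careful one: Lemma~\ref{lem:gateupperjprime} additionally assumes $\sigma(x^j_{i'})=\sigma(x^{1-j}_{i'})=c_j$ for every \org gate and $\sigma(a^{1-j}_{i'})=c_j$ for every \notg gate, none of which appear among the stated hypotheses of Lemma~\ref{lem:second-two}. (Those extra assumptions do hold in the two places where the paper actually invokes Lemma~\ref{lem:second-two}, namely for stage~1 and stage~2 transition policies, so the paper's argument is sound in context; but as a proof of the lemma \emph{as stated}, your route via Lemma~\ref{lem:subupper2} is the one that matches the hypotheses.) The cost of your approach is the extra pass through circuit $j$ to bound $\val^{\sigma}(o^j_{\inp(i')})$ before handling the $r^{1-j}_{i'}\to o^j_{\inp(i')}$ case, which you carry out correctly.
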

\begin{proof}
We have:
\begin{equation*}
\val^{\sigma}(r^j_i) = \val^{\sigma}(c_j) + L_0.
\end{equation*}
On the other hand, by Lemma~\ref{lem:gateupperjprime}, we have:
\begin{align*}
\val^{\sigma}(o^{1-j}_i) &\le \val^{\sigma}(c_{1-j}) + H_{d(C)} \\
&=\val^{\sigma}(c_{j}) + T + H_{d(C)}. \\
\end{align*}
Thus, we can apply Lemma~\ref{lem:argappeal} to argue that the appeal of
switching $r^j_i$ to $o^{1-j}_{\inp(i)}$ is at most:
\begin{equation*}
p_{7} \cdot (\frac{T}{2} + H_{d(C)} - L_0) = \ro.
\end{equation*}
\qed
\end{proof}

Part 3a of Lemma~\ref{lem:stage1appeal} is proved in Lemma~\ref{lem:third}.
The following lemma proves part 3b of Lemma~\ref{lem:stage1appeal}.

\begin{lemma}
\label{lem:third-two}
Suppose that we are in phase $j$, and let $\sigma$ be a policy, and let $i$ be
an input bit. If $\sigma(l^{1-j}_i) = c_{1-j}$, then the appeal of switching
$l^{1-j}_i$ to $c_j$ is strictly less than $0$.
\end{lemma}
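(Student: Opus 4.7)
The plan is to unwind the definition of the input-bit gadget for circuit $1-j$ and then apply Lemma~\ref{lem:argappeal} twice: once to pin down $\val^{\sigma}(l^{1-j}_i)$ under the assumption $\sigma(l^{1-j}_i) = c_{1-j}$, and once to compute the appeal of the switch to $c_j$.

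First, I would recall from the formal description in Appendix~\ref{app:construction} that, substituting $j \mapsto 1-j$, the two actions at $l^{1-j}_i$ are $\gadget(l^{1-j}_i, c_j, 0, p_5, -\frac{T}{2} + \frac{H_{d(C)} + L_{d(C)}}{2})$ and $\gadget(l^{1-j}_i, c_{1-j}, 0, p_4, H_0)$. Since $\sigma(l^{1-j}_i) = c_{1-j}$, the first claim of Lemma~\ref{lem:argappeal} (with $r_d = 0$, $r_f = H_0$) gives $\val^{\sigma}(l^{1-j}_i) = \val^{\sigma}(c_{1-j}) + H_0$. Because we are in phase $j$, we have $\val^{\sigma}(c_{1-j}) = \val^{\sigma}(c_j) + T$, hence $\val^{\sigma}(l^{1-j}_i) = \val^{\sigma}(c_j) + T + H_0$, which means $\val^{\sigma}(c_j) = \val^{\sigma}(l^{1-j}_i) - (T + H_0)$.

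Next, I would apply the second claim of Lemma~\ref{lem:argappeal} to the unused action, with $t = c_j$, $b = -(T + H_0)$, $r_d = 0$, and $r_f = -\frac{T}{2} + \frac{H_{d(C)} + L_{d(C)}}{2}$. The appeal of switching $l^{1-j}_i$ to $c_j$ is then
\[
p_5 \cdot \Bigl(-T - H_0 - \tfrac{T}{2} + \tfrac{H_{d(C)} + L_{d(C)}}{2}\Bigr)
= p_5 \cdot \Bigl(-\tfrac{3T}{2} - H_0 + \tfrac{H_{d(C)} + L_{d(C)}}{2}\Bigr).
\]
Finally, I would use the inequalities $H_{d(C)} < \tfrac{T}{2}$ and $L_{d(C)} < H_{d(C)} < \tfrac{T}{2}$ recorded in Section~\ref{sec:overview} to see that $\tfrac{H_{d(C)} + L_{d(C)}}{2} < \tfrac{T}{2}$, so the bracketed quantity is at most $-T - H_0 < 0$; since $p_5 > 0$, the appeal is strictly negative, as required.

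There is essentially no obstacle here: everything is a direct application of Lemma~\ref{lem:argappeal} together with the phase-$j$ identity $\val^{\sigma}(c_{1-j}) = \val^{\sigma}(c_j) + T$ and the quantitative bound $H_{d(C)} < T/2$. The only thing to be careful about is the index swap between Figure~\ref{fig:input} (which is drawn for circuit $j$) and the gadget for circuit $1-j$ that we actually need, so the two clock states appearing in the formulas swap roles compared to the picture.
\qed
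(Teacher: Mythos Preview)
Your proof is correct and essentially identical to the paper's own argument: you compute $\val^{\sigma}(l^{1-j}_i) = \val^{\sigma}(c_j) + T + H_0$ and then apply Lemma~\ref{lem:argappeal} to obtain the appeal $p_5 \cdot (-\tfrac{3T}{2} - H_0 + \tfrac{H_{d(C)} + L_{d(C)}}{2}) < 0$, exactly as the paper does. If anything, you are slightly more careful than the paper in spelling out the index swap and the final inequality.
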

\begin{proof}
In this case we have:
\begin{equation*}
\val^{\sigma}(l^{1-j}_i) = \val^{\sigma}(c_j) + T + H_0.
\end{equation*}
Thus, by Lemma~\ref{lem:argappeal}, we have that the appeal of switching
$l^{1-j}_i$ to $c_j$ is:
\begin{equation*}
p_{5} \cdot (-\frac{T}{2} + \frac{H_{d(C)} + L_{d(C)}}{2} - T - H_0) < 0.
\end{equation*}
\qed
\end{proof}

The upper bound in the fourth claim of Lemma~\ref{lem:stage1appeal} follows from
Lemma~\ref{lem:fourth}. The following lemma proves the lower bound in the fourth
claim of Lemma~\ref{lem:stage1appeal}.

\begin{lemma}
\label{lem:fourth-two}
Suppose that we are in phase $j$, and let $\sigma$ be a policy with
$\sigma(l^j_{i'}) = \sigma(r^j_{i'}) = c_j$ for every input bit $i'$. If
$\sigma(r^{1-j}_i) = o^j_{\inp(i)}$, for some input bit $i$, then the appeal of
switching $r^{1-j}$ to $c_{1-j}$ is at least \rj. 
\end{lemma}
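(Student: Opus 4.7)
The plan is to compute the appeal of switching $r^{1-j}_i$ to $c_{1-j}$ directly using Lemma~\ref{lem:argappeal}, after bounding the value of $r^{1-j}_i$ from above. First, since $\sigma(r^{1-j}_i) = o^j_{\inp(i)}$, I would apply the first claim of Lemma~\ref{lem:argappeal} to the appeal reduction gadget connecting $r^{1-j}_i$ to $o^j_{\inp(i)}$ (with $r_d = 0$, $r_f = -\tfrac{T}{2}$, $p = p_7$), yielding
\begin{equation*}
\val^{\sigma}(r^{1-j}_i) = \val^{\sigma}(o^j_{\inp(i)}) - \tfrac{T}{2}.
\end{equation*}

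Next, I would bound $\val^{\sigma}(o^j_{\inp(i)})$ from above by $\val^{\sigma}(c_j) + H_{d(C)}$, using Lemma~\ref{lem:gateupperj}. The assumptions of this lemma are met because $\sigma$ is invoked in the context of a stage~1 transition policy (so $\sigma(l^j_{i'}) = \sigma(r^j_{i'}) = c_j$ for every input bit, $\sigma(x^j_{i'}) = c_j$ for every \org gate, and $\sigma(a^{1-j}_{i'}) = c_j$ for every \notg gate). Combined with the phase-$j$ identity $\val^{\sigma}(c_{1-j}) = \val^{\sigma}(c_j) + T$, this gives
\begin{equation*}
\val^{\sigma}(c_{1-j}) - \val^{\sigma}(r^{1-j}_i) \ge \tfrac{3T}{2} - H_{d(C)}.
\end{equation*}

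Finally, I would apply the second claim of Lemma~\ref{lem:argappeal} to the appeal reduction gadget from $r^{1-j}_i$ to $c_{1-j}$ (with $r_d = 0$, $r_f = L_0$, $p = p_6$), using the lower bound above as the value of $b$. This produces
\begin{equation*}
\appeal^{\sigma}(r^{1-j}_i \to c_{1-j}) \ge p_6 \cdot \left( \tfrac{3T}{2} - H_{d(C)} + L_0 \right) = \rj,
\end{equation*}
where the final equality is just the definition of $p_6 = \rj / \bigl(\tfrac{3T}{2} + L_0 - H_{d(C)}\bigr)$.

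This is essentially a computational assembly, so there is no deep obstacle. The only point requiring care is that Lemma~\ref{lem:gateupperj} needs auxiliary preconditions on the choices made at $x^j$, $x^{1-j}$, and $a^{1-j}$ states, which are not explicit in the hypothesis of the current lemma but are supplied by the calling context in Lemma~\ref{lem:stage1appeal}; the statement should be read with those implicit. Otherwise, the proof is a direct three-line calculation.
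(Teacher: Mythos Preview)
Your proposal is correct and follows essentially the same route as the paper: bound $\val^{\sigma}(r^{1-j}_i)$ above via Lemma~\ref{lem:gateupperj}, use the phase-$j$ clock identity, and plug into Lemma~\ref{lem:argappeal} with the $(p_6,L_0)$ gadget to obtain $p_6\cdot(\tfrac{3T}{2}+L_0-H_{d(C)})=\rj$. You are in fact more careful than the paper in flagging that the extra preconditions of Lemma~\ref{lem:gateupperj} on $x^{j}_{i'}$ and $a^{1-j}_{i'}$ are supplied by the surrounding stage~1 transition context rather than by the hypotheses of the lemma as stated.
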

\begin{proof}
We can apply Lemma~\ref{lem:gateupperj} to obtain:
\begin{equation*}
\val^{\sigma}(r^{1-j}_i) \le \val^{\sigma}(c_j) + H_{d(C)} - \frac{T}{2}.
\end{equation*}
On the other hand, we have:
\begin{equation*}
\val^{\sigma}(c_{1-j}) = \val^{\sigma}(c_j) + T.
\end{equation*}
Thus, we can apply Lemma~\ref{lem:argappeal} to argue that the appeal of
switching $r^{1-j}$ to $c_{1-j}$ is at least:
\begin{equation*}
p_{6} \cdot (\frac{3T}{2} + L_0 - H_{d(C)}) = \rj
\end{equation*}
\qed
%For the second part of this lemma, we can apply Lemma~\ref{lem:gatelower} to
%obtain:
%\begin{equation*}
%\val^{\sigma}(r^{1-j}_i) \ge \val^{\sigma}(c_j) - \frac{T}{2}.
%\end{equation*}
%Thus, we can apply Lemma~\ref{lem:argappeal} to argue that the appeal of
%switching $r^{1-j}$ to $c_{1-j}$ is at least:
%\begin{align*}
%p_{6} \cdot (\frac{3T}{2} + L_0) &= 
%\frac{\rj\cdot\frac{3T}{2} + L_0}{\frac{3T}{2} + L_0 - H_{d(C)}} \\
%&= \rj + \frac{\rj\cdot H_{d(C)}}{\frac{3T}{2} + L_0 - H_{d(C)}} \\
%&\le \rj + \frac{\rj\cdot H_{d(C)}}{\frac{3T}{2} - H_{d(C)}} \\
%&\le \rj + \frac{\rj\cdot 2 \cdot 3^{d(C) + 2}}{\frac{3}{2} \cdot 3^{d(C) + 6}
%- 2 \cdot 3^{d(C) + 2}} \\
%&\le \rj + \frac{\rj\cdot 2}{\frac{3}{2} \cdot 3^{4}
%- 2} \\
%&\le \rj + 0.1.
%%= \rj
%\end{align*}
\end{proof}

The following lemma proves the seventh claim of Lemma~\ref{lem:stage1appeal}.
\begin{lemma}
Suppose that we are in phase $j$, and let $\sigma$ be a policy with
$\sigma(r^{1-j}_i) - o^j_{\inp(i)}$ and $\sigma(o^{1-j}_i) = l^{1-j}_i$.  If
$\val^{\sigma}(o^j_{\inp(i)}) \le \val^{\sigma}(c_j) + L_{d(C)}$ then the appeal
of switching $o^{1-j}_i$ to $r^{1-j}_i$ is strictly less than $0$.
\end{lemma}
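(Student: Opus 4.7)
The plan is to directly compute the values of $o^{1-j}_i$ and $r^{1-j}_i$ under $\sigma$, then form the appeal of the switch and bound it using the hypothesis $\val^{\sigma}(o^j_{\inp(i)}) \le \val^{\sigma}(c_j) + L_{d(C)}$. Since the action from $o^{1-j}_i$ to $r^{1-j}_i$ is deterministic with reward $0$, the appeal equals simply $\val^{\sigma}(r^{1-j}_i) - \val^{\sigma}(o^{1-j}_i)$, so the whole computation reduces to a careful arithmetic check.

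First I would apply Lemma~\ref{lem:argappeal} (first claim) twice. Since $\sigma(o^{1-j}_i) = l^{1-j}_i$ uses the appeal reduction gadget with parameters $r_d = r_f = 0$, we obtain $\val^{\sigma}(o^{1-j}_i) = \val^{\sigma}(l^{1-j}_i)$. Since $\sigma(r^{1-j}_i) = o^j_{\inp(i)}$ uses the appeal reduction gadget with $r_d = 0$ and $r_f = -T/2$, we obtain
\begin{equation*}
\val^{\sigma}(r^{1-j}_i) \;=\; \val^{\sigma}(o^j_{\inp(i)}) - \tfrac{T}{2} \;\le\; \val^{\sigma}(c_j) + L_{d(C)} - \tfrac{T}{2},
\end{equation*}
by the hypothesis.

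Next, because the statement places no restriction on $\sigma(l^{1-j}_i)$ (a stage 1 transition policy leaves this choice free), I would split on the two possible actions at $l^{1-j}_i$. If $\sigma(l^{1-j}_i) = c_j$, a direct application of Lemma~\ref{lem:argappeal} gives $\val^{\sigma}(l^{1-j}_i) = \val^{\sigma}(c_j) + H_0$. If instead $\sigma(l^{1-j}_i) = c_{1-j}$, then using $\val^{\sigma}(c_{1-j}) = \val^{\sigma}(c_j) + T$ (we are in phase $j$), we get $\val^{\sigma}(l^{1-j}_i) = \val^{\sigma}(c_j) + T/2 + (H_{d(C)} + L_{d(C)})/2$. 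In either case $\val^{\sigma}(l^{1-j}_i) \ge \val^{\sigma}(c_j) + H_0$, so the appeal is at most $L_{d(C)} - T/2 - H_0$.

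Finally I would bound this quantity using the parameter choices. Since $L_{d(C)} < H_{d(C)} \le 2 \cdot 3^{d(C)+2}$ while $T/2 = 3^{d(C)+6}/2$, the $T/2$ term dominates and the expression is strictly negative, completing the proof. The only mild obstacle is keeping the case split on $\sigma(l^{1-j}_i)$ organised; the rest is straightforward arithmetic using the definitions of $T$, $H_k$, and $L_k$ given in Section~\ref{sec:overview}.
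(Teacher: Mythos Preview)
Your overall plan matches the paper's proof: compute $\val^{\sigma}(o^{1-j}_i) = \val^{\sigma}(l^{1-j}_i)$ and $\val^{\sigma}(r^{1-j}_i) = \val^{\sigma}(o^j_{\inp(i)}) - T/2$, split on the action chosen at $l^{1-j}_i$, and bound the difference. However, your case split has the two rewards at $l^{1-j}_i$ swapped. The gadget in Figure~\ref{fig:input} is drawn for circuit $j$; when you instantiate it for circuit $1-j$ the clock labels flip: the edge from $l^{1-j}_i$ to $c_{1-(1-j)} = c_j$ is the $p_5$ edge with $r_f = -T/2 + (H_{d(C)}+L_{d(C)})/2$, while the edge to $c_{1-j}$ is the $p_4$ edge with $r_f = H_0$. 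Consequently, if $\sigma(l^{1-j}_i) = c_j$ then $\val^{\sigma}(l^{1-j}_i) = \val^{\sigma}(c_j) - T/2 + (H_{d(C)}+L_{d(C)})/2$, which is \emph{smaller} than $\val^{\sigma}(c_j) + H_0$ (indeed it is below $\val^{\sigma}(c_j)$), so your assertion ``in either case $\val^{\sigma}(l^{1-j}_i) \ge \val^{\sigma}(c_j) + H_0$'' is false and the appeal bound $L_{d(C)} - T/2 - H_0$ does not follow.

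With the correct values the minimum over the two cases is the $c_j$ case, giving $\val^{\sigma}(o^{1-j}_i) \ge \val^{\sigma}(c_j) - T/2 + (H_{d(C)}+L_{d(C)})/2$, and hence the appeal is at most
\[
L_{d(C)} - \frac{H_{d(C)}+L_{d(C)}}{2} \;=\; \frac{L_{d(C)}-H_{d(C)}}{2} \;<\; 0,
\]
exactly as the paper obtains. So the structure of your argument is right; once you fix the re-indexing of the clock states in the circuit-$(1-j)$ gadget, it goes through.
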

\begin{proof}
If $\sigma(l^{1-j}_i) = c_{j}$, then we have:
\begin{equation*}
\val^{\sigma}(o^{1-j}) = 
\val^{\sigma}(c_j) - \frac{T}{2} + \frac{H_{d(C)} + L_{d(C)}}{2}.
\end{equation*}
On the other hand, if $\sigma(l^{1-j}_i) = c_{1-j}$, then we have:
\begin{align*}
\val^{\sigma}(o^{1-j}) &= \val^{\sigma}(c_j) + T + H_0. \\
&\ge \val^{\sigma}(c_j) - \frac{T}{2} + \frac{H_{d(C)} + L_{d(C)}}{2}.
\end{align*}
Thus, in both cases, we have
$\val^{\sigma}(o^{1-j}) \ge 
\val^{\sigma}(c_j) - \frac{T}{2} + \frac{H_{d(C)} + L_{d(C)}}{2}$.

Since $\sigma(r^{1-j}_i) = o^{j}_{\inp(i)}$, we have by assumption that:
\begin{equation*}
\val^{\sigma}(r^{1-j}_i) \le \val^{\sigma}(c_j) + L_{d(C)} - \frac{T}{2}.
\end{equation*}
Thus, the appeal of switching $o^{1-j}_i$ to $r^{1-j}_i$ is at most:
\begin{align*}
L_{d(C)} - \frac{H_{d(C)} + L_{d(C)}}{2} 
&\le L_{d(C)} - \frac{L_{d(C)} + L_{d(C)}}{2} \\
&= 0.
\end{align*}
\qed
\end{proof}

The following lemma proves the eight claim of Lemma~\ref{lem:stage1appeal}.
Note that the first precondition of this lemma is satisfied by
Lemma~\ref{lem:gatelower}, and the second precondition is satisfied no matter
which action is chosen at $l^{1-j}_i$.
\begin{lemma}
\label{lem:eight}
Suppose that we are in phase $j$, and let $\sigma$ be a policy. 
If we have both:
\begin{align*}
\val^{\sigma}(r^{1-j}_i) &\ge \val^{\sigma}(c_j) - \frac{T}{2}\\
\val^{\sigma}(l^{1-j}_i) &\le \val^{\sigma}(c_{1-j}) + H_0\\
\end{align*} 
for some
input bit $i$, then
the appeal of switching $o^{1-j}_i$ to $l^{1-j}_i$ is at most $\bl$.
\end{lemma}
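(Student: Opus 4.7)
The plan is to apply Lemma~\ref{lem:argappeal} (the second case) to the appeal reduction gadget $\gadget(o^{1-j}_i, l^{1-j}_i, 0, p_3, 0)$ between $o^{1-j}_i$ and $l^{1-j}_i$. Since we are reasoning about the appeal of switching to this action, we may assume $\sigma(o^{1-j}_i)$ picks the other outgoing action at $o^{1-j}_i$, namely the deterministic action to $r^{1-j}_i$ (reward $0$); this gives $\val^{\sigma}(o^{1-j}_i) = \val^{\sigma}(r^{1-j}_i)$.

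Next I would compute the value differential $b$ required by Lemma~\ref{lem:argappeal}, defined by $\val^{\sigma}(l^{1-j}_i) = \val^{\sigma}(o^{1-j}_i) + b$. Using the two hypotheses and the phase-$j$ identity $\val^{\sigma}(c_{1-j}) = \val^{\sigma}(c_j) + T$, one obtains
\begin{equation*}
b = \val^{\sigma}(l^{1-j}_i) - \val^{\sigma}(r^{1-j}_i) \le \bigl(\val^{\sigma}(c_j) + T + H_0\bigr) - \bigl(\val^{\sigma}(c_j) - \tfrac{T}{2}\bigr) = \tfrac{3T}{2} + H_0.
\end{equation*}

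Finally I would plug this into Lemma~\ref{lem:argappeal} with $r_d = r_f = 0$ and $p = p_3$, yielding
\begin{equation*}
\appeal^{\sigma}(a) \;\le\; p_3 \cdot \left(\tfrac{3T}{2} + H_0\right) \;=\; \frac{\bl}{\frac{3T}{2} + H_0} \cdot \left(\tfrac{3T}{2} + H_0\right) \;=\; \bl,
\end{equation*}
which is exactly the bound claimed. There is no real obstacle here: the whole proof is a one-line calculation once the definition of $p_3$ from the input-bit gadget and the second case of Lemma~\ref{lem:argappeal} are combined with the two hypothesized value bounds. The only small care needed is to note that the probability $p_3$ was chosen precisely so that this product telescopes to $\bl$, which is why the hypotheses are stated in this asymmetric form (upper bound on $l^{1-j}_i$, lower bound on $r^{1-j}_i$).
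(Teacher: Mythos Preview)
Your proposal is correct and follows exactly the same approach as the paper's proof: use the second case of Lemma~\ref{lem:argappeal} on the gadget $\gadget(o^{1-j}_i,l^{1-j}_i,0,p_3,0)$, bound the differential $b=\val^{\sigma}(l^{1-j}_i)-\val^{\sigma}(r^{1-j}_i)$ by $\tfrac{3T}{2}+H_0$ via the two hypotheses and the phase identity, and then observe that $p_3\cdot(\tfrac{3T}{2}+H_0)=\bl$. The paper's proof is terser (it simply states the final product), but the content is identical.
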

\begin{proof}
%We have:
%\begin{equation*}
%\val^{\sigma}(l^{1-j}_i) = \val^{\sigma}(c_{1-j}) + H_0 = \val^{\sigma}(c_{j}) + T + H_0.
%\end{equation*}
%On the other hand, by Lemma~\ref{lem:gatelower}, we have:
%\begin{equation*}
%\val^{\sigma}(r^{1-j}_i) = \val^{\sigma}(o^{j}_{I(i)}) - \frac{T}{2} \ge
%\val^{\sigma}(c_{j}) - \frac{T}{2}. 
%\end{equation*}
We can use our two assumptions, along with Lemma~\ref{lem:argappeal} to argue
that the appeal of switching $o^j_i$ to $r^j_i$ is at most: 
\begin{align*}
p_{3} \cdot (\frac{3T}{2} + H_0) = \bl.
\end{align*}
\qed
\end{proof}

\section{Proof of Lemma~\ref{lem:stage2appeal}}
\label{app:stage2appeal}

Most of the claims made in this lemma follow from lemmas that we have shown
previously. In particular, we have:
\begin{itemize}
\item The first part of this lemma follows from Lemma~\ref{lem:lto1j}.
\item The second part of this lemma follows form Lemma~\ref{lem:second-two}.
\item The third part of this lemma follows from Lemma~\ref{lem:third-two}.
\item Part 4a of this lemma follows from Lemma~\ref{lem:fourth-two}.
\item The fifth claim of the lemma follows from Lemma~\ref{lem:five}.
\item The sixth claim of this lemma follows from Lemma~\ref{lem:six}.
\item The eight part of this lemma follows from Lemma~\ref{lem:nine}.
\item The tenth claim of this lemma follows from Lemma~\ref{lem:ten}.
\end{itemize}
We now proceed to prove the rest of the claims.
The next lemma shows part 4b of Lemma~\ref{lem:stage2appeal}. Observe that
Lemma~\ref{lem:gateupperj} proves that the precondition of this lemma holds in a
stage 2 transition policy.

\begin{lemma}
\label{lem:fourth-three}
Suppose that we are in phase $j$, let $i$ be an input bit, and let $\sigma$ be a
policy with $\sigma(r^{1-j}_i) = c_{1-j}$ and $\val^{\sigma}(o^j_{\inp(i)}) \le
\val^{\sigma}(c_{1-j}) + H_{d(C)}$. The appeal of switching $r^{1-j}_i$ to
$o^j_{\inp(i)}$ is strictly less than $0$.
\end{lemma}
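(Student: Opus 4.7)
The plan is to apply Lemma~\ref{lem:argappeal} twice---first to pin down $\val^{\sigma}(r^{1-j}_i)$, and then to bound the appeal of the competing action. Because $\sigma(r^{1-j}_i) = c_{1-j}$, I would apply the first claim of that lemma to the appeal reduction gadget at $r^{1-j}_i$ with parameters $r_d = 0$, $r_f = L_0$, $p = p_6$. This gives $\val^{\sigma}(r^{1-j}_i) = \val^{\sigma}(c_{1-j}) + L_0$. Subtracting this from the hypothesis $\val^{\sigma}(o^j_{\inp(i)}) \le \val^{\sigma}(c_{1-j}) + H_{d(C)}$ yields an upper bound of $H_{d(C)} - L_0$ on the quantity that will play the role of the constant $b$ in the appeal formula.

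I would then invoke the second claim of Lemma~\ref{lem:argappeal} on the other appeal reduction gadget at $r^{1-j}_i$, namely the one leading to $o^{j}_{\inp(i)}$ with parameters $r_d = 0$, $r_f = -T/2$, $p = p_7$. Plugging the bound above into the formula produces an upper bound on the appeal of the form $p_7 \cdot \bigl(H_{d(C)} - L_0 - T/2\bigr)$.

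The last step is a one-line magnitude check. Since $T = 3^{d(C)+6}$ dominates the $O(3^{d(C)+2})$ quantities $H_{d(C)}$ and $L_0$ by an exponential factor, the bracketed expression is comfortably negative, and multiplying by $p_7 > 0$ preserves the sign, giving the claimed strict inequality. I do not anticipate any real obstacle here; the proof is essentially bookkeeping---identifying which of the two gadgets at $r^{1-j}_i$ is being switched from and which is being switched to, and reading off the correct parameters from Figure~\ref{fig:input} and the definitions of $p_6$, $p_7$ and $T$. This closes out the stage-2 analysis in the same style as the proofs of the companion claims such as Lemma~\ref{lem:fourth-two} and Lemma~\ref{lem:third-two}.
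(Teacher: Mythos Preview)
Your proposal is correct and essentially identical to the paper's proof: the paper also computes $\val^{\sigma}(r^{1-j}_i) = \val^{\sigma}(c_{1-j}) + L_0$ from the $p_6$ gadget, combines it with the hypothesis on $\val^{\sigma}(o^j_{\inp(i)})$, and applies Lemma~\ref{lem:argappeal} to bound the appeal by $p_7 \cdot (H_{d(C)} - \frac{T}{2} - L_0) < 0$. The only cosmetic difference is that the paper does not spell out the magnitude comparison between $T$ and $H_{d(C)}$, simply asserting the sign.
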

\begin{proof}
We have:
\begin{equation*}
\val^{\sigma}(r^{1-j}_i) = \val^{\sigma}(c_{1-j}) + L_0.
\end{equation*}
On the other hand, we have by assumption:
\begin{equation*}
\val^{\sigma}(o^j_{\inp(i)}) \le \val^{\sigma}(c_{1-j}) + H_{d(C)}.
\end{equation*}
Thus, we can apply Lemma~\ref{lem:argappeal} to argue that the appeal of
switching $r^{1-j}_i$ to $o^j_{\inp(i)}$ is at most:
\begin{equation*}
p_7 \cdot (H_{d(C)}- \frac{T}{2} - L_0) < 0. 
\end{equation*}
\qed
\end{proof}

The following lemma shows part 7 of Lemma~\ref{lem:stage2appeal}. Observe that
Lemma~\ref{lem:gateupperj} proves that the precondition on the value of
$r^{1-j}_i$ holds, even if the action to $o^j_{\inp(i)}$ is chosen at
$r^{1-j}_i$.

\begin{lemma}
Suppose that we are in phase $j$, let $i$ be an input, and let $\sigma$ be a
policy with $\sigma(l^{1-j}_i) = c_{1-j}$ and $\val^{\sigma}(r^{1-j}_i) \le
\val^{\sigma}(c_{1-j}) + L_0$. If $\sigma(o^{1-j}_i) = l^{1-j}_i$, then the
appeal of switching $o^{1-j}_i$ to $r^{1-j}_i$ is strictly less than $0$.
\end{lemma}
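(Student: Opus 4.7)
The plan is a direct value computation of $\val^{\sigma}(o^{1-j}_i)$, followed by applying the definition of appeal on the deterministic action from $o^{1-j}_i$ to $r^{1-j}_i$.

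First I would compute $\val^{\sigma}(l^{1-j}_i)$. Recall from Appendix~\ref{app:construction} that the actions at $l^{1-j}_i$ include an appeal-reduction gadget to $c_{1-j}$ with parameters $r_d = 0$, probability $p_4$, and final reward $H_0$. Since $\sigma(l^{1-j}_i) = c_{1-j}$ by hypothesis, the first claim of Lemma~\ref{lem:argappeal} gives
\[
\val^{\sigma}(l^{1-j}_i) \;=\; \val^{\sigma}(c_{1-j}) + H_0 + \tfrac{0}{p_4} \;=\; \val^{\sigma}(c_{1-j}) + H_0.
\]

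Next I would compute $\val^{\sigma}(o^{1-j}_i)$. The action chosen at $o^{1-j}_i$ is the appeal-reduction gadget to $l^{1-j}_i$ with $r_d = 0$ and $r_f = 0$, so the first claim of Lemma~\ref{lem:argappeal} yields $\val^{\sigma}(o^{1-j}_i) = \val^{\sigma}(l^{1-j}_i) = \val^{\sigma}(c_{1-j}) + H_0$.

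Finally, the action from $o^{1-j}_i$ to $r^{1-j}_i$ is deterministic with reward $0$, so by the definition of appeal in~\eqref{eq:appeal},
\[
\appeal^{\sigma}(o^{1-j}_i \to r^{1-j}_i) \;=\; \val^{\sigma}(r^{1-j}_i) - \val^{\sigma}(o^{1-j}_i) \;\le\; (\val^{\sigma}(c_{1-j}) + L_0) - (\val^{\sigma}(c_{1-j}) + H_0) \;=\; L_0 - H_0,
\]
using the hypothesis $\val^{\sigma}(r^{1-j}_i) \le \val^{\sigma}(c_{1-j}) + L_0$. Since $H_0 = L_0 + b_0$ with $b_0 > 0$, this is strictly negative, which completes the argument. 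There is no real obstacle here: the whole claim reduces to the straightforward observation that, when the input bit is in output mode on the ``true'' side, the value at $o^{1-j}_i$ is $H_0$ above $c_{1-j}$, while the ``false'' side can reach at most $L_0 < H_0$ above $c_{1-j}$.
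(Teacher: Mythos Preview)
Your proof is correct and follows the same approach as the paper's: compute $\val^{\sigma}(o^{1-j}_i) = \val^{\sigma}(c_{1-j}) + H_0$ via the gadget parameters, then compare with the assumed upper bound on $\val^{\sigma}(r^{1-j}_i)$ to get appeal at most $L_0 - H_0 < 0$. The paper's version simply states $\val^{\sigma}(o^{1-j}_i) = \val^{\sigma}(c_{1-j}) + H_0$ without spelling out the two applications of Lemma~\ref{lem:argappeal}, so your write-up is slightly more explicit but otherwise identical.
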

\begin{proof}
We have:
\begin{equation*}
\val^{\sigma}(o^{1-j}_i) = \val^{\sigma}(c_{1-j}) + H_0.
\end{equation*}
By assumption, we have: 
\begin{equation*}
\val^{\sigma}(r^{1-j}_i) \le \val^{\sigma}(c_{1-j}) + L_0
\end{equation*}
Thus, the appeal of switching $o^{1-j}_i$ to $r^{1-j}_i$ is at most $L_0 - H_0 <
0$. \qed
\end{proof}

Part 8 of this lemma follows from Lemma~\ref{lem:eight}. Note that the
precondition on the value of $r^{1-j}_i$ is satisfied by
Lemma~\ref{lem:gateupperj} in the case where $\sigma(r^{1-j}_i) =
o^{1-j}_{\inp(i)}$, and the precondition is obviously satisfied when
$\sigma(r^{1-j}_i) = c_{1-j}$. The precondition on the value of $l^{1-j}_i$ is
satisfied because $\sigma(l^{1-j}_i) = c_{1-j}$ for all stage 2 transition
policies.

\section{Proof of Lemma~\ref{lem:stage3appeal}}
\label{app:stage3appeal}

A few of the claims made in this lemma follow from results that we have already
shown. In particular:
\begin{itemize}
\item Part three of Lemma~\ref{lem:stage3appeal} follows from Lemma~\ref{lem:third-two}.
\item The ninth part of this lemma follows from Lemma~\ref{lem:nine}.
\item The tenth claim of this lemma follows from Lemma~\ref{lem:ten}.
\end{itemize}
We now proceed to prove the remaining claims.
Part 1a of this lemma follows from Lemma~\ref{lem:lto1j}. In the following
lemma, we show part 1b of Lemma~\ref{lem:stage3appeal}.
\begin{lemma}
\label{lem:first-two}
Suppose that we are in phase $j$, and let $\sigma$ be a policy with 
$\sigma(l^j_i) = c_{1-j}$. The appeal of switching $l^j_i$ to
$c_j$ is strictly less than $0$.
\end{lemma}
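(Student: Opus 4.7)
The plan is a direct calculation using Lemma~\ref{lem:argappeal} twice. Since $\sigma(l^j_i) = c_{1-j}$, I would first apply part~1 of Lemma~\ref{lem:argappeal} with parameters $r_d = 0$, $r_f = -\frac{T}{2} + \frac{H_{d(C)} + L_{d(C)}}{2}$, and $p = p_5$ to obtain
\begin{equation*}
\val^{\sigma}(l^j_i) = \val^{\sigma}(c_{1-j}) - \frac{T}{2} + \frac{H_{d(C)} + L_{d(C)}}{2}.
\end{equation*}
Using the hypothesis that we are in phase $j$, which gives $\val^{\sigma}(c_{1-j}) = \val^{\sigma}(c_j) + T$, this simplifies to $\val^{\sigma}(l^j_i) = \val^{\sigma}(c_j) + \frac{T}{2} + \frac{H_{d(C)} + L_{d(C)}}{2}$.

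Next, I would apply part~2 of Lemma~\ref{lem:argappeal} to the action from $l^j_i$ to $c_j$, which has parameters $r_d = 0$, $r_f = H_0$, and $p = p_4$. Setting $b = \val^{\sigma}(c_j) - \val^{\sigma}(l^j_i) = -\frac{T}{2} - \frac{H_{d(C)} + L_{d(C)}}{2}$, the appeal equals
\begin{equation*}
p_4 \cdot \Bigl(-\frac{T}{2} - \frac{H_{d(C)} + L_{d(C)}}{2} + H_0\Bigr).
\end{equation*}
Since $p_4 > 0$, it remains to verify that the expression in parentheses is strictly negative. This follows because $\frac{T}{2} = \frac{1}{2} \cdot 3^{d(C)+6}$ dominates $H_0 = 3^{d(C)+2}$ by several orders of magnitude, so already $\frac{T}{2} > H_0$, and the additional $-\frac{H_{d(C)} + L_{d(C)}}{2}$ term only makes things more negative.

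I expect no real obstacle here: the lemma is essentially a one-line unfolding of Lemma~\ref{lem:argappeal}, combined with the phase~$j$ identity $\val^{\sigma}(c_{1-j}) - \val^{\sigma}(c_j) = T$ and a trivial magnitude comparison. The only care needed is tracking the sign convention, namely that the first application of Lemma~\ref{lem:argappeal} places $l^j_i$ \emph{above} $c_{1-j}$ by $\frac{H_{d(C)} + L_{d(C)}}{2} - \frac{T}{2}$, which is very negative, so relative to $c_j$ the state $l^j_i$ ends up \emph{above} $c_j$ by roughly $\frac{T}{2}$; switching to $c_j$ therefore ``costs'' about $\frac{T}{2}$, dwarfing the $+H_0$ reward on that action.
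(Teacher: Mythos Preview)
Your proposal is correct and follows essentially the same approach as the paper: compute $\val^{\sigma}(l^j_i)$ from part~1 of Lemma~\ref{lem:argappeal}, use the phase-$j$ identity $\val^{\sigma}(c_{1-j}) = \val^{\sigma}(c_j) + T$, and then apply part~2 of Lemma~\ref{lem:argappeal} to the action toward $c_j$. Your version is in fact slightly more careful than the paper's, which displays the appeal with the wrong probability ($p_5$ rather than your correct $p_4$) and a sign slip in the parenthetical; your computation $p_4 \cdot \bigl(-\tfrac{T}{2} - \tfrac{H_{d(C)}+L_{d(C)}}{2} + H_0\bigr) < 0$ is the right one.
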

\begin{proof}
We have: 
\begin{equation*}
\val^{\sigma}(l^j_i) = \val^{\sigma}(c_{1-j}) - \frac{T}{2} + \frac{H_{d(C)} +
L_{d(C)}}{2}.
\end{equation*}
On the other hand, we have:
\begin{equation*}
\val^{\sigma}(c_j) = \val^{\sigma}(c_{1-j}) - T.
\end{equation*}
Thus, we can apply Lemma~\ref{lem:argappeal} to argue that the appeal of
switching $l^j_i$ to $c_j$ is:
\begin{equation*}
p_{5} \cdot (\frac{T}{2} + H_0 - \frac{H_{d(C)} + L_{d(C)}}{2}) < 0.
\end{equation*}
\qed
\end{proof}

The following lemma shows part 2 of Lemma~\ref{lem:stage3appeal}
\begin{lemma}
\label{lem:second-three}
Suppose that we are in phase $j$, and let $\sigma$ be a policy with
$\sigma(l^{1-j}_{i'}) = \sigma(r^{1-j}_{i'}) = c_{1-j}$ for every input bit
$i'$. For each input bit $i$, if $\sigma(r^j_i) = c_j$, then the appeal of
switching $r^j_i$ to $o^{1-j}_{\inp(i)}$ is at most \ro.
\end{lemma}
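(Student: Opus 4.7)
The plan is to compute an upper bound on $\val^\sigma(o^{1-j}_{\inp(i)})$, a precise value for $\val^\sigma(r^j_i)$, and then apply the second part of Lemma~\ref{lem:argappeal} to the appeal reduction gadget between $r^j_i$ and $o^{1-j}_{\inp(i)}$ (which has parameters $r_d = 0$, $p = p_7$, and $r_f = -\frac{T}{2}$). Since the probability $p_7$ was chosen precisely so that $p_7 \cdot (\frac{T}{2} + H_{d(C)} - L_0) = \ro$, the bound will come out exactly as stated.

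First I would compute $\val^\sigma(r^j_i)$. Since $\sigma(r^j_i) = c_j$ via the appeal reduction gadget with reward $L_0$ and $r_d = 0$, the first claim of Lemma~\ref{lem:argappeal} immediately yields $\val^\sigma(r^j_i) = \val^\sigma(c_j) + L_0$. Next I would upper-bound $\val^\sigma(o^{1-j}_{\inp(i)})$. The hypothesis that $\sigma(l^{1-j}_{i'}) = \sigma(r^{1-j}_{i'}) = c_{1-j}$ for every input bit $i'$ is exactly the assumption of Lemma~\ref{lem:gateupperfinal}, which gives $\val^\sigma(o^{1-j}_{\inp(i)}) \le \val^\sigma(c_{1-j}) + H_{d(C)}$ (using that $\inp(i)$ is an output gate of $C$, hence has depth $d(C)$). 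Since we are in phase $j$, $\val^\sigma(c_{1-j}) = \val^\sigma(c_j) + T$, so
\begin{equation*}
\val^\sigma(o^{1-j}_{\inp(i)}) \le \val^\sigma(c_j) + T + H_{d(C)}.
\end{equation*}

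Finally, setting $b = \val^\sigma(o^{1-j}_{\inp(i)}) - \val^\sigma(r^j_i)$, the above estimates yield $b \le T + H_{d(C)} - L_0$. The second part of Lemma~\ref{lem:argappeal}, applied to the gadget between $r^j_i$ and $o^{1-j}_{\inp(i)}$, shows that the appeal of switching $r^j_i$ to $o^{1-j}_{\inp(i)}$ equals $p_7 \cdot (b + r_f) = p_7 \cdot (b - \frac{T}{2})$, which is at most $p_7 \cdot (\frac{T}{2} + H_{d(C)} - L_0) = \ro$ by the definition of $p_7$.

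There is no real obstacle here: the lemma is essentially a routine verification that the probability $p_7$ was calibrated correctly, and the only non-trivial ingredient is the upper bound on $\val^\sigma(o^{1-j}_{\inp(i)})$, which is handed to us by Lemma~\ref{lem:gateupperfinal} because of the hypothesis on the input-bit states of circuit $1-j$.
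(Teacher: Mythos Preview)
Your proof is correct and follows essentially the same approach as the paper: compute $\val^\sigma(r^j_i) = \val^\sigma(c_j) + L_0$, invoke Lemma~\ref{lem:gateupperfinal} to bound $\val^\sigma(o^{1-j}_{\inp(i)}) \le \val^\sigma(c_{1-j}) + H_{d(C)}$, and then apply Lemma~\ref{lem:argappeal} to obtain the appeal bound $p_7 \cdot (\tfrac{T}{2} + H_{d(C)} - L_0) = \ro$. Your write-up is in fact slightly more explicit than the paper's, spelling out the phase relation $\val^\sigma(c_{1-j}) = \val^\sigma(c_j) + T$ and noting that $\inp(i)$ has depth $d(C)$, but the argument is the same.
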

\begin{proof}
We have:
\begin{equation*}
\val^{\sigma}(r^j_i) = \val^{\sigma}(c_j) + L_0.
\end{equation*}
On the other hand, by Lemma~\ref{lem:gateupperfinal}, we have:
\begin{equation*}
\val^{\sigma}(o^{1-j}_{\inp(i)}) \le \val^{\sigma}(c_{1-j}) + H_{d(C)}.
\end{equation*}
Thus, we can apply Lemma~\ref{lem:argappeal} to argue that the appeal of
switching $r^j_i$ to $o^{1-j}_{\inp(i)}$ is at most:
\begin{equation*}
p_{7} \cdot (\frac{T}{2} + H_{d(C)} - L_0) = \ro.
\end{equation*}
\end{proof}

Part four of Lemma~\ref{lem:stage3appeal} follows from
Lemma~\ref{lem:fourth-three}. Note that the precondition on the value of
$o^{1-j}_{\inp(i)}$ is satisfied due to Lemma~\ref{lem:gateupperfinal}.

We now consider part 5 of Lemma~\ref{lem:stage3appeal}. If $\sigma(l^j_i) = c_j$
for input bit $i$, then our upper bound follows from Lemma~\ref{lem:five}.
The following lemma proves part 5 of Lemma~\ref{lem:stage3appeal} for the case
where $\sigma(l^j_i) = c_{1-j}$.
\begin{lemma}
Suppose that we are in phase $j$, and let $\sigma$ be a policy with
$\sigma(l^j_i) = c_{1-j}$ and $\sigma(r^j_i) = c_j$ and $\sigma(o^j_i) = l^j_i$
for some input bit $i$. The appeal of switching $o^j_i$ to $r^j_i$ is strictly
less than $0$. 
\end{lemma}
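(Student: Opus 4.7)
The plan is a direct computation of the appeal using Lemma~\ref{lem:argappeal}. The key observation is that when $\sigma(l^j_i) = c_{1-j}$ the value of $l^j_i$ becomes large (close to the value of $c_{1-j}$, which is $T$ above $c_j$ in phase $j$), and since $\sigma(o^j_i) = l^j_i$ this large value is inherited by $o^j_i$, while the alternative target $r^j_i$ sits only slightly above $c_j$. The gap $T/2$ then dwarfs everything and makes the appeal negative.

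In more detail, I would first compute $\val^{\sigma}(l^j_i)$. The action from $l^j_i$ to $c_{1-j}$ is an appeal reduction gadget with $r_d = 0$, probability $p_5$, and $r_f = -\frac{T}{2} + \frac{H_{d(C)} + L_{d(C)}}{2}$, so by the first part of Lemma~\ref{lem:argappeal},
\begin{equation*}
\val^{\sigma}(l^j_i) = \val^{\sigma}(c_{1-j}) - \frac{T}{2} + \frac{H_{d(C)} + L_{d(C)}}{2}.
\end{equation*}
Since we are in phase $j$ we have $\val^{\sigma}(c_{1-j}) = \val^{\sigma}(c_j) + T$, giving
\begin{equation*}
\val^{\sigma}(l^j_i) = \val^{\sigma}(c_j) + \frac{T}{2} + \frac{H_{d(C)} + L_{d(C)}}{2}.
\end{equation*}
Next, since $\sigma(o^j_i) = l^j_i$ through the appeal reduction gadget with $r_d = r_f = 0$, Lemma~\ref{lem:argappeal} yields $\val^{\sigma}(o^j_i) = \val^{\sigma}(l^j_i)$.

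For the target, since $\sigma(r^j_i) = c_j$ via the gadget with parameters $r_d = 0$, $p = p_6$, $r_f = L_0$, Lemma~\ref{lem:argappeal} gives $\val^{\sigma}(r^j_i) = \val^{\sigma}(c_j) + L_0$. The action from $o^j_i$ to $r^j_i$ is deterministic with reward $0$, so its appeal in $\sigma$ is
\begin{equation*}
\val^{\sigma}(r^j_i) - \val^{\sigma}(o^j_i) = L_0 - \frac{T}{2} - \frac{H_{d(C)} + L_{d(C)}}{2}.
\end{equation*}
Finally, since $T = 3^{d(C)+6}$ while $H_{d(C)} + L_{d(C)} \le 4 \cdot 3^{d(C)+2}$ and $L_0 \le 3^{d(C)+2}$, the term $-T/2$ dominates and the expression is strictly negative, completing the proof.

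There is no real obstacle here; the statement is essentially a bookkeeping step that is dual to Lemma~\ref{lem:five}, but for the case where $l^j_i$ has already been switched to point at $c_{1-j}$, so its value jumps up by roughly $T$ and pulls $o^j_i$'s value far above the option $r^j_i$.
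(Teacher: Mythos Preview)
Your proof is correct and follows exactly the same approach as the paper: compute $\val^{\sigma}(o^j_i)$ via $l^j_i$ and the phase relation, compute $\val^{\sigma}(r^j_i)$ directly, and take the difference to get $L_0 - \frac{T}{2} - \frac{H_{d(C)}+L_{d(C)}}{2} < 0$. Your version is slightly more explicit in justifying the final inequality (the paper simply asserts it), but the argument is identical.
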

\begin{proof}
We have:
\begin{align*}
\val^{\sigma}(o^j_i) &= \val^{\sigma}(c_{1-j}) - \frac{T}{2} + \frac{H_{d(C)} + 
L_{d(C)}}{2} \\
&= \val^{\sigma}(c_{j}) + \frac{T}{2} + \frac{H_{d(C)} + L_{d(C)}}{2} \\
\end{align*}
On the other hand, we have:
\begin{align*}
\val^{\sigma}(r^j_i) = \val^{\sigma}(c_j) + L_0
\end{align*}
Thus, the appeal of switching $o^j_i$ to $r^j_i$ is:
\begin{equation*}
L_0 - (\frac{T}{2} + \frac{H_{d(C)} + L_{d(C)}}{2}) < 0.
\end{equation*}
\end{proof}

Part 6a of Lemma~\ref{lem:stage3appeal} follows from Lemma~\ref{lem:six}. The
following lemma proves part 6b of Lemma~\ref{lem:stage3appeal}.

\begin{lemma} 
Suppose that we are in phase $j$, and let $\sigma$ be a policy with
$\sigma(r^j_i) = c_j$, $\sigma(l^j_i) = c_{1-j}$, and $\sigma(o^j_i) = r^j_i$
for some input bit $i$. The appeal of switching $o^j_i$ to $l^j_i$ lies in the
range  $[\dnc, \dncu]$.
\end{lemma}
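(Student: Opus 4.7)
The plan is to compute the two values $\val^{\sigma}(o^j_i)$ and $\val^{\sigma}(l^j_i)$ directly from the policy's choices using Lemma~\ref{lem:argappeal}, then apply that lemma a third time to derive the appeal of switching $o^j_i$ to $l^j_i$, and finally numerically bound the resulting expression using the magnitudes of $T$, $H_0$, and $H_{d(C)}$.

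First I would compute $\val^{\sigma}(o^j_i)$. Since $\sigma(o^j_i) = r^j_i$ is taken via the direct deterministic action of reward $0$, we have $\val^{\sigma}(o^j_i) = \val^{\sigma}(r^j_i)$. Because $\sigma(r^j_i) = c_j$ uses the appeal reduction gadget $\gadget(r^j_i, c_j, 0, p_6, L_0)$, the first part of Lemma~\ref{lem:argappeal} yields $\val^{\sigma}(r^j_i) = \val^{\sigma}(c_j) + L_0$, and thus $\val^{\sigma}(o^j_i) = \val^{\sigma}(c_j) + L_0$. Next, since $\sigma(l^j_i) = c_{1-j}$ uses the appeal reduction gadget $\gadget(l^j_i, c_{1-j}, 0, p_5, -\frac{T}{2} + \frac{H_{d(C)} + L_{d(C)}}{2})$, the first part of Lemma~\ref{lem:argappeal} gives
\begin{equation*}
\val^{\sigma}(l^j_i) = \val^{\sigma}(c_{1-j}) - \frac{T}{2} + \frac{H_{d(C)} + L_{d(C)}}{2}.
\end{equation*}
Using the phase $j$ identity $\val^{\sigma}(c_{1-j}) = \val^{\sigma}(c_j) + T$, this simplifies to $\val^{\sigma}(l^j_i) = \val^{\sigma}(c_j) + \frac{T}{2} + \frac{H_{d(C)} + L_{d(C)}}{2}$.

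Next I would apply the second part of Lemma~\ref{lem:argappeal} to the appeal reduction gadget $\gadget(o^j_i, l^j_i, 0, p_3, 0)$. With $b = \val^{\sigma}(l^j_i) - \val^{\sigma}(o^j_i) = \frac{T}{2} + \frac{H_{d(C)} + L_{d(C)}}{2} - L_0$ and $r_d = r_f = 0$, the appeal of switching $o^j_i$ to $l^j_i$ equals
\begin{equation*}
p_3 \cdot \left(\frac{T}{2} + \frac{H_{d(C)} + L_{d(C)}}{2} - L_0\right)
= \bl \cdot \frac{\frac{T}{2} + \frac{H_{d(C)} + L_{d(C)}}{2} - L_0}{\frac{3T}{2} + H_0}.
\end{equation*}

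Finally I would bound this expression. Since $L_0 = 0$, $H_0 = 3^{d(C)+2}$, $T = 3^{d(C)+6} = 81 H_0$, and $0 \le L_{d(C)} \le H_{d(C)} \le 2 \cdot 3^{d(C)+2} = 2 H_0$, the numerator lies in the interval $[\frac{T}{2}, \frac{T}{2} + 2 H_0] = [\frac{81}{2} H_0, \frac{85}{2} H_0]$ while the denominator equals $\frac{245}{2} H_0$. Hence the appeal lies in $\bigl[\bl \cdot \tfrac{81}{245}, \bl \cdot \tfrac{85}{245}\bigr] \subseteq [1.025, 1.076] \subseteq [\dnc, \dncu]$, as required. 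This step is routine once the constants have been assembled; the only care needed is in verifying that the upper bound on $H_{d(C)} + L_{d(C)}$ from Section~\ref{sec:overview} is strong enough to keep the appeal below $\dncu = 1.1$, which the computation above confirms comfortably.
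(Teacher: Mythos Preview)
Your proposal is correct and follows essentially the same approach as the paper: compute $\val^{\sigma}(o^j_i)$ and $\val^{\sigma}(l^j_i)$ from the gadget parameters via Lemma~\ref{lem:argappeal}, derive the appeal as $p_3 \cdot \bigl(\tfrac{T}{2} + \tfrac{H_{d(C)}+L_{d(C)}}{2} - L_0\bigr)$, and bound it numerically using $T = 81 H_0$ and $H_{d(C)} \le 2H_0$. Your normalization in units of $H_0$ is slightly cleaner than the paper's, but the argument is identical.
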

\begin{proof}
We have:
\begin{equation*}
\val^{\sigma}(o^j_i) = \val^{\sigma}(c_j) + L_0.
\end{equation*}
On the other hand, we have:
\begin{align*}
\val^{\sigma}(l^j_i) &= \val^{\sigma}(c_{1-j}) -\frac{T}{2} + \frac{H_{d(C)} + L_{d(C)}}{2}. \\
&= \val^{\sigma}(c_j) + \frac{T}{2} + \frac{H_{d(C)} + L_{d(C)}}{2}. \\
\end{align*}
Thus, we can apply Lemma~\ref{lem:argappeal} to argue that the appeal of
switching $o^j_i$ to $l^j_i$ is:
\begin{align}
\nonumber
&p_{3} \cdot (\frac{T}{2} + \frac{H_{d(C)} + L_{d(C)}}{2} - L_0) \\
\nonumber
&= p_{3} \cdot (\frac{T}{2} + \frac{H_{d(C)} + L_{d(C)}}{2} )  & [L_0 = 0] \\
\label{eqn:bitchange}
&= \frac{\bl \cdot (\frac{T}{2} + \frac{H_{d(C)} + L_{d(C)}}{2} )}{\frac{3T}{2}
+ H_0}.
\end{align}
For our lower bound, note that Equation~\eqref{eqn:bitchange} is greater than:
\begin{align*}
&\frac{\bl \cdot \frac{T}{2}}{\frac{3T}{2} + H_0}. \\ 
&=\frac{\bl \cdot \frac{1}{2} \cdot 3^{d(C) + 6}}{\frac{3}{2} \cdot 3^{d(C) + 6} + 3^{d(C) + 2}}. \\ 
&=\frac{\bl \cdot \frac{1}{2} \cdot 3^{4}}{\frac{3}{2} \cdot 3^{4} + 1}. \\ 
&=\frac{125.55}{122.5} \\
&> \dnc
\end{align*}
For our upper bound, note that Equation~\eqref{eqn:bitchange} is less than:
\begin{align*}
&= \frac{\bl \cdot (\frac{T}{2} + 2 \cdot 3^{d(C) + 2})}{\frac{3T}{2}
+ H_0}. \\
&= \frac{\bl \cdot (\frac{1}{2} \cdot 3^{d(C) + 6} + 2 \cdot 3^{d(C) +
2})}{\frac{3}{2} \cdot 3^{d(C) + 6} + 3^{d(C) + 2}}. \\
&= \frac{\bl \cdot (\frac{1}{2} \cdot 3^{4} + 2)}{\frac{3}{2} \cdot 3^{4} + 1}. \\
&= \frac{128.65}{122.5} \\
&\le \dncu
\end{align*}
\qed
\end{proof}

The following Lemma proves part 7 of Lemma~\ref{lem:stage3appeal}.

\begin{lemma}
\label{lem:seven-three}
Suppose that we are in phase $j$, and let $\sigma$ be a policy with
$\sigma(l^{1-j}_i) = \sigma(r^{1-j}_i) = c_{1-j}$ and with $\sigma(o^{1-j}_i) =
l^{1-j}_i$ for some input bit $i$. The appeal of switching $o^{1-j}_i$ to
$r^{1-j}_i$ is strictly less than~$0$.
\end{lemma}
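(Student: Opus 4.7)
The plan is to compute the values of $o^{1-j}_i$ and $r^{1-j}_i$ explicitly under the given policy, and then observe that the appeal of switching the direct action from $o^{1-j}_i$ to $r^{1-j}_i$ is simply the difference, which will turn out to be $L_0 - H_0 < 0$. This is essentially the ``mirror'' of Lemma~\ref{lem:five}, but applied in circuit $1-j$ relative to the clock state $c_{1-j}$.

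First, since $\sigma(l^{1-j}_i) = c_{1-j}$, the gadget at $l^{1-j}_i$ with parameters $r_d = 0$, $p_4$, $r_f = H_0$ yields, by the first part of Lemma~\ref{lem:argappeal}, that $\val^{\sigma}(l^{1-j}_i) = \val^{\sigma}(c_{1-j}) + H_0$. Next, since $\sigma(o^{1-j}_i) = l^{1-j}_i$ via the gadget $\gadget(o^{1-j}_i, l^{1-j}_i, 0, p_3, 0)$ with both rewards zero, Lemma~\ref{lem:argappeal} gives $\val^{\sigma}(o^{1-j}_i) = \val^{\sigma}(l^{1-j}_i) = \val^{\sigma}(c_{1-j}) + H_0$. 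Similarly, since $\sigma(r^{1-j}_i) = c_{1-j}$ via the gadget with $r_f = L_0$, we obtain $\val^{\sigma}(r^{1-j}_i) = \val^{\sigma}(c_{1-j}) + L_0$.

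The action from $o^{1-j}_i$ to $r^{1-j}_i$ is a direct deterministic action with reward $0$, so its appeal is
\begin{equation*}
\val^{\sigma}(r^{1-j}_i) - \val^{\sigma}(o^{1-j}_i) = L_0 - H_0 < 0,
\end{equation*}
where the inequality follows from $L_0 = 0$ and $H_0 = b_0 > 0$. This completes the proof. There is no real obstacle here; the step to double-check is just that the action between $o^{1-j}_i$ and $r^{1-j}_i$ is a direct action (not wrapped in an appeal reduction gadget, unlike the action to $l^{1-j}_i$), which is clear from the definition of the input bit gadget in Section~4.2 and in Appendix~\ref{app:construction}.
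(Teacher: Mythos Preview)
Your proof is correct and follows essentially the same approach as the paper: compute $\val^{\sigma}(o^{1-j}_i) = \val^{\sigma}(c_{1-j}) + H_0$ and $\val^{\sigma}(r^{1-j}_i) = \val^{\sigma}(c_{1-j}) + L_0$, then note that the direct action has appeal $L_0 - H_0 < 0$. You simply spell out the intermediate step through $l^{1-j}_i$ via Lemma~\ref{lem:argappeal} a bit more explicitly than the paper does.
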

\begin{proof}
We have:
\begin{equation*}
\val^{\sigma}(o^{1-j}_i) = \val^{\sigma}(c_{1-j}) + H_0.
\end{equation*}
On the other hand, we have:
\begin{equation*}
\val^{\sigma}(r^{1-j}_i) = \val^{\sigma}(c_{1-j}) + L_0.
\end{equation*}
Thus, the appeal of switching $o^{1-j}_i$ to $r^{1-j}_i$ is $L_0 - H_0 < 0.$
\qed
\end{proof}

%Part eight of Lemma~\ref{lem:stage3appeal} follows from Lemma~\ref{lem:eight}.
%Note that both preconditions of this lemma are satisfied: the precondition on
%$l^{1-j}_i$ is satisfied because $\sigma(l^{1-j}_i) = c_{1-j}$, and the
%precondition on $r^{1-j}_i$ is satisfied because
%$\sigma(r^{1-j}_i) = c_{1-j}$, and because $\val^{\sigma}(c_{1-j}) =
%\val^{\sigma}(c_j) + T$.

The following Lemma proves Part 8 of Lemma~\ref{lem:stage3appeal}.

\begin{lemma}
\label{lem:eight-two}
Suppose that we are in phase $j$, and let $\sigma$ be a policy with
$\sigma(l^{1-j}_{i'}) = \sigma(r^{1-j}_{i'}) = c_{1-j}$ for every input bit
$i'$. If $\sigma(o^{1-j}_i) = r^{1-j}_i$ for some input bit $i$, then the appeal
of switching $o^{1-j}_i$ to $r^{1-j}_i$ is strictly less than $\magic$.
\end{lemma}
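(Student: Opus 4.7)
The statement as written asks for an upper bound on the appeal of switching $o^{1-j}_i$ to $l^{1-j}_i$ (the only non-trivial switch available at $o^{1-j}_i$ once it already points to $r^{1-j}_i$). The plan is to compute $\val^{\sigma}(o^{1-j}_i)$ and $\val^{\sigma}(l^{1-j}_i)$ exactly, then apply Lemma~\ref{lem:argappeal} to the gadget $\gadget(o^{1-j}_i, l^{1-j}_i, 0, p_3, 0)$, and finally plug in the definition of $p_3$ to show that the resulting appeal is well below the magic constant.

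Concretely, first I would observe that since $\sigma(r^{1-j}_{i}) = c_{1-j}$, the appeal reduction gadget $\gadget(r^{1-j}_i, c_{1-j}, 0, p_6, L_0)$ together with the first part of Lemma~\ref{lem:argappeal} give $\val^{\sigma}(r^{1-j}_i) = \val^{\sigma}(c_{1-j}) + L_0 = \val^{\sigma}(c_{1-j})$ (recall $L_0 = 0$). Since the action from $o^{1-j}_i$ to $r^{1-j}_i$ is deterministic with reward $0$ and $\sigma(o^{1-j}_i) = r^{1-j}_i$, we get $\val^{\sigma}(o^{1-j}_i) = \val^{\sigma}(c_{1-j})$. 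Similarly, $\sigma(l^{1-j}_i) = c_{1-j}$ together with the gadget $\gadget(l^{1-j}_i, c_{1-j}, 0, p_4, H_0)$ yields $\val^{\sigma}(l^{1-j}_i) = \val^{\sigma}(c_{1-j}) + H_0$. Hence the difference $b := \val^{\sigma}(l^{1-j}_i) - \val^{\sigma}(o^{1-j}_i) = H_0$, and the second part of Lemma~\ref{lem:argappeal} (with $r_d = r_f = 0$) gives appeal exactly $p_3 \cdot H_0$.

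It then remains to check the numerical inequality $p_3 \cdot H_0 < \magic$. Substituting $p_3 = \frac{\bl}{\frac{3T}{2} + H_0}$ and using $T = 3^{d(C)+6}$, $H_0 = 3^{d(C)+2}$, the ratio simplifies to $\frac{\bl}{\tfrac{3}{2}\cdot 3^4 + 1} = \frac{3.1}{122.5}$, which is well under $\magic = 0.06$. This is entirely analogous to (and in fact a tighter version of) the bound derived in Lemma~\ref{lem:six}.

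There is no real obstacle here: the only mild subtlety is ensuring that we use the phase-$j$ instance of the input-bit gadget correctly, so that the roles of $c_j$ and $c_{1-j}$ in Figure~\ref{fig:input} are swapped when we instantiate for circuit $1-j$. Once that indexing is right, the proof reduces to one application of each part of Lemma~\ref{lem:argappeal} plus a short numerical estimate.
\qed
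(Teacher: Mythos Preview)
Your proposal is correct and follows essentially the same approach as the paper: compute $\val^{\sigma}(o^{1-j}_i)=\val^{\sigma}(c_{1-j})+L_0$ and $\val^{\sigma}(l^{1-j}_i)=\val^{\sigma}(c_{1-j})+H_0$, apply Lemma~\ref{lem:argappeal} to the $p_3$-gadget to obtain appeal $p_3\cdot(H_0-L_0)$, and then bound this numerically below $\magic$. The only cosmetic difference is that the paper loosens the estimate by dropping $H_0$ from the denominator before simplifying, whereas you keep it and get the slightly sharper fraction $\tfrac{3.1}{122.5}$; both arguments are the same in substance.
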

\begin{proof}
We have:
\begin{equation*}
\val^{\sigma}(o^{1-j}_i) = \val^{\sigma}(c_{1-j}) + L_0.
\end{equation*}
On the other hand, we have:
\begin{equation*}
\val^{\sigma}(l^{1-j}_i) = \val^{\sigma}(c_{1-j}) + H_0.
\end{equation*}
Thus we can apply Lemma~\ref{lem:argappeal} to argue that the appeal of
switching $o^{1-j}_i$ to $r^{1-j}_i$ is at most:
\begin{align*}
&p_3 \cdot (H_0 - L_0) \\
&= \frac{\bl \cdot (H_0 - L_0)}{\frac{3T}{2} + H_0} \\
&\le \frac{\bl \cdot H_0}{\frac{3T}{2}} \\
&\le \frac{\bl \cdot 2 \cdot 3^{d(C)+2}}{\frac{3}{2} \cdot 3^{d(C)+6}} \\
&= \frac{\bl \cdot 2}{121.5} \\
&<\magic.
\end{align*}
\end{proof}

\section{Proof of Lemma~\ref{lem:stage4appeal}}
\label{app:stage4appeal}

A number of the claims made in this lemma follow from results that we have
already shown. In particular, we have:
\begin{itemize}
\item Part 1 of Lemma~\ref{lem:stage4appeal} follows from Lemma~\ref{lem:first-two}.
\item Part 3 of Lemma~\ref{lem:stage4appeal} follows from Lemma~\ref{lem:third-two}.
\item Part 6 of Lemma~\ref{lem:stage4appeal} follows from Lemma~\ref{lem:seven-three}.
\item Part 7 of Lemma~\ref{lem:stage4appeal} follows from Lemma~\ref{lem:eight-two}.
\end{itemize}

Part 2a of Lemma~\ref{lem:stage4appeal} follows from
Lemma~\ref{lem:second-three}. The following lemma proves part 2b of
Lemma~\ref{lem:stage4appeal}.
\begin{lemma}
Suppose that we are in phase $j$, and let $\sigma$ be a policy with 
$\sigma(r^j_i) = c_j$ for some gate $i$, and for every gate $i$':
\begin{itemize}
\item If $i'$ is an \org gate then we have $\sigma(x^{1-j}_{i'}) = c_{1-j}$.
\item If $i'$ is a \notg gate then we have $\sigma(a^{1-j}_{i'}) = c_{1-j}$.
\end{itemize} 
The appeal of switching $r^j_i$ to $o_{\inp(i)}$ is at least \rol.
\end{lemma}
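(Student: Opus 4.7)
The plan is to compute the appeal directly via Lemma~\ref{lem:argappeal}, using a lower bound on $\val^{\sigma}(o^{1-j}_{\inp(i)})$ supplied by Lemma~\ref{lem:gatelowerfinal}. The hypothesis of the enclosing Lemma~\ref{lem:stage4appeal} is that $\sigma$ is a stage~$4$ transition policy, so in particular $\sigma(l^{1-j}_{i'}) = \sigma(r^{1-j}_{i'}) = c_{1-j}$ for every input bit~$i'$; combined with the two explicit assumptions that $\sigma(x^{1-j}_{i'}) = c_{1-j}$ for every \org gate~$i'$ and $\sigma(a^{1-j}_{i'}) = c_{1-j}$ for every \notg gate~$i'$, these are exactly the preconditions of Lemma~\ref{lem:gatelowerfinal}. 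Hence $\val^{\sigma}(o^{1-j}_{\inp(i)}) \ge \val^{\sigma}(c_{1-j})$.

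Next, since $\sigma(r^j_i)=c_j$ we have $\val^{\sigma}(r^j_i) = \val^{\sigma}(c_j) + L_0 = \val^{\sigma}(c_j)$, using that $L_0 = 0$ by definition. Because we are in phase~$j$, $\val^{\sigma}(c_{1-j}) = \val^{\sigma}(c_j) + T$, so the ``gap'' parameter in Lemma~\ref{lem:argappeal} satisfies
$$b \;=\; \val^{\sigma}(o^{1-j}_{\inp(i)}) - \val^{\sigma}(r^j_i) \;\ge\; \val^{\sigma}(c_{1-j}) - \val^{\sigma}(c_j) \;=\; T.$$
The action from $r^j_i$ to $o^{1-j}_{\inp(i)}$ is an appeal reduction gadget with parameters $r_d = 0$, $r_f = -T/2$, $p = p_7$, so Lemma~\ref{lem:argappeal} yields
$$\appeal^{\sigma}(a) \;\ge\; p_7 \cdot (T - T/2) \;=\; p_7 \cdot \tfrac{T}{2}.$$

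It remains to verify that $p_7 \cdot T/2 \ge \rol$. Substituting $p_7 = \ro/(T/2 + H_{d(C)} - L_0) = \ro/(T/2 + H_{d(C)})$ gives
$$\appeal^{\sigma}(a) \;\ge\; \ro \cdot \frac{T/2}{T/2 + H_{d(C)}}.$$
Using $T = 3^{d(C)+6}$ and $H_{d(C)} \le 2 \cdot 3^{d(C)+2}$, the ratio $\frac{T/2}{T/2+H_{d(C)}}$ is at least $\frac{3^6/2}{3^6/2 + 2 \cdot 3^2} = \frac{364.5}{382.5}$, and multiplying by $\ro = 0.8$ yields a value just barely exceeding $\rol = 0.76$. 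This final arithmetic verification is the only delicate step, since the bound $\rol$ appears to have been tuned to be just below this quantity; everything else is a direct application of the two lemmas cited above.
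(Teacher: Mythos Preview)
Your proof is correct and follows essentially the same route as the paper: apply Lemma~\ref{lem:gatelowerfinal} to lower-bound $\val^{\sigma}(o^{1-j}_{\inp(i)})$ by $\val^{\sigma}(c_{1-j})$, plug into Lemma~\ref{lem:argappeal}, and finish with the numerical estimate $\ro \cdot \frac{T/2}{T/2 + H_{d(C)}} > \rol$. You are in fact slightly more careful than the paper in spelling out that the input-bit conditions $\sigma(l^{1-j}_{i'}) = \sigma(r^{1-j}_{i'}) = c_{1-j}$ required by Lemma~\ref{lem:gatelowerfinal} come from the ambient stage~4 transition hypothesis of Lemma~\ref{lem:stage4appeal}.
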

\begin{proof}
We have:
\begin{equation*}
\val^{\sigma}(r^j_i) = \val^{\sigma}(c_j) + L_0.
\end{equation*}
On the other hand, we can apply Lemma~\ref{lem:gatelowerfinal} to argue that:
\begin{equation*}
\val^{\sigma}(o^{1-j}_{\inp(i)}) \ge \val^{\sigma}(c_{1-j})
\end{equation*}
Thus, we can apply Lemma~\ref{lem:argappeal} to argue that the appeal of
switching $r^j_i$ to $o_{\inp(i)}$ is at least:
\begin{align*}
&p_{7} \cdot (T - \frac{T}{2} + L_0) \\
&= \frac{\ro(\frac{T}{2} + L_0)}{\frac{T}{2} + H_{d(C)} - L_0} \\
&= \frac{\ro(\frac{T}{2})}{\frac{T}{2} + H_{d(C)}} & [L_0 = 0] \\
&\le \frac{\frac{\ro}{2} \cdot 3^{d(C) + 6}}{\frac{1}{2} \cdot 3^{d(C) + 6} + 2
\cdot 3^{d(C) + 2}}\\ 
&= \frac{\frac{\ro}{2} \cdot 3^{4}}{\frac{1}{2} \cdot 3^{4} + 2
}\\ 
&= \ro \cdot \frac{40.5}{42.5} \\
&> \rol.
\end{align*}
\end{proof}

The following Lemma proves part 2c of Lemma~\ref{lem:stage4appeal}.
\begin{lemma}
Suppose that we are in phase $j$, and let $\sigma$ be a policy with 
$\sigma(r^j_i) = o^{1-j}_{\inp(i)}$ for some gate $i$, and for every gate $i$':
\begin{itemize}
\item If $i'$ is an \org gate then we have $\sigma(x^{1-j}_{i'}) = c_{1-j}$.
\item If $i'$ is a \notg gate then we have $\sigma(a^{1-j}_{i'}) = c_{1-j}$.
\end{itemize} 
The appeal of switching $r^j_i$ to $c_{j}$ is strictly less than $0$.
\end{lemma}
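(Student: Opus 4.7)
The plan is to apply Lemma~\ref{lem:argappeal} together with the lower bound from Lemma~\ref{lem:gatelowerfinal}. The key observation is that when $\sigma(r^j_i) = o^{1-j}_{\inp(i)}$, the value of $r^j_i$ is already \emph{above} the value of $c_j$, so switching back to the action that leads to $c_j$ (whose forward reward is only $L_0 = 0$) has negative appeal.

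First, since we are in phase $j$, we have $\val^{\sigma}(c_{1-j}) = \val^{\sigma}(c_j) + T$. The assumptions on $\sigma$ in this lemma, combined with the standing stage 4 conditions (namely $\sigma(l^{1-j}_{i'}) = \sigma(r^{1-j}_{i'}) = c_{1-j}$ for every input bit $i'$), imply that the hypotheses of Lemma~\ref{lem:gatelowerfinal} hold. Therefore
\begin{equation*}
\val^{\sigma}(o^{1-j}_{\inp(i)}) \;\ge\; \val^{\sigma}(c_{1-j}) \;=\; \val^{\sigma}(c_j) + T.
\end{equation*}

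Next, I apply the first part of Lemma~\ref{lem:argappeal} to the appeal reduction gadget that $r^j_i$ uses toward $o^{1-j}_{\inp(i)}$ (with $r_d = 0$ and $r_f = -\tfrac{T}{2}$): since $\sigma(r^j_i) = o^{1-j}_{\inp(i)}$,
\begin{equation*}
\val^{\sigma}(r^j_i) \;=\; \val^{\sigma}(o^{1-j}_{\inp(i)}) - \tfrac{T}{2} \;\ge\; \val^{\sigma}(c_j) + \tfrac{T}{2}.
\end{equation*}
Finally, applying the second part of Lemma~\ref{lem:argappeal} to the other appeal reduction gadget at $r^j_i$ (the one toward $c_j$, with $r_d = 0$ and $r_f = L_0 = 0$, and with $b = \val^{\sigma}(c_j) - \val^{\sigma}(r^j_i) \le -\tfrac{T}{2}$), the appeal of switching $r^j_i$ to $c_j$ is at most
\begin{equation*}
p_6 \cdot \bigl( -\tfrac{T}{2} + L_0 \bigr) \;=\; -\tfrac{p_6 \cdot T}{2} \;<\; 0,
\end{equation*}
which is the desired bound. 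There is no real obstacle here: the argument is a direct chaining of the two parts of Lemma~\ref{lem:argappeal} around the lower bound from Lemma~\ref{lem:gatelowerfinal}, and the only thing to verify carefully is that the assumptions on $x^{1-j}_{i'}$ and $a^{1-j}_{i'}$ stated in this lemma, together with the stage 4 conditions at the input bits, exactly match the hypotheses required by Lemma~\ref{lem:gatelowerfinal}.
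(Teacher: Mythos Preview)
Your proof is correct and follows essentially the same approach as the paper's: both invoke Lemma~\ref{lem:gatelowerfinal} to lower-bound $\val^{\sigma}(r^j_i)$ by $\val^{\sigma}(c_j)+\tfrac{T}{2}$ and then apply Lemma~\ref{lem:argappeal} to conclude the appeal of switching to $c_j$ is at most $p_6\cdot(-\tfrac{T}{2}+L_0)<0$. Your version is simply more explicit about the two separate applications of Lemma~\ref{lem:argappeal} and about verifying that the stage~4 input-bit conditions supply the missing hypothesis of Lemma~\ref{lem:gatelowerfinal}.
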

\begin{proof}
We can apply Lemma~\ref{lem:gatelowerfinal} to argue that
\begin{equation*}
\val^{\sigma}(r^j_i) \ge \val^{\sigma}(c_{1-j}) - \frac{T}{2} =
\val^{\sigma}(c_j) + \frac{T}{2}.
\end{equation*}
Thus, we can apply Lemma~\ref{lem:argappeal} to argue that the appeal of
switching $r^j_i$ to $c_j$ is at most:
\begin{equation*}
p_{6} \cdot (-\frac{T}{2} + L_0) < 0. 
\end{equation*}
\qed
\end{proof}

Part 4 of Lemma~\ref{lem:stage4appeal} follows from
Lemma~\ref{lem:fourth-three}. Note that the precondition on the value of
$o^j_{\inp(i)}$ is satisfied due to Lemma~\ref{lem:gateupperfinal}.

In the following lemma, we show part 5 of Lemma~\ref{lem:stage4appeal}.
\begin{lemma}
Suppose that we are in phase $j$, and let $\sigma$ be a policy with
$\sigma(l^{1-j}_{i'}) = \sigma(r^{1-j}_{i'}) = c_{1-j}$ for every input bit
$i'$. If $\sigma(l^j_i) = c_{1-j}$ and $\sigma(o^j_i) = l^j_i$ for some input
bit $i$, then the appeal of switching $o^j_i$ to $l^j_i$ is strictly less than
$0$.
\end{lemma}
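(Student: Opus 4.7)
Since the action from $o^j_i$ to $r^j_i$ is deterministic with reward $0$, its appeal equals $\val^{\sigma}(r^j_i) - \val^{\sigma}(o^j_i)$ (I read the stated conclusion as being about switching to $r^j_i$, since switching $o^j_i$ back to $l^j_i$ is not a meaningful switch when $\sigma(o^j_i) = l^j_i$ already). The plan is to compute $\val^{\sigma}(o^j_i)$ in closed form and then bound $\val^{\sigma}(r^j_i)$ by a case analysis on which action $\sigma$ selects at $r^j_i$.

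First I would apply the first item of Lemma~\ref{lem:argappeal} twice: once to the appeal reduction gadget from $l^j_i$ to $c_{1-j}$ (which has $r_d = 0$ and $r_f = -\tfrac{T}{2} + \tfrac{H_{d(C)} + L_{d(C)}}{2}$), which yields
\[
\val^{\sigma}(l^j_i) = \val^{\sigma}(c_{1-j}) - \tfrac{T}{2} + \tfrac{H_{d(C)} + L_{d(C)}}{2},
\]
and again to the gadget from $o^j_i$ to $l^j_i$ (with both rewards zero), giving $\val^{\sigma}(o^j_i) = \val^{\sigma}(l^j_i)$. Substituting the phase-$j$ identity $\val^{\sigma}(c_{1-j}) = \val^{\sigma}(c_j) + T$ then gives $\val^{\sigma}(o^j_i) = \val^{\sigma}(c_j) + \tfrac{T}{2} + \tfrac{H_{d(C)} + L_{d(C)}}{2}$.

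Next I would split on $\sigma(r^j_i)$. If $\sigma(r^j_i) = c_j$, Lemma~\ref{lem:argappeal} immediately gives $\val^{\sigma}(r^j_i) = \val^{\sigma}(c_j) + L_0 = \val^{\sigma}(c_j)$, so the appeal simplifies to $-\tfrac{T}{2} - \tfrac{H_{d(C)} + L_{d(C)}}{2}$, which is clearly strictly negative. If instead $\sigma(r^j_i) = o^{1-j}_{\inp(i)}$, the hypothesis $\sigma(l^{1-j}_{i'}) = \sigma(r^{1-j}_{i'}) = c_{1-j}$ triggers Lemma~\ref{lem:gateupperfinal}, which bounds $\val^{\sigma}(o^{1-j}_{\inp(i)}) \le \val^{\sigma}(c_{1-j}) + H_{d(C)}$; propagating this through the $r^j_i$-to-$o^{1-j}_{\inp(i)}$ gadget via Lemma~\ref{lem:argappeal} produces $\val^{\sigma}(r^j_i) \le \val^{\sigma}(c_j) + \tfrac{T}{2} + H_{d(C)}$.

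The hard part will be closing this second case: the rough bound just described leaves a residual appeal of at most $H_{d(C)} - \tfrac{H_{d(C)} + L_{d(C)}}{2} = \tfrac{b_{d(C)}}{2}$, which is positive. To make the argument work I expect to need either a sharper structural bound on $\val^{\sigma}(o^{1-j}_{\inp(i)})$ — exploiting more than Lemma~\ref{lem:gateupperfinal}'s generic estimate, for instance the fact that in phase $j$ the output bits of circuit $1-j$ have not been genuinely computed and are effectively constrained below $\val^{\sigma}(c_{1-j}) + L_{d(C)}$ — or else an implicit trajectory assumption that $r^j_i$ has not yet been switched (so $\sigma(r^j_i) = c_j$ is forced and only the first, easy case arises at the points where the lemma is used).
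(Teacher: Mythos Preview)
Your reading of the conclusion (the relevant switch is $o^j_i \to r^j_i$) is correct, and your overall plan---compute $\val^{\sigma}(o^j_i)$ exactly, then case-split on $\sigma(r^j_i)$ and invoke Lemma~\ref{lem:gateupperfinal} in the second branch---is precisely the paper's approach. The difference is in the value of $\val^{\sigma}(o^j_i)$: the paper records $\val^{\sigma}(o^j_i) = \val^{\sigma}(c_{1-j}) + H_0$, but that is the value one obtains for $l^{1-j}_i$ when it points to $c_{1-j}$ (compare Lemma~\ref{lem:seven-three}). For $l^j_i$ the action towards $c_{1-j}$ carries $r_f = -\tfrac{T}{2} + \tfrac{H_{d(C)}+L_{d(C)}}{2}$, and indeed the paper itself computes $\val^{\sigma}(l^j_i) = \val^{\sigma}(c_{1-j}) - \tfrac{T}{2} + \tfrac{H_{d(C)}+L_{d(C)}}{2}$ under the identical hypothesis in Lemma~\ref{lem:first-two}. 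So your value is the right one.

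With the paper's mis-recorded value both cases close immediately, because the paper establishes $\val^{\sigma}(r^j_i) < \val^{\sigma}(c_{1-j})$ in either branch and then subtracts $H_0$. With the correct value, the branch $\sigma(r^j_i) = o^{1-j}_{\inp(i)}$ leaves exactly the residual $\tfrac{b_{d(C)}}{2}$ that you identified. In other words, the paper's written argument has the very gap you flagged, concealed by an arithmetic slip. Your suggested repairs---either a sharper bound on $\val^{\sigma}(o^{1-j}_{\inp(i)})$ than the generic Lemma~\ref{lem:gateupperfinal} estimate, or restricting to the policies actually visited during stage~4 so that the problematic configuration only arises once the circuit-$(1{-}j)$ states are suitably pinned down---are the natural ways to make the argument go through for its intended use in Lemma~\ref{lem:stage4appeal}.
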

\begin{proof}
We have:
\begin{equation*}
\val^{\sigma}(o^j_i) = \val^{\sigma}(c_{1-j}) + H_0.
\end{equation*}
If $\sigma(r^j_i) = c_j$, then we have:
\begin{equation*}
\val^{\sigma}(r^j_i) = \val^{\sigma}(c_j) + L_0 < \val^{\sigma}(c_{1-j}).
\end{equation*}
On the other hand, if $\sigma(r^j_i) = o^{1-j}_{\inp(i)}$, then we can apply
Lemma~\ref{lem:gateupperfinal} to argue that:
\begin{equation*}
\val^{\sigma}(r^j_i) \le \val^{\sigma}(c_{1-j}) + H_{d(c)} - \frac{T}{2} <
\val^{\sigma}(c_{1-j}).
\end{equation*}
Thus, in either case, we have $\val^{\sigma}(r^j_i) < \val^{\sigma}(c_j)$. So,
the appeal of switching $o^j_i$ to $l^j_i$ is $0 - H_0 < 0$.
\qed
\end{proof}

Part 9a of Lemma~\ref{lem:stage4appeal} follows from Lemma~\ref{lem:nine}. In
the following Lemma, we show part 9b of Lemma~\ref{lem:stage4appeal}.
\begin{lemma}
Suppose that we are in phase $j$, and let $\sigma$ be a policy with
$\sigma^{x^l_i} = c_{1-j}$ for some \org gate $i$ and some $l \in \{0,1\}$. The
appeal of switching $x^l_i$ to $c_j$ is strictly less than $0$.
\end{lemma}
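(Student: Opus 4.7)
The plan is to directly apply Lemma~\ref{lem:argappeal} twice, exploiting the fact that both of the actions at $x^l_i$ use identical appeal reduction gadget parameters ($r_d = 0$, $p = \xj/T$, $r_f = 0$). First I would use the first claim of Lemma~\ref{lem:argappeal} on the gadget $\gadget(x^l_i, c_{1-j}, 0, \xj/T, 0)$: since $\sigma(x^l_i) = c_{1-j}$, this immediately gives
\begin{equation*}
\val^{\sigma}(x^l_i) = \val^{\sigma}(c_{1-j}) + 0 + \tfrac{0}{\xj/T} = \val^{\sigma}(c_{1-j}).
\end{equation*}

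Next I would invoke the phase-$j$ clock relationship $\val^{\sigma}(c_{1-j}) = \val^{\sigma}(c_j) + T$, which is guaranteed by Lemma~\ref{lem:clock}. Combining, $\val^{\sigma}(c_j) - \val^{\sigma}(x^l_i) = -T$, so applying the second claim of Lemma~\ref{lem:argappeal} to the gadget $\gadget(x^l_i, c_j, 0, \xj/T, 0)$ with $b = -T$, the appeal of switching $x^l_i$ to $c_j$ is
\begin{equation*}
\tfrac{\xj}{T} \cdot (-T + 0) + 0 = -\xj < 0,
\end{equation*}
as required.

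There is no real obstacle here: the argument is symmetric to the proof of Lemma~\ref{lem:nine} (which handles the opposite case $\sigma(x^l_i) = c_j$), but with the sign of the value gap reversed, so the appeal to switch back is now the negative of what it was to switch forward. The only thing worth double-checking is that $\val^{\sigma}(c_{1-j}) = \val^{\sigma}(c_j) + T$ is the correct sign convention here, which follows since by hypothesis we are in phase $j$, as the claim itself uses the phase-$j$ notation in naming $c_{1-j}$ as the state currently chosen by $x^l_i$ after it has switched.
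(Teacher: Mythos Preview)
Your proposal is correct and follows essentially the same approach as the paper: compute $\val^{\sigma}(x^l_i) = \val^{\sigma}(c_{1-j})$ via the first part of Lemma~\ref{lem:argappeal}, use the phase-$j$ identity $\val^{\sigma}(c_{1-j}) = \val^{\sigma}(c_j) + T$, and then apply the second part of Lemma~\ref{lem:argappeal} to obtain appeal $\tfrac{\xj}{T}\cdot(-T) = -\xj < 0$. Your write-up is in fact slightly cleaner, since the paper's proof contains a harmless typo (it writes $0.8$ for the constant where $\xj = 0.9$ is meant).
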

\begin{proof}
We have:
\begin{equation*}
\val^{\sigma}(x^l_i) = \val^{\sigma}(c_{1-j}) = \val^{\sigma}(c_j) + T.
\end{equation*}
So, by Lemma~\ref{lem:argappeal}, the appeal of switching $x^l_i$ to $c_j$ is:
\begin{equation*}
\frac{0.8}{T} \cdot -T < 0.
\end{equation*}
\qed
\end{proof}

Part 10a of Lemma~\ref{lem:stage4appeal} follows from Lemma~\ref{lem:ten}. In
the following lemma, we show part 10b of Lemma~\ref{lem:stage4appeal}. 
\begin{lemma}
Suppose that we are in phase $j$, and let $\sigma$ be a policy with
$\sigma(a^{1-j}_i) = c_{1-j}$ for some \notg gate $i$. The appeal of switching
$a^{1-j}_i$ to $c_j$ is strictly less than $0$.
\end{lemma}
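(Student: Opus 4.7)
The plan is to apply Lemma~\ref{lem:argappeal} twice: first to compute $\val^{\sigma}(a^{1-j}_i)$ under the assumption $\sigma(a^{1-j}_i) = c_{1-j}$, and then to read off the appeal of the alternative action that leads to $c_j$. The point to be careful about is that the construction indexes the two branches of the gadget relative to the circuit number, so for $a^{1-j}_i$ the ``self-branch'' is the one going to $c_{1-j}$ with reward $0$ and probability $p_2$, while the ``cross-branch'' is the one going to $c_j$ with reward $-T + H_{d(i)-1}$ and probability $p_1$. That is, the roles of $c_j$ and $c_{1-j}$ are swapped compared to the $a^j_i$ analysis done in Lemma~\ref{lem:nota1j}.

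Concretely, I would proceed as follows. First, since $\sigma(a^{1-j}_i) = c_{1-j}$ chooses the branch with $r_d = 0$, $r_f = 0$, and $p = p_2$, the first part of Lemma~\ref{lem:argappeal} gives
\[
\val^{\sigma}(a^{1-j}_i) = \val^{\sigma}(c_{1-j}).
\]
Combined with the phase~$j$ identity $\val^{\sigma}(c_{1-j}) = \val^{\sigma}(c_j) + T$, this yields $\val^{\sigma}(c_j) = \val^{\sigma}(a^{1-j}_i) - T$, i.e.\ the constant $b$ that appears in the second part of Lemma~\ref{lem:argappeal} equals $-T$.

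Then, applying the second part of Lemma~\ref{lem:argappeal} to the cross-branch, whose parameters are $p = p_1$, $r_f = -T + H_{d(i)-1}$ and $r_d = 0$, the appeal of switching to $c_j$ is
\[
p_1 \cdot \bigl(-T + (-T + H_{d(i)-1})\bigr) + 0 = p_1 \cdot \bigl(-2T + H_{d(i)-1}\bigr).
\]
Since $H_{d(i)-1} \le H_{d(C)} < T/2 \ll 2T$, the bracket is negative, and since $p_1 > 0$ the whole expression is strictly negative, as required.

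There is no real obstacle here: the only thing one has to watch is the correct identification of the two branches of the gadget at $a^{1-j}_i$ (and in particular that the $-T + H_{d(i)-1}$ reward sits on the branch going to $c_j$, not $c_{1-j}$), and the sign of $b$ in the Lemma~\ref{lem:argappeal} application. Once those are pinned down the bound is immediate from the fact that $T = 3^{d(C)+6}$ dominates every $H_k$ with $k \le d(C)$.
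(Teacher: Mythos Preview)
Your proposal is correct and is essentially identical to the paper's own proof: the paper also computes $\val^{\sigma}(a^{1-j}_i) = \val^{\sigma}(c_{1-j})$ and then applies Lemma~\ref{lem:argappeal} to obtain the appeal $p_1 \cdot (-2T + H_{d(i)-1}) < 0$. Your more careful tracking of which branch carries the $-T + H_{d(i)-1}$ reward is exactly the right bookkeeping.
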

\begin{proof}
We have:
\begin{equation*}
\val^{\sigma}(a^{1-j}_i) = \val^{\sigma}(c_{1-j}).
\end{equation*}
Thus, we can apply Lemma~\ref{lem:argappeal} to argue that the appeal of
switching $a^{1-j}_i$ to $c_j$ is:
\begin{equation*}
p_1 \cdot (-2T + H_{d(i) - 1}) < 0. 
\end{equation*}
\qed
\end{proof}

\section{Proof of Theorem~\ref{thm:mdpend}}
\label{app:mdpend}

\begin{proof}
Let $\sigma'_{\text{init}}$ be a policy that agrees with $\sigma_{\text{init}}$
for every state in $\const(C)$, and that also has $\sigma'_{\text{init}}(b_1) =
\sink$. This will be the starting policy for $\const_2(C)$.

We first show that $b_2$ cannot be switched until after clock phase $2^n$. This
holds because, by Lemma~\ref{lem:argappeal}, the appeal of switching $b_2$ to
$b_1$ is $0.2$, but in the proof of Theorem~\ref{thm:actionswitch} all actions
switched before the end of phase $2^n$ have appeal strictly greater than $0.2$.
Note that, in the initial policy for the construction $\sigma'_{\text{init}}$, we have
$\val^{\sigma'_{\text{init}}}(l^0_i) \ge 0$ and 
$\val^{\sigma'_{\text{init}}}(r^0_i) \ge 0$ for all input bits $i$, and all $j
\in \{0, 1\}$. Thus, for
all policies $\sigma$ with $\sigma(b_2) = \sink$, we have that the appeal of
switching either $l^0_i$ or $r^0_i$ to $b_2$ is less than or equal to $0$. These
facts combine to show that policy iteration on $\const_2(C)$ will proceed in the
same way as policy iteration on $\const(C)$ until after the $2^n$-th clock
phase. In particular, policy iteration will pass through to the end of stage 4
phase transition at the end of the $2^n$-th clock phase.

Note that at the end of stage 4 of the phase transition, we have that the choice
made at $o^0_z$ determines the $z$-th bit of $F^{2^n}(B^I)$. Once we arrive at
the end of stage 4, we can then apply Lemma~\ref{lem:stage4appeal} to argue that
Dantzig's rule will switch $b_2$ to $b_1$, before it switches $o^0_z$. This is
because all relevant appeals mentioned in Lemma~\ref{lem:stage4appeal} are
strictly less than $0.2$, and because there are no switchable actions in the
clock. Once $b_2$ has switched to $b_1$, Dantzig's rule will immediately switch
$l^0_z$ and $r^0_z$ to $b_1$. This holds by construction, because the value of
$l^0_z$ and $r^0_z$ can be at most $W$, and since the value of $b_1$ is now $2
\cdot W$, the appeal of switching these actions must be at least $W$.

Let $\sigma$ be a policy in which $\sigma(b_2) = b_1$ and $\sigma(l^0_z) =
\sigma(r^0_z) = b_1$. Let $a$ be the action not chosen by $\sigma$ at $o^0_z$.
By construction, we have that $\appeal(a) = 0$. Thus, policy iteration can never
switch $a$. Therefore, once $l^0_z$ and $r^0_z$ have been switched to $b_2$, the
policy at $o^0_z$ will not be changed, even if policy iteration makes switches
elsewhere in the construction. So, in the optimal policy we will have that
the strategy decision made at $o^0_z$ will determine the $z$-th bit of
$F^{2^n}(B^I)$.
\qed
\end{proof}

\end{document}